\DeclareSIUnit\color{color}
\DeclareSIUnit\inch{in}
\DeclareSIUnit\pixel{pixel}
\DeclareSIUnit\tick{tick}
\titleformat{\chapter}[block]{\fillast\bfseries}{\MakeUppercase{\chaptertitlename} \thechapter:}{1ex}{\centering}[]
\titlespacing{\chapter}{0in}{0.62in}{22pt}
\titleformat*{\section}{\bfseries}
\titleformat*{\subsection}{\bfseries}
\titleformat*{\subsection}{\bfseries}
\setlist{leftmargin=4ex}
\titlespacing{\paragraph}{0in}{0.08in}{0.07in}
\DeclareAcronym{AR}{
	short = {AR},
	long = {Augmented Reality},
	short-indefinite = {an},
	long-indefinite = {an}
}
\DeclareAcronym{VR}{
	short = {VR},
	long = Virtual Reality
}
\DeclareAcronym{AV}{
	short = {AV},
	long = {Augmented Virtuality},
	short-indefinite = {an},
	long-indefinite = {an}
}
\DeclareAcronym{FPGA}{
	short = {FPGA},
	long = {Field-Programmable Gate Array},
	short-indefinite = {an},
	long-indefinite = {a}
}
\DeclareAcronym{LED}{
	short = {LED},
	long = {Light-Emitting Diode},
	short-indefinite = {an},
	long-indefinite = {a}
}
\DeclareAcronym{I2C}{
	short = {I\textsuperscript{2}C},
	long = {Inter-Integrated Circuit},
	alt = {IIC},
	short-indefinite = {an},
	long-indefinite = {an},
	alt-indefinite = {an}
}
\DeclareAcronym{DMD}{
	short = {DMD},
	long = {Digital Micromirror Display},
	cite = {{Hornbeck97digitallight}}
}
\DeclareAcronym{HDR}{
	short = {HDR},
	long = {High-Dynamic Range},
	short-indefinite = {an},
	long-indefinite = {a}
}
\DeclareAcronym{SLM}{
	short = {SLM},
	long = {Spatial Light Modulator},
	short-indefinite = {an},
	long-indefinite = {a}
}
\DeclareAcronym{LCD}{
	short = {LCD},
	long = {Liquid Crystal Display},
	first-style = {reversed},
	short-indefinite = {an},
	long-indefinite = {a}
}
\DeclareAcronym{PTM}{
	short = {PTM},
	long = {Pulse Train Modulation}
}
\DeclareAcronym{PWM}{
	short = {PWM},
	long = {Pulse Width Modulation}
}
\DeclareAcronym{PDM}{
	short = {PDM},
	long = {Pulse Density Modulation}
}
\DeclareAcronym{LSB}{
	short = {LSB},
	long = {Least Significant Bit},
	short-indefinite = {an},
	long-indefinite = {a}
}
\DeclareAcronym{MSB}{
	short = {MSB},
	long = {Most Significant Bit},
	short-indefinite = {an},
	long-indefinite = {a}
}
\DeclareAcronym{RGB}{
	short = {RGB},
	long = {Red, Green, and Blue},
	first-style = {reversed},
	short-indefinite = {an},
	long-indefinite = {a}
}
\DeclareAcronym{HMD}{
	short = {HMD},
	long = {Head-Mounted Display},
	short-indefinite = {an},
	long-indefinite = {a}
}
\DeclareAcronym{OST}{
	short = {OST},
	long = {Optical See-through},
	short-indefinite = {an},
	long-indefinite = {an}
}
\DeclareAcronym{RAM}{
	short = {RAM},
	long = {Random Access Memory}
}
\DeclareAcronym{DDR}{
	short = {DDR},
	long = {Double Data Rate},
	first-style = {reversed}
}
\DeclareAcronym{SRAM}{
	short = {SRAM},
	long = {Static Random Access Memory},
	first-style = {reversed},
	short-indefinite = {an},
	long-indefinite = {a}
}
\DeclareAcronym{DRAM}{
	short = {DRAM},
	long = {Dynamic Random Access Memory},
	first-style = {reversed}
}
\DeclareAcronym{PC}{
	short = {PC},
	long = {Personal Computer},
	first-style = {short}
}
\DeclareAcronym{FOV}{
	short = {FOV},
	long = {Field of View},
	short-indefinite = {an},
	long-indefinite = {a}
}
\DeclareAcronym{GPU}{
	short = {GPU},
	long = {Graphics Processing Unit}%,
%	first-style = {reversed}%,
%	first-style = {short}
}
\DeclareAcronym{CPU}{
	short = {CPU},
	long = {Central Processing Unit}
}
\DeclareAcronym{OLED}{
	short = {OLED},
	long = {Organic Light-Emitting Diode},
	short-indefinite = {an},
	long-indefinite = {an}
}
\DeclareAcronym{DAC}{
	short = {DAC},
	long = {Digital-to-Analog Converter}
}
\DeclareAcronym{ADC}{
	short = {ADC},
	long = {Analog-to-Digital Converter},
	short-indefinite = {an},
	long-indefinite = {an}
}
\DeclareAcronym{XGA}{
	short = {XGA},
	long = {Extended Graphics Array},
	extra = {\resolution{1024}{768}},
	first-style = {reversed},
	short-indefinite = {an},
	long-indefinite = {an}
}
\DeclareAcronym{AMOLED}{
	short = {AMOLED},
	long = {Active Matrix Organic Light-Emitted Diode},
	short-indefinite = {an},
	long-indefinite = {an}
}
\DeclareAcronym{DOF}{
	short = {DOF},
	long = {Degree of Freedom},
	short-plural-form = {DOF},
	long-plural-form = {Degrees of Freedom}
}
\DeclareAcronym{SAR}{
	short = {SAR},
	long = {Spatial Augmented Reality}
}
\DeclareAcronym{DC}{
	short = {DC},
	long = {Direct Current},
	first-style = {reversed}
}
\DeclareAcronym{AC}{
	short = {AC},
	long = {Alternating Current},
	first-style = {reversed},
	short-indefinite = {an},
	long-indefinite = {an}
}
\DeclareAcronym{RC}{
	short = {RC},
	long = {Resistor-Capacitor},
	first-style = {reversed},
	short-indefinite = {an},
	long-indefinite = {a}
}
\DeclareAcronym{2D}{
	short = {2D},
	long = {Two-Dimensional},
	first-style = {short}
}
\DeclareAcronym{3D}{
	short = {3D},
	long = {Three-Dimensional},
	first-style = {short}
}
\DeclareAcronym{1080p}{
	short = {1080p},
	long = {Ten-Eighty Progressive Scan},
	extra = {\resolution{1920}{1080}},
	first-style = {short}
}
\DeclareAcronym{VGA}{
	short = {VGA},
	long = {Video Graphics Array},
	first-style = {reversed}
}
\DeclareAcronym{DVI}{
	short = {DVI},
	long = {Digital Video Interface},
	first-style = {reversed}
}
\DeclareAcronym{HDMI}{
	short = {HDMI},
	long = {High-Definition Multimedia Interface},
	first-style = {reversed},
	short-indefinite = {an},
	long-indefinite = {a}
}
\DeclareAcronym{DP}{
	short = {DP},
	long = {DisplayPort\texttrademark},
	first-style = {reversed}
}
\DeclareAcronym{IC}{
	short = {IC},
	long = {Integrated Circuit},
	short-indefinite = {an},
	long-indefinite = {an}
}
\DeclareAcronym{LFSR}{
	short = {LFSR},
	long = {Linear-Feedback Shift Register},
	cite = {{koeter1996s}},
	short-indefinite = {an},
	long-indefinite = {a}
}
\DeclareAcronym{ASIC}{
	short = {ASIC},
	long = {Application Specific Integrated Circuit},
	short-indefinite = {an},
	long-indefinite = {an}
}
\DeclareAcronym{VST}{
	short = {VST},
	long = {Video See-through}
}
\DeclareAcronym{IBR}{
	short = {IBR},
	long = {Image-based Rendering},
	short-indefinite = {an},
	long-indefinite = {an}
}
\DeclareAcronym{PCM}{
	short = {PCM},
	long = {Pulse Code Modulation}
}
\DeclareAcronym{PFM}{
	short = {PFM},
	long = {Pulse Frequency Modulation}
}
\DeclareAcronym{DLP}{
	short = {DLP\textregistered{}},
	long = {Digital Light Projection}
}
\DeclareAcronym{MCP}{
	short = {MCP},
	long = {Mirror Clocking Pulse},
	short-indefinite = {an},
	long-indefinite = {a}
}
\DeclareAcronym{TI}{
	short = {TI},
	long = {Texas Instruments},
	single = {Texas Instruments (TI)}
}
\DeclareAcronym{SOC}{
	short = {SoC},
	long = {System on a Chip}
}
\DeclareAcronym{HUD}{
	short = {HUD},
	long = {Heads-up Display}
}
\DeclareAcronym{MR}{
	short = {MR},
	long = {Mixed Reality},
	short-indefinite = {an},
	long-indefinite = {a}
}
\DeclareAcronym{OS}{
	short = {OS},
	long = {Operating System},
	short-indefinite = {an},
	long-indefinite = {an}
}
\DeclareAcronym{PRW}{
	short = {PRW},
	long = {Post-Rendering Warp}
}
\DeclareAcronym{CRT}{
	short = {CRT},
	long = {Cathode Ray Tube}
}
\DeclareAcronym{EEM}{
	short = {EEM},
	long = {Estimated Error Modulation},
	short-indefinite = {an},
	long-indefinite = {an}
}
\DeclareAcronym{VSYNC}{
	short = {VSYNC},
	long = {Vertical Sync},
	first-style = {reversed}
}
\DeclareAcronym{ALS}{
	short = {ALS},
	long = {Ambient Light Sensing},
	short-indefinite = {an},
	long-indefinite = {an}
}
\DeclareAcronym{CIE}{
	short = {CIE},
	long = {International Commission on Illumination},
	foreign = {Commission internationale de l'\'{e}clairage},
	foreign-lang = {french},
	first-style = {short}
}
\DeclareAcronym{HLG}{
	short = {HLG},
	long = {Hybrid Log-Gamma},
	short-indefinite = {an},
	long-indefinite = {a}
}
\DeclareAcronym{DeltaSigma}{
	short = {$\Delta\Sigma$},
	long = {Delta-Sigma Modulation},
	alt = {Delta-Sigma},
	list = {Delta-Sigma},
	sort = {9999DS}
%	sort = {DS}
}
\DeclareAcronym{SI}{
	short = {SI},
	long = {International System of Units},
	foreign = {Syst\`{e}me international d'unit\'{e}s},
	foreign-lang = {french},
	first-style = {short},
	short-indefinite = {an},
	long-indefinite = {an}
}
\renewcommand{\cite}{\citep}
\let\oldcite=\cite
\renewcommand\cite[1]{\ifthenelse{\equal{#1}{NEEDED}}{\ifthenelse{\boolean{isdraft}}{(\textcolor{red}{citation needed})}{}}{\oldcite{#1}}}
\newcommand{\Virtex}[1]{Virtex\textregistered\ifthenelse{\equal{}{#1}}{}{\nobreakdash-#1}\xspace}
\newcommand{\resolution}[2]{\num{#1}~$\times$~\num{#2}}
\long\def\symbolfootnote[#1]#2{\begingroup%
\def\thefootnote{\fnsymbol{footnote}}\footnote[#1]{#2}\endgroup}
\newtheoremstyle{mylemthm}% hnamei 
        {6pt}% hSpace abovei 
        {3pt}% hSpace belowi 
        {\slshape}% hBody fonti 
        {}% hIndent amounti1
        {\bfseries}% hTheorem head fonti 
        {.}% hPunctuation after theorem headi 
        {.5em}% hSpace after theorem headi2
        {}% hTheorem head spec (can be left empty, meaning `normal')i
\theoremstyle{mylemthm}
\newtheorem{theorem}{Theorem}[chapter]
\newtheorem{lemma}{Lemma}[chapter]
\newtheoremstyle{mydef}% hnamei 
        {3pt}% hSpace abovei 
        {3pt}% hSpace belowi 
        {\normalfont}% hBody fonti 
        {}% hIndent amounti1
        {\bfseries}% hTheorem head fonti 
        {.}% hPunctuation after theorem headi 
        {.5em}% hSpace after theorem headi2
        {\thmname{#1} \thmnumber{#2}\thmnote{#3}}% hTheorem head spec (can be left empty, meaning `normal')i
\theoremstyle{mydef}
\newcommand\rightparend[1]{{%
      \unskip\nobreak\hfil\penalty50
      \hskip2em\hbox{}\nobreak\hfil\textbf{#1}%
      \parfillskip=0pt \finalhyphendemerits=0 \par}}
\newtheorem{xxexample}{Example}[chapter]
\DeclareMathOperator*{\argmax}{arg\,max}
\DeclareMathOperator*{\argmin}{arg\,min}
\newcommand{\vectw}[2]{(#1,\: #2)}
\newcommand{\vecth}[3]{(#1,\: #2,\: #3)}
\newcommand{\learn}{\left(\pi\right)}
\newcommand{\expert}{\left(\pi^*\right)}
\newtheorem{corollary}{Corollary}
\newcommand{\flows}{flows\xspace}
\newcommand{\tratios}{turning ratios\xspace}
\begin{document}

%% Title page, TOC, etc.

% front matter pages use 2in top margin
%\newgeometry{left=1.25in,top=2in,right=1.25in,bottom=1in,nohead}
\newgeometry{left=1in,top=2in,right=1in,bottom=1in,nohead}
\pagenumbering{roman}

%1. Title Page

\begin{titlepage}

\newlength{\titlepartseparation}
\setlength{\titlepartseparation}{61pt}		% 1 in - 11 pt

\newlength{\titlepartseparationdbl}
\setlength{\titlepartseparationdbl}{39pt}	% 1 in - 11 pt - 11 pt - 11 pt

\begin{center}

%% http://gradschool.unc.edu/academics/thesis-diss/guide/ordercomponents.html#titlepage

%% 1. The title of the thesis/dissertation, centered 2 in below the top of the page

\vspace{2in}
\begin{singlespace}
%\Large \bf
\noindent%\bf
SIMULATION AND LEARNING FOR URBAN MOBILITY: CITY-SCALE TRAFFIC RECONSTRUCTION AND AUTONOMOUS DRIVING
\end{singlespace}

%% 2. Your name, centered 1 in below the title.
%\vspace{61pt} % 1 in = 72pt, 11pt for the line with text
\vspace{\titlepartseparation}
%\large
\noindent
Weizi Li
\end{center}

%% 3. The following statement, within the full margins, 1 in below your name:
%% ``A dissertation [or thesis] submitted to the faculty of the University of North Carolina at Chapel Hill in partial fulfillment of the requirements for the degree of [] in the Department [or School or Curriculum] of [] in the []''

%\vspace{50pt}
\vspace{\titlepartseparationdbl}
\begin{center}
\begin{singlespace}
%\noindent \large
\noindent
A dissertation submitted to the faculty of the University of North Carolina at Chapel Hill
in partial fulfillment of the requirements for the degree of Doctor of Philosophy in
the Department of Computer Science.% in the College of Arts \& Sciences.
\end{singlespace}
\end{center}

%% 4. On the lower half of the page, centered, the words ``Chapel Hill''
%% and one line below that, the year in which your committee approves
%% the completed thesis/dissertation.
%\vspace{50pt}
\vspace{\titlepartseparationdbl}
\begin{center}
\begin{singlespace}
% \large
\noindent
Chapel Hill\\
2019
\end{singlespace}
\end{center}

%% 5. On the right-hand side of the page, ``Approved by,'' followed by lines for the
%% signatures of the adviser and four (two for thesis) readers. List

%\vfill
%\vspace{50pt}
\vspace{\titlepartseparationdbl}

%% Some fine tuning of the text width may be needed based on how long your committee members' names are.
\newlength{\approvalwidth}
\setlength{\approvalwidth}{2.166667in}
%\ifthenelse{\boolean{isapproved}}{\settowidth{\approvalwidth}{\large Anselmo Lastra}}{\settowidth{\approvalwidth}{\large To be approved by:}}

\begin{flushright}
\begin{minipage}[t]{\approvalwidth}
% \large
\ifthenelse{\boolean{isapproved}}{Approved by:}{Approved by:} \\
Ming C. Lin \\
Dinesh Manocha \\
Ron Alterovitz \\
Julien Pettr\'{e} \\
David Wilkie
\end{minipage}
\end{flushright}

\vfill

\end{titlepage}

%2. Copyright Page (optional)
%\newgeometry{left=1.25in,top=8.33in,right=1.25in,bottom=1in,nohead}
\newgeometry{left=1in,top=8.33in,right=1in,bottom=1in,nohead}

%If you wish to copyright your thesis, you must include a copyright page with the following information single-spaced and centered on the bottom half of the page:
%(c) Year 
%Full Name (exactly as it appears on the title page) 
%ALL RIGHTS RESERVED
%This page should immediately follow the title page, and should bear the lower case Roman numeral: ii.

\begin{center}
\begin{singlespace}
\copyright{} 2019\\
Weizi Li \\
ALL RIGHTS RESERVED
\end{singlespace}
\end{center}

\clearpage

%\newgeometry{left=1.25in,top=2in,right=1.25in,bottom=1in,nohead}
\newgeometry{left=1in,top=2in,right=1in,bottom=1in,nohead}

%3. Abstract

% Normal pages from here on out; TOC title takes care of 2in requirement.
\restoregeometry

\begin{center}
\vspace*{52pt}
%{\Large \textbf{ABSTRACT}\pdfbookmark{ABSTRACT}{ch:Abstract}}
{\textbf{ABSTRACT}\pdfbookmark{ABSTRACT}{ch:Abstract}}
\vspace{11pt}

\begin{singlespace}
Weizi Li: Simulation and Learning for Urban Mobility: City-scale Traffic Reconstruction and Autonomous Driving \\
(Under the direction of Ming C. Lin)
\end{singlespace}
\end{center}

Traffic congestion has become one of the most critical issues worldwide. The costs due to traffic gridlock and jams are approximately \$160 billion in the United States, more than \pounds13 billion in the United Kingdom, and over one trillion dollars across the globe annually. As more metropolitan areas will experience increasingly severe traffic conditions, the ability to analyze, understand, and improve traffic dynamics becomes critical. This dissertation is an effort towards achieving such an ability. I propose various techniques combining simulation and machine learning to tackle the problem of traffic from two perspectives: city-scale traffic reconstruction and autonomous driving.  

Traffic, by its definition, appears in an aggregate form. In order to study it, we have to take a holistic approach. I address the problem of efficient and accurate estimation and reconstruction of city-scale traffic. The reconstructed traffic can be used to analyze congestion causes, identify network bottlenecks, and experiment with novel transport policies. City-scale traffic estimation and reconstruction have proven to be challenging for two particular reasons: first, traffic conditions that depend on individual drivers are intrinsically stochastic; second, the availability and quality of traffic data are limited. Traditional traffic monitoring systems that exist on highways and major roads can not produce sufficient data to recover traffic at scale. GPS data, in contrast, provide much broader coverage of a city thus are more promising sources for traffic estimation and reconstruction. However, GPS data are limited by their spatial-temporal sparsity in practice. I develop a framework to statically estimate and dynamically reconstruct traffic over a city-scale road network by addressing the limitations of GPS data.  

Traffic is also formed of individual vehicles propagating through space and time. If we can improve the efficiency of them, collectively, we can improve traffic dynamics as a whole. Recent advancements in automation and its implication for improving the safety and efficiency of the traffic system have prompted widespread research of autonomous driving. While exciting, autonomous driving is a complex task, consider the dynamics of an environment and the lack of accurate descriptions of a desired driving behavior. Learning a robust control policy for driving remains challenging as it requires an effective policy architecture, an efficient learning mechanism, and substantial training data covering a variety of scenarios, including rare cases such as accidents. I develop a framework, named ADAPS (Autonomous Driving via Principled Simulations), for producing robust control policies for autonomous driving. ADAPS consists of two simulation platforms which are used to generate and analyze simulated accidents while automatically generating labeled training data, and a hierarchical control policy which takes into account the features of driving behaviors and road conditions. ADAPS also represents a more efficient online learning mechanism compared to previous techniques, in which the number of iterations required to learn a robust control policy is reduced.

\clearpage

%% (DRAFT ONLY) TODOs
%\ifthenelse{\boolean{isdraft}}{\input{frontmatter/todos}}{}

%4. Dedication, Acknowledgement(s) and/or Preface (all optional)

%A dedication is an honorific statement from the author to a person or group to 
%whom the author commends the effort and product of the dissertation. Most 
%dedications are short statements of tribute beginning with ``To''. No heading is 
%required on the dedication page. The text of short dedications should be 
%centered between the left and right margins and 2 in from the top of the page.

\begin{center}
\vspace*{52pt}
%\todo{Dedication\ldots}
For my mother, Jie Li. 
\end{center}

\pagebreak

%Acknowledgements are the author's statement of gratitude to and
%recognition of the people and institutions who helped the author's
%research and writing.

\begin{center}
\vspace*{52pt}
%{\Large \textbf{ACKNOWLEDGMENTS}\pdfbookmark{ACKNOWLEDGMENTS}{ch:Acknowledgments}}
{\textbf{ACKNOWLEDGMENTS}\pdfbookmark{ACKNOWLEDGMENTS}{ch:Acknowledgments}}
\end{center}

First of all, I would like to thank my advisor, Ming C. Lin, who has and continues to support me during my time at UNC and has taught me numerous lessons regarding academic life and beyond. I also want to express my sincerest appreciation to the rest of my committee members: Dinesh Manocha, Julien Pettr\'{e}, Ron Alterovitz, and David Wilkie.  

My research will not advance without wonderful collaborators I have had in years. Among them, I want to express my deep gratitude to David Wolinski, who has been serving as my mentor and behind all my important projects, and Jan Allbeck, who had led me to the wonderful research world. I am also thankful to all my supervisors at various institutes: Ellen Yi-Luen Do, Xiang Cao, and Carol O'Sullivan for their guidance and wisdom.   

It has been a great pleasure to work in the GAMMA research lab and interact with its amazing members. I would like to especially thank Shan Yang, Auston Sterling, Tetsuya Takahashi, Andrew Phillip Best, Aniket Bera, Sujeong Kim, Chonhyon Park, Junbang Liang, and Zhenyu Tang for their help and companionship. 

My life at Chapel Hill has been an enjoyable adventure because of the friends I made across the campus. This list includes but not limited to: Yingchi Liu, Xin Zhao, Fan Jiang, Xuxiang Mao, Jun Jiang, Dong Nie, Licheng Yu, Qishun Tang, Feng Shi, Yan Song, Meilei Jiang, Yaoyu Chen, Jiangyue Sun, Chuyi Du, Yunyan He, Zhaopeng Xing, Zhechang Yuan, Yang Zhan, and Tao Zhang. Thank you for sharing this adventure with me at UNC!

Lastly, I am gratefully indebted to my family members for their sacrifice and unconditional love. This dissertation would not have been possible without you.

\clearpage

\acbarrier

%5. Table of Contents, with page references
\renewcommand{\contentsname}{TABLE OF CONTENTS}
\renewcommand{\cfttoctitlefont}{\hfill\bfseries}
\renewcommand{\cftaftertoctitle}{\hfill\hfill}
\renewcommand{\cftdotsep}{1.5}
\cftsetrmarg{1.0in}

\setlength{\cftbeforetoctitleskip}{61pt}
\setlength{\cftaftertoctitleskip}{28pt}

% format chapter entries like other entries
\renewcommand{\cftchapfont}{\normalfont}
\renewcommand{\cftchappagefont}{\normalfont}
\renewcommand{\cftchapleader}{\cftdotfill{\cftdotsep}}

\setlength{\cftbeforechapskip}{15pt}
\setlength{\cftbeforesecskip}{10pt}
\setlength{\cftbeforesubsecskip}{10pt}
\setlength{\cftbeforesubsubsecskip}{10pt}

\pdfbookmark{TABLE OF CONTENTS}{ch:TOC}

\begin{singlespace}
\tableofcontents
\end{singlespace}

\clearpage

%6. List of Tables, with titles and page references (if applicable)

\renewcommand{\listtablename}{LIST OF TABLES}
\phantomsection
\addcontentsline{toc}{chapter}{LIST OF TABLES}

\setlength{\cftbeforelottitleskip}{-11pt}
\setlength{\cftafterlottitleskip}{22pt}
\renewcommand{\cftlottitlefont}{\hfill\bfseries}
\renewcommand{\cftafterlottitle}{\hfill}

\setlength{\cftbeforetabskip}{10pt}

\begin{singlespace}
\listoftables
\end{singlespace}

\clearpage

%7. List of Figures or Illustrations, with titles and page references (if applicable)

\renewcommand{\listfigurename}{LIST OF FIGURES}
\phantomsection
\addcontentsline{toc}{chapter}{LIST OF FIGURES}

\setlength{\cftbeforeloftitleskip}{-11pt}
\setlength{\cftafterloftitleskip}{22pt}
\renewcommand{\cftloftitlefont}{\hfill\bfseries}
\renewcommand{\cftafterloftitle}{\hfill}

\setlength{\cftbeforefigskip}{10pt}
\cftsetrmarg{1.0in}

\begin{singlespace}
\listoffigures
\end{singlespace}
\clearpage

%8. List of Abbreviations (if applicable)
%\input{frontmatter/abbreviations}

%9. List of Symbols (if applicable)

\pagenumbering{arabic}

%% Reset acronym usage
%\acresetall

%% Chapters
\chapter{INTRODUCTION}
\label{ch:Introduction}

%% -------------------------------------------------------------------------------------------------
Throughout history, human mobility not only characterizes our way of life but also plays an essential role in socio-economic development. In the past century, as a result of rapid motorization and urbanization, increasing human-mobility patterns are stemmed from automobiles and are appearing in urban environments. While this phenomenon reflects better accessibility to societal resources and an increase in quality of life, the resulting traffic congestion has become one of the most infamous problems across the globe. The annual costs due to traffic gridlock and jams are nearly \$160 billion in the U.S., \pounds13 billion in the U.K., and exceeding one trillion U.S. dollars worldwide. 

As urbanization, motorization, and vehicle production rates keep climbing---especially in Asia and Africa---we will witness many more metropolitan areas forming, growing, and experiencing severe traffic conditions. The ability to understand and improve traffic dynamics is thus critical more today than ever. 

Traffic can be studied at various scales. Traffic is an aggregate phenomenon, which can appear at a large spatial-temporal scale. Study of its dynamics requires a holistic view. Among various traffic engineering tasks, one crucial task is the efficient and accurate estimation and reconstruction of city-scale traffic. This task can enable many Intelligent Transportation System (ITS) applications such as identifying congestion causes, detecting network bottlenecks, planning traffic flows, and analyzing transport policies. 

Additionally, traffic is formed by individual vehicles propagating through space and time. If we can improve the efficiency and coordination of them, jointly, we have the opportunity to alleviate severe traffic conditions in metropolitan areas. While the behaviors of human-driven vehicles can be only altered through regulations and policies, the behaviors of autonomous vehicles can be precisely directed and optimized via control algorithms. Thus, the switch from human-driven vehicles to autonomous vehicles has the potential to revolutionize our transportation systems by reducing the number of crashes and alleviating traffic jams which both, by a large degree, attribute to human factors.   

This dissertation summarizes my early efforts towards understanding and improving traffic conditions via studying traffic at both the macroscopic level and the microscopic level. Macroscopically, I study \emph{city-scale traffic reconstruction}: I have combined physics-based simulation for modeling aggregate traffic behaviors and machine learning techniques for distilling travel patterns from a large volume of traffic data in order to estimate and reconstruct traffic at a metropolitan scale. Microscopically, I study \emph{autonomous driving}: I have developed simulation platforms and an online learning mechanism for effectively and efficiently learning and testing control policies for autonomous driving.

\section{City-Scale Traffic Reconstruction}
\label{sec:intro-traffic}
In order to analyze congestion causes, identify network bottlenecks, and experiment with novel transport policies, we need to be able to reconstruct and simulate city-scale traffic at the macroscopic level. I have developed methods to efficiently and accurately estimate and reconstruct large-scale traffic using mobile-sensor data while generating visual analytics in various forms.

Estimation and reconstruction of large-scale traffic is difficult due to fundamental challenges: 1) traffic dynamics is intrinsically stochastic as a result of individual drivers' behaviors, and 2) the availability and quality of traffic data are usually limited. Conventionally, traffic data are collected via in-road sensors such as loop detectors and video cameras. While these sensors produce accurate measurements, they are mostly installed on highways and major roads, which only constitute a small portion of a city. Mobile-sensor data, such as GPS reports, are more promising sources for the estimation and reconstruction task due to their broader coverage. However, GPS points usually embed a low-sampling rate, meaning that the time lapse between two consecutive reports is large (e.g., greater than 60 seconds), and exhibit spatial-temporal sparsity, meaning that the data are scarce in certain areas and time periods.

In order to adopt GPS data for reconstructing full traffic dynamics, several procedures are required to address the abovementioned features: 1) \emph{map-matching}, which maps off-the-road GPS points (due to inevitable measurement noise) onto a road network and infers the traversed path of a vehicle (illustrated in Figure~\ref{fig:issue-map-matching} LEFT and MIDDLE); 2) \emph{travel-time estimation}, which estimates the travel time of a road network through GPS timestamps (illustrated in Figure~\ref{fig:issue-map-matching} RIGHT); 3) \emph{missing-data completion}, which interpolates spatial-temporal missing measurements (illustrated in Figure~\ref{fig:issue-sparsity}). 

\begin{figure}[htb]
	\centering
	\includegraphics[width=\textwidth]{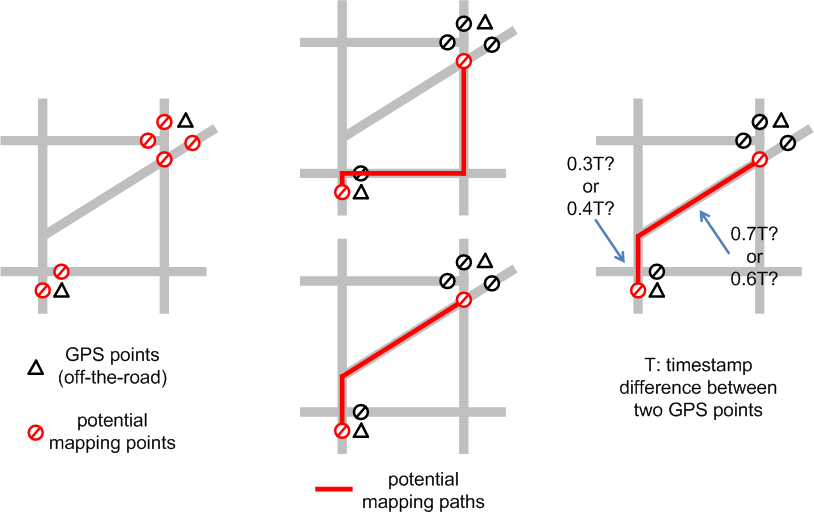}
	\caption{Illustrations of procedures required to process GPS data for traffic estimation and reconstruction. LEFT (map-matching): off-the-road GPS points need to be mapped onto a road network. MIDDLE (map-matching): after determining the matching points, the traversed path of a vehicle needs to be inferred. RIGHT (travel-time estimation): after determining the traversed path, the timestamp difference needs to be distributed to individual road segments.}
	\label{fig:issue-map-matching}
\end{figure}

\begin{figure}[htb]
	\centering
	\includegraphics[width=0.6\textwidth]{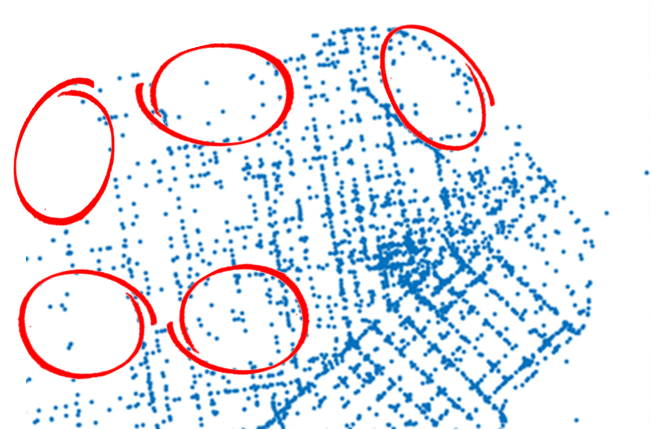}
	\caption{Illustration of the sparsity issue embedded in GPS data. One-day GPS data of downtown San Francisco from the Cabspotting project~\protect\cite{cabspotting} are plotted, marked regions showing the lack of data.}
	\label{fig:issue-sparsity}
\end{figure}

Many state-of-the-art map-matching approaches use the shortest-distance criterion~\cite{lou2009map,quddus2015shortest}, which treats the shortest-distance path between two mapped GPS points as the traversed path. This assumption, however, can lead to a considerable bias in a congested network where the shortest-distance path potentially differs from the shortest travel-time path between two GPS points, while the latter is preferred by GPS devices and experienced drivers. If we adopt a sequential pipeline to process GPS data, \emph{travel-time estimation} (the subsequent step of \emph{map-matching}) will produce inaccurate results, since the timestamp difference between GPS reports will be distributed to a wrong set of road segments. 

GPS data are usually scarce in certain time periods, such as early-morning hours, and certain areas, such as suburbs (illustrated in Figure~\ref{fig:issue-sparsity}). We need to interpolate these spatial-temporal missing traffic data in order to gain a holistic view of a city's traffic dynamics. In addition, in order to account for the dynamic nature of traffic, we need to develop methods that can take both local interactions among cars and global propagation of traffic into account. In theory, the spatial missing data can be approximated using traffic simulation. However, we have to ensure the flow consistency on the boundaries that separate areas with GPS data coverage and areas without GPS data coverage in order to accept a particular simulation result. This requires us to be able to dynamically adjust traffic simulation for matching traffic flows.

% Currently, only data-driven methods exist for interpolating the spatial missing data. 

Chapter~\ref{ch:itsm} and Chapter~\ref{ch:siga} are dedicated to address the limitations of GPS data and explain how to accurately and efficiently estimate traffic conditions, interpret spatial-temporal missing traffic data, and dynamically reconstruct traffic flows at a city scale.

%----------------------------------------------------------------------------------
%----------------------------------------------------------------------------------
%----------------------------------------------------------------------------------
\section{Autonomous Driving}
\label{sec:intro-av}
While traffic dynamics can only be studied collectively, it is also formed by individual vehicles. Given over 90\% crashes are due to human errors, the conversion from human-driven vehicles to autonomous vehicles (AVs)---with improved safety and efficiency features---has the potential to alleviate the severe traffic conditions~\cite{Wu2018Stablizing}. 

% This section introduces my efforts on improving the autonomous driving technology with a focus on enabling an AV to navigate in dangerous situations including accidents.

Widespread research on autonomous driving has been conducted as a result of recent advancements in automation and machine learning algorithms. Before the deployment of AVs, we need to scrutinizingly test and improve their safety, control, and coordination. I have developed a framework, named ADAPS (\textbf{A}utonomous \textbf{D}riving Vi\textbf{a} \textbf{P}rincipled \textbf{S}imulations), which not only can be used to simulate, analyze, and produce driving data in various scenarios including accidents, but also represent a more efficient online learning mechanism for learning robust control policies for autonomous driving~\cite{Li2019ADAPS}.

Learning to drive is a sophisticated task considering the complexity and dynamics of an environment and the lack of precise descriptions of a desired driving behavior. These features pose challenges to conventional planning and control methods, since they usually require us to specify the dynamics of the world---which can be used to model the cost of an action at each time step and consequences of an action in the future~\cite{ratliff2009learning,silver2010learning}. Describing and parameterizing an uncontrolled and unpredictable driving environment is a daunting and often impractical task, thus preventing the conventional techniques from succeeding.

Nevertheless, driving, as a complex task, can be easily demonstrated by humans. The intricate traffic flow manifests such success. This phenomenon has inspired \emph{imitation learning}---a data-driven machine learning technique that leverages expert demonstrations to synthesize a controller for achieving a desired behavior. A number of studies have shown that imitation learning is an effective approach capable of learning policies for a juggling robot~\cite{atkeson1994using}, a quadruped robot maneuvering on rough terrains~\cite{ratliff2007imitation}, and autonomous driving on highway~\cite{pomerleau1989alvinn}, among many others.

% and navigation in an outdoor environment~\cite{silver2008high}.

A straightforward way to achieve imitation learning is via standard supervised learning: learning a training dataset that is a collection of observations and their corresponding behaviors. In the context of driving, this training dataset could be images from a front-facing camera and the steering angle applied while each image was taken. The effectiveness of supervised learning is commonly shown by running a learned model on a test dataset and analyzing the results. 

Supervised learning operates based on one assumption---the examples in both training and test datasets are independently and identically distributed (i.i.d.)~\cite{friedman2001elements}. While this assumption is reasonable for many problems, it fails to apply on driving, in which task the training and test examples are not i.i.d.~\cite{ross2011reduction}. 

To be specific, autonomous driving is a sequential prediction and controlled (SPC) task, meaning the system has to predict a sequence of control commands over time to achieve a goal. Given the sequential nature, in such a system, any predicted control commands will affect the following observations being taken by the system and, consequently affect the future predicted control commands (since the predictions are drawn based on the observations). Because the predictor and the observer are entangled, their resulting training and test examples violate the i.i.d. assumption. 

Although we can still use standard supervised learning to achieve small training errors, the problem occurs during the test phase. In any learning process, approximation is inevitable, which, when combined with the entanglement of the predictor and the observer, can lead a system to large test errors: a small disturbance from either the predictor or the observer is likely to lead to compounding errors. To give an example, in driving, a small disturbance to the sensor or the control module can result in a vehicle encountering an ``unseen'' observation (i.e., an observation not appearing in the training dataset). Consequently, this observation will confuse the control module, causing an unpredictable maneuver of the vehicle and lead the vehicle to more ``unseen'' observations. This mismatch between the observations used for training (i.e., training distribution) and the observations encountered during testing (i.e., test distribution) is termed \emph{covariate shift}~\cite{sugiyama2012machine}. 

In essence, a driving policy from supervised learning by treating all training examples as i.i.d. will not be robust to its own mistakes and likely lead a vehicle to dangerous situations including accidents. To alleviate this issue, first, we need driving data from both safe and, especially, dangerous situations. This is critical because most expert demonstrations (i.e., human driving data) are from safe driving. However, if a learning algorithm is not trained on recovery demonstrations in dangerous situations, when encountering, the algorithm's behavior is undefined. Second, we need an efficient online learning mechanism. For solving an SPC task, interactions with experts in a number of iterations are essential, without which, no learning algorithm can ensure robust performance~\cite{ross2011reduction}. Since learning an effective policy for an SPC task requires iterative testing and update, an efficient online learning technique that can reduce the number of learning iterations is desired. For autonomous driving, this becomes critical given that one iteration implies one incident (otherwise there is no need to update a policy by proceeding to the next iteration). Third, we need an effective architecture for the control policy. In the simplest case, a policy needs to classify road conditions and make corresponding maneuvers. The two basic road conditions are ``safe'', e.g., no obstacle on the road---the vehicle can just follow the road, and ``dangerous'', e.g., an obstacle appearing on the road---the vehicle needs to avoid it. Next, I will briefly explain and introduce my solutions to these challenges.

Obtaining recovery data of vehicles from dangerous situations in the physical world is impractical, due to the high expenditure of a vehicle and potential injuries to people both inside and outside a vehicle. In addition, even if such recovery data are collected, human experts are usually required to label them. This process could be inefficient and subject to judgmental errors~\cite{ross2013learning}. These difficulties suggest the potential use of the virtual world, in which we can simulate various dangerous scenarios including accidents and then analyze the simulated scenarios to generate recovery data. I have developed a simulation platform for this purpose. 

% The difficulty of modeling a driving condition, which should take into account the environment and the driver, lead traditional control methods to have inferior performance in autonomous driving~\cite{ratliff2009learning,silver2010learning}. While it is challenging to specify control rules for driving systematically, such a behavior can be demonstrated by human experts easily. This observation gives rise to \emph{imitation learning}, or apprenticeship learning, which consists of a family of methods that can leverage expert demonstrations to synthesis a control policy for achieving desired behaviors. 

% While imitation learning is effective in many cases, one major problem of this approach is \emph{covariate shift}, i.e., the training and testing distributions are different. This phenomenon will lead to a one-step deviation error during the imitation of an expert's trajectory. For a sequential control task such as driving, such errors will likely be accumulated and eventually result in a catastrophic outcome. Some approaches, such as DAGGER~\cite{ross2011reduction}, have been proposed to mitigate this issue. DAGGER performs a reduction of imitation learning to online learning, while assuming it can access an interactive expert. Using DAGGER, the compound one-step deviation errors will be attenuated as the number of online learning iterations approaches infinity~\cite{ross2011reduction}. However, for autonomous driving, we would like to reduce the number of iterations as it correlates with the number of incidents. 

In order to develop an efficient online learning technique, my solution is to treat the principled simulation as the ``expert''. This ``expert'' will plan alternative safe trajectories during the analysis of a simulated accident while taking the kinematic and dynamic constraints of a vehicle into account. As a result, not only can the number of ``expert'' trajectories be generated indefinitely, but the access to such an ``expert'' is instantaneous. These features will assist in reducing the number of learning iterations.  

Lastly, I propose a hierarchical control policy using deep neural networks~\cite{lecun2015deep} and long short-term memory networks~\cite{hochreiter1997long}. My policy consists of three modules, the \emph{detection} module, which monitors a road condition and categorizes it as ``safe'' or ``dangerous'', the \emph{following} module, which directs a vehicle to follow the road if the condition is considered ``safe'', and the \emph{avoidance} module, which steers a vehicle away from an obstacle when the road condition is considered ``dangerous''. Chapter~\ref{ch:adaps} is dedicated to explain the abovementioned solutions in details. 

Although this dissertation addresses the macroscopic level and the microscopic level of traffic separately, the two levels are tightly coupled. Many applications can stem from their rich connection. For example, from macroscopic to microscopic, the reconstructed city-scale traffic can assist route planning of autonomous driving and enrich virtual environments for training purposes~\cite{Chao2019Survey}; from microscopic to macroscopic, individual autonomous vehicles can be adopted to stabilize and regulate traffic flows~\cite{Wu2018Stablizing}. These applications will be further discussed in Chapter~\ref{ch:Conclusion}. In the following, I will introduce the main results regarding \emph{city-scale traffic reconstruction} and \emph{autonomous driving}.

%% -------------------------------------------------------------------------------------------------
\section{Main Results}
\subsection{City-Scale Traffic Reconstruction}
\subsubsection{Deterministic Estimation of Traffic Conditions}
\label{sol:determine}
My first solution to the estimation of traffic conditions is an efficient deterministic approach~\cite{Li2017CityEstSparse}, for which details can be found in Chapter~\ref{ch:itsm}. My solution is based on two observations: \emph{traffic patterns exhibit weekly periodicity} and \emph{traffic conditions are quasi-static}. Using these observations, I treat one week as a traffic period and assume that traffic conditions are static within each hourly time interval of a weekly period. Then, in each time interval, I conduct \emph{map-matching} by replacing the shortest distance criterion with the shortest travel-time criterion, since the former criterion can lead to mapping errors in a congested network. 

In order to compute the shortest travel-time path between any two nodes in a network, we need the travel time information of the network. However, the initial condition of a road network usually lacks such information. My approach to alleviating this issue is to adapt a travel-time allocation method from Hellinga et al.~\cite{hellinga2008decomposing}, which was developed based on empirical observations of the real-world traffic dynamics. Using this technique, the subsequent \emph{map-matching} process can take the intermediate estimated traffic conditions into account, which helps improving the overall map-matching accuracy. 

I have evaluated the effectiveness of my solution through comparison with a state-of-the-art technique from Lou et al.~\cite{lou2009map} on a synthetic road network. I have established 30 traffic conditions corresponding to 30 congestion levels as the ground truth. For each congestion level, I test my algorithm by sampling different portions of the simulated traffic population, namely at levels 20\%, 40\%, 60\%, 80\%, and 100\%. In principle, the more GPS traces are used in traffic reconstruction, the more accurate are the reconstruction results. 

% To be specific, the road network is designed as a grid with 5 $\times$ 5 intersections. Using an hourly time interval, a group of vehicles is simulated to travel the road network. The resulting average travel times of road segments are taken as ground-truth traffic conditions. Using these conditions, I synthesize GPS traces by randomly choosing nodes in the network as starting and ending positions and assuming all vehicles were taking the fastest route to reach their destinations. In addition, for all GPS traces, I only take their first and last positions as the input to my algorithm in order to reflect the low-sampling-rate feature found in real-world GPS datasets. By increasing the number of simulated vehicles from 50 to 1500 with an increment of 50, 30 traffic conditions corresponding to 30 congestion levels are created. 

The first analysis, shown in Figure~\ref{fig:intro-nn-time}, is conducted using the aggregate travel time of the road network under each congestion level as the \emph{true travel time}. Starting with 20\% GPS traces, my technique shows close approximations to the \emph{true travel time} at all congestion levels, while the shortest-distance based technique fails to achieve the same performance. Table~\ref{tab:intro-nn-error} shows the ``absolute error to ground truth'' computed from all congestion levels. 

% The smallest error is achieved using 80\% of all GPS traces. The marginal increase in errors when using 100\% GPS traces is mainly due to the stochastic aspect of the travel-time allocation technique I used.

\begin{figure}
	\centering
	\includegraphics[width=\textwidth]{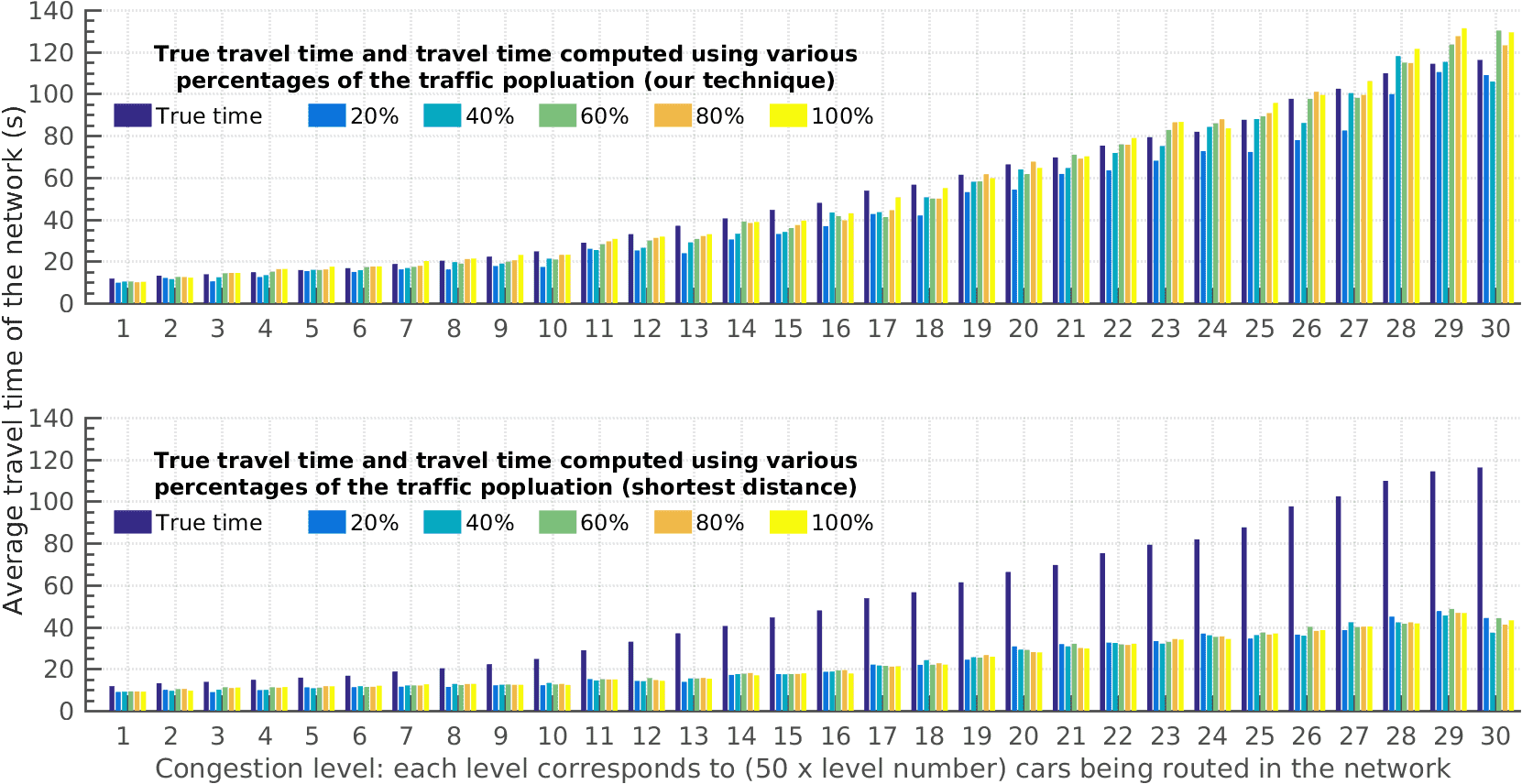}
	\caption{Recovery of average travel time on different percentages of the traffic population using my approach (TOP) vs. Lou et al.~\protect\cite{lou2009map} (BOTTOM). My technique consistently outperforms Lou et al.~\protect\cite{lou2009map} in estimating the average travel time over 30 different congestion levels.}
	\label{fig:intro-nn-time}
\end{figure}

\begin{table}[ht!]
	\centering
	\small
	\begin{tabular}{ccccc}
		\toprule
		& \multicolumn{4}{c}{Absolute errors to the ground truth}  \\
		\cmidrule(l){2-5}
		& \multicolumn{2}{c}{My technique}  & \multicolumn{2}{c}{Lou et al.~\cite{lou2009map}}  \\		
		\cmidrule(l){2-3} 	\cmidrule(l){4-5} 	
		Traffic percentage & mean ($s$) & std. ($s$) & mean ($s$) & std. ($s$)		\\
		\midrule
		20\% & 8.29 & 5.31 & 29.74 & 21.90 
		\\
		\midrule
		40\% & 4.25 & 3.38 & 29.85 & 22.66
		\\
		\midrule
		60\% & 3.67 & 3.67 & 29.33 & 22.06
		\\
		\midrule
		80\% & 3.40 & 3.32 & 29.55 & 22.46  
		\\
		\midrule
		100\% & 3.58 & 4.04 & 29.66 & 22.35 
		\\
		\bottomrule
	\end{tabular}
	\caption{The absolute errors in the recovered travel time computed using my technique vs. Lou et al.~\protect\cite{lou2009map} by using GPS traces from various percentages of the traffic population. My technique results in much smaller errors as of Lou et al.~\protect\cite{lou2009map}. }
	\label{tab:intro-nn-error}
\end{table}

The second analysis, shown in Figure~\ref{fig:intro-tt-results}, summarizes the relative improvements measured in mean squared error (MSE) of my method over Lou et al.~\cite{lou2009map}. As the congestion level increases or more GPS traces are used in recovery, my technique outperforms the other technique. In comparison, the improvements are less salient when the congestion level is low, (i.e., the first 10 congestion levels out of 30), but more salient when the congestion level is high (i.e., the rest 20 congestion levels out of 30).  

\begin{figure}
	\centering
	\includegraphics[width=\textwidth]{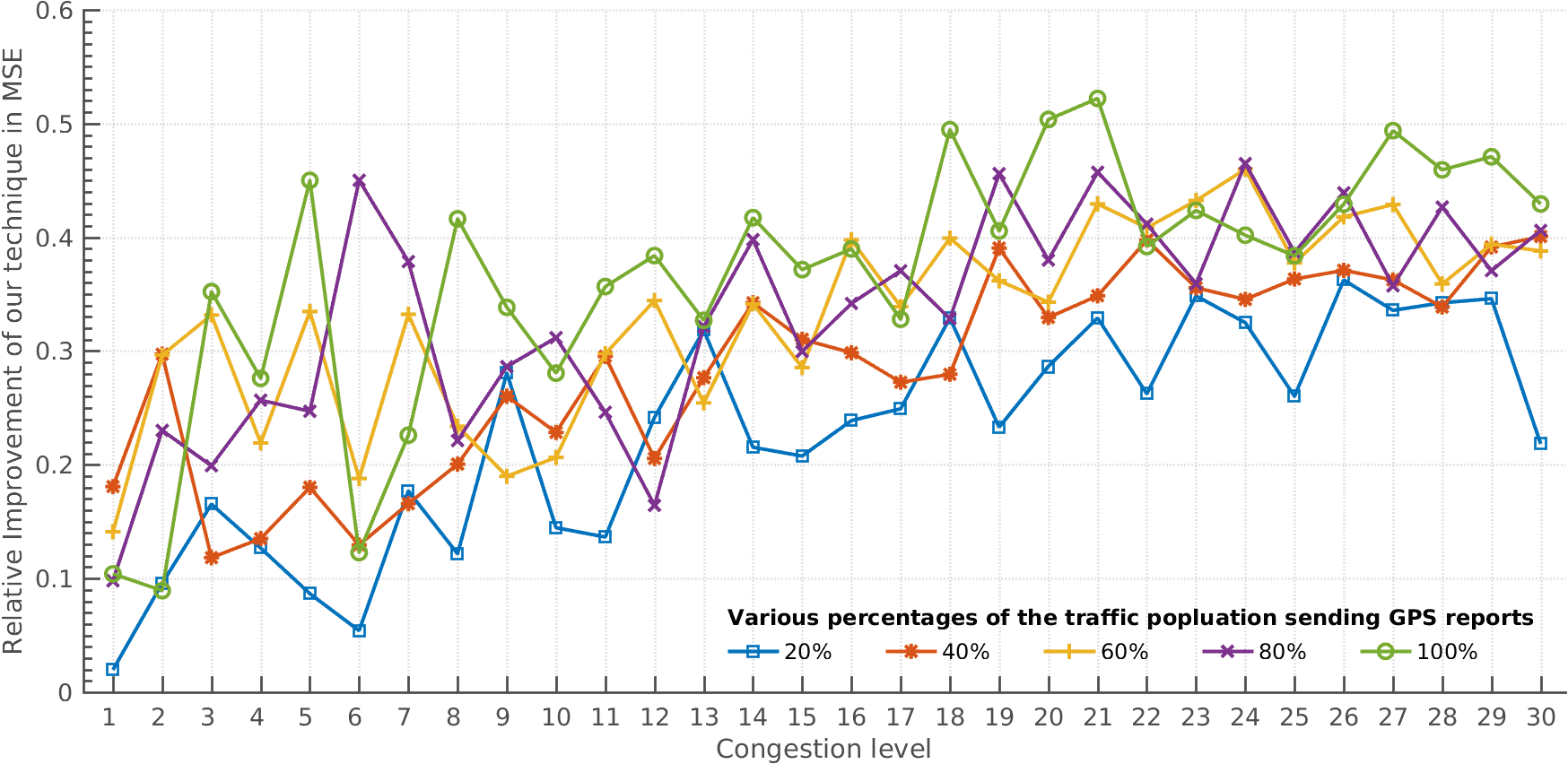}
	\caption{Relative improvements measured in MSE of my technique over Lou et al.~\protect\cite{lou2009map} on travel time. My technique outperforms Lou et al.~\protect\cite{lou2009map} as the congestion level increases or as more GPS traces become available for the recovery task.} 
	\label{fig:intro-tt-results}
\end{figure}

%-----------------------------------------------------------------------------------------------------
\subsubsection{Temporal Missing Data Completion}
Traffic data can be scarce for certain time periods such as early-morning or late-night hours. In order to interpolate the data for these time periods, I exploit the fact that the traffic pattern is intrinsically periodic (as a result of periodic human behaviors). It thus has a sparse representation in the frequency domain. Based on this observation, I have developed a technique based on Compressed-Sensing~\cite{1614066,1580791} to fill in missing travel time information and robustly recover the traffic pattern of a road segment over an entire traffic period~\cite{Li2017CityEstSparse}. The details of this approach can be found in Chapter~\ref{ch:itsm}.

\begin{figure}
	\centering
	\includegraphics[width=\textwidth]{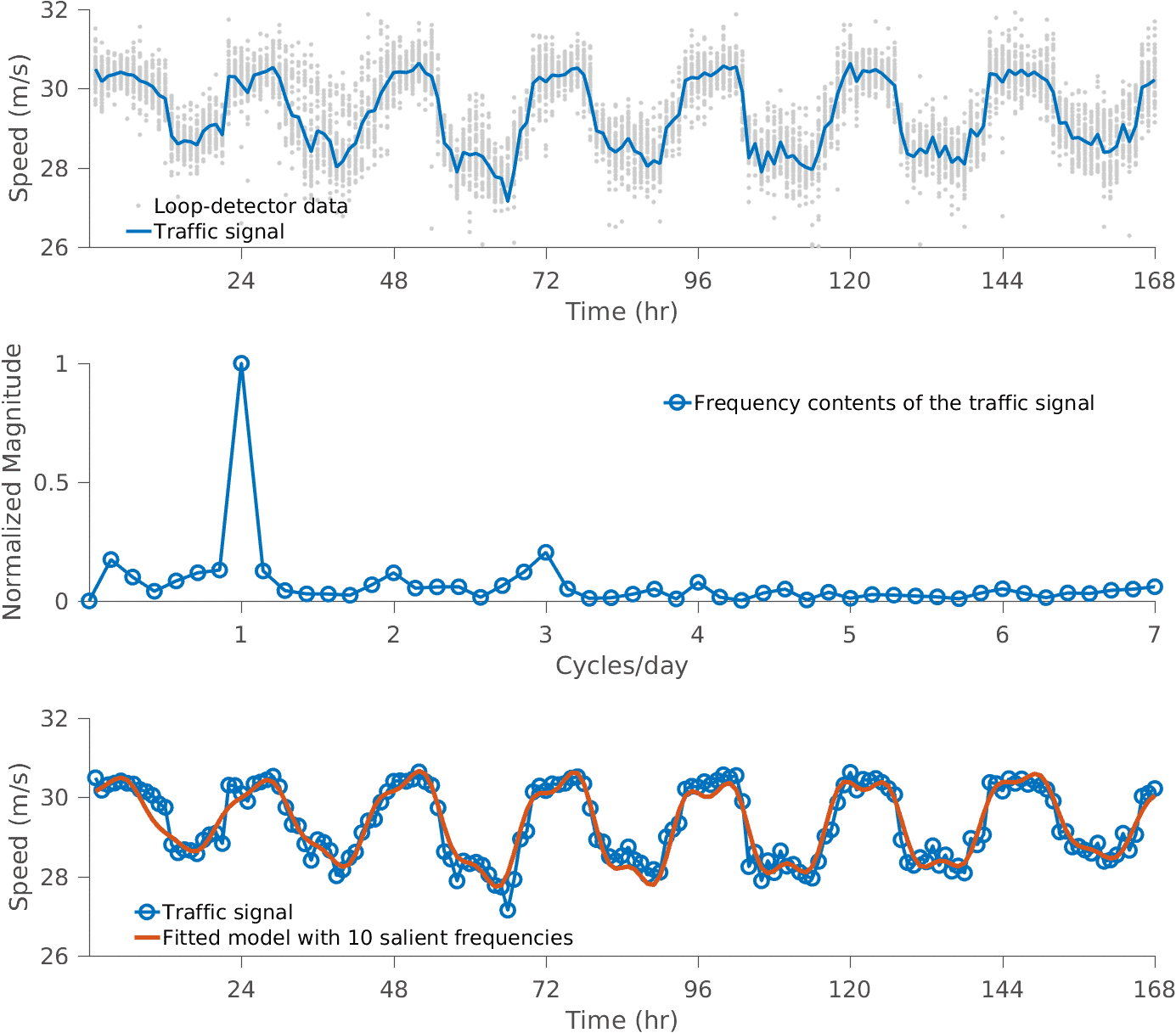}
	\caption{The average speed measurements from loop-detector data are interpreted as a \emph{traffic signal}, which exhibits a clear periodic pattern (TOP); the spectral analysis reveals that the most prominent frequency is one cycle per day (i.e., 24 hours) (MIDDLE); the traffic signal is approximated by a frequency-domain linear regression model in which 95\% energy is retained by keeping the 10 largest frequencies (BOTTOM).}
	\label{fig:intro-loop-analysis}
\end{figure}

In order to test the effectiveness of my approach, I have adopted speed measurements from 38 loop detectors in the city of San Francisco. These data represent relatively complete and accurate measurements of traffic. The hourly average speed measurements of a single loop detector for a weekly period is termed the \emph{traffic signal} (see an example in the Figure~\ref{fig:intro-loop-analysis} top panel). To analyze each traffic signal, I have performed a spectral analysis, which results are shown in Figure~\ref{fig:intro-loop-analysis}. We can see from the middle panel that the most salient oscillation appears at 24 hours and from the bottom panel that most signal energy can be captured by keeping the 10 largest frequencies of the signal. These indicate that a traffic signal indeed has a sparse representation in the frequency domain, which is emphasized by the analysis of the decaying rates of frequency magnitude presented in the top panel of Figure~\ref{fig:intro-sparse}.

% in which I set the frequencies as the reciprocal of the signal length 1/168 (1 hour in 168 hours of a week) and subtract the signal from its mean to make the oscillations apparent.

\begin{figure}
	\centering
	\includegraphics[width=\textwidth]{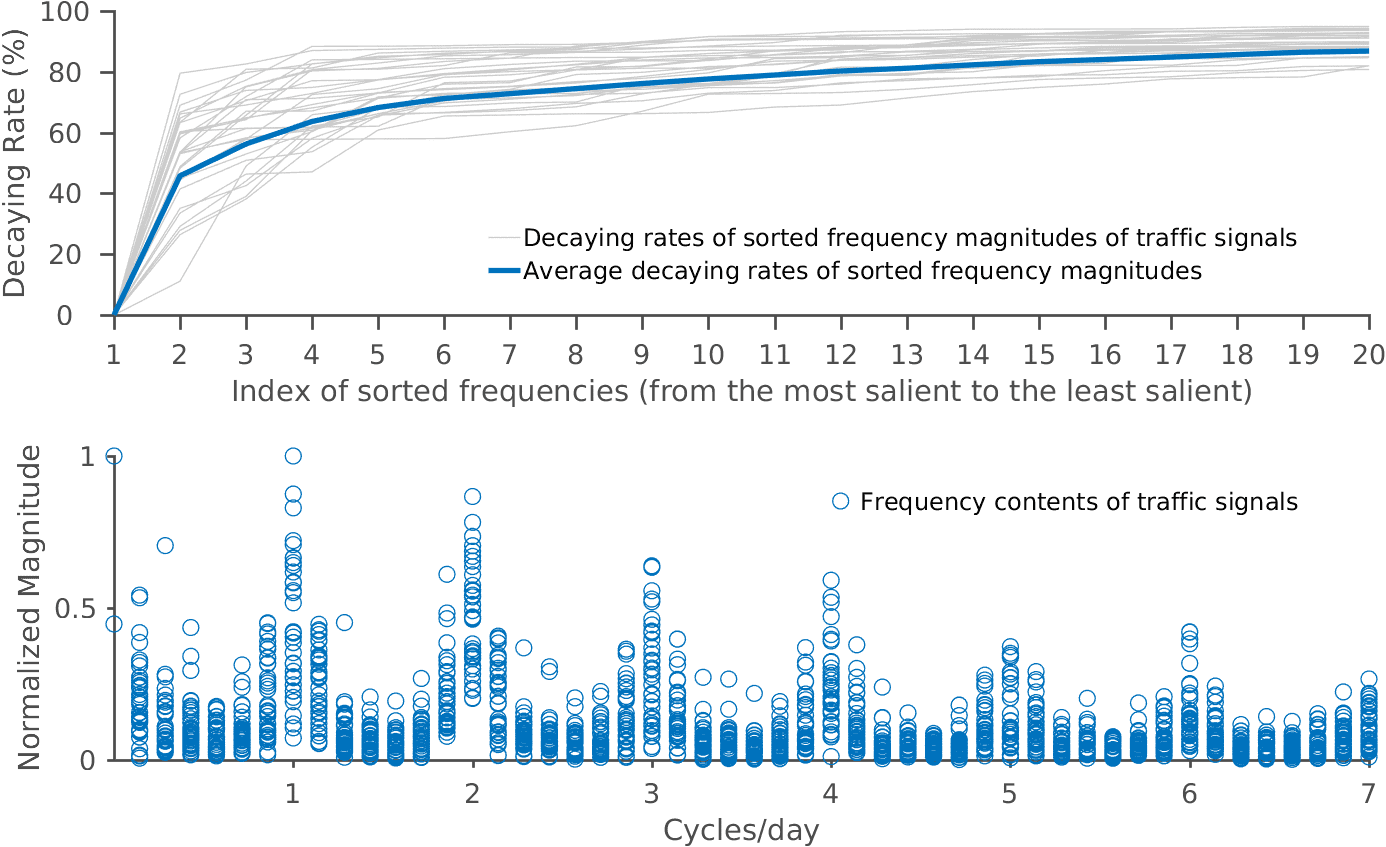}	
	\caption{The top panel shows the decaying rates of frequency magnitudes of all traffic signals; the bottom panel shows the locations and normalized magnitudes of the frequency components of all traffic signals. The rapid growth of decaying rates and randomly distributed frequency structures indicate that Compressed Sensing is applicable for recovering a traffic signal.}
	\label{fig:intro-sparse}
\end{figure}

According to the Nyquist-Shannon Theorem, we need at least 168 measurements to fully recover a traffic signal, which are lacking as a result of the temporal sparsity of GPS data. However, Compressed Sensing promises that if a signal has a sparse representation and randomly distributed frequency components, we can have a accurate recovery of the signal. Those features do appear in a traffic signal as shown in Figure~\ref{fig:intro-sparse}. This indicates that we can recover a traffic signal using Compressed Sensing. An example is shown in Figure~\ref{fig:intro-loop-recover}. The robustness analysis of my approach can be found in Figure~\ref{fig:intro-comsense-error}.  

% \begin{figure}
% 	\centering
% 	\includegraphics[width=\linewidth]{chapters/temporal.png}
% 	\caption{LEFT: A recovered traffic pattern using my technique highly resembles its original form. RIGHT: my technique shows robustness when the number of samples used in a pattern's recovery decreases.}
% 	\label{fig:intro-temporal}
% \end{figure}

\begin{figure}[ht]
	\centering
	\includegraphics[width=\textwidth]{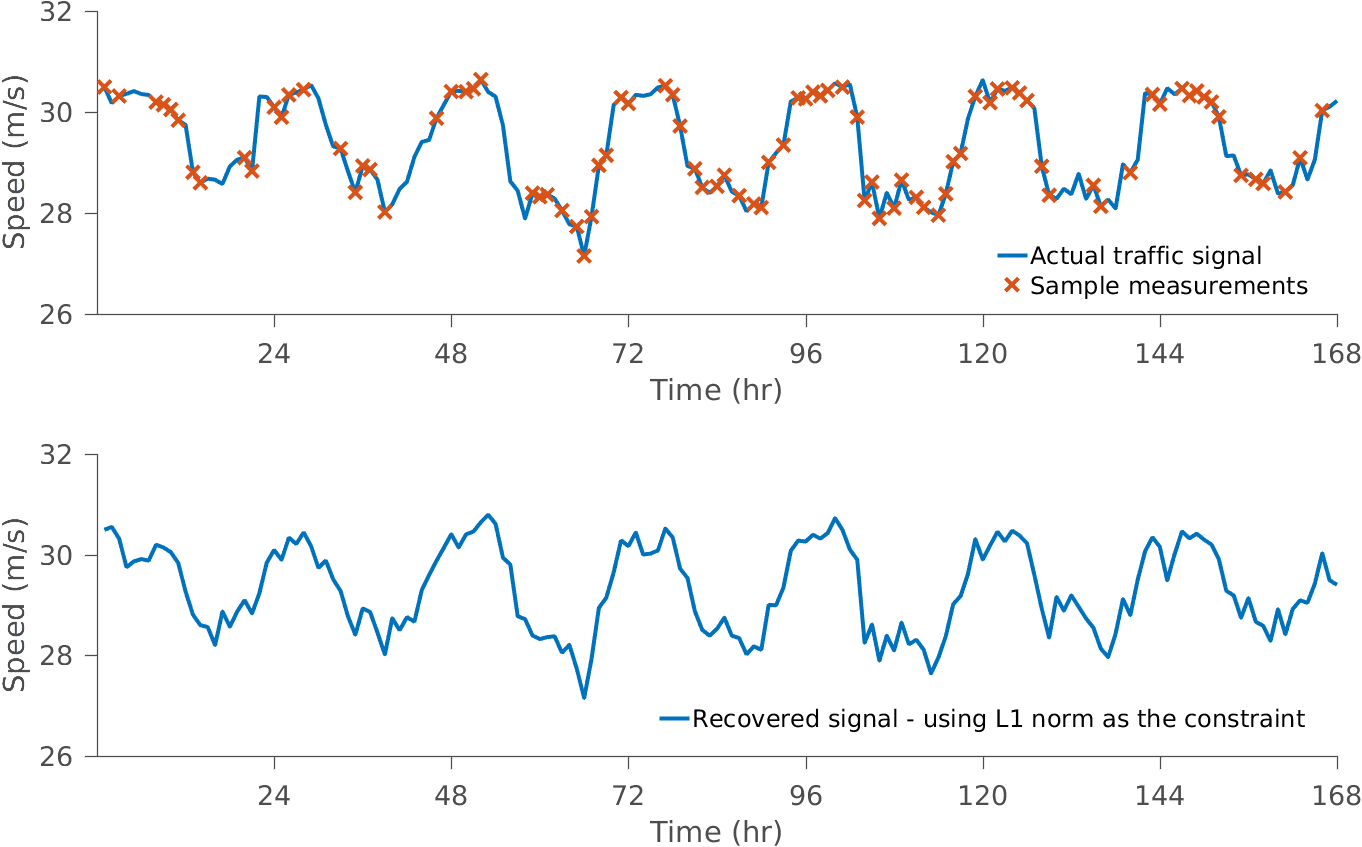}
	\caption{A recovered traffic signal using my technique highly resembles its original form.} 
	\label{fig:intro-loop-recover}
\end{figure}

\begin{figure}[ht]
	\centering
	\includegraphics[width=\textwidth]{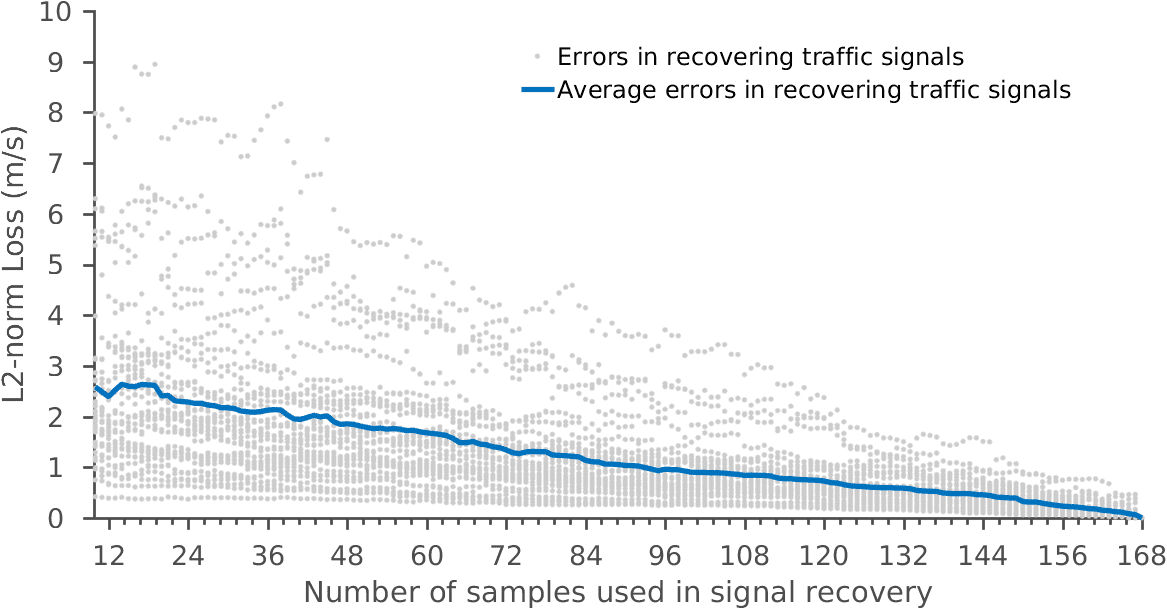}
	\caption{My technique shows robustness when the number of samples used in recovering a traffic signal decreases.} 
	\label{fig:intro-comsense-error}
\end{figure}

By confirming the applicability of Compressed Sensing algorithm to recovering traffic signals, we can proceed to apply it to real-world GPS datasets. To demonstrate that my approach can recover features of a traffic signal, I have adopted the metric $\text{\emph{fluidity}}\in [0,1]$~\cite{hofleitner2012large}, computed as the ratio of the estimated travel speed to the free-flow speed of a road segment. In Figure~\ref{fig:intro-dynamics}, I show the recovered traffic pattern using actual GPS data in San Francisco, which shows clear periodicity at one cycle per 24 hours. 

The similarity among different days in a week can also be used to show the effectiveness of a signal recovering method. I have computed the cosine distance between every pair of days for both the estimated traffic conditions and the actual traffic conditions derived from the loop-detector data from San Francisco. The results are shown in Figure~\ref{fig:intro-similarity}. The left panel shows the distance scores while the right panel shows the qualitative results: the upper triangle indicates the estimated quantities and the lower triangle indicates the actual quantities. The symmetrical pattern illustrates that my technique can produce accurate results compared to the ground-truth values.

\begin{figure}
	\centering
	\includegraphics[width=\textwidth]{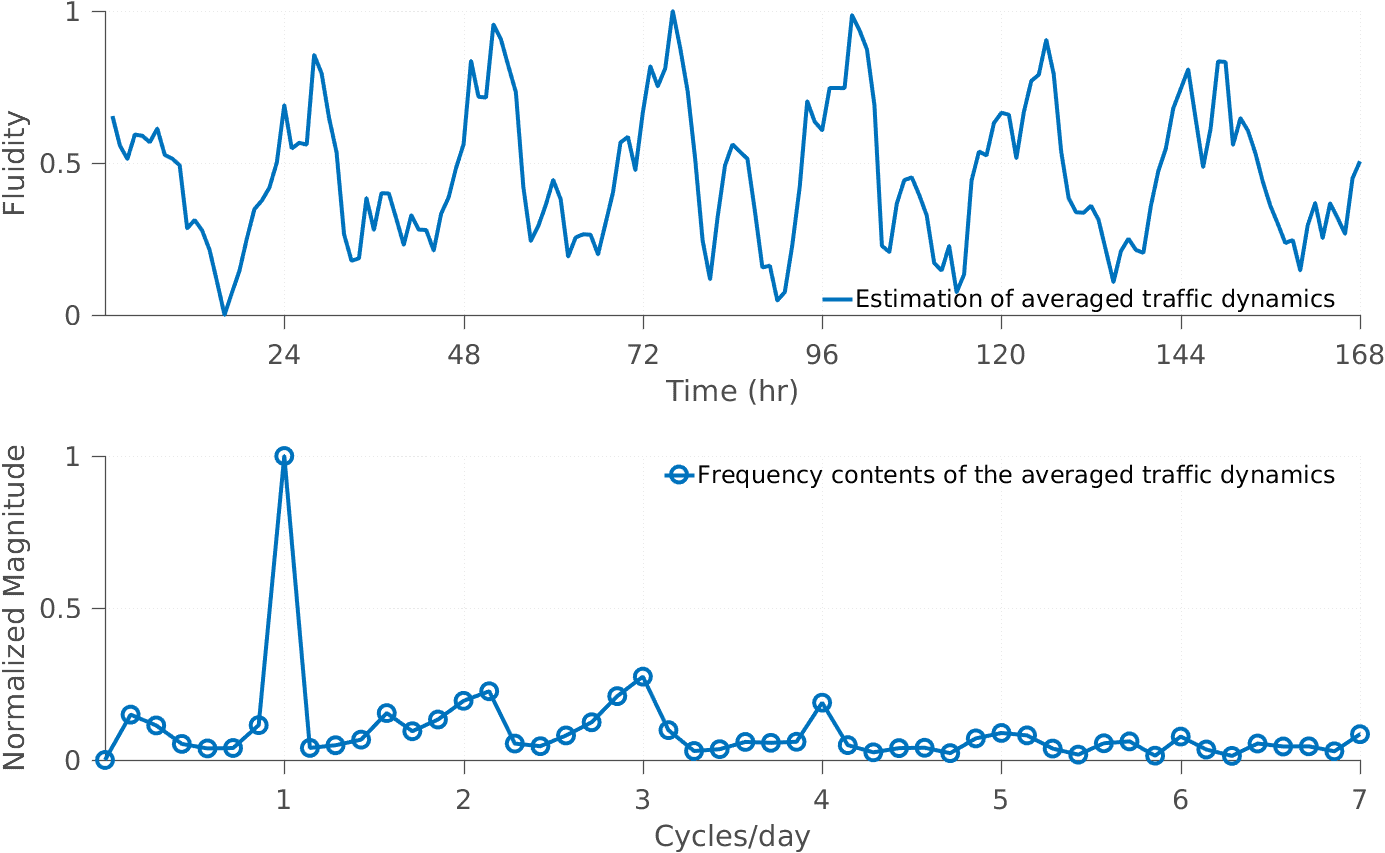}
	\caption{Estimated traffic pattern of downtown San Francisco (TOP) and its spectral analysis (BOTTOM). The result from my technique demonstrates a clear daily trend, which is consistent with the periodic feature observed in loop-detector data from the same area.}
	\label{fig:intro-dynamics}
\end{figure}

\begin{figure}
	\centering
	\includegraphics[width=\textwidth]{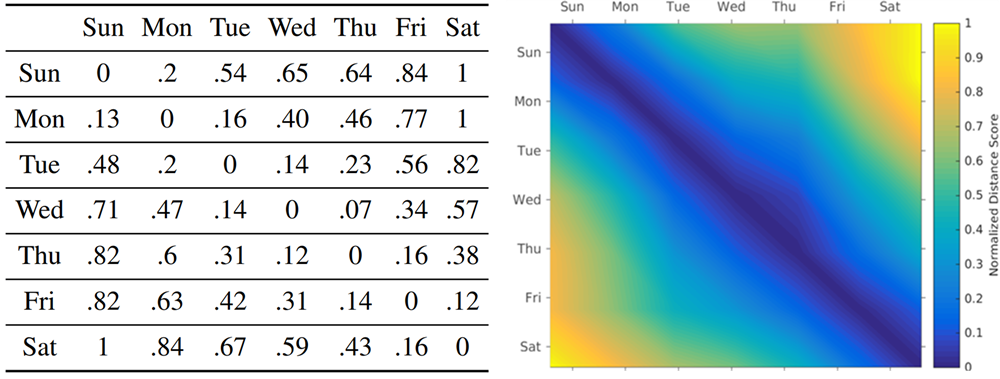}
	\caption{Correlation between every pair of days in a week. The left panel lists normalized similarity scores calculated using the cosine distance, and the right panel provides the qualitative results. For both, the upper triangular matrix is derived using the estimated traffic conditions from GPS data, and the lower triangular matrix is computed using loop-detector data from the same area. When data and patterns are compared across the diagonal line, my estimated results exhibit high similarity to the loop-detector data. }
	\label{fig:intro-similarity}
\end{figure}

%-----------------------------------------------------------------------------------------------------
\subsubsection{Iterative Estimation and Spatial Missing Data Completion}
While the deterministic approach introduced in the above section for estimating traffic conditions is efficient and effective, in order to further improve the accuracy for interpolating spatial missing data, I have developed an iterative algorithm that embeds \emph{map-matching} and \emph{travel-time estimation} as its sub-routines~\cite{Li2018CityEstIter}. 

To be specific, in areas with GPS data coverage, I first conduct a coarse inference of travel times at each hourly time interval by solving a convex optimization program inspired by Wardrop's Principles~\cite{wardrop1952road,sheffi1985urban}. Then, I refine the inferred results via executing \emph{map-matching} and \emph{travel-time estimation} iteratively. Next, I proceed to establish baseline estimation of traffic conditions in areas without GPS data coverage using a nested optimization procedure~\cite{yang1992estimation} to ensure that certain traffic flow characteristics are met. The details can be found in Chapter~\ref{ch:siga}. 

I have evaluated my approach using a real-world road network, which contains 5407 nodes and 1612 road segments. Additionally, 34 ground-truth travel times are established and over 10 million synthetic GPS traces are sampled based on the established heuristic travel times. I have compared my approach with two state-of-the-art methods, Hunter et al.~\cite{hunter2014large} and Rahmani et al.~\cite{rahmani2015non}. The results show that my algorithm offers the lowest error rate and up to 97\% relative improvement in estimation accuracy (see Figure~\ref{fig:estimation}).

\begin{figure}
	\centering
	\includegraphics[width=\textwidth]{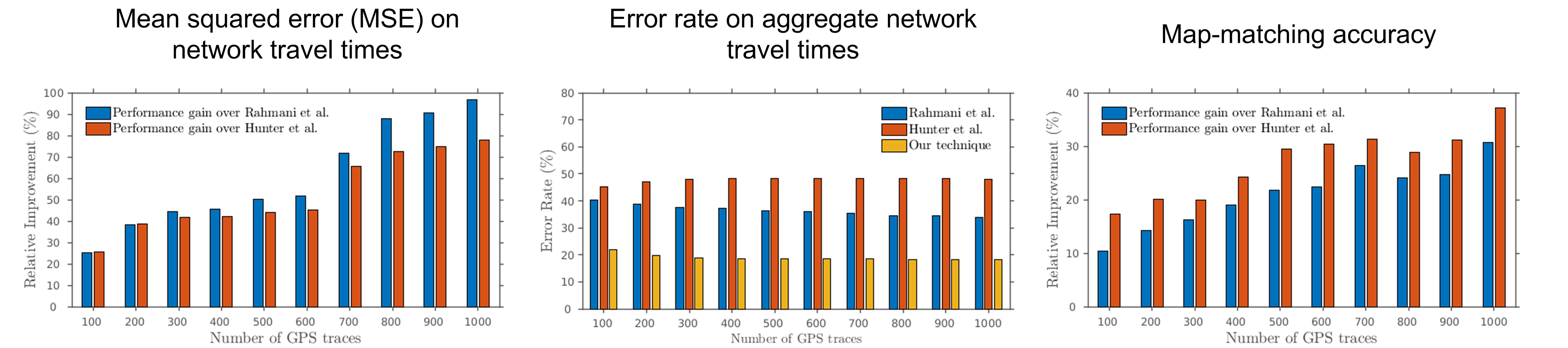}
	\caption{My algorithm on map-matching and travel-time estimation achieves consistent improvements over the previous techniques~\protect\cite{rahmani2015non,hunter2014large} in various measures.}
	\label{fig:estimation}
\end{figure}

Currently, most existing methods interpolate spatial missing measurements statically~\cite{hellinga2008decomposing,rahmani2015non,herring2010estimating,hunter2014large,tang2016estimating,hunter2014large,Li2017CityEstSparse}. In order to account for the dynamic nature of traffic, I have leveraged traffic simulation for the interpolation task and developed an algorithm to guarantee the consistency of traffic flows on the boundaries of areas with and without GPS data~\cite{Li2017CityFlowRecon}. 

An effective way to dynamically interpolate spatial missing data is to run traffic simulation in the ``empty'' areas. However, arbitrary simulations will not respect the boundary conditions from the previously estimated traffic conditions in data-rich areas. This indicates we need to fine tune the simulation with the objective of enforcing minimal discrepancy between simulated traffic flows and previously estimated traffic flows. For this goal, I allow the simulation algorithm to alter ``turning ratios'' at intersections---parameters that determine how traffic will distribute itself to downstream road segments at an intersection. Using this design perspective, the objective then becomes deriving the optimal ``turning ratios'' such that the simulated traffic flows will respect the estimated traffic flows (from GPS data) at the boundaries that join data-deficient and data-rich regions. 

The altered objective leads me to use simulation-based optimization for finding the optimal solution. However, as our goal is to reconstruct city-scale traffic, an optimization program at that scale could be computationally cost prohibitive. To remedy this issue, I have adopted a metamodel-based simulation optimization~\cite{osorio2015metamodel}. Compared to a stochastic microscopic traffic simulator, the metamodel is a deterministic function, thus is much more tractable and computationally efficient when running within an optimization routine. In order to make sure that the metamodel behaves similarly to the simulator, the metamodel is trained using simulations. This way we only need to run traffic simulation dozens of times (for training the metamodel) instead of hundreds or even thousands of times (for running simulation-based optimization).

I have compared my technique to a baseline approach that only uses a simulator (i.e., simulation-only approach). I set up the experiments using various origin-destination demands on the road network of San Francisco to establish ground-truth traffic conditions. Tested on these conditions, my approach maintains on average a 7\% error rate while achieving up to 90 times speedup compared to the simulation-only approach. These statistics can be seen in Figure~\ref{fig:intro-meta-error} and Figure~\ref{fig:intro-meta-speedup}, respectively.

\begin{figure}
	\centering
	\includegraphics[width=\textwidth]{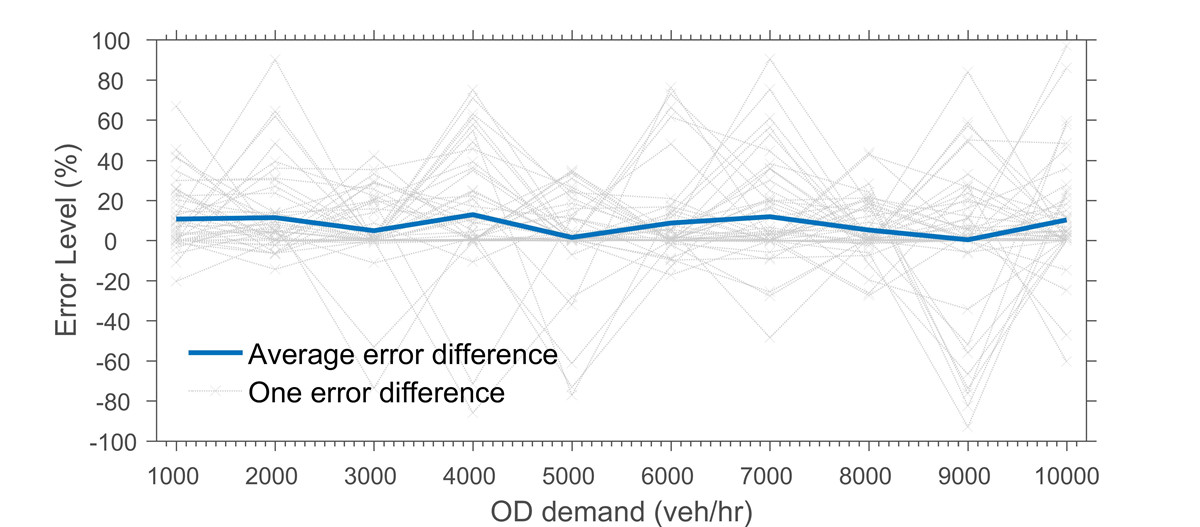}
	\caption{The error level of my technique vs. simulation-only approach: For a given road network and a specific origin-destination demand, I first compute the differences between the two methods with respect to the ground truth. Then, I subtract these two differences to obtain one error difference measure (indicated by a gray cross). The mean, minimum, and maximum values of the average error level (indicated by the solid line) are respectively 7.8\%, 0\%, and 13\%. In many cases, my technique even outperforms the simulation-only approach with much smaller differences to the ground truth, shown by the negative values.}
	\label{fig:intro-meta-error}
\end{figure}

\begin{figure}
	\centering
	\includegraphics[width=\textwidth]{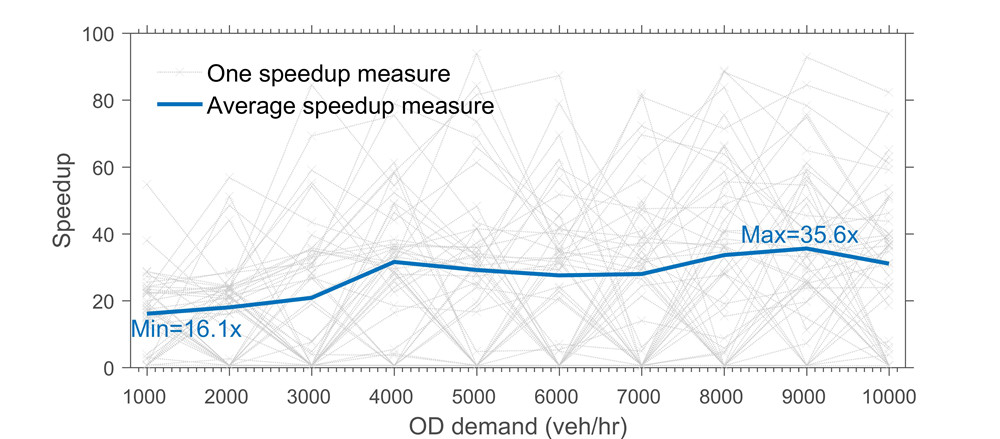}
	\caption{The performance speedup of my technique over the simulation-only approach: my technique is \emph{on average} about $27.2x$ faster, with maximum and minimum performance gains of $35.6x$ and $16.1x$. The maximum observed performance gain of a single speedup measure is \emph{over $90x$} (at origin-destination demand = 9000).}
	\label{fig:intro-meta-speedup}
\end{figure}

After interpolating spatial missing data, I have fully reconstructed spatial-temporal traffic at the scale of a city. The reconstructed traffic can be visualized in many ways such as 2D flow map, 2D animation, and 3D animations. Some examples are shown in Figure~\ref{fig:intro-visual-2d}, Figure~\ref{fig:intro-visual-3d}, Figure~\ref{fig:intro-city}, and Figure~\ref{fig:intro-pattern}. These visual representations can be used to enable 1) analysis of traffic patterns at street level, region level, and the city level, and 2) virtual environment applications such as virtual tourism and the training of autonomous vehicles.  

% \begin{figure}
% 	\centering
% 	\includegraphics[width=\textwidth]{chapters/visual.png}
% 	\caption{2D animation (LEFT) and visualization (CENTER) of reconstructed traffic in virtual San Francisco (RIGHT) using our method.}
% 	\label{fig:intro-visual}
% \end{figure}

\begin{figure}
	\centering
	\includegraphics[width=\textwidth]{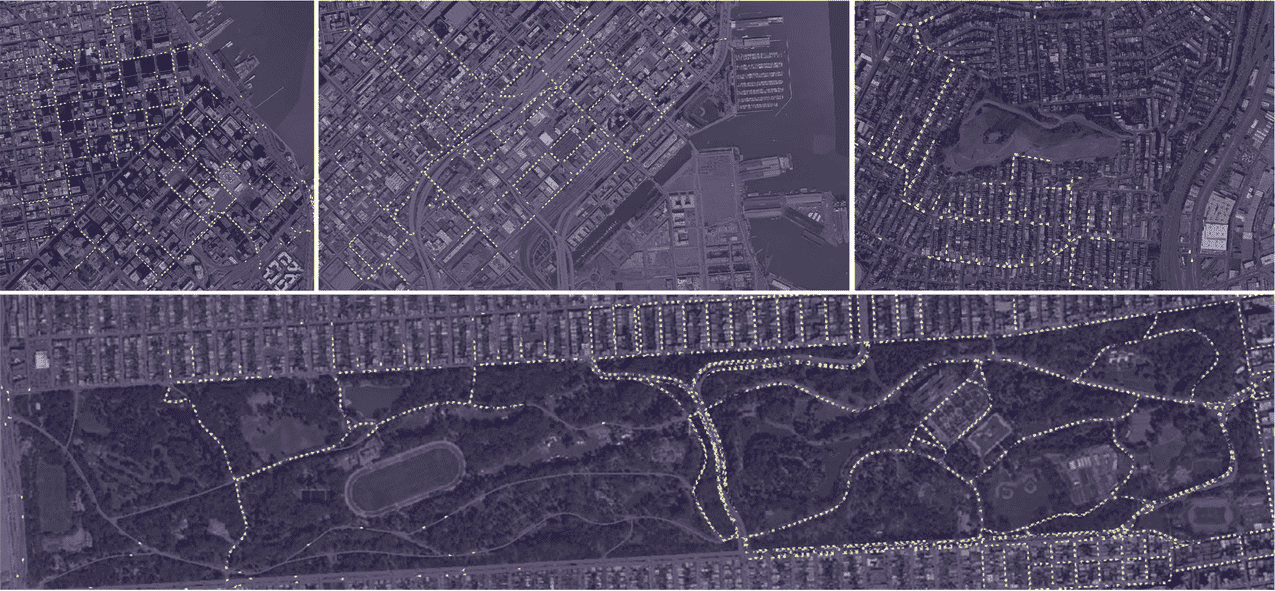}
	\caption{2D traffic animation of regions in San Francisco: Northeast (top left), Central-East (top center), Central (top Right), Northwest (bottom). I have exaggerated the headlights and adopted an evening time period (i.e., Friday 7PM) to make vehicles more visible.}
	\label{fig:intro-visual-2d}
\end{figure}

\begin{figure}
	\centering
	\includegraphics[width=\textwidth]{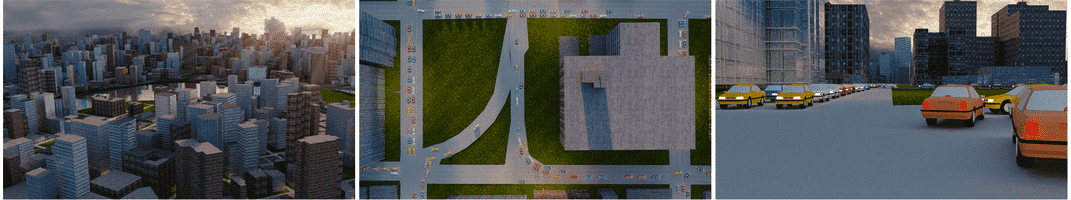}
	\caption{3D traffic animation: a perspective overview (left), a topdown view (center), and a driver's view (right).}
	\label{fig:intro-visual-3d}
\end{figure}

\begin{figure}
	\centering
	\includegraphics[width=\textwidth]{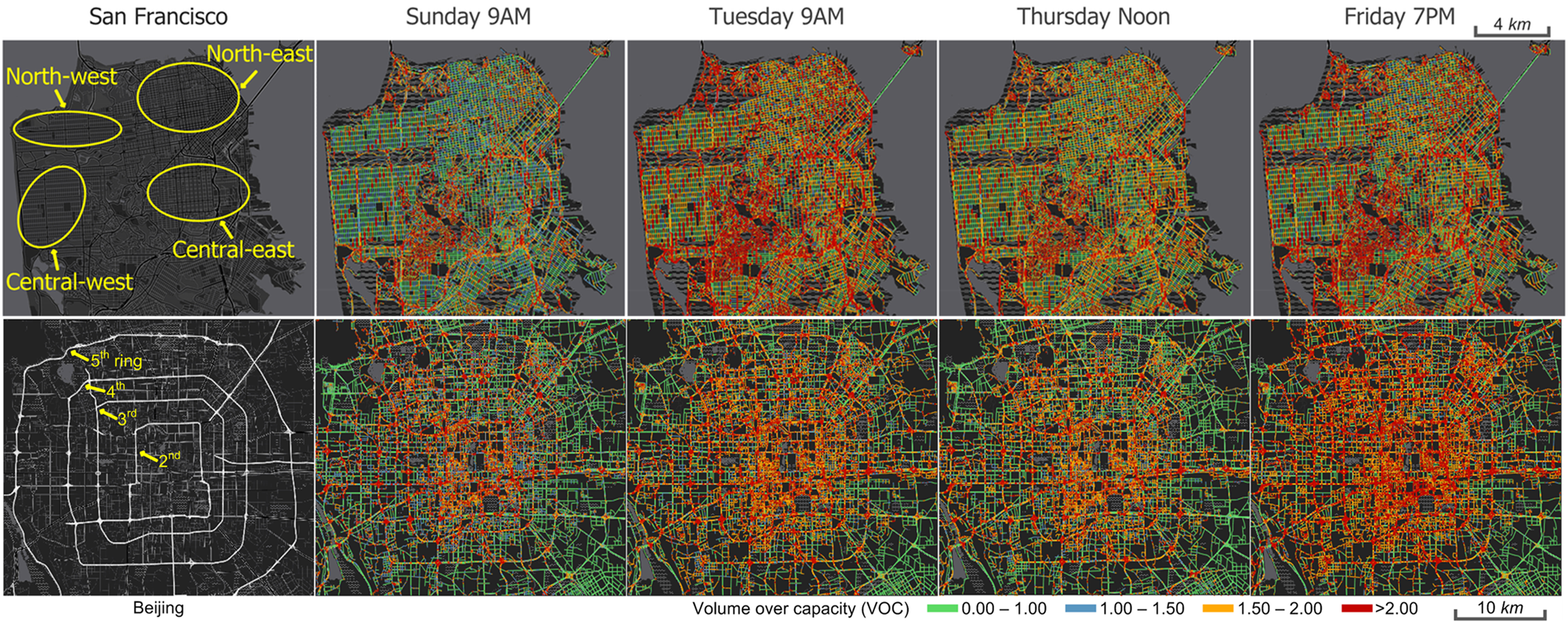}
	\caption{Visualization of traffic patterns in San Francisco and Beijing. Four time periods of a week, namely Sunday 9AM, Tuesday 9AM, Thursday Noon, and Friday 7PM, are selected to illustrate weekend vs. weekday and morning vs. evening traffic. The traffic is measured by Volume Of Capacity (VOC). All computations are conducted in epoch time.}
	\label{fig:intro-city}
\end{figure}

\begin{figure}
	\centering
	\includegraphics[width=\textwidth]{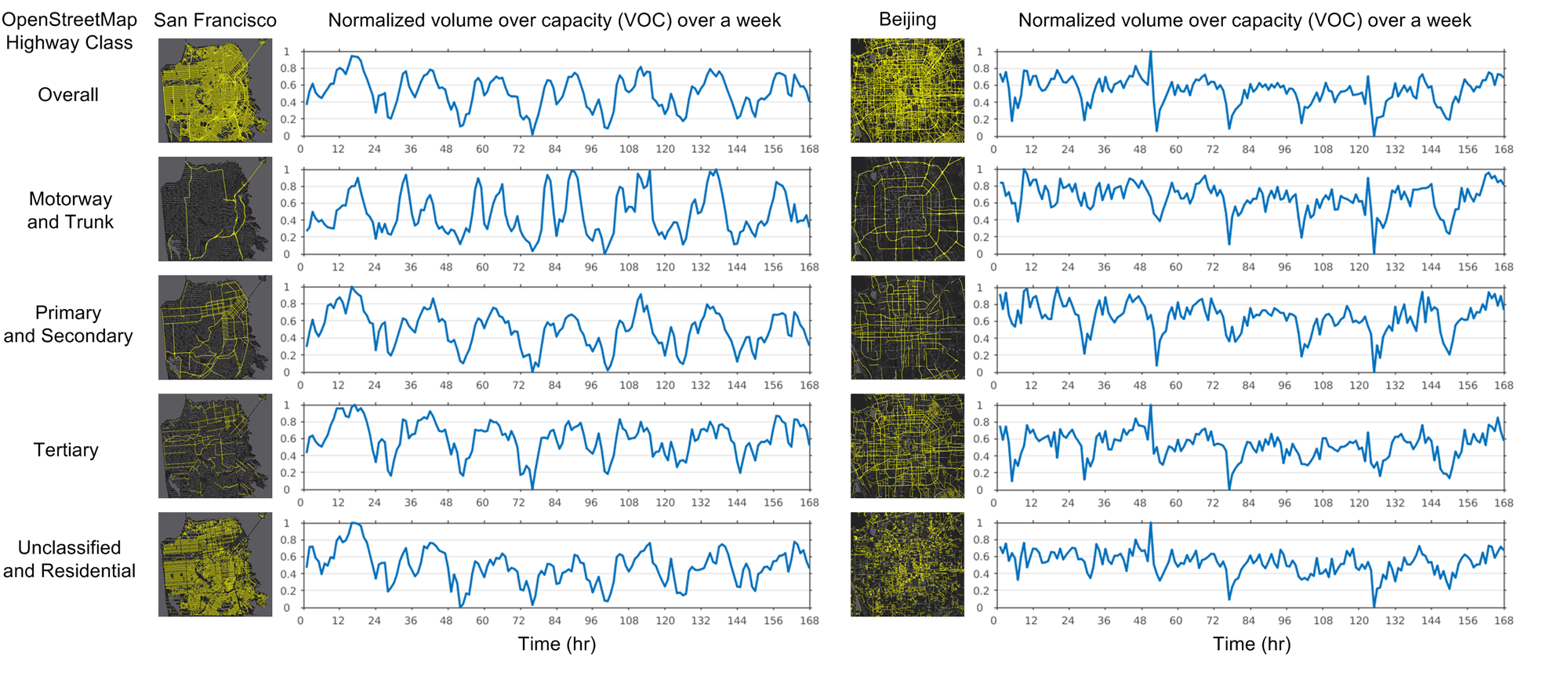}
	\caption{The estimated traffic conditions measured in average volume over capacity (VOC) of San Francisco and Beijing for various types of roads. All computations are conducted in epoch time. My technique successfully recovers the periodic phenomena in all cases. In San Francisco, saddle shapes appear on \textit{motorway and truck} and overall roads for several days indicating mid-day traffic relief. Such phenomena are not observed in Beijing, which suggests the similar usage of different types of roads and the congestion forming throughout daytime.}
	\label{fig:intro-pattern}
\end{figure}

\subsection{Autonomous Driving}
I have developed the framework ADAPS for learning and testing the control policy of autonomous driving. ADAPS consists of two simulation platforms and a hierarchical control policy. The first simulation platform resides in a 3D virtual environment and is utilized to test a learned policy, simulate accidents, and generate labelled data. The second simulation platform operates in a 2D environment and serves as an ``expert'' to analyze and resolve an accident via planning alternative safe trajectories for a vehicle by considering its kinematic and dynamic constraints.

In addition, ADAPS represents a more efficient online learning mechanism compared to existing techniques such as DAGGER~\cite{ross2011reduction}. The reason for the improvement is due to a switch from a ``reset modeling'' approach, in which we can only sample observations by putting an agent in its initial state distribution, to ``generative modeling'' approach, in which we can put an agent in an arbitrary state and sample observations by executing any action in that state. The reason that I can make this switch is because of two reasons: 1) the generation of training examples is from simulations that take vehicle kinematic and dynamic constraints into account, rather than merely executing a policy; 2) the assumption that we have access to all agent states in a simulation is because the simulation is conducted in retrospect of an accident, thus viable. 

In order to understand the theoretical results, I will briefly introduce the notation and definitions used in the analysis. The problem we consider is a $ T $-step control task. Given the observation $ \phi=\phi(s) $ of a state $ s $ at each step $ t \in [\![1, T]\!] $, the goal of a learner is to find a policy $ \pi \in \Pi $ such that its produced action $ a=\pi(\phi)$ will lead to the minimal cost:

\begin{equation}
\hat{\pi} = \argmin_{\pi \in \Pi} \sum_{t=1}^{T} C\left(s_{t}, a_{t}\right),
\end{equation}

\noindent where $ C\left(s,a\right) $ is the expected immediate cost of performing $ a $ in  $ s $. For many tasks such as driving, we may not know the true value of $ C$. Instead, the observed surrogate loss $ l(\phi,\pi, a^*) $ is commonly minimized. This loss is assumed to upper bound $ C $, based on the approximation of the learner's action $ a=\pi(\phi) $ to the expert's action $ a^*=\pi^*(\phi) $. I denote the distribution of observations at $ t $ as $ d_{\pi}^t $, which is the result of executing $ \pi $ from timestep $ 1 $ to $ t-1 $. Consequently, $ d_{\pi}=\frac{1}{T}\sum_{t=1}^{T}d_{\pi}^t $ is the average distribution of observations by executing $ \pi $ for $ T $ steps. The goal of solving an SPC task is to obtain $ \hat{\pi} $---a policy that can minimize the observed surrogate loss under its own induced observations with respect to an expert's actions for those observations:

\begin{equation}
\hat{\pi} = \argmin_{\pi \in \Pi} \mathbb{E}_{\phi \sim d_{\pi}, a^* \sim \pi^*(\phi)} \left[l\left(\phi,\pi, a^*\right)\right].
\end{equation}

\noindent I further denote $ \epsilon =  \mathbb{E}_{\phi \sim d_{\pi^*}, a^* \sim \pi^*(\phi)} \left[l\left(\phi,\pi, a^*\right)\right] $ as the expected loss under the training distribution induced by the expert's policy $ \pi^* $, and the cost-to-go over $ T $ steps of $ \hat{\pi} $ as $ J\left(\hat{\pi}\right) $ and of $ \pi^* $ as $ J\left(\pi^*\right) $. 

By simply treating expert demonstrations as i.i.d. samples, the discrepancy between $ J\left(\hat{\pi}\right) $ and $ J\left(\pi^*\right) $ is $ \mathcal{O}(T^2\epsilon) $~\cite{syed2010reduction,ross2011reduction}. Given the error of a typical supervised learning is $ \mathcal{O}\left(T\epsilon\right) $, this demonstrates the additional cost due to covariate shift when solving an SPC task via standard supervised learning. 

DAGGER~\cite{ross2011reduction} has been used to solve an SPC task by keeping the $ \mathcal{O}\left(T\epsilon\right) $ error. To illustrate its result, I introduce more definitions: the best policy at the $ i $th iteration (trained using all observations from the previous $ i-1 $ iterations) is denoted as $ \pi_{i} $; for any policy $ \pi \in \Pi $, we have its expected loss under the observation distribution induced by $ \pi_{i} $ as $ l_{i}\left(\pi\right) =  \mathbb{E}_{\phi \sim d_{\pi_{i}}, a^* \sim \pi^*(\phi)} \left[l_{i}\left(\phi,\pi, a^*\right)\right], l_{i}\in \left[0,l_{max}\right]$; the minimal loss in hindsight after $ N \ge i $ iterations is denoted as $ \epsilon_{min} = \min_{\pi \in \Pi} \frac{1}{N}\sum_{i=1}^{N}l_{i}(\pi) $ (i.e., the training loss after $ N $ iterations); the average regret is $ \epsilon_{regret} = \frac{1}{N}\sum_{i=1}^{N}l_{i}(\pi_{i}) - \epsilon_{min}$. Then, the accumulated error after $ T $-step using DAGGER~\cite{ross2011reduction} is bounded by the summation of three terms: 

\begin{equation}
J\left(\hat{\pi}\right) \le T\epsilon_{min} + T\epsilon_{regret} +  \mathcal{O}(\frac{f\left(T, l_{max}\right)}{N}),
\end{equation}

\noindent where $ f\left(\cdot\right) $ is the function of fixed $ T $ and $ l_{max} $. The second term tends to $ 0 $ if a no-regret algorithm is used~\cite{hazan2007logarithmic}. The third term tends to $ 0 $ if $ N \rightarrow \infty $. 

\subsubsection{Theoretical Results}
With the above-introduced background, I can introduce the following theoretical results. 

\begin{theorem}
	If the surrogate loss $ l $ upper bounds the true cost $ C $, by collecting $ K $ trajectories using ADAPS at each iteration, with probability at least $ 1 - \mu $, $\mu \in(0,1)$, ADAPS offers the following guarantee:
	\begin{equation*}
	J\left(\hat{\pi}\right) \le J\left(\bar{\pi}\right) \le T\hat{\epsilon}_{min} + T\hat{\epsilon}_{regret} + \mathcal{O}\left(Tl_{max}\sqrt{\frac{\log{\frac{1}{\mu}}}{KN}}\right)
	\end{equation*}
\end{theorem}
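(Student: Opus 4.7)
The plan is to adapt the DAGGER analysis of \cite{ross2011reduction} to ADAPS's batch setting with $K$ trajectories per iteration, converting the population bound into a high-probability bound via concentration. I start from the performance-difference identity: since the surrogate loss $l$ upper-bounds the true cost $C$, for any policy $\pi$ one has $J(\pi)-J(\pi^*) \le T \cdot \mathbb{E}_{\phi \sim d_{\pi},\, a^*\sim \pi^*(\phi)}\left[l(\phi,\pi,a^*)\right]$. Letting $\bar\pi$ be the uniform mixture over the $N$ iterates $\pi_1,\dots,\pi_N$, linearity and the definition of $l_i(\pi)$ give $J(\bar\pi)-J(\pi^*) \le T \cdot \frac{1}{N}\sum_{i=1}^{N} l_i(\pi_i)$, and $J(\hat\pi)\le J(\bar\pi)$ whenever $\hat\pi$ is chosen as the best iterate on a held-out set.

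Next, I decompose $\frac{1}{N}\sum_i l_i(\pi_i)$ into (i) the empirical average $\frac{1}{N}\sum_i \hat l_i(\pi_i)$, where $\hat l_i(\pi_i)$ is the sample-mean loss over the $K$ trajectories collected at iteration $i$, and (ii) a deviation term $\frac{1}{N}\sum_i \bigl(l_i(\pi_i)-\hat l_i(\pi_i)\bigr)$. The empirical average equals $\hat\epsilon_{min}+\hat\epsilon_{regret}$ by definition of the no-regret learner's empirical regret, which contributes $T\hat\epsilon_{min}+T\hat\epsilon_{regret}$ after the $T$ multiplier.

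For the deviation term, I build a martingale with respect to the filtration generated by the iterates $\pi_1,\dots,\pi_i$ and the trajectory samples collected so far. Conditional on $\pi_i$ (which is determined by iterations $1,\dots,i-1$), the $K$ trajectory losses sampled at iteration $i$ are i.i.d.\ with mean $l_i(\pi_i)$ and range in $[0,l_{max}]$, so $Z_i = l_i(\pi_i)-\hat l_i(\pi_i)$ has zero conditional mean and the sample-mean structure contracts its conditional range by a factor of $1/\sqrt{K}$. Applying an Azuma--Hoeffding inequality to $\frac{1}{N}\sum_{i=1}^{N} Z_i$, with the per-step bounded difference $l_{max}/\sqrt{K}$, yields a deviation of order $l_{max}\sqrt{\log(1/\mu)/(KN)}$ with probability at least $1-\mu$. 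Multiplying by $T$ produces the third term of the claim, and summing the three contributions yields the stated bound.

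The main obstacle is setting up the martingale carefully: the distribution $d_{\pi_i}$ under which the $K$ trajectories are sampled depends on all prior iterates and prior samples (because $\pi_i$ is trained on them), so one must condition on the correct $\sigma$-algebra and verify that the $K$ within-iteration trajectories are genuinely conditionally i.i.d.\ given $\pi_i$---a property that holds because ADAPS's principled simulator, treated as the expert, supplies independent rollouts. Once this filtration is in place, the rest is a routine application of Azuma--Hoeffding, and the improvement over DAGGER's $\mathcal{O}(f(T,l_{max})/N)$ rate follows directly from the extra $\sqrt{K}$ factor in the denominator.
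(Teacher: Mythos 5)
Your overall strategy is the same as the paper's: bound $J(\bar{\pi})$ by $T$ times the average population loss of the iterates under their own induced distributions (using that data are collected by executing $\hat{\pi}_i$ alone, so $L(\hat{\pi}_i)=l_i(\hat{\pi}_i)$), split that average into the empirical quantity $\hat{\epsilon}_{min}+\hat{\epsilon}_{regret}$ plus a deviation, and control the deviation by Azuma--Hoeffding over the adaptively chosen iterates, finally using $J(\hat{\pi})=\min_i J(\hat{\pi}_i)\le J(\bar{\pi})$. Two steps, however, are not right as written. First, the opening inequality $J(\pi)-J(\pi^*)\le T\,\mathbb{E}_{\phi\sim d_{\pi}}[l]$ is the wrong normalization: carried through, it yields $J(\bar{\pi})\le J(\pi^*)+T\hat{\epsilon}_{min}+\dots$, which has an extra $J(\pi^*)$ term absent from the statement, and you cannot simply drop it. Since $l$ upper bounds $C$ pointwise, you should use $J(\pi)\le T\,L(\pi)$ directly (as the paper does when it ``sums over $T$''), with no reference to $\pi^*$'s cost-to-go.

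Second, and more substantively, the concentration step as described would fail: the per-iteration deviation $Z_i=l_i(\pi_i)-\hat{l}_i(\pi_i)$ does \emph{not} have conditional range $l_{max}/\sqrt{K}$; an average of $K$ losses in $[0,l_{max}]$ still has range $l_{max}$ (all $K$ could equal $0$ or all could equal $l_{max}$). A literal Azuma--Hoeffding with bounded differences at the iteration level therefore only gives a deviation of order $l_{max}\sqrt{\log(1/\mu)/N}$, missing the $\sqrt{K}$ gain that is the whole point of the theorem. What shrinks by averaging is the conditional sub-Gaussian parameter (variance proxy $l_{max}^2/(4K)$ via Hoeffding's lemma), not the range, so you must either invoke an Azuma-type bound for martingales with conditionally sub-Gaussian increments, or do what the paper does: index the martingale at the trajectory level, with $KN$ increments $Y_{i,k}=l_i(\hat{\pi}_i)-\hat{l}_{ik}(\hat{\pi}_i)$ each bounded by $l_{max}$, so that Azuma--Hoeffding applied to $X_{KN}$ gives $\frac{1}{KN}X_{KN}\le l_{max}\sqrt{2\log(1/\mu)/(KN)}$ with probability at least $1-\mu$. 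With that repair (and your filtration/conditional-independence setup, which matches the paper's assumption that the simulator's $K$ rollouts are independent), the rest of your argument goes through and coincides with the paper's proof.
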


This theorem bounds the expected cost-to-go of the best policy $ \hat{\pi} $ based on the empirical loss of the best policy in $ \Pi $ (i.e., $ \hat{\epsilon}_{min} $) and the empirical average regret of the learner (i.e., $ \hat{\epsilon}_{regret} $). Similar to DAGGER~\cite{ross2011reduction}, the second term can be eliminated if a no-regret algorithm such as Follow-the-Leader~\cite{hazan2007logarithmic} is used. The third term implies the number of training examples $ KN $ needs to be $\mathcal{O}\left(T^2l^2_{max}\log{\frac{1}{\mu}}\right) $ in order to become negligible. We can achieve $\mathcal{O}\left(T^2l^2_{max}\log{\frac{1}{\mu}}\right) $ samples easily as ADAPS uses principled simulations for generating them. With these changes, this theorem can lead to the following Corollary.

\begin{corollary}
	If $ l $ is convex in $ \pi $ for any $ s $ and it upper bounds $ C $, and Follow-the-Leader is used to select the learned policy, then for any $ \epsilon > 0 $, after collecting $ \mathcal{O}\left(\frac{T^2l^2_{max}\log{\frac{1}{\mu}}}{\epsilon^2}\right) $ training examples, with probability at least $ 1 - \mu $, $\mu \in(0,1)$, ADAPS offers the following guarantee:
	\begin{equation*}
	J\left(\hat{\pi}\right) \le J\left(\bar{\pi}\right) \le T\hat{\epsilon}_{min} + \mathcal{O}\left(\epsilon\right)
	\end{equation*}
\end{corollary}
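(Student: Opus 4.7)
The plan is to derive the Corollary directly from the Theorem by specializing two of the three terms appearing in the bound $J(\hat{\pi}) \le J(\bar{\pi}) \le T\hat{\epsilon}_{min} + T\hat{\epsilon}_{regret} + \mathcal{O}\bigl(Tl_{max}\sqrt{\log(1/\mu)/(KN)}\bigr)$. The first step is to invoke the convexity assumption on $l$ to control the empirical regret $\hat{\epsilon}_{regret}$. Since Follow-the-Leader applied to a sequence of convex losses is a no-regret algorithm (this is the classical guarantee of Hazan et al.\ that gives regret $\mathcal{O}(\log N / N)$ under strong convexity, or at worst sublinear regret in the purely convex case), the average regret $\hat{\epsilon}_{regret}$ tends to zero as $N \to \infty$. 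By picking $N$ large enough (absorbed into the overall sample budget), I can make the contribution $T\hat{\epsilon}_{regret}$ negligible relative to $\epsilon$.

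Next, I would handle the concentration term $\mathcal{O}\bigl(Tl_{max}\sqrt{\log(1/\mu)/(KN)}\bigr)$. This is an algebraic matter: I solve for the sample count $KN$ needed to drive this quantity below a fixed $\epsilon > 0$. Setting
\begin{equation*}
T l_{max}\sqrt{\frac{\log(1/\mu)}{KN}} \le \epsilon
\end{equation*}
and inverting gives exactly $KN = \mathcal{O}\bigl(T^2 l^2_{max} \log(1/\mu)/\epsilon^2\bigr)$, which matches the sample complexity stated in the Corollary. Because ADAPS can synthesize training examples on demand from its principled simulations (rather than relying on physical rollouts), reaching this sample threshold is feasible for any fixed $\epsilon$.

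Finally, I would combine the two simplifications with the Theorem's bound, yielding $J(\hat{\pi}) \le J(\bar{\pi}) \le T\hat{\epsilon}_{min} + \mathcal{O}(\epsilon)$, which holds with probability at least $1 - \mu$ by the probabilistic guarantee inherited from the Theorem. The main obstacle, and the step that deserves the most care, is justifying the use of the no-regret property of Follow-the-Leader in this specific setting: I need the per-iteration losses $l_i(\pi)$ to be convex in $\pi$ (given by hypothesis) and to verify that the $l_{max}$-boundedness assumed in the Theorem is enough to obtain a regret bound whose $N$-dependence is dominated by the concentration term, so that choosing $KN$ as above suffices to kill both the regret and the concentration contribution simultaneously. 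Everything else is bookkeeping: substitution of the chosen $KN$ into the Theorem's inequality and simplification of constants into the $\mathcal{O}(\epsilon)$ notation.
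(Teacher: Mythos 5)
Your proposal is correct and follows essentially the same route as the paper: the paper's proof of this Corollary simply invokes Theorem 4.2 together with the preceding discussion, i.e., eliminating the $T\hat{\epsilon}_{regret}$ term via the no-regret guarantee of Follow-the-Leader on convex losses and choosing $KN = \mathcal{O}\left(T^2 l^2_{max}\log{\frac{1}{\mu}}/\epsilon^2\right)$ so that the concentration term is $\mathcal{O}\left(\epsilon\right)$. Your write-up just makes these two steps explicit.
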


\noindent Now, we only need the training error $ \hat{\epsilon}_{min} $ to be minimal, which is achievable via standard supervised learning. 

\subsubsection{Experimental Results}
I have tested my method empirically in three simulated scenarios, namely a straight road (representing a linear geometry), a curved road (representing a non-linear geometry), and an open ground. The straight and curved roads represent an \emph{on-road} scenarios with a static obstacle in the form of a traffic cone. The open ground represents an \emph{off-road} scenario with a dynamic obstacle in the form a vehicle.

I have compared my policy to the technique from Bojarski et al.~\cite{Bojarski2016}, as it is one of the representative approaches for end-to-end autonomous driving. Usually, this type of approach is limited to single-lane following~\cite{chen2015deepdriving,Xu2017end,zhang2017query,codevilla2017end} or \emph{off-road} collision avoidance~\cite{LeCun2006off} behaviors.

For the two on-road scenarios, I collect training datasets from \emph{straight road with or without an obstacle} and \emph{curved road with or without an obstacle}. This separation enables six policies for testing the effectiveness of \emph{learning from accidents}:

\begin{itemize}
	\item My policy: trained with just the lane-following data $O_{follow}$; $O_{follow}$ additionally trained after analyzing an accident on the straight road $O_{straight}$; and $O_{straight}$ additionally trained after analyzing an accident on the curved road $O_{full}$.
	\item Similarly, for the policy from Bojarski et al.~\cite{Bojarski2016}: $B_{follow}$, $B_{straight}$, and $B_{full}$.
\end{itemize}

Being evaluated first are $O_{follow}$ and $B_{follow}$. The evaluation strategy is to run these two policies on both the straight and curved roads, and count how many laps (out of 50) can be safely finished. The results showing both policies can finish all 50 laps safely. Then, I add an obstacle to the straight road and test both policies again. Since both policies have not trained on the accident data yet, they both run into the obstacle and cause an accident. 

After analyzing the occurred accident and incorporating the accident data into training, I obtain two new policies $O_{straight}$ and $B_{straight}$. As expected, $O_{straight}$ avoids the obstacle, while $B_{straight}$ continues to cause collision.

I proceed to add an obstacle to the curved road, after a similar process, I obtain $ O_{full}$ and $ B_{full} $. Again, as expected, $ O_{full}$ manages to perform both lane-following and collision avoidance behaviors. $B_{full}$, in contrast, leads the vehicle to an accident. These two cases are illustrated in Figure~\ref{fig:examples-intro} LEFT and CENTER.

\begin{figure}[th]
	\centering
	\includegraphics[width=\textwidth]{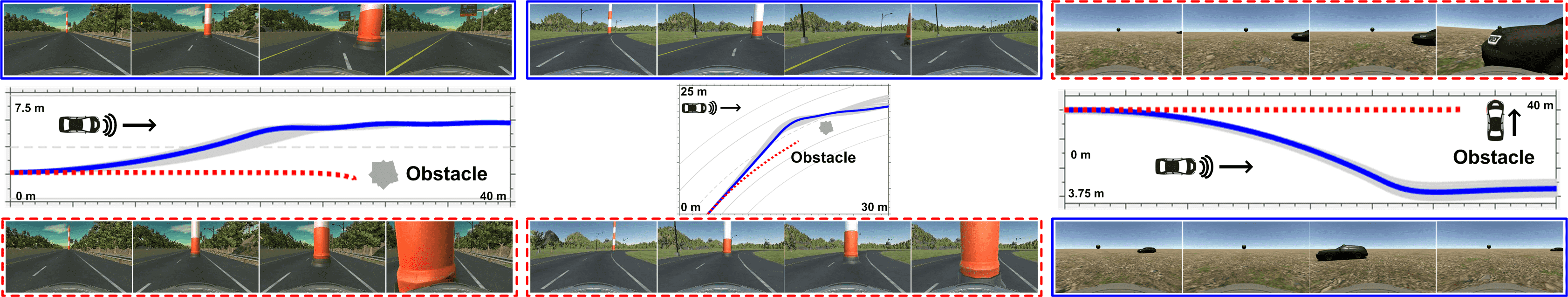}
	\caption{LEFT and CENTER: the comparisons between my policy $ O_{full} $ (TOP) and Bojarski et al.~\protect\cite{Bojarski2016}, $ B_{full} $ (BOTTOM). $B_{full} $ causes collision while $ O_{full} $ steers the AV away from the obstacle. RIGHT: the accident analysis results on the open ground. I show the accident caused by an adversary vehicle (TOP); then I show, after additional training, the AV can avoid the adversary vehicle (BOTTOM).}
	\label{fig:examples-intro}
\end{figure}

In order to test the generalization of my policy, I uniformly sampled 50 positions to place the obstacle on a $3$ meters line segment that is perpendicular to the direction of a road and in the same lane as the vehicle. The success rate (i.e., avoid an obstacle and resume normal driving) are documented in Table~\ref{tb:acc-intro}.

\begin{table}[ht!]
\centering
\small
\tabcolsep=0.1cm
\scalebox{1}{
	\begin{tabular}{ccccccc}
		\toprule
		& \multicolumn{6}{c}{Test Policy and Success Rate (out of 50 runs)}  \\        
		\cmidrule(l){2-7}       
		Scenario & $ B_{follow} $ & $ O_{follow} $  & $ B_{straight} $ & $ O_{straight} $ & $ B_{full} $ & $ O_{full} $   \\
		\midrule
		Straight road / Curved road & 100\% & 100\%  & 100\%  & 100\%  & 100\%  & 100\% 
		\\
		\midrule
		Straight road + Static obstacle & 0\% & 0\%  & \textbf{0\%}  & \textbf{100\%}  & \textbf{0\%}  & \textbf{100\%}
		\\
		\midrule
		Curved road + Static obstacle & 0\% & 0\%  & 0\%  & 0\%  &\textbf{ 0\% } & \textbf{100\%}
		\\
		\bottomrule
	\end{tabular}}
	\caption{Test Results of On-Road Scenarios: my policies $ O_{straight} $ \& $ O_{full} $ can lead to robust collision avoidance and lane-following behaviors. }
	\label{tb:acc-intro}
\end{table}

For the off-road scenario, I first train the AV to head towards a green sphere as its target. Then, I scripted an adversary vehicles to collide with the AV on its default course. By having and addressing the accident, my policy can steer the vehicle away from the adversary vehicle and resume its direction to the target. This is illustrated in Figure~\ref{fig:examples-intro} RIGHT.

% Next, I will show some main results of my framework. The full results can be found in Chapter~\ref{ch:adaps}. Shown in Figure~\ref{fig:intro-av}(a) is the interface of the 3D simulator. Shown in Figure~\ref{fig:intro-av}(b) is the principled simulations at work producing alternative safe trajectories after analyzing an accident. These trajectories are used to automatically generate labeled training data, which are further used to derive a control policy that can let a vehicle achieving on-road collision avoidance shown in Figure~\ref{fig:intro-av}(c,d).

% \begin{figure}
% 	\centering
% 	\includegraphics[width=1\textwidth]{chapters/av}
% 	\caption{(a) The 3D simulator of ADAPS. (b) The 2D simulator of ADAPS serving as the ``expert'' analyzing an accident. (c,d) Example frames and trajectories generated from the control policy operated on a straight road (c) and a curved road (d). In both scenarios, my policy provides robust collision avoidance and lane-following behaviors.}
% 	\label{fig:intro-av}
% \end{figure}

The key to rapid improvements of a policy is the generation of sufficient and heterogeneous training data. In Figure~\ref{fig:intro-tsne}, I show the visualization results of images collected via my method and DAGGER~\cite{ross2011reduction} in one learning iteration. By progressively increasing the number of sampled trajectories, my method results in much more heterogeneous training data, which, when produced in a large quantity, can greatly facilitate the update of a policy.

\begin{figure}[th]
	\centering
	\includegraphics[width=0.7\columnwidth]{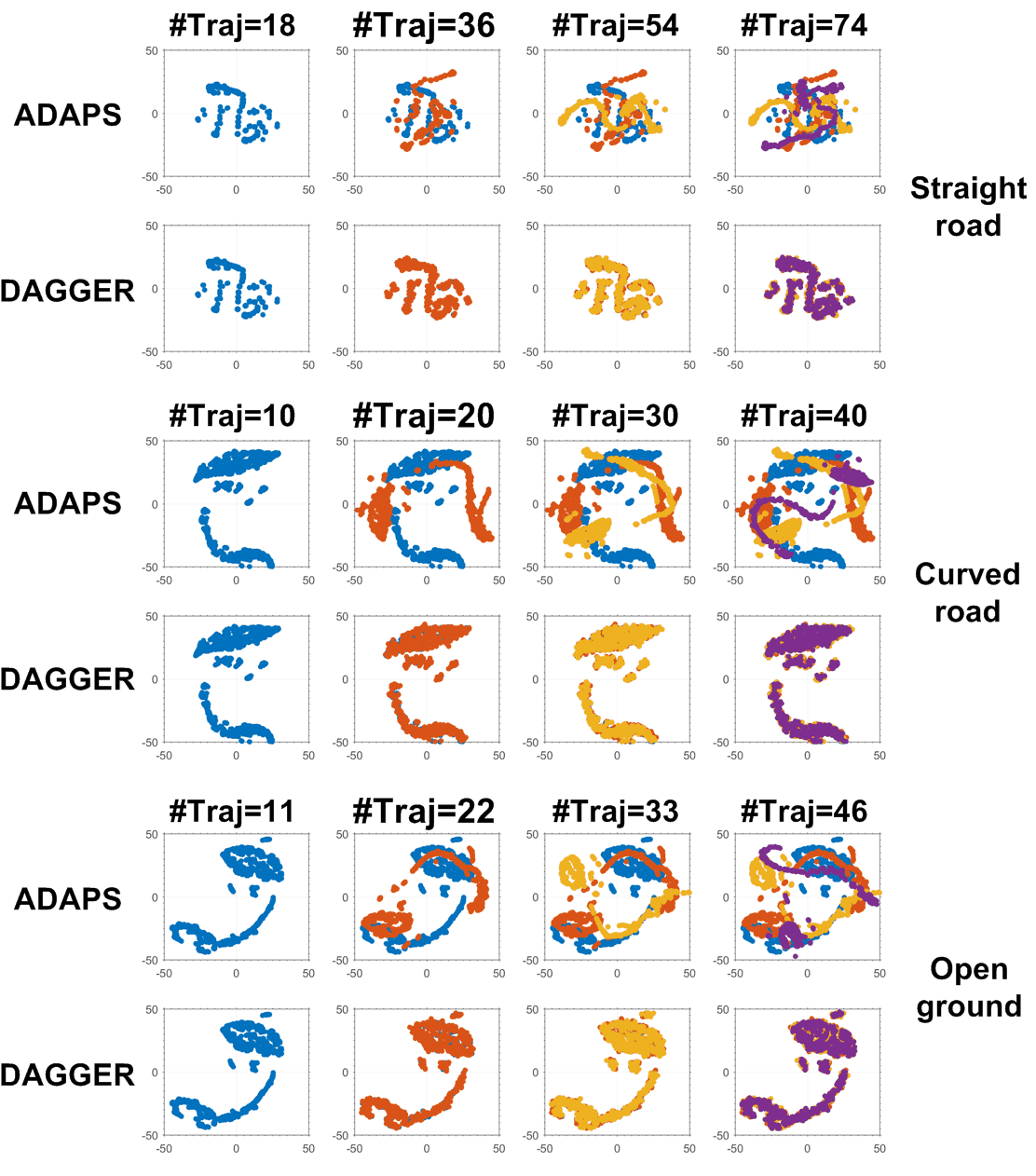}
	\caption{The visualization results of collected images using t-SNE~\protect\cite{maaten2008visualizing}. My method generates more heterogeneous training data compared to DAGGER~\protect\cite{ross2011reduction} in one learning iteration. }
	\label{fig:intro-tsne}
\end{figure}

%% -------------------------------------------------------------------------------------------------
\section{Thesis Statement}
\label{sec:ThesisStatement}
My thesis statement is as follows: 

\emph{Urban mobility can benefit from better designed intelligent transportation systems which can be further improved with 1) accurate and efficient reconstruction of city-scale traffic at the macroscopic level and 2) enhanced control and learning mechanisms for individual vehicles at the microscopic level.}

To support this thesis, at the macroscopic level, I have developed methods to efficiently estimate traffic conditions, accurately interpolate spatial-temporal missing traffic data, dynamically reconstruct traffic flows, and produce visual analytics in various forms. At the microscopic level, I have developed a framework to simulate and analyze various driving scenarios while automatically producing labeled training data, which, when combined with an efficient online learning mechanism and an effective policy architecture, can lead to robust control policies for autonomous driving.

\section{Organization}

The remainder of this dissertation is organized as follows. In Chapter~\ref{ch:itsm}, I describe my approach to deterministically estimate traffic conditions and interpolate temporal missing traffic data. In Chapter~\ref{ch:siga}, I describe my approach to iteratively estimate traffic conditions and interpolate spatial missing traffic data. In Chapter~\ref{ch:adaps}, I present ADAPS, a framework for obtaining robust control polices for autonomous driving. Finally in Chapter~\ref{ch:Conclusion}, I conclude this dissertation with discussion of potential future research directions.

\chapter{CITYWIDE ESTIMATION OF TRAFFIC DYNAMICS VIA SPARSE GPS TRACES}
\label{ch:itsm}

\section{Introduction}
\label{sec:itsm-intro}

Traffic is ubiquitous in modern cities, impacting their social, economic, and environmental developments. However, ever-present gridlock and congestion keep challenging transportation researchers and urban planners. According to the 2015 Urban Mobility Scorecard~\cite{schrank2015}, traffic congestion causes an extra 6.9 billion travel hours and 3.1 billion gallons of fuel consumption annually in the United States, which costs are approximately \$160 billion. As an increasing number of metropolitan areas experience severe traffic conditions and the overall cost is estimated more than one trillion U.S. dollars worldwide, the ability to analyze and understand traffic dynamics is becoming crucial.

In order to understand traffic congestion, first, we need to obtain its measurements. Traditionally, traffic measurements are collected via in-road sensors such as loop detectors and video cameras~\cite{leduc2008road}. While these sensors produce relatively accurate records, the high expenditures for installation and maintenance prevent them from being deployed over an entire city, rather than major roads and highways. Consequently, the lack of sensing infrastructure for arterial streets---which comprise the majority of a city---has made the large-scale traffic measuring task difficult.

Mobile data such as GPS traces, in contrast, are more promising sources for understanding citywide traffic dynamics due to their much boarder coverage. However, such data are limited in two aspects: 1) inevitable errors in measurement and transmission often yield reported locations off the road, and 2) due to energy and privacy concerns, GPS data commonly have a \emph{low sampling rate}, meaning that the time difference between consecutive points can be large (e.g., greater than 60 seconds), and a \emph{low penetration rate}, meaning that only a small percentage of traffic population is willing to send location reports.

Together, these limitations of GPS data bring a large degree of uncertainty into using them for studying and estimating traffic dynamics, for which purpose, several processing steps are required. First, off-the-road GPS points need to be mapped onto the road network, and the true traversed paths of vehicles need to be inferred. This process is called \emph{map-matching}. Second, the time taken to travel each road segment must be accurately estimated. Because of the \emph{low sampling rate}, the inferred path between two consecutive GPS points is likely to consist of multiple road segments, but only the aggregate travel time (i.e., the difference between the GPS timestamps) is known. The aggregate travel time needs to be distributed to individual road segments, which process is named \emph{travel time allocation}. Third, in order to understand the full traffic dynamics of a city, traffic data are needed for an entire traffic period for each road segment of a city's road network. However, GPS data often do not provide complete temporal coverage as they are commonly scarce in late night and early morning hours. The process for interpolating the missing temporal information is named \emph{missing value completion}.

Many efforts have been made towards improving the effectiveness of the abovementioned three processing steps. To be specific, the low sampling rate introduces issues: two consecutive GPS points are likely far apart from each other and multiple paths can exist for connecting them (especially in a complex urban environment). Thus, inferring the true traversed path between them is challenging. Many state-of-the-art \emph{map-matching} approaches use the shortest-distance criterion~\cite{lou2009map,yuan2010interactive,miwa2012development,hunter2014path,chen2014map,quddus2015shortest} to infer the traversed path. While this criterion is viable when a road network is under or close to a free-flow condition, it can introduce errors in a congested environment where other paths (not the shortest-distance path) can be traveled with less time and will be preferred by GPS devices and most drivers. If we have a wrongly inferred path, the timestamp difference (i.e., aggregate travel time) will be distributed to a wrong set of road segments. In other words, the introduced errors will be carried over to the subsequent step \emph{travel time allocation}, causing the overall estimation of traffic conditions deteriorated. 

I have developed a novel framework to estimate citywide traffic dynamics using GPS data. My framework is developed based on two observations: 1) traffic patterns exhibit weekly periodicity~\cite{hofleitner2012large}, and 2) traffic conditions are quasi-static~\cite{hunter2014path}. Based on these observations, I treat one week as a traffic period and assume the traffic conditions within each hourly time interval of a weekly period as static. Computationally, my framework consists of two phases. The first phase is conducted on individual time intervals and the second phase is performed over all time intervals. In the first phase, based on Wardrop's Principles~\cite{wardrop1952road}, I use the shortest travel time criterion instead of the shortest distance criterion to perform map-matching. Along with a travel time allocation technique adapted from Hellinga et al.~\cite{hellinga2008decomposing}, a novel computation scheme is designed to reconstruct traffic dynamics of a road network. In the second phase, exploiting the sparsity embedded in traffic patterns, I have developed a novel method based on Compressed Sensing~\cite{1614066,1580791} to interpolate missing travel information over an entire traffic period. The overview of my framework pipeline is shown in Figure~\ref{fig:itsm-system}.

\begin{figure}
	\centering
	\includegraphics[width=\textwidth]{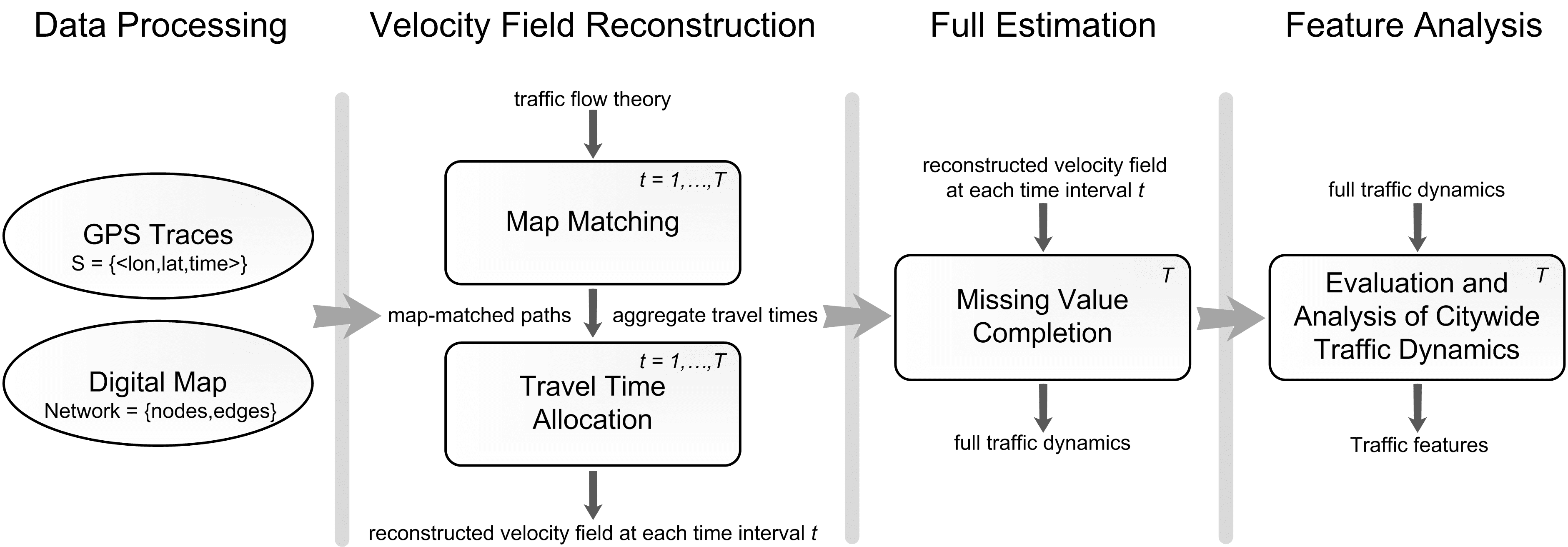}	\caption{Pipeline of my framework. Map Matching and Travel Time Allocation are applied on individual time intervals, while Missing Value Completion is performed over all time intervals.}
	\label{fig:itsm-system}
\end{figure}

I have extensively evaluated and tested the effectiveness of my approach and compared it to the method developed by Lou et al.~\cite{lou2009map}, using GIS data from a synthetic road network and the city of San Francisco~\cite{cabspotting}. The results demonstrate major improvements over the existing method~\cite{lou2009map} in various steps during the estimation process. In summary, the contributions of this work are the following: 

\begin{itemize}
	\item A novel perspective (switching from the shortest-distance criterion to the shortest travel-time criterion) in addressing sparse GPS traces during map-matching;
	\item An improved travel time allocation technique which incorporates estimated intermediate traffic conditions into computation;
	\item An efficient and robust method for interpolating missing traffic measurements in traffic patterns.
\end{itemize}

The rest of this chapter is organized as follows. In Section~\ref{sec:itsm-related}, I discuss the related work and highlight the differences between my work and existing studies. In Section~\ref{sec:itsm-reconstruction}, I take a holistic view of map matching and travel time allocation, detailing the process of reconstructing the velocity field of a road network. In Section~\ref{sec:itsm-missingvalue}, I explain how to interpolate missing travel information. In Section~\ref{sec:itsm-dynamics}, I show the estimated traffic dynamics of San Francisco and the analysis of the key features of the recovered traffic patterns. Finally, I conclude with a discussion of future work in Section~\ref{sec:itsm-conclusion}.

\section{Related Work}
\label{sec:itsm-related}

Over the last few decades, the estimation of traffic conditions has gained increasing scholarly attention~\cite{celikoglu2007dynamic,celikoglu2009node,gao2012modeling,AbadiRajabiounIoannou2015,kachroo2016travel,agarwal2016dynamic}.  Early studies on traffic estimation have focused on traffic states on highways using accurate measurements from stationary sensors such as loop detectors and video cameras~\cite{leduc2008road}. Recent advancements have shifted to combining multiple data sources and traffic simulation models for achieving better estimation results~\cite{work2010traffic,sun2014distributed,gning2011interval,li2014multimodel,celikoglu2016extension,hajiahmadi2016integrated}. However, the scenarios of interest in these  studies were limited to road segments with lengths of a few kilometers. 

The increasing availability of GPS data provides new means for conducting large-scale estimation of traffic conditions. However, as GPS data are inherently noisy, the estimated traffic conditions usually do not satisfy the flow conservation requirement assumed by many simulation models~\cite{phan2011interpolating,kong2013efficient,zhang2013aggregating}. As a result, new studies that consist of several steps for the estimation task, are emerging.

The first step, \emph{map-matching} addresses the problem of mapping off-the-road GPS points onto a road network and identifies the true traversed path between consecutive GPS points. However, GPS data could contain a \emph{low sampling rate}, which causes points to be far away from each other and making the selection among multiple paths connecting the points difficult. In order to determine the ``actual'' traversed path of a vehicle, a common approach is to use the shortest-distance criterion to connect two GPS points on a road network~\cite{lou2009map,yuan2010interactive,miwa2012development,hunter2014path,chen2014map,quddus2015shortest}. Nonetheless, the shortest-distance assumption can lead to errors in a congested environment, in which alternative paths can be traveled faster than the shortest-distance path and preferred by GPS devices and drivers. Essentially, the shortest-distance criterion only uses spatial information (i.e., longitude and latitude of GPS points, and the geometry of a road network), while ignoring the temporal information (i.e., timestamps) recorded in GPS reports. This happens primarily due to the travel times of a road network are largely unknown, causing the temporal information has nothing to be compared with~\cite{tang2015estimating,rahmani2013path}.

After \emph{map-matching}, travel time of individual road segments need to be estimated. To provide a few examples, Hellinga et al.~\cite{hellinga2008decomposing} have developed an analytical solution to estimate travel times of road segments using intuitive and empirical observations of traffic patterns in real life. Rahmani et al.~\cite{rahmani2015non} take a non-parametric approach, performing an estimation using a kernel-based method. Probabilistic frameworks have also been developed to conduct an estimation of traffic conditions~\cite{khosravi2011prediction,westgate2013travel,herring2010estimating,hofleitner2012learning,kuhi2015using}. While significant improvements have been achieved, these methods all perform the estimation steps sequentially, causing the limitations of \emph{map-matching} will be carried over to its subsequent steps and eventually deteriorate the overall estimation accuracy.

Researchers have also proposed solutions to \emph{missing value completion}. For example, tensor-based approaches~\cite{wang2014travel,asifmatrix}, which explore correlations among nearby road segments, have been developed. From Zhu et al.~\cite{zhu2013compressive} and Mitrovic et al.~\cite{mitrovic2015low}, algorithms based on Compressed Sensing have been proposed by taking an entire road network as the study subject. Interpolating missing values has also been addressed in an online setting~\cite{anava2015online}. Nevertheless, the abovementioned methods were not designed to tackle the problem of estimating \emph{full} traffic dynamics of individual road segments over an entire city---a subject for which little progress has been made~\cite{hofleitner2012large}.

\section{Traffic Velocity Field Reconstruction}
\label{sec:itsm-reconstruction}
I take a holistic view of \emph{map-matching} and \emph{travel time allocation}, and propose a noval method to reconstruct the velocity field of a road network. Starting with some definitions in this section, I then discuss methodologies and implementation details of my approach. My algorithm is evaluated and validated using a synthetic road network with microscopic traffic simulation.

\subsection{Notation and Definitions}
A \emph{road network} is defined as a directed graph $ G = (V,E) $ in which edges $ E $ denote road segments and nodes $ V $ represent intersections and terminal points. Each road segment $ e \in E $ contains several attributes: the length $ e.len $, the maximum/free-flow travel speed  $ e.v_{max} $, the minimum/free-flow travel time $ e.t_{min} = \frac{e.len}{e.v_{max}} $, and the maximum/jam density $ e.k_{max} $.

A \emph{path} from node $ g $ to node $ h $ on a network $ g \overset{\text{$ p $}}\leadsto  h  $ is a collection of road segments $ p = \{e_{1}, e_{2}, \dots, e_{n}\} $, where $ g $ is the starting node of $ e_{1} $ and $ h $ is the ending node of $ e_{n} $. A \emph{trace} is a sequence of GPS points $ S = \{ s_{1}, s_{2}, \dots, s_{n} \}$ in which each point is a tuple $ s_{i} = <s_{i}.x, s_{i}.y, s_{i}.t> $ containing longitude, latitude, and a timestamp. 

\subsection{Velocity Field Estimation}
Given the periodicity of traffic patterns over a week~\cite{hofleitner2012large}, I study traffic dynamics over the region of interest in a weekly period. I discretize one week into hourly time intervals and assume that traffic conditions remain the same within one hour on a road segment. For simplicity, I restrict my discussion of estimating the velocity field to one time interval (i.e., one hour). The process can be trivially extended to cover other time intervals of an entire traffic period. 

Ideally, if the actual traversed path of a vehicle is known and the generated GPS points are exactly on the road, I can derive the average travel speed of a path $ p $ that connects GPS points $ s_{i} $ and $ s_{i+1} $ as $ p.t = \frac{\sum_{e \in p} e.len}{s_{i+1}.t - s_{i}.t}$. However, GPS points are often off-the-road due to inherent measurement and sensing errors, and the underlying path of a vehicle is also unknown. To address these issues, a number of candidate nodes of the network are considered for mapping a GPS point, based on their distances to the point. Then, one of the paths connecting a pair of candidate nodes of two consecutive GPS points is selected to represent the actual path. As mentioned earlier, one common approach for choosing such a path is taking the shortest distance criterion~\cite{lou2009map,yuan2010interactive, miwa2012development,hunter2014path,chen2014map,quddus2015shortest}, which can produce errors in a congested environment (an illustrative example is shown in Figure~\ref{fig:fail}).

\begin{figure}
	\centering
	\includegraphics[width=\textwidth]{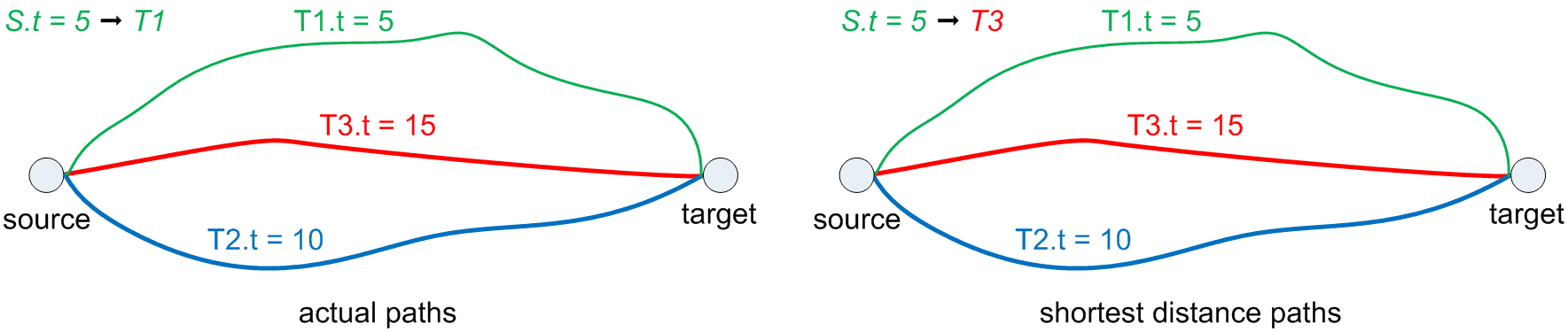}
	\caption{An example illustrating a failure using the shortest distance criterion for map-matching trace $ S $ to $ T1 $, $ T2 $, or $ T3 $. LEFT: By matching travel time of trace $S$ and road conditions, the correct path $ T1 $ is identified. RIGHT: By only considering the shortest distance path, $S$ is mismatched to $T3$.}
	\label{fig:fail}
\end{figure}

According to Wardrop's Principles~\cite{wardrop1952road}, the traffic in congested networks would move in a way such that no vehicle can reduce its travel time by switching routes. This state is called \emph{user equilibrium}, and is a result of every vehicle non-collaboratively attempting to minimize its traveling cost---which commonly appears to be the travel time. Under such an equilibrium state, the average travel time is balanced (i.e., the same) for all users of the network. 

\noindent \textbf{Assumption 1.} Based on Wardrop's Principles~\cite{wardrop1952road} and the observation that modern GPS devices largely adopt the fastest route, I assume \emph{all GPS traces are planned using the shortest travel time criterion}. 

Denoting the network with ground-truth traffic conditions as $ G_{true} $, according to Assumption 1, all GPS traces represent the fastest routes planned on $ G_{true} $. However, since $ G_{true} $ is unknown, my goal is to use available GPS traces and the initial road network $ G_{est} $, with all road segments set to their speed limits, to estimate $ G_{true} $, i.e., reconstruct the velocity field of $ G_{true} $. 

\begin{corollary}
A GPS trace with travel time $ t $ matching a path $ g \overset{\text{$ p $}}\leadsto  h  $ implies that no path in $ G_{true} $ from $ g $ to $ h $ has a travel time smaller than $ t $. 
\label{cor:itsm1}
\end{corollary}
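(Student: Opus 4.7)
The plan is to derive this corollary as a direct consequence of Assumption 1 together with the meaning of ``matching'' a GPS trace to a path. First, I would unpack the setup: the GPS trace is an actual vehicle traversal whose observed travel time $t = s_{i+1}.t - s_i.t$ is, by definition, the time it took that vehicle to traverse the matched path $p$ under the (unknown) true road conditions $G_{true}$. Thus the travel time of $p$ in $G_{true}$ equals $t$.

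Next, I would invoke Assumption 1: every GPS trace is planned using the shortest travel-time criterion on $G_{true}$. Applied to this trace, this says that the matched path $p$ is, by construction, a shortest travel-time path from $g$ to $h$ in $G_{true}$. Combining the two observations yields the conclusion immediately: if some other path $q$ from $g$ to $h$ in $G_{true}$ had travel time $t' < t$, then $q$ would have been strictly preferred to $p$ under the shortest travel-time criterion, contradicting the fact that $p$ is the route the vehicle actually took. Hence no such $q$ exists, and every path from $g$ to $h$ in $G_{true}$ has travel time at least $t$.

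There is essentially no technical obstacle; the corollary is a one-line logical consequence of Assumption 1 applied to a matched trace-path pair. The only subtlety worth flagging in the write-up is the identification of the GPS-observed travel time $s_{i+1}.t - s_i.t$ with the travel time of $p$ in $G_{true}$. This identification is justified by the fact that the GPS timestamps record an actual traversal of $p$ under the true conditions, but it does implicitly assume that within the hourly time interval of interest the traffic conditions on $p$ are static (the quasi-static assumption already adopted earlier in the chapter). I would make this assumption explicit in the proof so that the argument is self-contained.
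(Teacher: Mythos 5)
Your proposal is correct and follows essentially the same route as the paper: both derive the corollary directly from Assumption~1 (grounded in Wardrop's Principles), the paper phrasing it as all routes between $g$ and $h$ sharing the equivalent lower-bound travel time under user equilibrium, while you phrase the same fact as a one-line contradiction with the shortest travel-time route choice. Your explicit identification of the GPS-observed time with the travel time of $p$ in $G_{true}$ (via the quasi-static assumption) is a reasonable, slightly more careful elaboration of the same argument rather than a different approach.
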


\begin{proof}
According to Assumption 1, the traffic along every traveled route between $ g $ and $ h $ in $ G_{true} $ is in user equilibrium, meaning all routes between $ g $ and $ h $ have the equivalent lower-bound travel time. 
\end{proof}

\begin{corollary}
A pair of GPS points from a trace matched to locations $ g $ and $ h $ are sufficient to bound the travel time for all paths connecting $ g $ and $ h $. 
\label{cor:itsm2}
\end{corollary}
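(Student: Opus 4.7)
The plan is to reduce the statement to a direct application of Assumption 1 and \cref{cor:itsm1}, treating the pair of GPS points as the smallest possible instance of a matched trace. First, I would unpack what is given: the two consecutive GPS points $s_i$ and $s_{i+1}$, once matched to candidate locations $g$ and $h$, induce an observed aggregate travel time $t = s_{i+1}.t - s_i.t$ along whatever underlying path $g \overset{p}{\leadsto} h$ the vehicle actually traversed in $G_{true}$. The key observation is that although the specific $p$ is not identified by the pair alone, $t$ is nevertheless a property of \emph{some} realized path in $G_{true}$ between the two endpoints.

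Next I would invoke Assumption 1: the driver's route between $g$ and $h$ was chosen according to the shortest travel-time criterion. Hence $p$ is a minimum travel-time path in $G_{true}$ from $g$ to $h$, and its duration $t$ equals the shortest travel time between these endpoints. Applying \cref{cor:itsm1} then gives, for every path $q$ with $g \overset{\text{$q$}}\leadsto h$ in $G_{true}$, the lower bound $\operatorname{TT}(q) \ge t$. For the upper bound, I would appeal to Wardrop's user equilibrium as in the proof of \cref{cor:itsm1}: any alternative path from $g$ to $h$ that actually carries flow must have travel time equal to $t$, since otherwise some user could strictly reduce their cost by switching routes, contradicting equilibrium. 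Thus the pair $(s_i, s_{i+1})$ pins the travel time of all flow-carrying paths to exactly $t$, while constraining all other $g$-to-$h$ paths to be no faster than $t$.

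The only step I expect to require care is the distinction between \emph{flow-carrying} paths and \emph{unused} paths in $G_{true}$. For the flow-carrying paths the bound is tight (equality), whereas for unused paths only the lower bound applies; I would state this explicitly so that later uses of the corollary during map-matching and travel-time allocation correctly interpret the implied constraint. Once this distinction is spelled out, the corollary follows in one line as a specialization of \cref{cor:itsm1} to the minimal trace $S = \{s_i, s_{i+1}\}$, so no additional machinery is needed beyond Assumption 1 and the previous corollary.
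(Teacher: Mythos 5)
Your argument is correct and takes essentially the same route as the paper: Assumption 1 makes the traversed segment between the matched locations $g$ and $h$ time-optimal, and \cref{cor:itsm1} then bounds the travel time of every $g$-to-$h$ path by the timestamp difference. The only differences are cosmetic --- the paper states the optimal-substructure step explicitly (a sub-path of a time-optimal trace is itself time-optimal), which you assert implicitly by treating the pair as a minimal matched trace, and your added remark that flow-carrying paths attain the bound with equality is already the content of the user-equilibrium argument in the proof of \cref{cor:itsm1}.
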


\begin{proof}
According to Assumption 1, the path indicated by a GPS trace is a time-optimal path (i.e., the shortest travel-time path). Therefore, it has an optimal substructure, in which a subset of two is time-optimal and provides a bound on the travel time for paths from $ g $ to $ h $. 
\end{proof}

As Collorary~\ref{cor:itsm2} implies, during the reconstruction of a velocity field, I can inspect two consecutive GPS points at a time (due to the optimal substructure). Consider two arbitrary GPS points $ s_{i} $ and $ s_{i+1} $, the true path $ s_{i} \overset{\text{$ p_{true} $}}\leadsto  s_{i+1}  $ is the fastest path between $ s_{i} $ and $ s_{i+1} $ on $ G_{true} $. Based on Collorary~\ref{cor:itsm1}, the travel time $ p_{true}.t = s_{i+1}.t - s_{i}.t $ is the lower bound of all travel times between the two points. This means if a path $ p_{est} $ has a higher travel time, the speed of the road segments on $ p_{est} $ should be decreased utill $ p_{est}.t \geq p_{true}.t$. I refer to such $ p_{est} $ as an \emph{overestimated} path. In practice, $ p_{est} $ connects candidate nodes of $ s_{i} $ and $ s_{i+1} $ rather than $ s_{i} $ and $ s_{i+1} $ themselves, and there exists a set of paths $  \{p_{est}\}_{all} $ for $ s_{i} $ and $ s_{i+1} $, in which one is the ``closest'' to $ p_{true} $. Denoting an arbitrary element in $  \{p_{est}\}_{all} $ as $ p_{est} $, if $ p_{est} $ and $ p_{true} $ happen to be the same path (i.e., containing the same set of road segments), $ p_{est}.t$ should be equal to $ p_{true}.t$, otherwise $ p_{est}.t $ should be larger than $ p_{true}.t$. Since there is no further information for me to estimate the excessive time of $ p_{est}.t $ over $p_{true}.t$, I conservatively set $ p_{est}.t = p_{true}.t$. I refer to this process of addressing overestimated paths in $ \{p_{est}\}_{all} $ as \emph{relaxation}. 

The relaxation will make all paths in $  \{p_{est}\}_{all} $ having the same travel time (i.e., $ p_{true}.t $). However, it is difficult to deterministically derive the ``closest'' path to $ p_{true}.t $ using a single GPS trace. To remedy this issue, I rely on the ``collective intelligence'' of multiple GPS traces that share a partial or full set of road segments. These segments will become gradually updated during relaxation of each GPS trace and eventually assist in differentiating the paths that include them from other paths in terms of the travel time. An illustration of this process is shown in Figure~\ref{fig:itsm-relaxation}. The relaxation is essentially a fulfillment of Wardrop's Principles, and is conducted in a greedy fashion: I repeatedly extract the fastest path in $  \{p_{est}\}_{all} $ and relax it, until no path in $  \{p_{est}\}_{all} $ has its travel time smaller than $p_{true}.t$. Given that there may exist multiple paths connecting two nodes in a network, causing the number of elements in $  \{p_{est}\}_{all} $ to be large, a sub-network with a specified radius is extracted from the entire network. This sub-network encompasses $ s_{i} $, $ s_{i+1} $, and their mapping candidates but no more. Using this approach, the number of paths in $  \{p_{est}\}_{all} $ is reduced. The rationale behind this choice is through empirical observations that a vehicle rarely takes an opposite direction or arbitrary long detour from a GPS point to the next one.

\begin{figure}
	\centering
	\includegraphics[width=\textwidth]{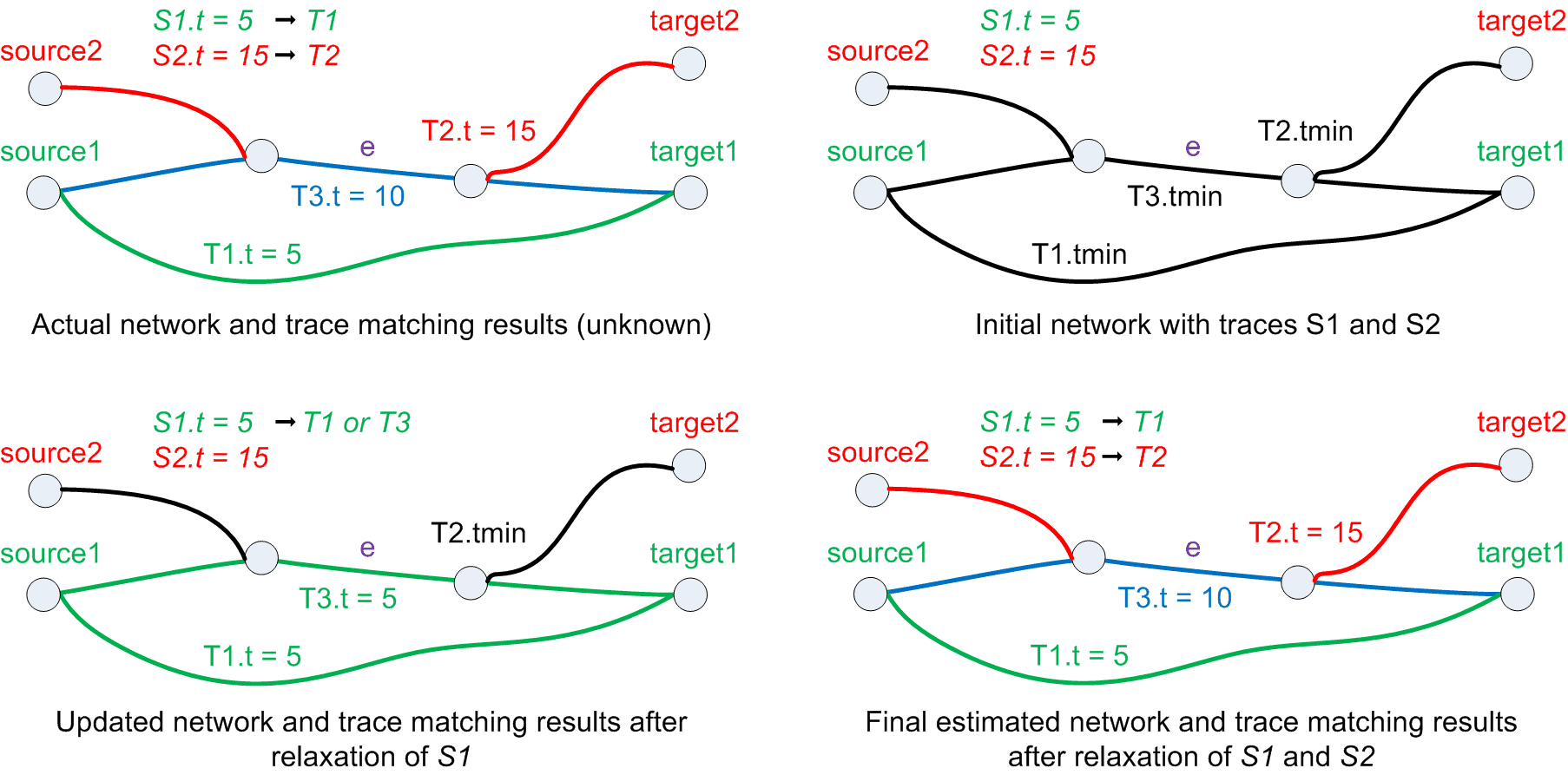}
	\caption{An illustration of the relaxation process of two traces $ S1 $ and $ S2 $. The actual traffic conditions and trace matching results are shown in the TOP-LEFT panel. The input listed in the TOP-RIGHT panel contains the initial network and trace information (i.e., sources, targets, and travel time). After relaxation of trace $ S1 $, paths $ T1 $ and $ T3 $ have their travel times increased to $ S1.t $. While $ S1 $ can be matched to either $ T1 $ or $ T3 $, the situation is resolved after relaxation of $ S2 $ due to further increase in the travel time of the road segment $ e $.}
	\label{fig:itsm-relaxation}
\end{figure}

\begin{theorem}
The speed of a road segment is monotonically decreasing during relaxation.
\label{thm:spd_road}
\end{theorem}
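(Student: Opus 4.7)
The plan is to argue by induction on the number of relaxation operations, showing that the free-flow initial speeds can only be driven downward by each step. First I would fix notation: let $v_k(e)$ denote the speed assigned to road segment $e$ after the $k$th relaxation, starting from $v_0(e) = e.v_{max}$ for every $e \in E$. The base case is then immediate, since $v_0$ attains the maximum permissible speed on every segment, so no subsequent value can exceed it.

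For the inductive step, I would consider the $(k+1)$st relaxation, which extracts the fastest candidate path $p_{est}$ from the current sub-network and compares its travel time with the lower bound $p_{true}.t$ furnished by the GPS trace (Corollary~\ref{cor:itsm1}). Two cases exhaust the possibilities. (i) If $p_{est}.t \geq p_{true}.t$, the path already satisfies the lower-bound constraint and no update is performed, so $v_{k+1}(e) = v_k(e)$ for every segment. (ii) If $p_{est}.t < p_{true}.t$, the deficit $\Delta = p_{true}.t - p_{est}.t > 0$ is distributed across the segments of $p_{est}$ via the Hellinga et al.\ allocation scheme, adding a nonnegative time increment $\delta_e \geq 0$ to each $e \in p_{est}$. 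Since $e.t = e.len / v(e)$, any nonnegative increment on $e$ forces $v_{k+1}(e) \leq v_k(e)$. Segments not on $p_{est}$ are untouched and hence preserve their speeds. Combining both cases yields $v_{k+1}(e) \leq v_k(e)$ for every $e$, closing the induction.

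The main obstacle I expect is verifying case (ii): showing that the allocation procedure, when invoked on an already-updated network where earlier relaxations have reduced speeds on shared segments, still produces only nonnegative time increments $\delta_e$. I would address this by writing each increment as the product of the global deficit $\Delta$ and a nonnegative distribution weight inherited from the Hellinga et al.\ scheme, and by observing that because the greedy procedure always extracts the \emph{fastest} path first, the deficit $\Delta$ is guaranteed to be nonnegative whenever relaxation is actually triggered. Once both factors are shown to be nonnegative, the speed monotonicity follows directly from the length-over-speed relation.
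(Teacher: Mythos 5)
Your induction scaffolding and base case are fine, but the inductive step you yourself flag as the main obstacle is where the argument breaks, and your proposed fix does not match how the procedure actually works. The travel-time allocation adapted from Hellinga et al.\ does not add a nonnegative share $\delta_e$ of the deficit $\Delta$ on top of each segment's current travel time; it re-allocates the entire budget $T_a = p_{true}.t$ from scratch, writing each segment's time as $\tau_{e,f} + \tau_{e,c} + \tau_{e,s}$ so that the components sum to $T_a$. Nothing in that decomposition forces the newly allocated time on a given segment to be at least that segment's current (possibly already relaxed) travel time: a segment slowed by an earlier trace could in principle receive an allocation close to its free-flow time, i.e., a speed increase. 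So the factorization ``nonnegative distribution weight times nonnegative deficit'' does not exist, and the nonnegativity of $\delta_e$ is precisely the unproved claim. Note also that Algorithm~\ref{algo:relaxation} contains an exclusion step (segments with $e.v \leq avgSpeed$ are frozen and removed before allocation) exactly because the allocation alone does not guarantee this, and that step is itself justified by citing Theorem~\ref{thm:spd_road}, so you cannot lean on it without circularity.

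The paper closes this gap with a different idea: a proof by contradiction via Corollary~\ref{cor:itsm1}. If a later relaxation of a path $p_{est}^{2}$ sharing a segment $e$ with a previously relaxed path $p_{est}^{1}$ raised the speed of $e$, then $p_{est}^{1}.t$ would drop below its GPS-derived lower bound $p_{true}^{1}.t$, contradicting the corollary that no path between the matched endpoints can have travel time smaller than the observed one. It is this global consistency constraint across traces, not a local accounting of how the allocation splits the time budget, that rules out speed increases on shared segments. If you want to keep your inductive framing, the inductive step should invoke Corollary~\ref{cor:itsm1} (every previously relaxed path's lower-bound constraint must remain satisfied after the new relaxation) rather than a per-segment nonnegative-increment claim about the Hellinga allocation.
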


\begin{proof}
I prove this theorem by contradiction. Assume two overestimated paths $ p_{est}^{1} $ and $ p_{est}^{2} $ share a road segment $ e $. Without loss of generality, consider the relaxation of $ p_{est}^{1} $, the road segment $ e $ has its speed $ e.v $ decreased to $ e.v' $ and travel time $ e.t $ increased to $ e.t' $ so that $ p_{est}^{1}.t = p_{true}^{1}.t$. During relaxation of $ p_{est}^{2} $, $ e.v' $ and $ e.t' $ are subject to change. If instead of monotonically decreasing, $ e.v' $ gets increased and $ e.t' $ gets decreased, then $ p_{est}^{1}.t < p_{true}^{1}.t$, which invalidates the previous relaxation process and further contradicts Collorary~\ref{cor:itsm1}. 
\end{proof}

Taking advantage of Theorem~\ref{thm:spd_road}, further reduction in computation can be achieved by retaining reduced speed of each path in $ \{p_{est}\}_{all} $. To be specific, as $ \{p_{est}\}_{all} $ is generated for $ s_{i} $ and $ s_{i+1} $ in a sub-network, many paths in $ \{p_{est}\}_{all} $ will have shared road segments. Therefore, the speed reduction in these road segments will make multiple paths in $ \{p_{est}\}_{all} $ to have increased travel times. As a result, the greedy process of relaxation is much more efficient than the brute-force enumeration.

\begin{algorithm}
{\bf Input}: Initial estimated road network $ G_{est} = (V,E)$ with $ e.v = e.v_{max}$,  $\forall e \in E$; GPS traces $S = \{S_{1}, \dots, S_{m}\} $; Discretized time intervals $ \{1, \dots, D\} $; Maximum distance for computing candidate nodes of a GPS point $ cDis $; Maximum number of candidate nodes $ cNum $  \\
{\bf Output}: Reconstructed road network $ G_{est} $
  \caption{Velocity Field Reconstruction}
  \begin{algorithmic}[1]
  %\Procedure{myalgo}{}
  \For{$ \text{each time interval } d \in (1, \dots, D) $} :
  \State $ S^{d} = ExtractGPSTraces(S,d) $ 
  \For {$ \text{each trace } S_{j}^{d} \in S^{d}$} :
  \For{$ \text{consecutive GPS points } s_{i},s_{i+1} \in S_{j}^{d}$} :
  \State $ radius = \frac{dist(s_{i}, s_{i+1})}{2}  + cDis $
  \State $ H = ExtractSubgraph(G_{est}, radius, s_{i},s_{i+1}) $
  \State $ C_{1} = GetCandidateNodes(H, s_{i}, cDis, cNum) $
  \State $ C_{2} = GetCandidateNodes(H, s_{i+1}, cDis, cNum) $
  \State $ p_{true}.t = s_{i+1}.t - s_{i}.t $
  \State $ H_{relax} = RelaxNetwork(H, p_{true}.t, C_{1}, C_{2}) $
  \State $ G_{est} = UpdateNetwork(G_{est},H_{relax}) $
  \EndFor
  \EndFor
  \EndFor
  \State \Return $ G_{est} $
  %\EndProcedure
  \end{algorithmic}
  \label{algo:main}
\end{algorithm}

The overall process is described in Algorithm~\ref{algo:main}. Subroutines \emph{RelaxNetwork} and \emph{Relaxation} are specified in Algorithm~\ref{algo:relaxnetwork} and Algorithm~\ref{algo:relaxation}, respectively. In particular, two types of paths are considered as outliers, which are excluded from the computation: one has a travel time shorter than its free-flow travel time and one has a travel time longer than the travel time under a jam density. The procedure \emph{TravelTimeAllocation} in Line~\ref{line:tta} of Algorithm~\ref{algo:relaxation} is discussed in the next section.

\begin{algorithm}
	{\bf Input}: Subgraph $ H $; True travel time $ p_{true}.t $; Candidate node sets $ C_{1} $, $ C_{2} $ \\
	{\bf Output}: Relaxed subgraph $ H_{relax} $
	\caption{RelaxNetwork}
	\begin{algorithmic}[1]
		\For{$ \text{each node } n_{1} \in C_{1} $} :
		\For{$ \text{each node } n_{2} \in C_{2} $} :
		 \If{$ n_{1} == n_{2} $ or $ distance(n_{1},n_{2}) == 0 $} 
		 \State {continue}  
		 \Comment{no valid path exists}{}
		 \EndIf		 
		\State $ p_{est} = GetFastestPath(H,n_{1},n_{2}) $
		 \If{$ IsOutlier(p_{est}) == true $} \label{line:outlier}
		 \State {continue}  
		 \EndIf		
		\While {$ true $} :
		\If{$ t_{est} \geq t_{true} $} 
		\State {break}  
		\Comment{not an overestimated path}{}
		\EndIf
		\If{$ NumberOfNodes(p) < 2 $} 
		\State {break}  
		\Comment{not a valid path}{}
		\EndIf
		\State $ E_{relax} = Relaxation(p_{est}, p_{true}.t) $		
		\State $ H = UpdateNetwork(H,E_{relax}) $
		\Comment{update travel times of $ H $ using $ E_{relax} $}{}
		\State $ p_{est} = GetFastestPath(H,n_{1},n_{2}) $
		\Comment{get the shortest travel time path between $ n_{1} $ and $ n_{2} $ on $ H $}{}
		\EndWhile
		\EndFor
		\EndFor
		\State 
		\Return $ H_{relax} = H $
	\end{algorithmic}	
	\label{algo:relaxnetwork}
\end{algorithm}

\subsection{Travel Time Allocation}
\label{sec:tta}

During relaxation, each overestimated path $ p_{est} $ between $ s_{i} $ and $ s_{i+1} $ needs to be addressed in order to achieve $ p_{est}.t = p_{true}.t = s_{i+1}.t - s_{i}.t $. Due to low sampling rate, $ p_{est} $ often consists of several road segments. The aggregate travel time $ p_{true}.t $ needs to be appropriately allocated to individual road segments of $ p_{est} $. To address this issue with respect to traffic flow analysis, I adapt the solution proposed by Hellinga et al.~\cite{hellinga2008decomposing}. 

According to Hellinga et al.~\cite{hellinga2008decomposing}, the travel time of a road segment $ e $ can be decomposed into three components: free-flow travel time $ \tau_{e,f} $, congestion time $ \tau_{e,c} $, and stopping time $ \tau_{e,s} $. For an overestimated path $ p_{est} $, I denote its total free-flow travel time as $ T_{f} = \sum_{e \in p_{est}}\tau_{e,f} $, total congestion time as $ T_{c} = \sum_{e \in p_{est}}\tau_{e,c} $, total stopping time as $ T_{s} = \sum_{e \in p_{est}}\tau_{e,s} $, and the allocation time as $ T_{a} = p_{true}.t $.  To validate $ p_{est} $, the following relationship needs to be ensured: $ T_{f} + T_{c} + T_{s}  = T_{a} $. While $ T_{f} $ can be derived trivially as $ T_{f} = \sum_{e \in p_{est}}\tau_{e,f} = \sum_{e \in p_{est}}\frac{e.len}{e.v_{max}}$, the computations of $ T_{c} $ and $ T_{s} $ require additional considerations.

\begin{algorithm}
	{\bf Input}: A set of road segments $ E $ (initially contains all road segments in $ p_{est} $); Time budget $ T $ (initially set to $ p_{true}.t $) \\
	{\bf Output}: A set of relaxed road segments $ E_{relax} $
	\caption{Relaxation}
	\begin{algorithmic}[1]
		\State $ avgSpeed = \frac{\sum_{e \in E} \|e\|}{T} $
		\Comment{average travel speed}{}
		\State $ T' = T  $
		\Comment{store the time budget}{}
		\For{$ \text{each road segment } e \in E $} :
		\If{ $ e.v \leq avgSpeed $} 
		\State {$ T = T - e.t $} 
		\Comment{the leftover time budget}{}
		\State {$ E = E \setminus \{e\} $} 
		\Comment{exclude $ e $ based on Theorem~\ref{thm:spd_road}}{}
		\EndIf
		\EndFor
		\If{ $ T' \neq T $} 		
		\Comment{some $ e $ have been excluded}{}
		\State $ E_{relax} = Relaxation(E,T) $ 
		\Comment{recursive call}{}
		\Else 
		\State {$ E_{relax} = TravelTimeAllocation(E,T)$ } \label{line:tta}
		\EndIf
		\State
		\Return $ E_{relax} $
	\end{algorithmic}	
	\label{algo:relaxation}
\end{algorithm}

By assuming nearby road segments have similar traffic conditions, the \emph{path congestion level} is defined as $ w = \frac{T_{c}}{T_{c} + T_{f}} $. The minimum value $ w_{min} = 0 $ is reached when a path can be traveled using its free-flow speed, and the maximum value $ w_{max} = \frac{T_{c,max}}{T_{c,max} + T_{f}} $ where $ T_{c,max} = T_{a} - T_{f} $ is reached when a path is congested and $ T_{s} = 0 $. Using a specific path congestion level $ w_{min} \leq w \leq w_{max} $, the probability
that congestion occurred on $ p_{est} $ is computed as follows:

\begin{equation}
P_{c}(w) = min\{1, \frac{T_{c,max}^{prev} + T_{c,max}}{{T_{a}^{prev} + T_{a}}} \cdot \frac{1}{w}\},
\label{eq:congestion}
\end{equation}

\noindent where $ T_{c,max}^{prev} $ and $ T_{a}^{prev} $ represent the maximum congestion time and the allocation time of path $ s_{j} \overset{\text{$ p_{est}^{prev} $}}\leadsto  s_{j+1}, \, j+1 \leq i $. From Hellinga et al.~\cite{hellinga2008decomposing}, $ p_{est}^{prev} $ denotes the path that has an allocation time longer than the jam-density travel time with maximum possible $ j $. As I have excluded the path outliers,  $ p_{est}^{prev} $ indicates the path connecting $ s_{i-1} $ and $ s_{i} $, and is the directly preceding path of $ p_{est} $. $ P_{c}(w) $ is defined under assumptions: when all variables are fixed, the probability of a specific level of congestion occurring increases as $ T_{c,max} $ increases; given a particular $ T_{c,max} $, higher level congestion is less likely to occur than lower level congestion.

Next, the \emph{stopping likelihood function} is defined for computing the probability of stopping. Since the original formula does not take estimated traffic conditions into account, I alter it to the following: 

\begin{equation}
L_{s,e}(w) = \beta w + (1-\beta)\frac{e.k}{e.k_{max}},
\label{eq:stop}
\end{equation}

\noindent where $ \beta $ is the weighting factor (set to 0.5), and $ e.k $ is the estimated density as a result of possible previous relaxation (otherwise $ e.k = 0$). Equation~\ref{eq:stop} computes the likelihood by leveraging the \emph{path congestion level} $ w $ and the \emph{road-segment congestion level} $ \frac{e.k}{e.k_{max}} $. Intuitively, the road-segment congestion level becomes higher when the estimated density $ e.k $ approaches the jam density $ e.k_{max} $. In order to derive $ e.k $ from the estimated speed $ e.v $, I utilize the Greenshield's model~\cite{greenshields1935study}: 

\begin{equation}
e.k = e.k_{max}\left(1-\frac{e.v}{e.v_{max}}\right).
\end{equation}

\noindent By having the \emph{stopping likelihood function}, the probability that a vehicle stopped on a particular road segment is stipulated by assuming the vehicle stops at most once on $p_{est} $:

\begin{equation}
P_{s,e_{i}}(w) = L_{s,e_{i}}(w)\prod_{e_{j} \in p_{est}, j \neq i}(1- L_{s,e_{j}}(w)).
\label{eq:stopping}
\end{equation}

\noindent Using Equation~\ref{eq:congestion} and Equation~\ref{eq:stopping}, the congestion time on a single road segment is calculated by integrating all path congestion levels together: 

\begin{equation}
\tau_{e,c} = \int_{0}^{w_{max}} \tau_{e,f}\frac{w }{1-w} \frac{P_{c}(w) \sum_{e}P_{s,e}(w)}{Q_{s}} dw,
\end{equation}

\noindent where $ Q_{s} = \int_{0}^{w_{max}} P_{c}(w) \sum_{e}P_{s,e}(w) dw $ is the normalizing factor. After summing the congestion time of all road segments $ T_{c} = \sum_{e \in p_{est}} \tau_{e,c} $, the total stopping time can be derived easily: $ T_{s} = T_{a}-T_{f}-T_{c} $. Finally, I can calculate the stopping time on each road segment using the following formula: 

\begin{equation}
\tau_{e,s} = \int_{0}^{w_{max}} T_{s} \frac{P_{c}(w)P_{s,e}(w)}{Q_{s}} dw.
\end{equation}

\noindent To this point, I have solved the travel time allocation problem by having $ \tau_{e,f} $, $ \tau_{e,c} $, and $ \tau_{e,s} $ for all road segments of $ p_{est} $ and fulfilled $ T_{f} + T_{c} + T_{s}  = T_{a} $ (i.e., $p_{est}.t = p_{true}.t $).

\subsection{Evaluation on A Synthetic Network Using Traffic Simulation}
In order to evaluate my technique on the effectiveness of estimating a velocity field, I use a synthetic road network and an agent-based traffic simulator SUMO~\cite{krajzewicz2012recent}. The road network is modeled as a grid with $ 5 \times 5 $ intersections. By treating one hour as a time interval for a specific congestion level, a set of vehicles is routed in the network. The average travel times of all road segments are taken as the ground truth for this congestion level. All traces are simulated by randomly selecting nodes of the network as sources and targets using the fastest route strategy. As my method operates on size two subsets of GPS traces, I emit pairs of points at the source and target for each simulated trace to resemble the low sampling rate feature of real-world GPS data. This design choice enables the testing of the travel-time allocation algorithm developed in the previous section.

All road segments share the same setting: length of $150~m$, maximum speed at $17.88~m/s$, and a maximum density of $0.15~vehicles/meter$ . In total, 30 congestion levels are created by simulating 50 to 1500 vehicles with an increment of 50 vehicles per time interval. In addition, for each time interval, five levels of the vehicle population, namely at 20\%, 40\%, 60\%, 80\%, and 100\%, are sampled to generate GPS reports.

The first analysis is conducted by treating the network as a whole: the aggregate travel time over the entire network is used for evaluation. The comparison between the estimated values computed using my technique and Lou et al.~\cite{lou2009map}, respectively to the ground-truth values are shown in Figure~\ref{fig:nn-time}. Starting with 20\% of vehicles, my technique demonstrates close approximations to the ground truth at all congestion levels, while the other technique fails to achieve the same performance. Shown in Table~\ref{tab:nn-error}, I examine the influence of different traffic population percentages in estimation by computing the absolute error to the ground truth across all congestion levels. The smallest error (mean = $3.4~s$, std = $3.3~s$) is achieved using 80\% of GPS traces from the simulated traffic. The slight increase in errors when using 100\% GPS traces is mainly due to the stochastic aspect of the travel-time allocation method that I adapted. 

\begin{figure}
	\centering
	\includegraphics[width=\textwidth]{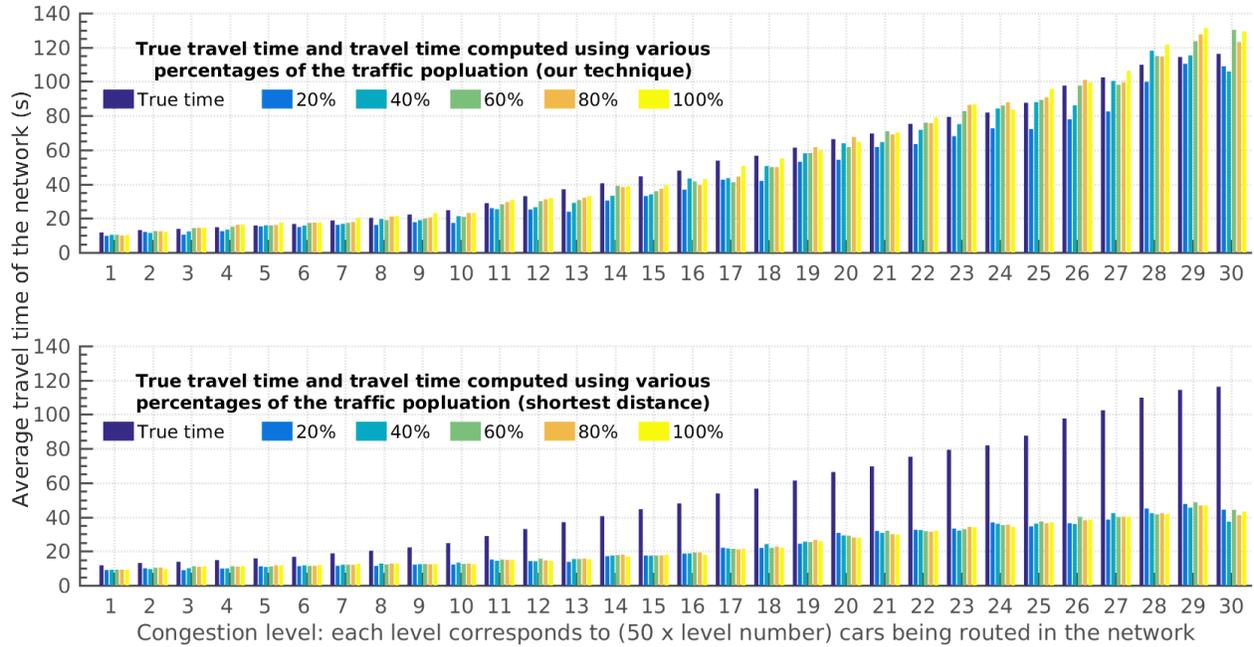}
	\caption{Recovery of average travel time on different percentages of the traffic population using my approach (TOP) vs. Lou et al.~\protect\cite{lou2009map} (which uses the shortest-distance criterion) (BOTTOM). My technique consistently outperforms Lou et al.~\protect\cite{lou2009map} in estimating the average travel time over 30 different congestion levels.}
	\label{fig:nn-time}
\end{figure}

\begin{table}[ht!]
	\centering
	\small
	\begin{tabular}{ccccc}
		\toprule
		& \multicolumn{4}{c}{Absolute errors to the ground truth}  \\
		\cmidrule(l){2-5}
		& \multicolumn{2}{c}{My technique}  & \multicolumn{2}{c}{Lou et al.~\cite{lou2009map}}  \\		
		\cmidrule(l){2-3} 	\cmidrule(l){4-5} 	
		Traffic percentage & mean ($s$) & std. ($s$) & mean ($s$) & std. ($s$)		\\
		\midrule
		20\% & 8.29 & 5.31 & 29.74 & 21.90 
		\\
		\midrule
		40\% & 4.25 & 3.38 & 29.85 & 22.66
		\\
		\midrule
		60\% & 3.67 & 3.67 & 29.33 & 22.06
		\\
		\midrule
		80\% & 3.40 & 3.32 & 29.55 & 22.46  
		\\
		\midrule
		100\% & 3.58 & 4.04 & 29.66 & 22.35 
		\\
		\bottomrule
	\end{tabular}
	\caption{The absolute errors in the recovered travel times computed using my technique vs. Lou et al.~\protect\cite{lou2009map}, using GPS traces from various percentages of the traffic population. My technique results in much smaller errors as of Lou et al.~\protect\cite{lou2009map}.}
	\label{tab:nn-error}
\end{table}

The absolute errors in the recovered travel time computed using my technique vs. Lou et al. (Lou
et al., 2009) by using GPS traces from various percentages of the traffic population. My technique results in
much smaller errors as of Lou et al. (Lou et al., 2009).

The second analysis computes the relative improvements measured in mean squared error (MSE) as $\frac{1}{n}\sum_{i = 1}^n (\hat{e.t} - e.t )^2 $ of my method over Lou et al.~\cite{lou2009map}. The results are shown in Figure~\ref{fig:tt-results}. As the congestion level increases or a higher percentage of GPS traces becomes available in estimation, my technique outperforms the other method. The improvements are less clear when a congestion level is low (i.e., $ < 10 $), but apparent when a congestion level is high (i.e., $ \geq 10 $). 

\begin{figure}
	\centering
	\includegraphics[width=\textwidth]{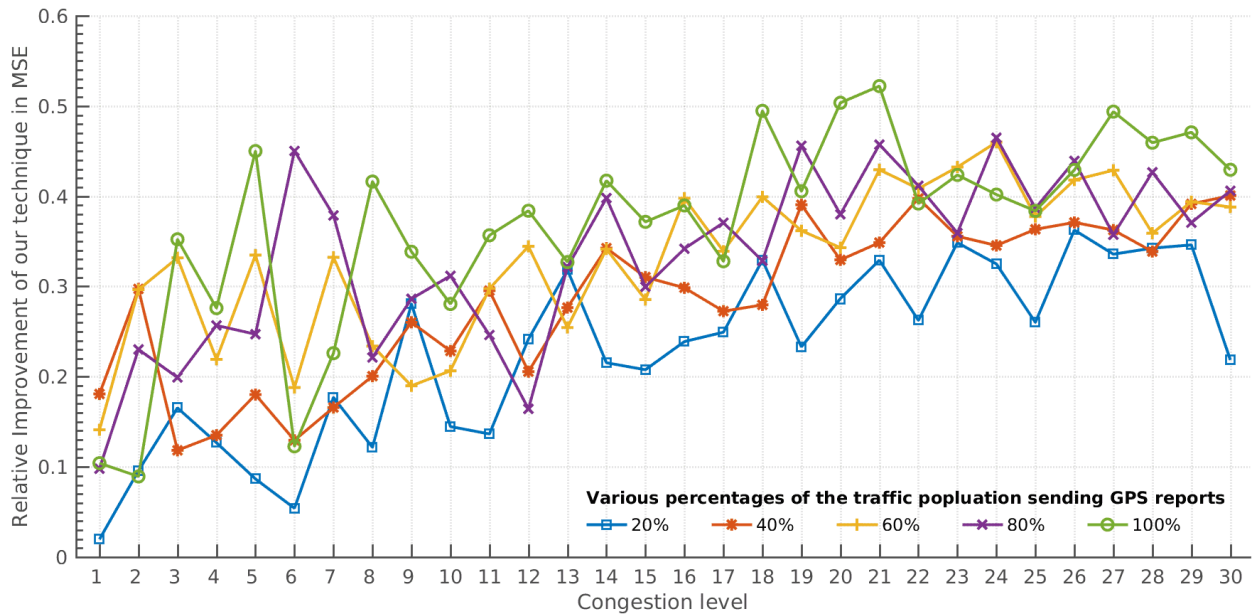}
	\caption{Relative improvements measured in MSE of my technique over Lou et al.~\protect\cite{lou2009map}. My technique outperforms Lou et al.~\protect\cite{lou2009map} as the congestion level increases or more GPS traces become available in estimation.} 
	\label{fig:tt-results}
\end{figure}

\section{Missing Value Completion}
\label{sec:itsm-missingvalue}

The temporal sparsity of GPS data indicates that traffic data can be missing for certain time intervals within a weekly period. The missing data inhibit us to accurately estimate full traffic dynamics. To address this issue, I explore sparsity embedded in traffic patterns and propose a novel technique based on the Compressed Sensing algorithm~\cite{1614066,1580791} for traffic reconstruction.

I have adopted loop-detector data\footnote{The loop-detector data are obtained from Caltrans Performance Measurement System (PeMS): \url{http://pems.dot.ca.gov/}.}, which represent complete and accurate measurements of traffic, to explore features of traffic patterns. I use speed measurements from 38 loop detectors installed in San Francisco for this purpose. The time range of these data is the same as that of the Cabspotting dataset~\cite{cabspotting} (GPS data) in San Francisco. I refer to the average speed measurements over a weekly period from a loop detector as a \emph{traffic signal}. An example is shown in the top panel of Figure~\ref{fig:loop-analysis}, which exhibits clear periodicity.

I performed a spectral analysis on a traffic signal by setting the frequencies as the reciprocal of the signal length (i.e., 1/168, an hourly interval within a weekly period) and subtracting the signal from its mean to make the oscillations easier to observe. The results, shown in the middle panel of Figure~\ref{fig:loop-analysis}, reveal that the period of the most salient oscillation is 24 hours (i.e., one cycle per day). In addition, the signal is sparse in the frequency domain, which is reflected as over 95\% energy is preserved by retaining its 10 largest frequencies (Figure~\ref{fig:loop-analysis}, BOTTOM).

\begin{figure}
	\centering
	\includegraphics[width=\textwidth]{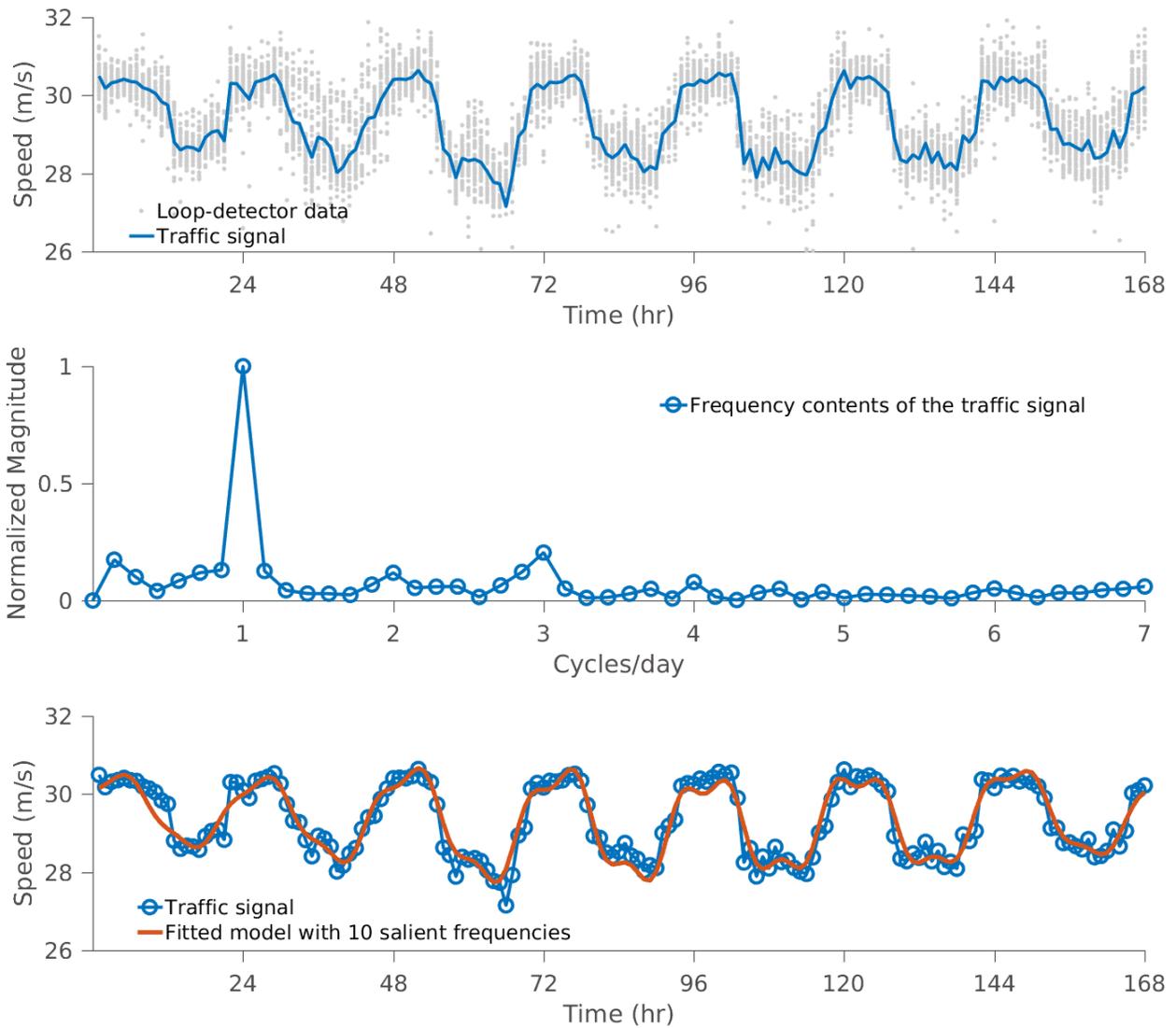}
	\caption{The average speed measurements from a loop detector are interpreted as a \emph{traffic signal}, which exhibits a clear periodic pattern (TOP); the spectral analysis reveals that the most prominent frequency is one cycle per day (MIDDLE); the traffic signal is approximated by a frequency-domain linear regression model, in which 95\% energy is retained by keeping the 10 largest frequencies (BOTTOM).}
	\label{fig:loop-analysis}
\end{figure}

According to the specification of a traffic signal, the highest frequency being supported is one cycle per two hours. To recover a traffic signal from its samples, the classical Nyquist-Shannon Theorem requires at least 168 measurements, which we do not have due to the temporal sparsity of GPS data. However, the Compressed Sensing algorithm~\cite{1580791,1614066} suggests that a signal can be recovered with a small set of samples if the signal has a sparse representation. This feature is manifested in Figure~\ref{fig:loop-analysis} and the top panel of Figure~\ref{fig:sparse}, which shows the rapid decaying of the sorted frequency magnitudes of traffic signals. On average, the decaying rate reaches 45.8\% with the most prominent frequency and 78.9\% with the 10 most prominent frequencies. After the $15$th frequency, the decaying rates become ineligible. 

Another merit of Compressed Sensing is that it does not require any prior knowledge of the sparse structure of a signal, such as the locations of large frequency components. This generality is ideal for recovering the traffic condition of a road segment, given that traffic is intrinsically stochastic and the sparse structure of a traffic signal varies from one road segment to another. In the bottom panel of Figure~\ref{fig:sparse}, I plot locations and amplitudes of the frequency components of all obtained traffic signals in my analysis: besides the appearance of the prominent oscillations at 24, 48, and 72 hours, other oscillations are spread out along the frequency axis. Together, these observations and features confirm the applicability of using Compressed Sensing on traffic signal recovery.

\begin{figure}
	\centering
	\includegraphics[width=\textwidth]{itsm/figures/sparse-structure.png}	
	\caption{The top panel shows the decaying rates of frequency magnitudes of all traffic signals; the bottom panel shows the locations and normalized magnitudes of the frequency components of all traffic signals. The rapid growth of decaying rates and randomly distributed frequency structures indicate that Compressed Sensing is applicable for recovering a traffic signal.}
	\label{fig:sparse}
\end{figure}

Given a signal $ f \in \mathbb{R}^n$ and its measurements $ b \in \mathbb{R}^m $, I consider the undersampled case in which the number of measurements $ m $ is smaller than the signal's dimension $ n $. The goal is to derive an estimated signal $ \hat{f} \in \mathbb{R}^n $ from $ b \in \mathbb{R}^m $ such that the error $ \lVert f - \hat{f} \rVert_{\mathcal{L}_{2}} $ is minimized. In general, the better the desired reconstruction quality, the more measurements are needed. In order to achieve a predefined accuracy level, signal reconstruction requires a minimum number of measurements $ m_{min} $. According to Cand\'{e}s and Wakin~\cite{candes2008introduction}, $ m_{min} $ is on the order of $ \mu^2 S \log(n) $, where $ \mu $ is the coherence between a measurement basis and a representation basis, $ S $ is the signal's sparsity level, and $ n $ is the signal's dimension (in this chapter $ n=168 $). I estimate $ S $ by averaging the number of frequencies in preserving 95\% energy of all traffic signals, which results in $ S=17.63 $. The minimum coherence value $ \mu = 1 $ is obtained by performing a discrete cosine transform (DCT) on a traffic signal $ f $: 

\begin{equation}
{\bf \Phi} f_{dct} = f,
\end{equation}

\noindent where $f_{dct}$ is the representation of  $ f $ in the DCT domain and  ${\bf \Phi}_{n \times n}$ is the DCT matrix. With these estimated values, the minimum number of samples required to recover a traffic signal can be computed: $ m_{min} = \mu^2 S \log(n)  = 1^2 \cdot 17.63 \cdot \log(168) \approx 90 $.

\begin{figure}
	\centering
	\includegraphics[width=\textwidth]{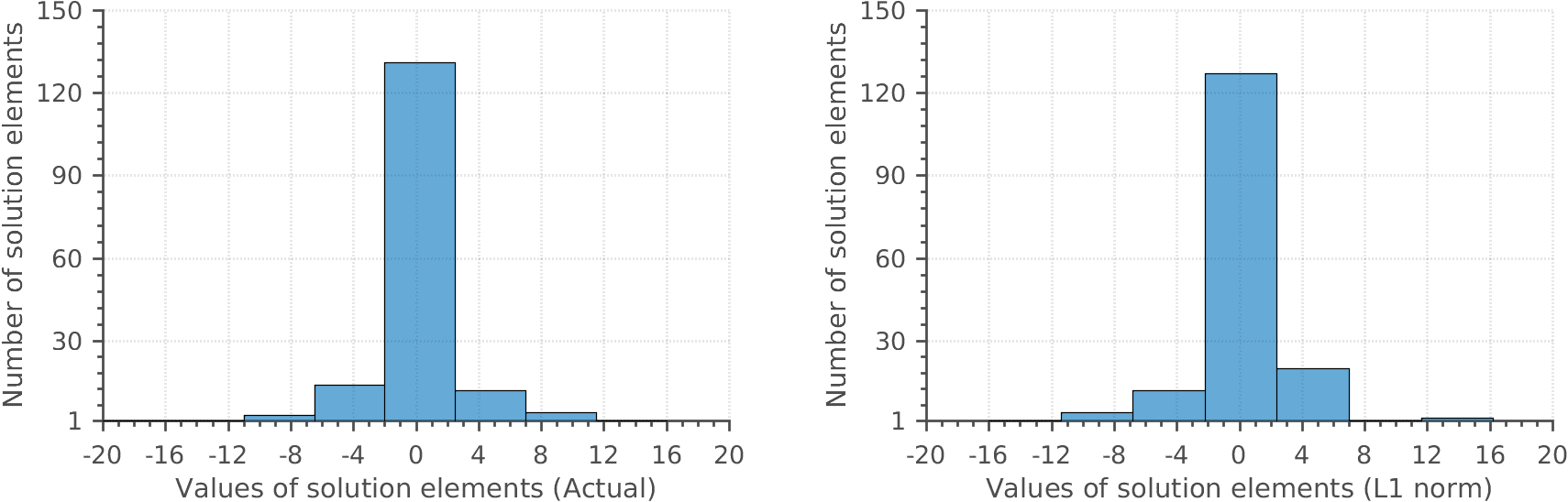}
	\caption{Solution elements (in total 168) by solving an underdetermined system via convex optimization. Both the actual and the $\mathcal{L}_1$-norm based recovery demonstrate sparsity (i.e., most solution elements are approximately zero).}
	\label{fig:sparse-solution}
\end{figure}

\begin{figure}[ht]
	\centering
	\includegraphics[width=\textwidth]{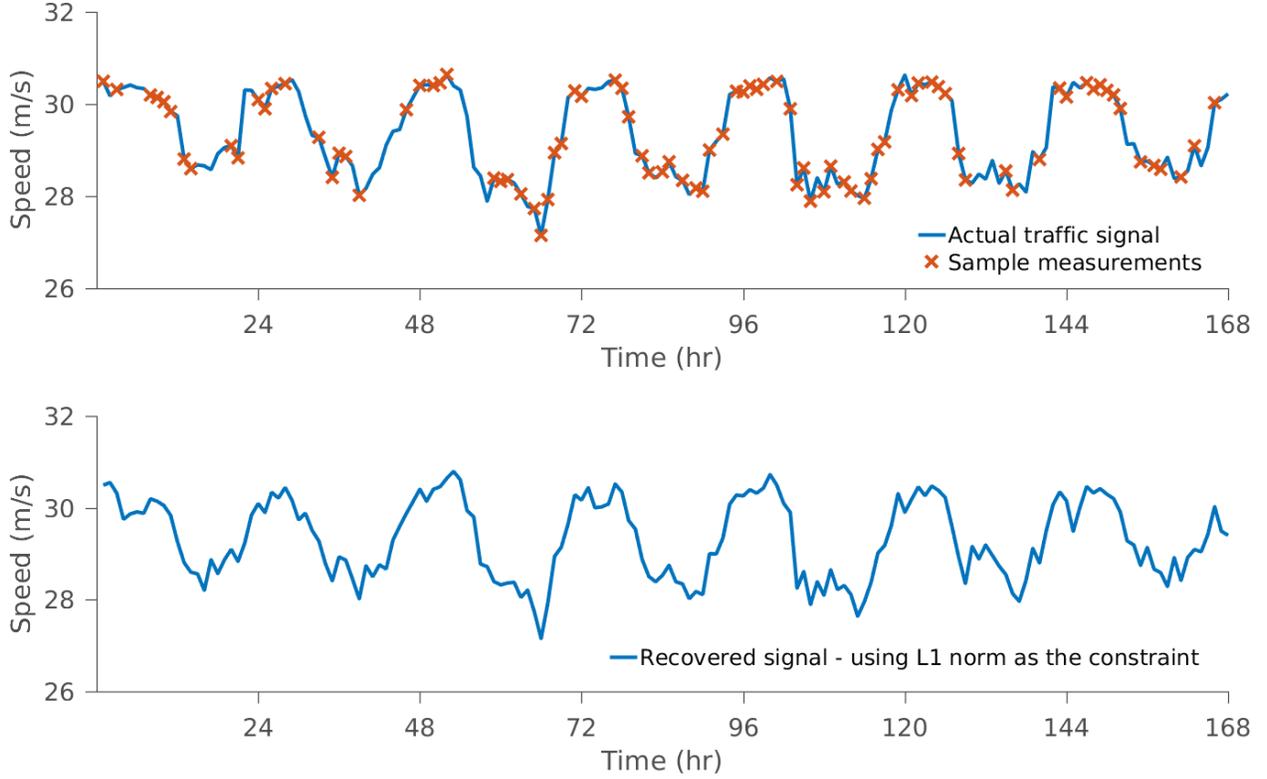}
	\caption{Recovery of a traffic signal via Compressed Sensing. The actual signal and its 90 random measurements are plotted (TOP). The $ \mathcal{L}_{1} $-norm based recovery (BOTTOM) shows high similarity to the actual signal.} 
	\label{fig:loop-recover}
\end{figure}

I test the performance of my method by first obtaining random measurements via sampling:  $b_{m \times 1} = {\bf \Psi}_{m \times n} f_{n \times 1} $, where $ {\bf \Psi}$ is the sampling matrix constructed by randomly permuting rows of the identity matrix. Then, I derive the recovered signal in the DCT domain $ \hat{f}_{dct}$ from $b$  by solving the following linear system: 

\begin{equation}
{\bf A}\hat{f}_{dct} = {\bf \Psi}{\bf \Phi} \hat{f}_{dct}= b.
\label{eq:linear}
\end{equation}

\noindent Equation~\ref{eq:linear} represents an \emph{underdetermined} system, which has infinitely many solutions $ \hat{f}_{dct}$. Among all solution candidates, the desired $ \hat{f}_{dct}$ should exhibit sparsity as observed in $ f_{dct}$, which can be computed as follows:

\begin{equation}
\begin{aligned}
& \text{$ {min} $ } \lVert \hat{f}_{dct} \rVert_{\mathcal{L}_{1}},  \\
& \text{$ s.t. \,\,\,$ }  {\bf A} \hat{f}_{dct}= b.
\end{aligned}
\label{eq:comsense}
\end{equation}

An example solution to Equation~\ref{eq:comsense} is shown in Figure~\ref{fig:sparse-solution}, where the actual solution elements of $ f_{dct} $ and the recovered solution elements of $ \hat{f}_{dct} $ both demonstrate sparsity. The final recovered signal $ \hat{f} $ is acquired by performing an inverse DCT on $ \hat{f}_{dct} $. Figure~\ref{fig:loop-recover} gives an example in which the recovered signal exhibits high similarity to the original signal. A more thorough analysis of the recovering performance can be found in Figure~\ref{fig:comsense_error}. As a result, as the number of measurements used in recovery increases, both the standard deviation and the expectation of the $ \mathcal{L}2 $ loss decrease (the average error of using $ 90 $ measurements is $ 1.4~m/s $). This analysis also demonstrates the robustness of my approach as the error only increases linearly when the number of samples used in recovery decreases.

\begin{figure}[ht]
	\centering
	\includegraphics[width=\textwidth]{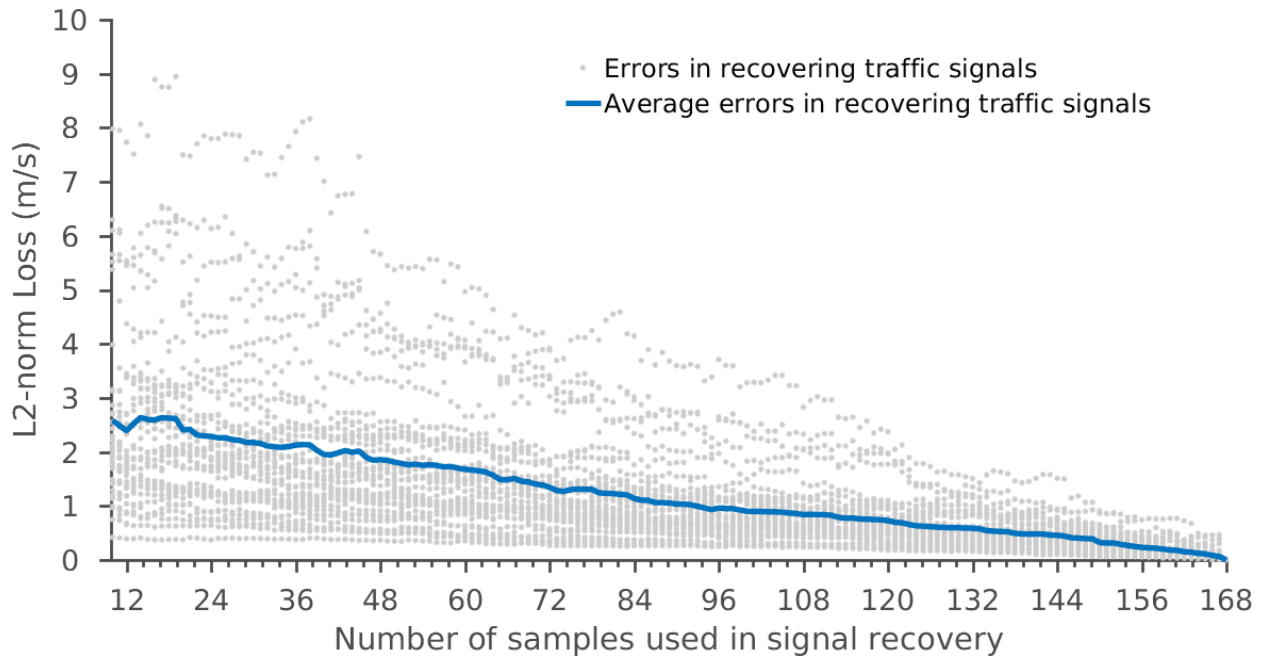}
	\caption{Errors between recovered and actual traffic signals. As more samples are used in signal's recovery, smaller errors are observed. This analysis also demonstrates the robustness of my approach as the error only increases linearly when the number of samples used in recovery decreases.} 
	\label{fig:comsense_error}
\end{figure}

To apply Compressed Sensing on GPS data, the measurements will be obtained from \emph{travel time estimation} rather than from a sampling operation performed on a traffic signal. In this case, $ {\bf \Phi} $ is set to $DCT\left(diag\left(1, \dots ,1\right)\right)$, and $ {\bf A} = {\bf \Psi}{\bf D} $ is taken to solve Equation~\ref{eq:comsense}. Since I have established that $ m_{min} = 90 $, it is worth mentioning I only address road segments that have measurements in more than 90 time intervals. Compared to techniques from Herring et al.~\cite{herring2010real} and Hofleitner et al.~\cite{hofleitner2012large}, I have reduced the minimum number of measurements required to recover a traffic signal from 168 to 90 by 46.4\%.

\section{Estimating Traffic Dynamics Via GPS Data}
\label{sec:itsm-dynamics}
One of the hallmarks of traffic dynamics is the periodicity~\cite{hofleitner2012large}: traffic patterns show a clear trend over the course of a day and collectively over the course of a week. In this section, I first demonstrate that this phenomenon can be recovered using my technique. Then, I will analyze and discuss features revealed in the reconstructed traffic patterns of San Francisco using the GPS dataset Cabspotting~\cite{cabspotting}. 

To assist visualization and analysis, the metric $\text{\emph{fluidity}}\in [0,1]$~\cite{hofleitner2012large}, computed as the ratio of the estimated travel speed to the free-flow speed, is adopted for each road segment. In Figure~\ref{fig:dynamics}, I show the estimated traffic dynamics using the Cabspotting dataset~\cite{cabspotting}, denoted by the average fluidity, across the road network of downtown San Francisco. From the demonstration, it is clear that my technique recovers the periodicity of the traffic pattern through the dominant frequency showing at one cycle per day. This characteristic resembles the one observed in loop-detector data from the same area (see Figure~\ref{fig:loop-analysis} and~\ref{fig:loop-recover}), thus proving the effectiveness of my technique.

\begin{figure}
	\centering
	\includegraphics[width=\textwidth]{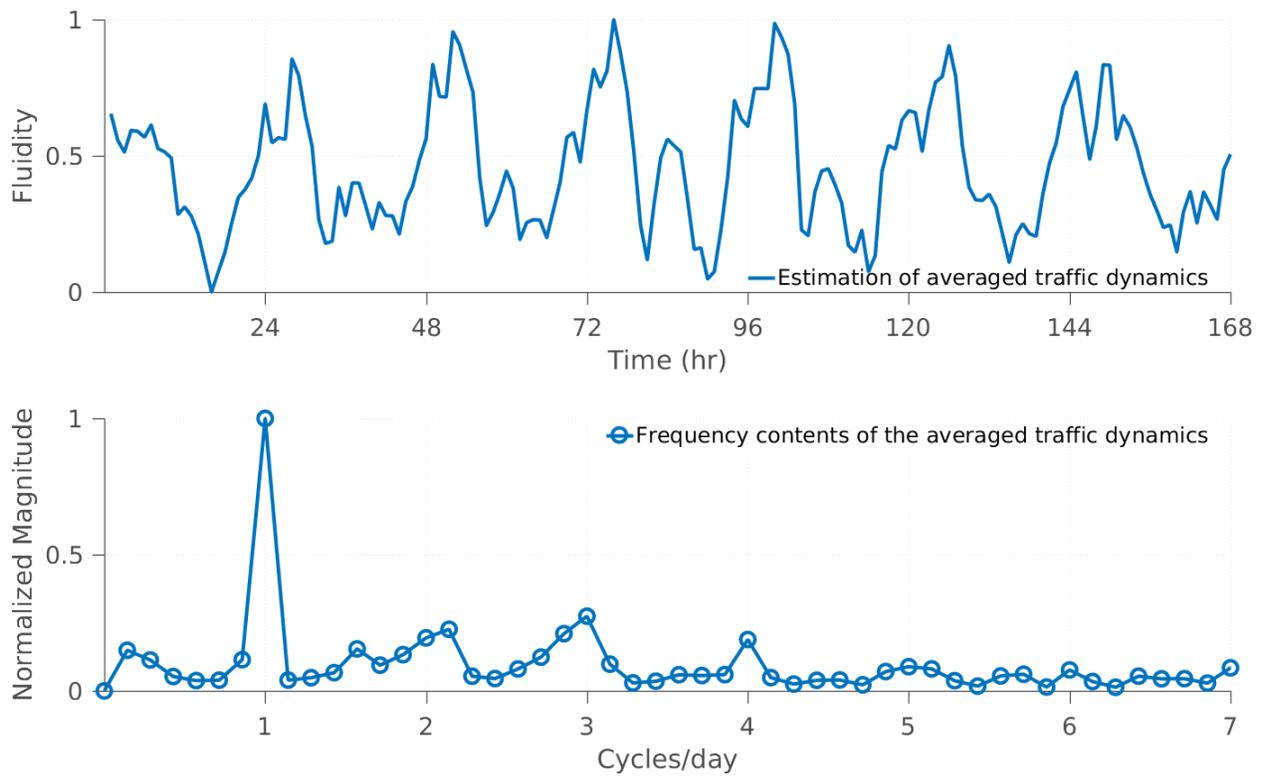}
	\caption{Estimated traffic pattern of downtown San Francisco using the Cabspotting dataset~\protect\cite{cabspotting} (TOP) and its corresponding spectral analysis (BOTTOM). The recovery from my technique shows a clear daily trend, which is consistent with the features observed in loop-detector data from the same area.}
	\label{fig:dynamics}
\end{figure}

The affinity between different days in a week can also be used to illustrate the quality of a recovery method. I have computed the correlation of every pair of days using the cosine distance for both the estimated traffic conditions using GPS data and the actual traffic conditions derived from loop-detector data. In the left panel of Figure~\ref{fig:similarity}, I provide all distance scores: the upper triangular matrix is derived using the estimated values using GPS data, and the lower triangular matrix is computed using loop-detector data. In the right panel of Figure~\ref{fig:similarity}, I provide the qualitative result for visual inspection. The symmetrical pattern across the diagonal line indicates that the estimated traffic states largely agree with the loop-detector data. 

Based on the distance scores, a hierarchical clustering is performed to reveal the similarity between various day pairs. The closest pair is Wednesday and Thursday, followed by Friday and Saturday, and Monday and Tuesday. In the second level of the hierarchy, Sunday joins Monday and Tuesday. These three day-pairs suggest that a typical week of San Francisco can be roughly divided into three stages: \emph{beginning of the week} (Sunday, Monday, and Tuesday), \emph{middle of the week} (Wednesday and Thursday), and \emph{end of the week} (Friday and Saturday).

\begin{figure}
	\centering
	\includegraphics[width=\textwidth]{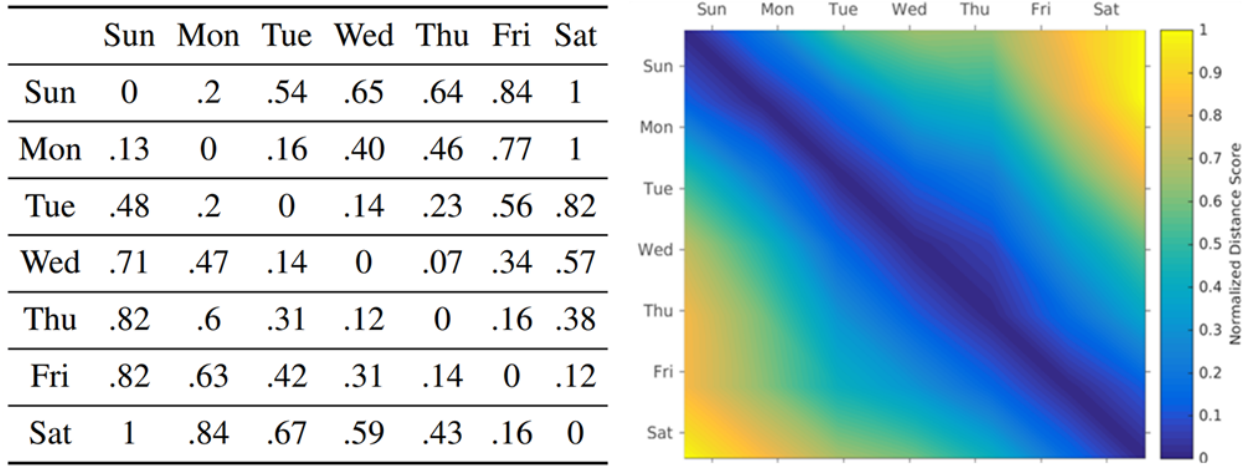}
	\caption{Correlation between every pair of days in a week. The left panel lists the normalized similarity scores calculated using the cosine distance, and the right panel provides the qualitative results. In both, the upper triangular matrix is derived using the estimated traffic conditions, and the lower triangular matrix is computed using loop-detector data from the same area. When data and patterns are compared across the diagonal line, my estimated results exhibit high similarity to the loop-detector data. }
	\label{fig:similarity}
\end{figure}

\section{Summary and Future Work}
\label{sec:itsm-conclusion}

I have presented a novel computational scheme for estimating travel times, traversed paths, and missing values over a large-scale road network using spatially and temporally sparse GPS traces. Specifically, an approach based on the shortest travel time is performed to reconstruct the velocity field of a road network. In addition, an algorithm based on the Compressed Sensing algorithm has been developed to interpolate missing travel information over an entire traffic period. Lastly, I have extensively evaluated my approach and compared my technique to a state-of-the-art technique. My technique demonstrates consistent improvements over the previous technique in various traffic scenarios.

There are several possible future directions. To start with, as abundant GPS data are becoming available, processing them will be computationally expensive. Besides using the power of distributed computing, it is promising to explore sparsity and periodicity of traffic patterns to further reduce the amount of data needed for various Intelligent Transportation System (ITS) applications~\cite{Lin2019ComSenseTRB}.  Similarly, correlation in traffic patterns due to proximity/spatial coherence can also be examined. Lastly, it might be beneficial to integrate data mining of historical data, real-time traffic reconstruction from current data, and predictive traffic simulation to achieve more comprehensive and accurate estimations of travel conditions over a metropolitan area.

While in this chapter, I have introduced a deterministic approach to estimate traffic conditions in areas with GPS data coverage and a novel algorithm based on Compressed Sensing to interpolate temporal missing values, an effective method to address the spatial sparsity, i.e., estimating traffic conditions in areas without GPS data coverage, is still needed, given our ultimate goal is to reconstruct city-scale traffic. One natural way to interpolate spatial missing traffic data is to run traffic simulation in the ``empty'' areas and adopt the simulation results. However, the simulated traffic flows, when propagate to areas with GPS data coverage, need to respect the estimated traffic flows in those areas. In order to minimize the discrepancy between estimated and simulated traffic flows, we need to not only ensure the estimated flows are as accurate as possible but also be able to dynamically tune traffic simulation so that the resulting simulated flows will respect the estimated flows. These are the topics of the next chapter.

%\begin{singlespace}
\chapter{CITY-SCALE TRAFFIC ANIMATION USING STATISTICAL LEARNING AND METAMODEL-BASED OPTIMIZATION}
\label{ch:siga}
%\end{singlespace}

\section{Introduction}
\label{sec:siga-intro}

Many efforts have been invested in digitalizing and visualizing urban environments, for example, software tools like Google Maps and Virtual Earth. As new technologies like VR systems~\cite{wang2005steering,mit,tacvr} and autonomous vehicles emerge, there is an increasing demand to incorporate realistic traffic flows into virtualized cities. The potential applications range from virtual tourism, networked gaming, navigation services, urban design, to training of autonomous driving. The ability to reconstruct city-scale traffic from mobile sensor data can enable the visualization and animation of realistic real-world traffic conditions, thus contributes to those applications.

Traditional traffic data collection methods (e.g., in-road sensors such as loop detectors and video cameras) are costly; new and cheaper data sources such as GPS devices are becoming increasingly ubiquitous. Especially, taxicabs and shared ride services (e.g., Uber and Lyft) are systematically equipping their car fleets with these devices. As a result, GPS traces are part of the most promising data sources to estimate citywide traffic conditions, attributing to their broad coverage. However, GPS data usually contain a low-sampling rate, i.e., the time difference between two consecutive points is large (e.g., greater than 60 seconds). This can cause difficulty to the inference of the actual traversed path of a vehicle, since there are may be multiple paths for connecting the two far-apart points in a complex urban environment. In addition, GPS data exhibit spatial-temporal sparsity, i.e., data can be scarce in certain areas (e.g., suburbs) and time periods (e.g., early-morning hours), which makes city-scale traffic reconstruction challenging.

While it is already challenging to reconstruct traffic conditions in areas with GPS data coverage, in order to reconstruct city-scale traffic, we need to reconstruct traffic conditions in areas without GPS data coverage as well. In theory, the local traffic in these areas can be approximated using traffic simulation. However, it is critical to ensure the consistency of traffic flows on the boundaries of areas with and without GPS data. This requires that traffic simulation must be \emph{dynamically} tuned to ensure the matching flows with the reconstructed traffic in regions with heavy and complete GPS data coverage.

In general, reconstructing city-scale traffic dynamics using GPS data presents a number of challenges: 1) processing the available data, 2) coping with insufficient data coverage, and 3) reconstructing local traffic flows that are consistent with the global traffic dynamics at large spatial-temporal scale. In order to address these challenges, I propose a systematic approach which takes a digital map (processed using the technique from Wilkie et al.~\cite{wilkie2012transforming}) and GPS data (collected from taxicabs in San Francisco~\cite{cabspotting}) as input, and reconstructs the city-scale traffic using a two-phase process. In the first phase, I reconstruct and progressively refine estiamted traffic flows on individual road segments from the sparse GPS data using statistical learning, optimization, \emph{map-matching}~\cite{quddus2015shortest}, and \emph{travel-time estimation}~\cite{Li2017CityEstSparse} techniques.
In the second phase, I use a metamodel-based simulation optimization to efficiently propagate the reconstructed results from the previous phase, along with a microscopic simulator~\cite{SUMO2012} to dynamically interpolate missing traffic data. To ensure that the reconstructed traffic is correct, I fine tune the simulation flows with respect to city-wide boundary (traffic) constraints and the estimated flows from the first phase, which objective is enforced through an error approximation of the traffic flow computed using my novel metamodel-based formulation. In summary, I address the problem of learning-based traffic animation and visualization using GPS data with the following contributions:

\begin{itemize}
\item Accurate estimation of traffic conditions in areas with GPS data coverage using statistical learning;
\item Dynamic interpolation of traffic conditions in areas without GPS data coverage using metamodel-based simulation optimization;
\item City-scale traffic reconstruction for traffic animation and visualization.
\end{itemize}

The rest of this chapter is structured as follows. I survey related work in traffic reconstruction and simulation in Section~\ref{sec:siga-related}. A general overview of my framework pipeline is provided in Section~\ref{sec:siga-overview}. I detail the two phases of my approach in Sections~\ref{sec:siga-recon} and Section~\ref{sec:siga-sim}, respectively. I present results for reconstruction and simulation, as well as system validation and application demonstration in Section~\ref{sec:siga-results}. Finally, I conclude and discuss future work in Section~\ref{sec:siga-conclusion}.

\section{Related Work}
\label{sec:siga-related}

The modeling of urban environments has received considerable attention in recent years. Many systems have been developed to describe a variety of aspects including layout, vehicle traffic, and pedestrian motion~\cite{thomas2000,willemsen2006,musialski2013survey,li2016manhattan,garcia2017fast}. Regarding vehicle motion, a number of driving simulators and vehicle behavioral models have been proposed~\cite{wang2005steering,mit,tacvr}, with several designed for virtual reality systems~\cite{kuhl1995,bayarri1996,wang2005steering}. Many advancements have also been made in visualizing traffic~\cite{andrienko2006exploratory,ferreira2013visual,wang2014visual}, and developing open-world video games (e.g., Grand Theft Auto, Burnout Paradise, and Watch Dogs).

Traffic simulation has received renewed interest in the past decade, for example, with the introduced concept of ``virtualized traffic''~\cite{Berg2009,sewall2011virtualized}, and the increasing efforts towards modeling realistic traffic flows~\cite{Wilkie2015Virtual}. Some recent developments include an extension of macroscopic models to generate detailed animations of traffic flows~\cite{sewall10}, and their integration with existing microscopic models to produce traffic animation on urban road networks~\cite{Sewall:2011:IHS:2070781.2024169}. To provide some other examples, Shen and Jin~\cite{shen2012detailed} and Mao et al.~\cite{mao2015efficient} have improved existing microscopic models to produce believable traffic animations. A technique from texture synthesis has been proposed to enhance the visual quality of traffic flows~\cite{chao2017realistic}. Characterization of heterogeneous vehicle types~\cite{lin2016generating} and driver personalities~\cite{lu2014personality} have also been explored.

Real-world data have been used to calibrate simulated traffic flows. Recent studies, including van den Berg et al.~\cite{Berg2009}, Sewall et al.~\cite{sewall2011virtualized}, and Wilkie et al.~\cite{Wilkie:2013:FRD:2461912.2462021}, have explored in-road sensors to reconstruct the traffic flow; Chao et al.~\cite{chao2013video} acquired individual vehicle characteristics from video cameras; Bi et al.~\cite{bi2016data} adopted a data-driven method to enrich the lane-changing behaviors of traffic simulations. Finally, Garcia-Dorado et al.~\cite{garcia2014designing} endowed users the flexibility to assign desired vehicular behaviors to a road network. 

Data-driven modeling has been studied by many researchers in the context of multi-agent simulation~\cite{Lerner2007,Lee2007,Ju:2010:MC:1882261.1866162,Charalambous2014,Li2015Insects}. There are however two main distinctions between these studies and our work. The first distinction is the \emph{scale}: while data-driven crowds are often limited to a few hundreds of individuals, city-scale traffic reconstruction concerns tens of thousands of vehicles. The second distinction is the nature of the data: in data-driven crowds, at least several positions and joint angles are recorded per second for each agent, while mobile traffic data is often very sparse and individual trajectories are not known and cannot be assumed.

Traffic reconstruction has drawn much attention in the field of transportation engineering~\cite{kachroo2016travel}. In order to achieve high reconstruction accuracy, multiple data sources and traffic simulation models have been investigated~\cite{work2010traffic,li2014multimodel,perttunen2015urban}. While significant results have been achieved, these methods along with other projects such as Mobile Century~\cite{HERRERA2010568} are largely restricted to highway segments with lengths of a few kilometers. In order to expand reconstructions to arterial roads and surface streets, recent studies~\cite{kong2013efficient,castro2012urban,zhang2013aggregating} have adopted GPS data for the reconstruction task. However, due to the uncertainty and sparsity embedded in GPS data, several processing steps are required and are usually executed sequentially.

The first processing step is \emph{map-matching}, which addresses off-the-road GPS points and infers the truly traversed path of a vehicle. One of the commonly used techniques for \emph{map-matching} is the shortest-distance criterion~\cite{hunter2014path,quddus2015shortest}. Unfortunately, this criterion can introduce errors in congested environments, where the shortest-distance path differs from the shortest \emph{travel-time} path, and the latter is preferred by GPS devices and most drivers~\cite{tang2016estimating,hunter2014large,Li2017CityEstSparse}. Although the shortest travel-time criterion is a more reasonable assumption, finding the shortest travel-time path requires accurate estimations of traffic conditions, which are difficult to obtain due the lack of traffic monitoring sensors on arterial roads.

The second processing step, \emph{travel-time estimation}, tries to distribute the time difference between consecutive GPS points to the road segments of a map-matched path. To list a few examples, Hellinga et al.~\cite{hellinga2008decomposing} developed an analytical solution based on empirical observations of real-world traffic patterns. Rahmani et al.~\cite{rahmani2015non} took a non-parametric approach and adopted a kernel-based estimator. Other approaches were developed based on probability theory~\cite{herring2010estimating,hunter2014large}. While promising, these methods suffer from the inherent limitation of a sequential pipeline, in which errors from \emph{map-matching} will be carried over to its following procedures. Especially, in a congested network, if we use the shortest-distance criterion for \emph{map-matching}, we will likely to have a wrong map-matched path causing the aggregate travel time to be distributed to a wrong set of road segments. The main difference between my approach for addressing GPS data and the previous approaches~\cite{tang2016estimating,hunter2014large,Li2017CityEstSparse} is that I take an iterative perspective combining \emph{map-matching} and its follow-on process \emph{travel-time estimation} such that the errors generated from either step will get gradually attenuated. 

My entire framework differs from previous efforts mainly due to the challenges brought by the traffic reconstruction task on arterial roads (which constitute the majority of a city) than that on highways and major roads. On highways and major roads, in-road sensors are commonly found, which can provide accurate and complete traffic measurements. The modeling of traffic flows on major roads is also easier to suffice macroscopic traffic features because of the limited branching and merging of the major roads. Due to these features, many state-of-the-art simulation techniques are developed base on accurate traffic data and macroscopic traffic assumptions~\cite{Berg2009,Wilkie:2013:FRD:2461912.2462021}. However, these features do not appear on arterial roads: only noisy GPS data are available and macroscopic traffic assumptions are difficulty to maintain~\cite{kong2013efficient}. 

My framework addresses these issues by first conducting statistical learning on GPS data to reconstruct traffic in areas with rich GPS data coverage, then dynamically completing the reconstruction in areas where GPS data are insufficient or missing. Although efficient and large-scale traffic simulation techniques have been developed~\cite{sewall10,Sewall:2011:IHS:2070781.2024169}, they are not designed to ensure simulation fidelity and flow consistency across the areas with and without GPS data coverage. My approach, in contrast, can satisfy city-wide boundary constraints and derive consistent traffic reconstruction at a city scale.

\section{Overview}
\label{sec:siga-overview}

Here, I provide an overview of my approach and define the notation used in this chapter. 

\begin{figure*}[th]
	\centering
	\includegraphics[width=\textwidth]{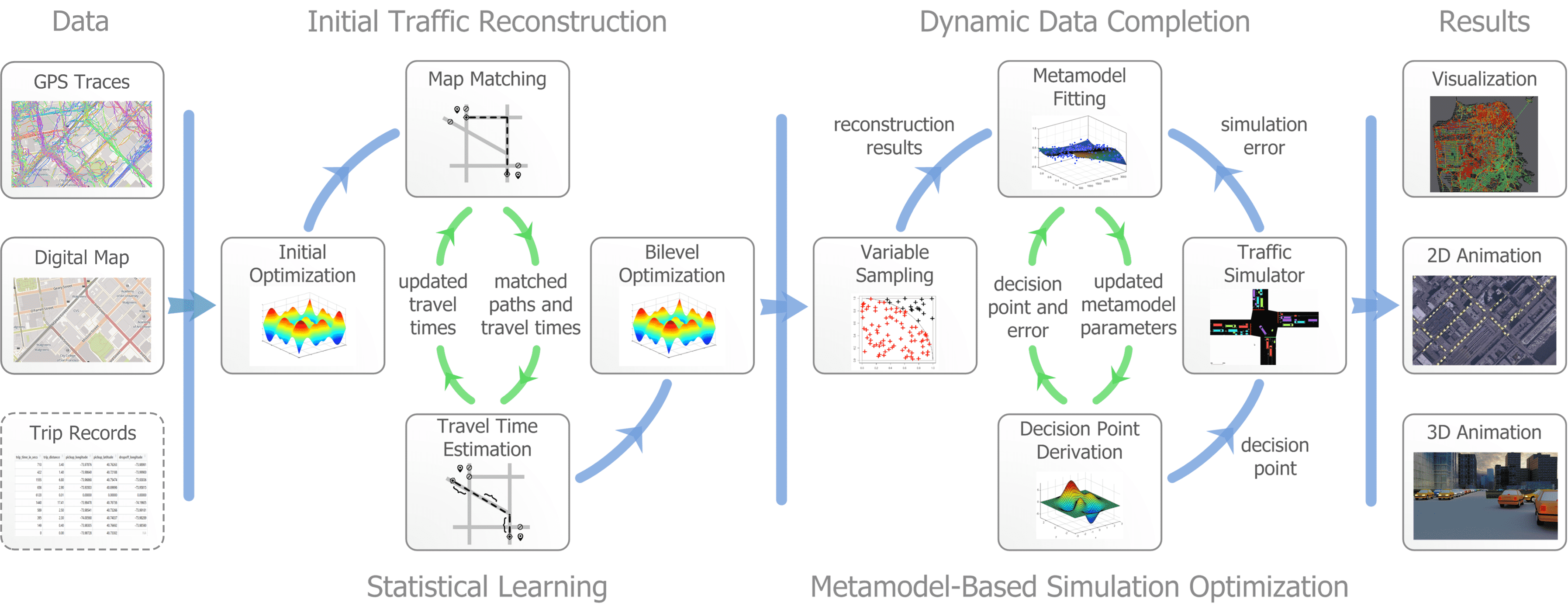}
	\caption{The systematic view of my framework. Trip records are optional as they can be inferred from GPS traces on a digital map.}
	\label{fig:overview}
\end{figure*}

\subsection{System}

My system enables city-scale traffic visualization and animation through two phases: 1) an initial traffic reconstruction phase (center-left column in Figure~\ref{fig:overview}, and Section~\ref{sec:siga-recon}), which estimates \emph{initial} traffic flows on all road segments of a road network, and 2) a dynamic data completion phase (center-right column in Figure~\ref{fig:overview}, and Section~\ref{sec:siga-sim}), which yields a \emph{dynamic} and much more accurate reconstructed traffic at the city scale.

For the \textit{Initial Traffic Reconstruction (Section~\ref{sec:siga-recon})} phase, a main objective is to reconstruct road-segment \flows from the input data. As mentioned before, this problem presents two challenges, which I will address in order.

The first one is the sparseness of GPS data, in which consecutive data-points are on average 60 seconds apart. To cope with this issue, I embed \emph{map-matching}~\cite{quddus2015shortest} and \emph{travel-time estimation}~\cite{Li2017CityEstSparse} into an iterative process (top and bottom boxes, center-left column in Figure~\ref{fig:overview}, and Section~\ref{section:reconstruction_iter}), where the output of one is treated as the input of the other and vice-versa. This process progressively refines the estimation of travel times on road segments and is initialized with a starting estimation obtained through a ``na\"ive'' optimization (left box, center-left column in Figure~\ref{fig:overview}, and Section~\ref{section:reconstruction_init}). The results of this iterative process are estimated traffic \flows for all road segments in areas with GPS data coverage.

The second challenge is the uneven coverage of GPS data. In order to reconstruct traffic \flows in areas with little to no traffic data, I perform a bilevel optimization (right box, center-left column in Figure~\ref{fig:overview}, and Section~\ref{section:reconstruction_bilevel}). The main principle of this phase is to use \emph{existing data} (either outdated traffic data or estimated results) that is very coarse, and update it using \emph{new information} (i.e., the reconstruction from the previous phase). The results of this phase are estimated traffic \flows for road segments in areas without GPS data coverage. Note however that in these areas, flows are only accurate at a large scale but not at the scale of individual road segments. Additionally, the reconstruction is static and does not account for interactions among individual vehicles.

The \textit{Dynamic Data Completion (Section~\ref{sec:siga-sim})} phase addresses the abovementioned issues by leveraging traffic simulation for data-deficient areas. However, the simulation traffic may not respect boundary conditions and previously reconstructed traffic in data-rich regions during propagation. To address this issue, I allow the simulation algorithm to change the ``\tratios'' at intersections, i.e., deciding how traffic would distribute itself to the downstream road segments at an intersection. The objective is then finding a set of optimal \tratios such that the discrepancy between simulated \flows and estimated flows is minimized. 

% The simulated road-segment flows are then coherent with the overall traffic (thereby also correcting what has been inferred from \emph{existing data} used in the bilevel optimization substep of traffic reconstruction).
One way to derive the optimal \tratios is to run simulation-based optimization, which can be cost prohibitive due to the use of the microscopic simulator in optimization. I accelerate this procedure by approximating the cost function with a \emph{metamodel}~\cite{osorio2015metamodel} (top box, center-right column in Figure~\ref{fig:overview}), which can make the optimization problem more tractable. The optimization scheme is initialized randomly (left box, center-right column in Figure~\ref{fig:overview}, and Section~\ref{section:simulation_alg}). The resulting \tratios of the optimization are adopted lastly for obtaining various visualization results (right box, center-right column in Figure~\ref{fig:overview}, and Section~\ref{section:simulation_alg}).

\subsection{Notation}
\label{sec:siga-notation}

A road network is represented as a directed graph $ \mathcal{G} = (\mathcal{V},\mathcal{E}) $, in which the nodes $ \mathcal{V} $ and edges $ \mathcal{E} $ represent intersections (or terminal points) and road segments, respectively. A path on the road network is denoted as a set of road segments $k \in \mathcal{K}$ where $\mathcal{K}$ represents the set of all paths.

Geographically, a city can be divided into many traffic analysis zones (TAZs) based on socio-economic data. The centroids of these TAZs are considered traffic-flow \emph{origins} and \emph{destinations}.
The sets of origins and destinations are respectively denoted as $ \mathcal{O} \subseteq \mathcal{V}$ and $ \mathcal{D} \subseteq \mathcal{V}$. The traffic flow is considered to take place between origin-destination pairs (OD pairs), and the average flow from $ o \in \mathcal{O}$ to $ d \in \mathcal{D} $ during a certain time interval is noted $ u_{od} $ (in accordance with transportation engineering literature, this flow is the average number of vehicles). With $ \mathcal{K}_{od} $ as the set of paths connecting $ o $ and $ d $, the flow on path $ k \in \mathcal{K}_{od} $ is $ u(k) $. The flow inside an OD pair is a summation of flows on all paths in this pair:

\begin{equation}
	u_{od} = \sum_{k \in \mathcal{K}_{od}} u(k), \; u(k) \geq 0, \forall o \in \mathcal{O}, d \in \mathcal{D}.
	\label{eq:ue_c1}
\end{equation}

\noindent The flow $ f(e) $ on a road segment $ e \in \mathcal{E} $ is defined as the summation of all path flows traversing $ e $:

\begin{equation}
	f(e) = \sum_{(o,d) \in \mathcal{O}\times\mathcal{D}}\sum_{k \in \mathcal{K}_{od}} \delta_{e}^{k} u(k),
	\label{eq:ue_c2}
\end{equation}

\noindent where $ \delta_{e}^{k} $ is $ 1 $ if path $ k $ contains road $ e $, and $ 0 $ otherwise.
This means that $ \forall e \in \mathcal{E}, \forall k \in \mathcal{K}_{od} $, if $ e \in k $, then $ e $ contributes a certain portion $ p_{e, od} $ of $ u_{od} $.
By arranging the flows of all OD pairs in a vector $ \mathbf{u} = [u_{od}]^{\intercal}_{(o,d) \in \mathcal{O}\times\mathcal{D}}$, and the contributed flow portions of a road segment $e$ as $ \mathbf{P}_{e} = [p_{e, od}]_{(o,d) \in \mathcal{O}\times\mathcal{D}} $, the following relation can be derived:

\begin{equation}
	f(e) = \mathbf{P}_{e}\mathbf{u}.
\end{equation}

\noindent With all road-segment flows in a network $ \mathbf{f}_{\mathcal{E}} = [f(e)]_{e \in \mathcal{E}}^{\intercal}$ and the portions of all road segments $ \mathbf{P}_{\mathcal{E}} = [\mathbf{P}_{e}]_{e \in \mathcal{E}}^{\intercal}$, the general relation between flows in road segments and OD pairs can be obtained:

\begin{equation}
	\mathbf{f}_{\mathcal{E}} = \mathbf{P}_{\mathcal{E}}\mathbf{u},
\end{equation}

\noindent with $ \mathbf{P}_{\mathcal{E}} $ termed \emph{assignment matrix}. 

I list other notation used in this chapter as follows:

\begin{itemize}
	\item $\mathcal{S}$ denotes the GPS data, then $\forall s \in \mathcal{S}$ contains a longitude, a latitude, and a timestamp; a pair of successive GPS data points is noted $(s_1, s_2) \in \mathcal{S}_p$, where $\mathcal{S}_p$ represents the set of all pairs of consecutive GPS data points; $\mathcal{K}_{s_1,s_2}$ denotes the set of all paths connecting $s_1$ to $s_2$.
	\item $t$ denotes travel time; for instance, $t(e)$ is the travel time of a road segment $e \in \mathcal{E}$, $t(k) = \sum_{e \in k} t(e)$ is the travel time of a path $k \in \mathcal{K}$, and $t(s_1, s_2)$ is the travel time between a pair of two consecutive GPS data points $(s_1,s_2) \in \mathcal{S}_p$ given by their timestamps.
\end{itemize}

\section{City-Scale Traffic Reconstruction}
\label{sec:siga-recon}

There are three steps in the \emph{initial traffic reconstruction} phase: 1) initial estimation of travel times on road segments, 2) iterative refinement of these travel times, and 3) bilevel optimization for filling data-lacking areas.

\subsection{Initial Estimation}
\label{section:reconstruction_init} 

For the initial estimation of travel times, I use Wardrop's principle~\cite{wardrop1952road}, which states that traffic will arrange itself in congested networks such that no vehicle can reduce its travel cost by switching routes.
This state is termed \emph{user equilibrium} and is a result of every user non-collaboratively attempting to minimize their travel times. While the actual traffic may not form an exact \emph{user equilibrium}, this state can serve as an approximation to real-world traffic~\cite{hato1999incorporating} and motivate the generation of optimization constraints.

Following this principle, for a pair of consecutive GPS data points $(s_1,s_2) \in \mathcal{S}_p$, the travel time between them $t(s_1, t_2)$ is the minimum travel time of all paths connecting the two points: $\forall k \in \mathcal{K}_{s_1,s_2}, t(k) \geq t(s_1,s_2)$. Thus, the travel times of the road segments should satisfy the following ``Wardrop'' constraints $\mathcal{W}$:

\begin{equation}
	\mathcal{W} = \left \{ t(k) \geq t(s_1,s_2) \right \}_{\forall (s_1,s_2) \in \mathcal{S}_p,\: \forall k \in \mathcal{K}_{s_1,s_2}}.
\end{equation}

Additionally, the travel time of a road segment $e \in \mathcal{E}$ is bounded by $t_{min}(e)$ and $t_{max}(e)$, which represent free-flow travel time (at $120\%$ of the road segment's speed limit\footnote{The threshold is set to be $120\%$ because of two reasons: 1) it regulated the range of traffic flows, and 2) by examining map-matched GPS data, traffic flow speed that exceeds $120\%$ is rare.}) and travel time at jam density (at $0.5~m.s^{-1}$), respectively.

In order to derive a solution with respect to typical traffic patterns, I pose a regularization term, $ \mathcal{R} $, on $ \mathbf{t}_\mathcal{E} = [t(e)]_{e \in \mathcal{E}}^{\intercal} $ to model the correlation in traffic patterns of road segments in close proximity~\cite{sheffi1985urban,zheng2011urban,7501704}.
$ \mathcal{R} $ is formed as a 2-dimensional fused Lasso penalty~\cite{tibshirani2005sparsity}, in which each row represents a pair of road segments connecting at a node.
The entries corresponding to the pair of road segments are set to -1 and 1, accordingly.
As an example, the $ \mathcal{R} $ of a three-way intersection with road segments $ \{e_1,e_2,e_3\}$ takes the form: $ \mathcal{R} = \{[1,1,0]^{\intercal},[-1,0,1]^{\intercal},[0,-1,-1]^{\intercal}\}$.
$ \mathcal{R} $ is set via enumerating all pairs of road segments at each $ v \in \mathcal{V} $, which results in $ \mathcal{R} \in \mathbb{R}^{m \times n} $, where $ m = \sum_{v \in \mathcal{V}} {deg(v) \choose 2} $, $ n = | \mathcal{E} | $, and $ deg(v) $ is the total degree of $ v $.
The initial estimation $\bar{\mathbf{t}}_\mathcal{E}$ of travel times is formed as follows:

\begin{equation}
	\begin{aligned}
		& \bar{\mathbf{t}}_\mathcal{E} = \underset{\mathbf{t}_\mathcal{E}}{\text{argmin}}
		& & \lVert \mathcal{R} \mathbf{t}_\mathcal{E} \rVert_{1}, \\
		& \text{subject to}
		& & \mathcal{W}, \,\, \mathbf{t}_{\mathcal{E},min} \leq \bar{\mathbf{t}}_\mathcal{E} \leq \mathbf{t}_{\mathcal{E},max}.
	\end{aligned}
	\label{eq:init_cvx}
\end{equation}

\subsection{Iterative Estimation}
\label{section:reconstruction_iter}

I refine $\bar{\mathbf{t}}_\mathcal{E}$ with an iterative process that alternates between \emph{map-matching} and \emph{travel-time estimation}. This design is based on the observation that many approaches take a sequential perspective~\cite{kong2013efficient,zhang2013aggregating,hunter2014large,Li2017CityEstSparse}. As a consequence, the bias generated during \emph{map-matching} (especially under the shortest-distance criterion) will get cascaded into \emph{travel-time estimation}, thus affecting the overall estimation accuracy. By resorting to an iterative process, as newly estimated travel times get more accurate, so does the map-matching, and vice-versa.

\subsubsection{Map Matching}
I adopt a simple strategy to compute the ``true'' path $\bar{k}_{s_1,s_2}$ between a pair of successive GPS data points $(s_1,s_2) \in \mathcal{S}_p$.
Assuming $\mathcal{Q}_1$ and $\mathcal{Q}_2$ to denote the sets of candidate positions on the map for $s_1$ and $s_2$, respectively. This step returns $\mathcal{N} = \{ \bar{k}_{s_1,s_2} \}_{\forall (s_1,s_2) \in \mathcal{S}_p}$, where:

\begin{equation}
	\begin{aligned}
		& \bar{k}_{s_1,s_2} = \underset{\bar{k} \in \bar{\mathcal{K}}_{s_1,s_2}}{\text{argmin}} \| t(s_1,s_2) - \bar{t}(\bar{k}) \|,\\
		&\text{with}\; \bar{\mathcal{K}}_{s_1,s_2} = \left \{ \underset{k \in \mathcal{K}_{q_1,q_2}}{\text{argmin}} \bar{t}(k) \right \}_{\forall (q_1,q_2) \in \mathcal{Q}_1 \times \mathcal{Q}_2}.
	\end{aligned}
\end{equation}

\subsubsection{Travel Time Estimation}

Given the paths $\mathcal{N}$ assigned to each pair of successive GPS data points by the map-matching step, the next step is to estimate a more accurate set of travel times $\bar{\mathbf{t}}_\mathcal{E}$ on individual road segments. 

The travel time of a road segment is modeled to follow a probability distribution $\bar{t}(e) \sim \pi_e, \forall e \in \mathcal{E}$, which is parameterized by $\boldsymbol{\theta} = \{ \theta_e \} _{e \in \mathcal{E}}$ learned through \emph{maximum likelihood estimation} (MLE):

\begin{equation}
	\underset{\boldsymbol{\theta}}{\text{maximize}} \; \mathcal{L}(\boldsymbol{\theta} | \mathcal{N}) = \sum_{\forall (s_1,s_2) \in \mathcal{S}_p} \text{log} \: \pi(t(s_1,s_2)|\bar{k}_{s_1,s_2};\boldsymbol{\theta}),
	\label{eq:mle}
\end{equation}

\noindent
where $\mathcal{L}$ is the likelihood function.
Following the methodology from Hofleitner et al.~\cite{hofleitner2012probability}, I assume that the travel time of a road segment can be modeled by a univariate distribution, and the travel-time distributions of road segments are pairwise independent.
Using these assumptions, Equation~\ref{eq:mle} is solved via \emph{expectation maximization} (EM). The maximization step within the M-step is conducted for each road segment:

\begin{equation}
	\underset{\theta_e}{\text{maximize}} \; \sum_{\omega \in \Omega} w_{\omega}(e) \: \text{log} \: \pi_e(t_{\omega}(e);\theta_e),
	\label{eq:em}
\end{equation}

\noindent
where $\pi_e$, $\theta_e$, $\Omega$, $t_{\omega}$, and $w_{\omega}$ are as follows:

\paragraph{$\bullet$ $\pi_e$, $\theta_e$:}

Following Hunter et al.~\cite{hunter2014large}, $\pi$ is taken to be the Gamma distribution $\Gamma$, which has a positive domain and is robust to long-tail observations. These features make $\Gamma$ suitable for modeling the travel time of a road segment. Consequently, $\theta_e = (\alpha_e,\beta_e)$ where $\alpha_e$ is the shape and $\beta_e$ is the scale. 

% These quantities are also directly linked to the mean and standard deviation.

\paragraph{$\bullet$ $\Omega$:}

As a road segment can be part of multiple paths, which link different pairs of successive GPS data points, samples regrading each pair must be included in Equation~\ref{eq:em}. In this work, I compute 100 samples for each pair: $\Omega = \{ (s_1,s_2,i) \}_{\forall (s_1,s_2) \in \mathcal{S}_p, \forall i \in [\![ 1,100 ]\!] }$.

\paragraph{$\bullet$ $t_{\omega}$:}

Using the previous choices for $\pi_e$, $\theta_e$ and $\Omega$, $t_{\omega}$ is computed as follows according to Hunter et al.~\cite{hunter2014large}:

\begin{equation} 
	\begin{split}
		\forall \omega \in \Omega, \; \forall e \in \bar{k}_{s_1,s_2}, \;\; t_{\omega}(e) &= t(s_1,s_2)\frac{A_{\omega}(e)}{\sum_{e \in \mathcal{E}} A_{\omega}(e)} \\
		\text{with} \; A_{\omega}(e) &\sim \Gamma(\alpha_{e}, \frac{\beta_{e}}{t(s_1,s_2)}), \\
	\end{split}
	\label{eq:gamma}
\end{equation}

\paragraph{$\bullet$ $w_{\omega}$:}

A weight $w_{\omega}(e)$ is then computed as the distance between $t_{\omega}(e)$ and $A_{\omega}(e)$.

After solving Equation~\ref{eq:em}, a new estimation of $\bar{\mathbf{t}}_\mathcal{E}$ is computed, where for each road segment $e \in \mathcal{E}$, $\bar{t}(e)$ is computed as the mean of $\pi_e$.
From here, the iterative estimation loops back to the map-matching step.
The entire process stops after 10 iterations (determined empirically), with the refined $\bar{\mathbf{t}}_\mathcal{E}$ as the output.

\subsection{Bilevel Optimization}
\label{section:reconstruction_bilevel}

The previous steps have estimated travel times $\bar{\mathbf{t}}_\mathcal{E}$ of road segments with GPS data coverage. The objective of this step is to compute travel times of all road segments (including those without data coverage).

To proceed, $\bar{\mathbf{t}}_\mathcal{E} = [\bar{t}(e)]_{e \in \mathcal{E}}^{\intercal}$ is converted to $\bar{\mathbf{f}}_\mathcal{E} = [\bar{f}(e)]_{e \in \mathcal{E}}^{\intercal}$ by inverting the road-segment {\em performance function} proposed by the U.S. Bureau of Public Roads:

\begin{equation}
	t(e) = t_{min}(e)\left(1 + 1.5\left(\frac{f(e)}{c(e)}^4\right)\right),
	\label{eq:bpr}
\end{equation} 

\noindent where $ c(e) $ is the capacity computed as follows (the formula is accessible at \url{http://www.fhwa.dot.gov}):

\begin{equation}
c(e) =
\begin{cases}
1700+10t_{min}(e) & \text{if }t_{min}(e) \leq 70~mph, \\
2400 & \text{otherwise}.
\end{cases}
\end{equation}

Next, the target OD pairs $\bar{\mathbf{u}}$ are derived based on the ratio of estimated flows to the loop-detector measurements on the same road segment~\cite{yang2016}. Having $\bar{\mathbf{f}}_\mathcal{E}$ from the previous steps, traffic flows of all road segments $\hat{\mathbf{f}}_\mathcal{E}$ and the corresponding flows between OD pairs $\hat{\mathbf{u}}$ can be computed through the following minimization~\cite{cascetta1988unified}:

\begin{equation}
	\begin{aligned}
		& \underset{\hat{\mathbf{u}}}{\text{minimize}}
		& & \mathcal{F}_{1}(\hat{\mathbf{u}},\bar{\mathbf{u}}) + \mathcal{F}_{2}(\hat{\mathbf{f}}_\mathcal{E},\bar{\mathbf{f}}_\mathcal{E}) \\
		& \text{subject to}
		& & \hat{\mathbf{f}}_\mathcal{E} = \mathcal{M}(\hat{\mathbf{u}}), \,\, \hat{\mathbf{u}} \geq 0,\\
	\end{aligned}
	\label{eq:bi}
\end{equation}

\noindent where $ \mathcal{F}_{1} $ and $ \mathcal{F}_{2} $ are generalized distance functions.
$ \mathcal{M} $ is the \emph{assignment map} which determines $ \mathbf{P}_\mathcal{E} $.
If $ \mathcal{M} $ follows Wardrop's principle~\cite{wardrop1952road}, Equation~\ref{eq:bi} becomes a bilevel optimization problem~\cite{yang1992estimation}: the upper level minimizes the distances of estimated OD pairs and traffic flows to their corresponding measurements, while the lower level satisfies the \emph{user equilibrium}.
For $ \mathcal{F}_{1} $ and $ \mathcal{F}_{2} $, I select the generalized least squares (GLS) estimator~\cite{bera2011estimation}, as it permits different weighting schemes of $ \bar{\mathbf{u}} $ and $ \bar{\mathbf{f}}_\mathcal{E} $.
I further assume that $ \bar{\mathbf{u}} $ and $ \bar{\mathbf{f}}_\mathcal{E} $ are results from the following stochastic system of equations:

\begin{equation}
	\bar{\mathbf{u}} = \hat{\mathbf{u}} + \mathbf{\epsilon_{1}}, \,\, \bar{\mathbf{f}}_\mathcal{E} = \hat{\mathbf{f}}_\mathcal{E} + \mathbf{\epsilon_{2}}.
\end{equation}

\noindent Using these choices, Equation~\ref{eq:bi} can be explicitly written as:

\begin{equation}
	\begin{aligned}
		& \underset{\hat{\mathbf{u}}}{\text{minimize}}
		& & \eta(\bar{\mathbf{u}}-\hat{\mathbf{u}})^{\intercal}U^{-1}(\bar{\mathbf{u}}-\hat{\mathbf{u}})  \\
		& \text{}
		& & + (1-\eta)(\bar{\mathbf{f}}_\mathcal{E}-\hat{\mathbf{f}}_\mathcal{E})^{\intercal}V^{-1}(\bar{\mathbf{f}}_\mathcal{E}-\hat{\mathbf{f}}_\mathcal{E}), \\
		& \text{subject to}
		& & \hat{\mathbf{f}}_\mathcal{E} = \mathcal{M}(\hat{\mathbf{u}}),\,\, \hat{\mathbf{u}} \geq 0, \,\, \hat{\mathbf{f}}_\mathcal{E} \geq 0, \,\, 0 \leq \eta \leq 1,\\
	\end{aligned}
	\label{eq:bi_gls}
\end{equation}

\noindent where $ U $ and $ V $ are variance-covariance matrices of $ \mathbf{\epsilon}_{1} $ and $ \mathbf{\epsilon}_{2} $, respectively; $ \mathbb{E}(\mathbf{\epsilon}_{1}) = 0 $ and $ \mathbb{E}(\mathbf{\epsilon}_{2}) = 0 $ are derived from experiments by Cascetta and Nguyen~\cite{cascetta1988unified}; $ \eta \in [0,1] $ is the weighting factor.
When $ \eta = 1 $, $ \bar{\mathbf{f}}_\mathcal{E} $ is ignored and the estimation is solely based on $ \bar{\mathbf{u}}$; when $ \eta = 0 $, the estimation is solely based on $ \bar{\mathbf{f}}_\mathcal{E}$.
Equation~\ref{eq:bi_gls} can be solved iteratively: in the upper level the GLS estimator is solved using quadratic programming; in the lower level \emph{user equilibrium} is approximated using the \emph{one-shot} function within SUMO~\cite{SUMO2012}. As a result of this procedure, I can use the newly computed $\hat{\mathbf{f}}_\mathcal{E}$ on road segments where GPS data are unavailable.

\section{Dynamic Data Completion}
\label{sec:siga-sim}

For detailed traffic simulation, I adopt a microscopic simulation algorithm.
To ensure that the simulation for any given area would respect flows at the boundaries from the previously reconstructed traffic, I allow the algorithm to alter the turning ratios at intersections (which decide how traffic would distribute itself to the downstream road segments at an intersection). I choose the turning ratios as optimization variables because of the following reasons: 1) turning ratios implicitly encode traffic light logic, 2) driving behaviors such as lane changing are limited at intersections, and 3) detailed road information is difficult to obtain. In most microscopic traffic simulators, specifying turning ratios is one of the main mechanisms to start a simulation. Furthermore, the split flows can be used to form a metamodel and compute estimation errors from the previous reconstructed result. My algorithm can also cope with multivariable optimization if the information of other simulation parameters is provided.

I denote the vector of all turning ratios by $\mathbf{x} = [ x_{v,e} ]_{\forall e \in v, \forall v \in \mathcal{V}}$, and name it a \emph{decision point}. In order to systematically derive a decision point $\mathbf{x}^*$ in a simulation region, which not only meets OD demands but also conforms the previous estimated traffic conditions, I rely on the following optimization task:

\begin{equation}
	\begin{aligned}
		& \mathbf{x}^* = \underset{\mathbf{x}}{\text{argmin}}
		& & \bar{F}(\mathbf{x};\rho) \equiv \mathbb{E}[F(\mathbf{x};\rho)],\\
	\end{aligned}
	\label{eq:traffic_opt}
\end{equation}

\noindent where $ \bar{F} $ is the objective function, and $ F $ is a stochastic network performance measure.
The distribution of $ F $ depends on the decision point $ \mathbf{x} $ and exogenous parameters $ \rho $, which record a network topology and road-segment metrics. Every simulation run with $ \mathbf{x} $ is a realization of $ F $, which involves sampling many other distributions that account for the stochastic nature of traffic (e.g., driver differences). Assuming $ r $ independent simulations with a given $ \mathbf{x} $ are executed, the objective function can be approximated:

\begin{equation}
	\hat{F}(\mathbf{x};\rho) = \frac{1}{r}\sum_{i=1}^{r}F_{i}(\mathbf{x};\rho).
	\label{eq:objective}
\end{equation}

\noindent However, using a simulation algorithm in optimization can be very costly, especially consider the details a microscopic traffic simulator models and the scale it is applied to.
This motivates me to adopt metamodel-based simulation optimization for efficiency.

\subsection{Metamodel-Based Simulation Optimization}
\label{section:simulation_opt}

A metamodel can simplify simulation-based optimization, as it is typically a deterministic function rather than a stochastic simulator. Therefore, one way to circumvent the issues of using a microscopic traffic 
simulator in a simulation-optimization loop is to develop a deterministic metamodel to replace the stochastic simulation response. The metamodel is usually less realistic in terms of the modeling capability, but much cheaper to evaluate.

The most common metamodels are \emph{functional} metamodels, which are general-purpose functions and can be used to approximate arbitrary objective functions. Often, they are results of a linear combination of basis functions such as low-order polynomials, spline models, and radial basis functions~\cite{conn2009introduction}. However, they require a large number of decision points to be fitted, since the structure of an underlying problem is not considered. This means that I need to run the simulator many times on various decision points in order to fit a well-performing metamodel. This procedure is expensive and to a certain degree defeats the purpose of using a metamodel. Instead, I use a metamodel that contains not only a \emph{functional} component but also a \emph{physical} component which encodes the underlying problem for achieving high efficiency.

I build this physical component based on the classical \emph{flow conversion equation}, also known as the \emph{traffic equation}~\cite{osorio2011mitigating}:

\begin{equation}
	f(e_1) = \gamma(e_1) + \sum_{e_2 \in \mathcal{C}}p(e_1,e_2)f(e_2), \; \forall e_1 \in \mathcal{C},
	\label{eq:flow}
\end{equation}

\noindent where $ \mathcal{C} $ represents the set of road segments in the simulation region, $ \gamma(e_1) $ is the external flow injected into road segment $ e $, and $ p(e_1,e_2) $ is the transition probability from road segment $ e_1 $ to road segment $ e_2 $.
The exogenous parameters in Equation~\ref{eq:flow} are external flows and transition probabilities. Having them, both the traffic equation and the traffic simulator can be executed to obtain the flows of all road segments in a simulation region.
Denoting the subset of road segments with estimated traffic flows from GPS data as $ \mathcal{A} \in \mathcal{C} $, the estimated flows as $ \mathbf{f}^E = \{f(e)^E\}_{e \in \mathcal{A}} $, the propagated flows using the traffic equation as $ \mathbf{f}^T = \{f(e)^T\}_{e \in \mathcal{A}} $, and the simulated flows using a traffic simulator as $ \mathbf{f}^S = \{f(e)^S\}_{e \in \mathcal{A}} $, the approximation of the objective function derived from the simulator with one simulation run ($r=1$ in Equation~\ref{eq:objective}) is defined as follows:

\begin{equation}
	\hat{F}(\mathbf{x};\rho) = F(\mathbf{x};\rho) = \lVert \mathbf{f}^S- \mathbf{f}^E \rVert_{2}.
\end{equation}

\noindent The approximation of the objective function derived from the traffic equation is defined as follows:

\begin{equation}
	T(\mathbf{x};\rho) = \lVert \mathbf{f}^T- \mathbf{f}^E \rVert_{2}.
\end{equation}

\noindent The metamodel is then constructed as a combination of the physical component $ T $ and a functional component $ \Phi $:

\begin{equation}
	M(\mathbf{x};\alpha,\beta,\rho) = \alpha T(\mathbf{x};\rho) + \Phi(\mathbf{x};\boldsymbol\beta),
	\label{eq:meta}
\end{equation}

\noindent where $ \alpha $ (initially set to 0.5) and $ \boldsymbol{\beta} $ (initially set to $ \boldsymbol{1} $) are parameters of the metamodel. The functional component, $ \Phi $, is chosen to be a quadratic polynomial~\cite{osorio2013simulation}:

\begin{equation}
	\Phi(\mathbf{x};\boldsymbol\beta) = \beta_{1} + \sum_{i=1}^{|\mathbf{x}|}\beta_{i+1}x_{i} + \sum_{i=1}^{|\mathbf{x}|}\beta_{i+|\mathbf{x}|+1}x_{i}^{2},
\end{equation}

\noindent where $|\mathbf{x}|$ is the dimension of $ \mathbf{x} $; $ x_{i} $ and $ \beta_{i} $ are the $ i $th elements of $ \mathbf{x} $ and $ \boldsymbol\beta $, respectively. The quadratic polynomial provides Taylor-type bounds, serves as a general term within a metamodel formulation, and ensures global convergences~\cite{conn2009introduction}. In order to fit the metamodel, I rely on the decision points that have been evaluated using both the traffic equation and the traffic simulator.
Denoting $ \mathcal{X} $ as the pool of the decision points, the metamodel can be fit (i.e., compute $ \alpha $ and $ \boldsymbol\beta $) by solving:

\begin{equation}
	\begin{aligned}
		& \underset{\alpha,\boldsymbol\beta}{\text{minimize}}
		& &
		\sum_{i=1}^{|\mathcal{X}|}\left(w_{i}\left(\hat{F}(\mathbf{x}_{i};\rho)-M(\mathbf{x}_{i};\alpha,\boldsymbol\beta,\rho \right)\right)^2 \\
		& \text{}
		& & + (w_{0} \cdot (\alpha-1))^2 + \sum_{j=1}^{2|\mathbf{x}|+1}(w_{0} \cdot \beta_{j})^2, \\
	\end{aligned}
	\label{eq:fitmeta}
\end{equation}

\noindent where $ w_{0} $ is a fixed constant. $ w_{i} $ is the weight associating each $ \mathbf{x}_{i} $ and a new decision point $ \mathbf{x}_{new} $ at each iteration during the optimization, computed as $ w_{i} = 1/(1+\lVert \mathbf{x}_{new} - \mathbf{x}_{i} \rVert_{2}) $, representing the \emph{inverse distance}~\cite{Atkeson1997}.
The first term in Equation~\ref{eq:fitmeta} represents the weighted distance between simulated results and estimated results.
The remaining terms in Equation~\ref{eq:fitmeta} guarantee the least square matrix to have a full rank.

\subsection{Algorithmic Steps}
\label{section:simulation_alg}

Following the framework proposed by Conn et al.~\cite{conn2009introduction} and its adaptation in Osorio and Bierlaire~\cite{osorio2013simulation}, I combine the metamodel with the derivative-free trust-region algorithm to solve the optimization problem in Equation~\ref{eq:traffic_opt}, which can now be expressed at any given iteration as follows:
\begin{equation}
	\begin{aligned}
		& \mathbf{x}_{new} = \underset{\mathbf{x}}{\text{argmin}}
		& & M(\mathbf{x};\alpha,\boldsymbol\beta,\rho) = \alpha T(\mathbf{x};\rho) + \Phi(\mathbf{x};\boldsymbol\beta), \\
		& \text{subject to}
		& & \lVert \mathbf{x}^*-\mathbf{x} \rVert_{2} \le \Delta, \; 0 \leq x \leq 1, \forall x \in \mathbf{x},
	\end{aligned}
	\label{eq:metaObj}
\end{equation}

\noindent where $\mathbf{x}^*$ is the best decision point so far and $\Delta$ is the current trust-region radius.

The specific algorithmic steps are the following:

\begin{itemize}
\item \textbf{Step 1:} Initialize $\mathcal{X}$ to contain 5 randomly sampled decision points, evaluate each of them using both the traffic equation and \textbf{simulator}, arbitrarily set $\mathbf{x}^*$ as a any element of $\mathcal{X}$, and compute $\alpha$ and $\boldsymbol\beta$ (Equation~\ref{eq:fitmeta}).
\item \textbf{Step 2:} Use Equation~\ref{eq:metaObj} to compute $\mathbf{x}_{new}$.

\item \textbf{Step 3:} Compute $\hat{F}(\mathbf{x}_{new};\rho)$ (\textbf{simulator run}). Compute the relative improvement $\tau = \frac{\hat{F}\left(\mathbf{x}_{new}\right)-\hat{F}\left(\mathbf{x}^*\right)}{M_{i}\left(\mathbf{x}_{new}\right)-M_{i}\left(\mathbf{x}^*\right)}$. If $\tau \geq 1e-3$, \textbf{accept} $\mathbf{x}_{new}$ and set $\mathbf{x}^* := \mathbf{x}_{new}$, otherwise \textbf{reject} $\mathbf{x}_{new}$. In any case, add $\mathbf{x}_{new}$ to $\mathcal{X}$, and compute $\alpha$ and $\boldsymbol\beta$ (Equation~\ref{eq:fitmeta}).

\item \textbf{Step 4:} If $\alpha$ and $\boldsymbol\beta$ have not changed much in \textbf{step 3}, i.e., $\frac{\lVert (\alpha_{new},{\boldsymbol\beta}_{new}) -(\alpha_{old},{\boldsymbol\beta}_{old}) \rVert}{\lVert (\alpha_{old},{\boldsymbol\beta}_{old}) \rVert} \leq 0.1$, add a new randomly sampled decision point to $\mathcal{X}$, evaluate it using both the traffic equation and \textbf{simulator}, and compute $\alpha$ and $\boldsymbol\beta$ (Equation~\ref{eq:fitmeta}).
\item \textbf{Step 5:} Update the trust-region radius:

\begin{equation}
	\Delta :=
	\begin{cases}
		\min\{1.1\times\Delta, 100\} & \text{if } \tau > 1e-3,  \\
		\max\{0.9\times\Delta, 0.1\} & \text{if } \tau \leq 1e-3 \; \text{and} \\
		& \text{$5$ consecutive rejections of $\mathbf{x}$}, \\
		\Delta & \text{otherwise.} \\
	\end{cases}
\end{equation}

\item \textbf{Step 6:} Exit the loop when the maximum number of allowed simulator runs ($20$) is reached; otherwise go to \textbf{step 2}.

\end{itemize}

\noindent When the algorithm stops, $\mathbf{x}^*$ can be taken to generate a simulation, thus ending dynamic data completion.

In practice, instead of running this procedure on the entire road network, I adopt a \emph{decomposition approach} and separate the network into sub-networks, which are going to be modeled independently.
For these sub-networks, I consider nodes with no predecessors as \emph{artificial origins} and nodes without successors as \emph{artificial destinations} (i.e., locations where vehicles respectively enter and exit a sub-network).
Additionally, observing that vehicles rarely navigate in loops, I extract directed acyclic graphs (DAG) from these sub-networks to operate.

% In order to form a decision point, I consider the turning ratios of nodes at the start of DAG edges that do not have accurate flow estimations (and treat the rest of the turning ratios as constants).

\section{Results}
\label{sec:siga-results}

In this section, I present evaluations of \emph{Initial Traffic Reconstruction} and \emph{Dynamic Data Completion}, and demonstrate the visualization and animation results.

\subsection{Evaluation of Initial Traffic Reconstruction}

I have generated abundant synthetic data for evaluating my approach. These synthetic datasets are produced via traffic models that have been extensively validated using real-world datasets in transportation engineering. Many newly proposed traffic models are evaluated using these models and real-world datasets. I have conducted my experiments in a similar vein. In this section, I provide details on the generation of the synthetic dataset.

\subsubsection{Road Network and GPS Dataset}
The road network (obtained from \url{http://openstreetmap.org/}) used in testing is from downtown San Francisco (Figure~\ref{fig:apx_sf}), which contains 5407 nodes, 1612 road segments, and 296 TAZs (obtained from \url{https://data.sfgov.org/}). The GPS dataset is obtained from the Cabspotting project~\cite{cabspotting} (Figure~\ref{fig:apx_sf} LEFT), in which the \emph{low-sampling-rate} is reflected as the average timestamp difference between consecutive points is approximately 60 seconds.

\begin{figure*}
	\centering
	\includegraphics[width=\textwidth]{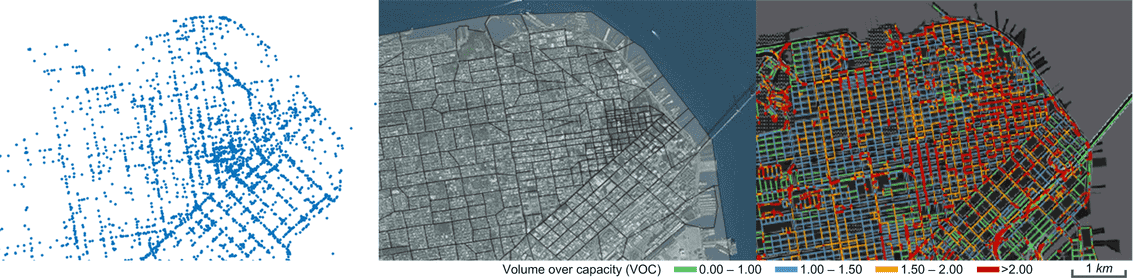}
	\caption{LEFT: Sample GPS points from the Cabspotting dataset. MIDDLE: Road maps of downtown San Francisco overlaid with traffic analysis zones (TAZs). RIGHT: A heuristic traffic condition established via the Timestamp model (the travel times are converted to flows). }
	\label{fig:apx_sf}
\end{figure*}

\subsubsection{Traffic Conditions via System Optimal Model}
I establish the first set of heuristic network travel times by solving the system optimal (SO) model~\cite{sheffi1985urban}. The SO model addresses the \emph{traffic assignment} problem by minimizing the entire travel time of a road network:

\begin{equation}
\begin{aligned}
& {\text{minimize}}
& & z(\mathbf{f}) = \sum_{e \in \mathcal{E}}{f(e)} t(e), \\
& \text{subject to}
& & u_{rs} = \sum_{k \in \mathcal{K}_{rs}} u_{rs}(k), \; \forall (r,s) \in \mathcal{O} \times \mathcal{D},\\
& \text{}
& & f(e) = \sum_{(r,s) \in \mathcal{O}\times\mathcal{D}}\sum_{k \in \mathcal{K}_{rs}} \delta_{e}(k) u_{rs}(k), \; \forall e \in \mathcal{E},\\
& \text{}
& & u_{rs} \geq 0, \; \forall (r,s) \in \mathcal{O} \times \mathcal{D}.
\end{aligned}
\label{eq:so}
\end{equation}

\noindent The solution to Equation~\ref{eq:so} is a set of flows and travel times of all road segments of a network. The key input is the OD pairs estimated by first setting $ \mathcal{O} = \mathcal{V} $ and then recording the number of in-and-out GPS traces for each TAZ. As GPS traces are usually sampled from a small percentage of the entire traffic population, thus representing a partial network flow, I multiply the estimated OD pairs by 10 constants and solve Equation~\ref{eq:so} accordingly. As a result, I have constructed 10 network travel times in which the corresponding congestion levels, measured by volume over capacity (VOC) (computed as $ \sum_{e \in \mathcal{E}} \frac{f(e)}{c(e)} $), range uniformly from 0.19 to 1.85.

\subsubsection{Traffic Conditions via Timestamp Model}
Heuristic network travel times can also be generated based on GPS timestamps. Using the Cabspotting dataset~\cite{cabspotting}, I equally distribute the time difference between two consecutive GPS points to all paths that connect them. For road segments that are covered by multiple GPS traces, the average travel times are adopted. Using this approach, I have produced 24 network travel times representing 24 hours in a typical weekday. An example can be seen at Figure~\ref{fig:apx_sf} RIGHT. I refer to this method of generating network travel times as the \emph{Timestamp} model.

\subsubsection{Synthetic GPS Traces}
Using established network travel times, I can generate synthetic GPS traces in which the true traversed paths and other information are encoded. In order to study the effect of the number of traces used in reconstruction on the estimation accuracy, I have randomly simulated 20 batches of synthetic traces from 50 to 1000 in increments of 50. Each batch contains 30 sets of GPS traces and all set contain 50 traces. As a result, I have generated $315~000$ traces for each traffic condition and over $10$ million traces in total. A synthetic trace is created by selecting a random source and a target in the network and routing with the shortest travel-time strategy. To mimic features of a real-world GPS dataset, the sampling rate is set to 60 seconds, and all coordinates are perturbed by the Gaussian noise $ (0,20)$ in meters~\cite{yuan2010interactive}.

\subsubsection{Evaluation and Comparison}
I compare my technique with two state-of-the-art methods, namely Hunter et al.~\cite{hunter2014large} and Rahmani et al.~\cite{rahmani2015non}. The first method also uses EM algorithm as of the inner loop of my \emph{travel-time estimation} process. In their work, the number of EM iterations is set to 5 and the number of random allocations per aggregate measurement is set to 100. These settings are reported to produce the highest estimation accuracy by Hunter et al.~\cite{hunter2014large}. The second method takes a non-parametric principle, using a kernel-based technique to estimate travel times. The weights used to allocate travel times to individual road segments are set to be the ratio of free-flow travel times among road segments~\cite{hellinga2008decomposing}.

I set parameters of my nested iterative process as follows: retaining the same settings for the inner loop as from Hunter et al.~\cite{hunter2014large}, I empirically set the number of iterations for the outer loop to 10. This setting is based on the results shown in Figure~\ref{fig:convergence}, where the relationship between the normalized convergence rate and the number of iterations for both types of network travel times is shown. Each datum in the plot is the average value computed using all network travel times across all sets of synthetic GPS traces using either the SO model (6000 trials) or the Timestamp model (14400 trials). The measurement of each trial is the mean square error (MSE) between a recovered and a ground-truth traffic condition (i.e., $ \frac{\sum_{e}(t_{e}-\hat{t}_{e})^2}{|\mathcal{E}|} $). As can be seen, the convergence rate decreases quadratically as the number of iterations increases and tends to flatten after 10 iterations.

\begin{figure}
	\centering
	\includegraphics[width=0.9\textwidth]{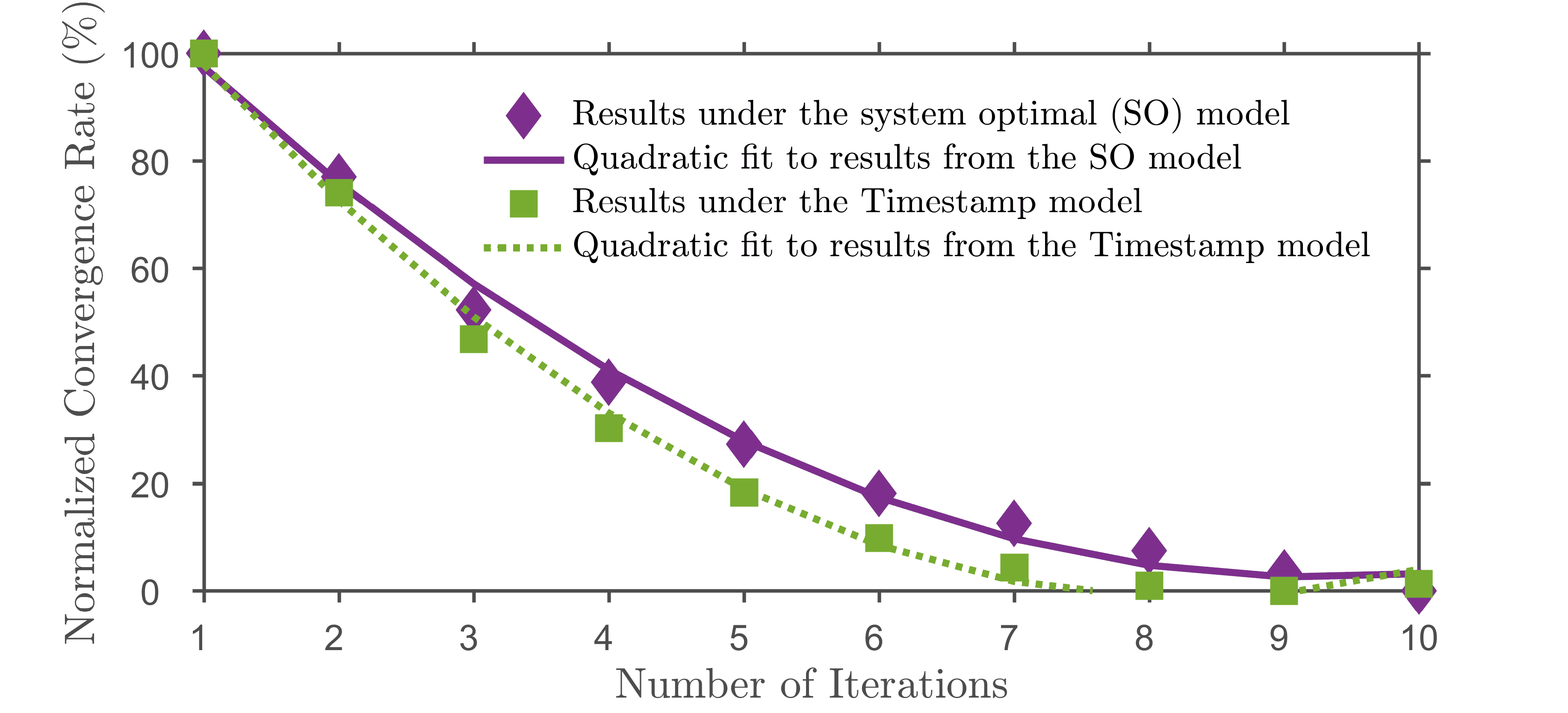}
	\caption{The relationship between the normalized convergence rate and the number of iterations of the outer loop of my iterative process is shown. The convergence rate decreases quadratically as the number of iterations increases and tends to flat after 10 iterations. }
	\label{fig:convergence}
\end{figure}

I evaluate my technique using three metrics. The first metric is the error rate of the aggregate travel time across the entire network, computed as $ \frac{|\sum_{e}\hat{t}_{e} - \sum_{e}t_{e}|}{\sum_{e}t_{e}}, \forall e \in \mathcal{E} $ where $ \hat{t}_{e} $ represents an estimated travel time and $ t_{e} $ represents a ground-truth travel time. The left diagram of Figure~\ref{fig:sf_analysis} TOP shows the results by averaging the experimental outcomes of all network travel times via the SO model.
The minimum error rate of my technique is 18\%, of Rahmani et al.~\cite{rahmani2015non} is 34\%, and of Hunter et al.~\cite{hunter2014large} is 48\%. Experimenting on network travel times generated via the Timestamp model, the corresponding minimum error rates are 8\%, 28\%, and 37\%, respectively, which are shown in the right diagram of Figure~\ref{fig:sf_analysis} TOP. As the number of synthetic GPS traces used in estimation increases, my technique demonstrates consistent advantages in performance over the other two methods.

\begin{figure*}
	\centering
	\includegraphics[width=\textwidth]{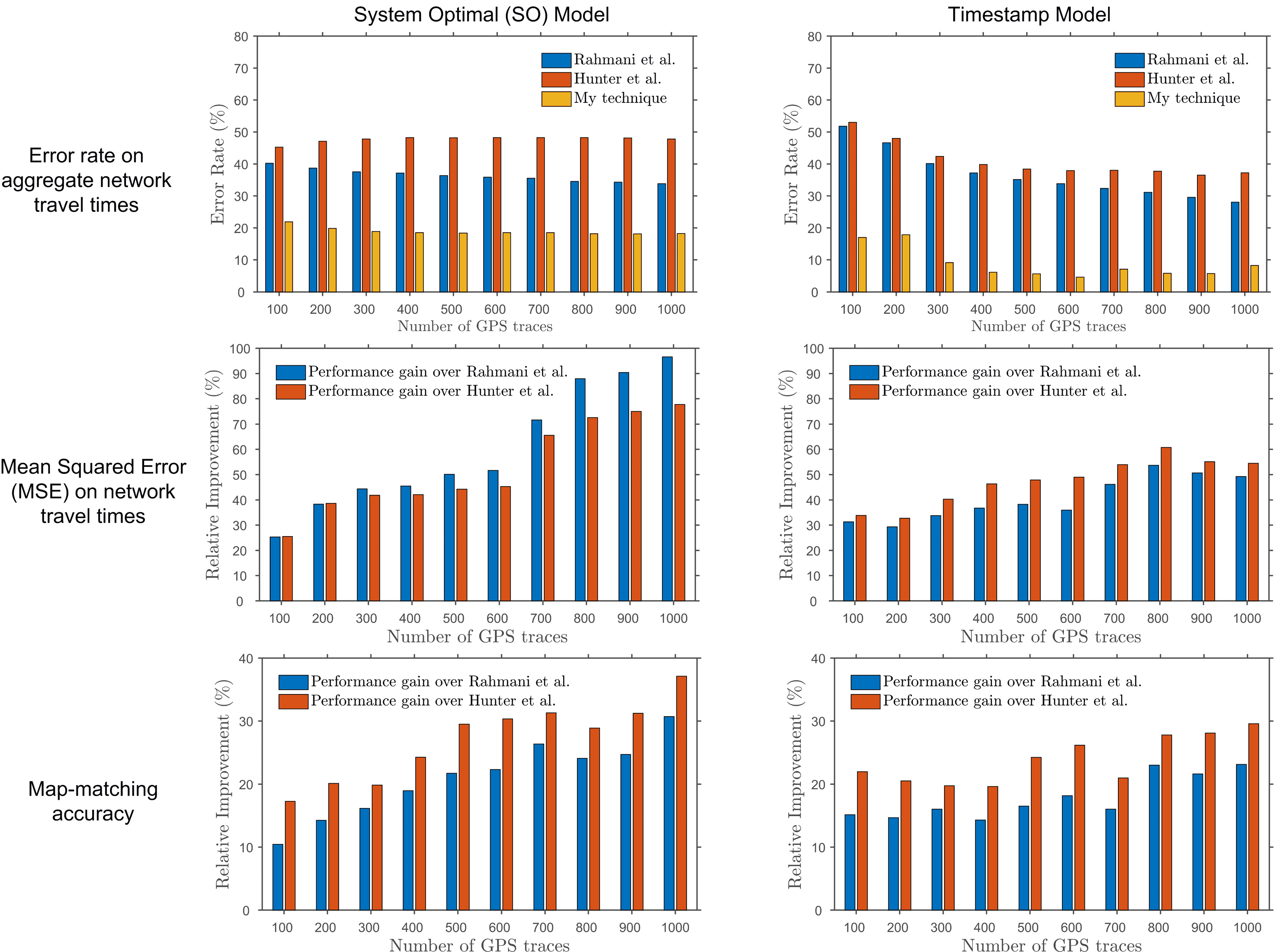}
	\caption{From TOP to BOTTOM, the left diagrams show the results generated using network travel times via the system optimal (SO) model, while the right diagrams show corresponding results of network travel times via the Timestamp model. TOP: The error rates (\%) of various methods of aggregating travel time across the network. MIDDLE: The relative improvements (\%) of travel times of all road segments measured in MSE. BOTTOM: The relative improvements (\%) of map-matching accuracy measured using successfully identification rates of road segments. In summary, my technique achieves consistent improvements over the other two methods as the number of GPS traces used in recovery increases.}
	\label{fig:sf_analysis}
\end{figure*}

The second metric is the relative improvement of my technique over the existing methods on travel times of all road segments. I compute this metric based on $ MSE =   \frac{\sum_{e}(t_{e}-\hat{t}_{e})^2}{|\mathcal{E}|} $ as follows:
\begin{equation}
RelativeImprovement = \frac{MSE_{existing} - MSE_{my}}{MSE_{my}}, 
\end{equation}

\noindent where $ MSE_{my} $ represents the error between a recovered traffic condition using my technique and the ground-truth traffic condition, and $ MSE_{existing} $ represents the error computed using an existing method with the same ground truth. The maximum relative improvements over Hunter et al.~\cite{hunter2014large} and Rahmani et al.~\cite{rahmani2015non} under the SO model (shown in the left diagram of Figure~\ref{fig:sf_analysis} MIDDLE) are 78\% and 97\%, and under the Timestamp model (shown in the right diagram of Figure~\ref{fig:sf_analysis} MIDDLE) are 54\% and 49\%. In general, with more synthetic GPS traces used in estimation, better relative improvements are achieved. Such effects are more apparent on the SO model than the Timestamp model.

The third metric evaluates the map-matching accuracy. For one trace, I calculate the success rate as follows: 

\begin{equation}
SR = \frac{\# \text{successfully identified road segments}}{\# \text{acutual road segments in the trace}}. 
\end{equation}

\noindent I sum all success rates generated using my method and an existing approach, and derive the relative improvement as follows: 

\begin{equation}
\frac{\sum SR_{my} - \sum SR_{existing}}{\sum SR_{my}}. 
\end{equation}

\begin{figure*}
	\centering
	\includegraphics[width=0.6\textwidth]{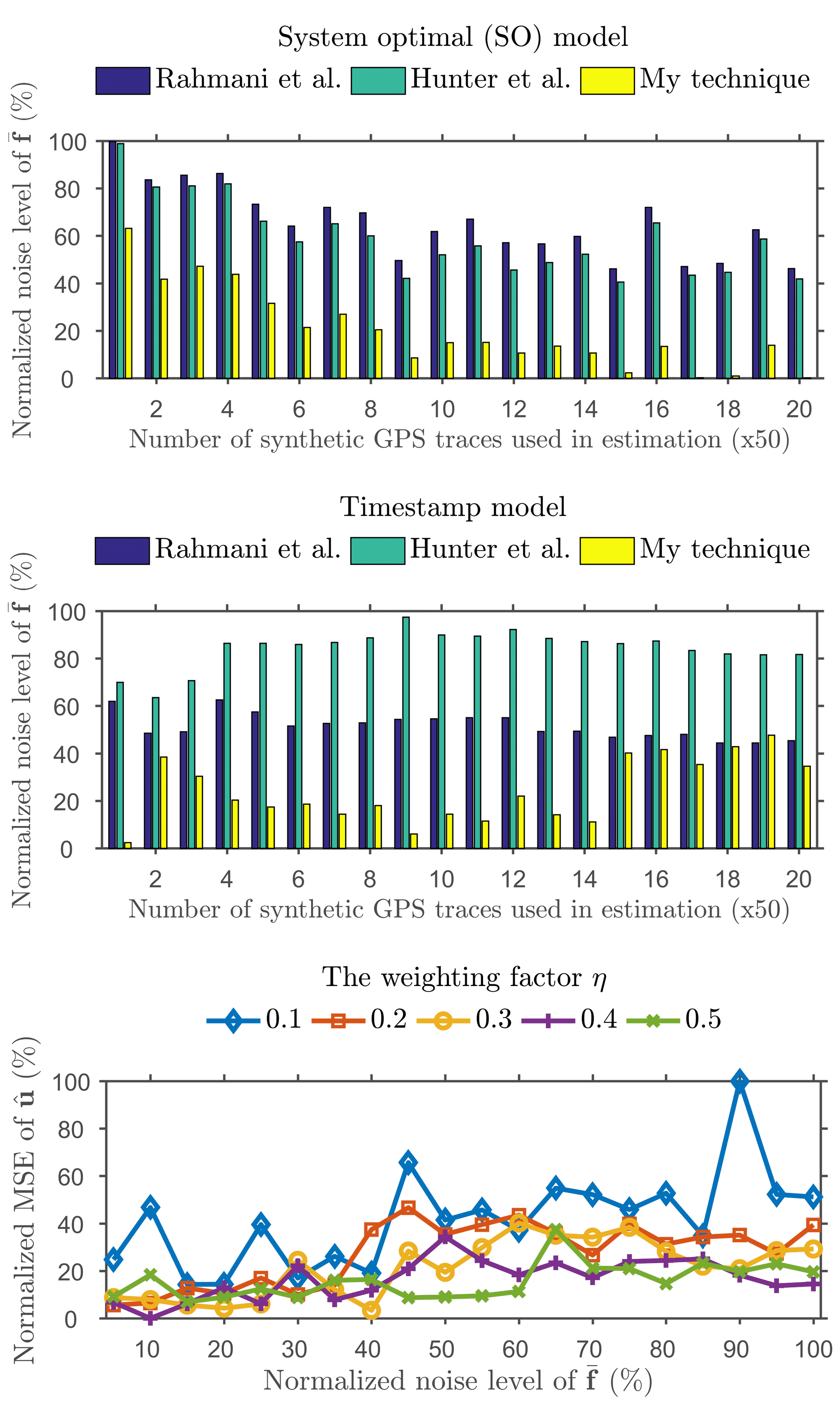}
	\caption{TOP and MIDDLE: The normalized noisy levels (\%) of target road-segment flows $ \bar{\mathbf{f}} $  computed according to the MSE of estimated network travel times to ground truth. In general, for both models, my technique produces lower noisy levels than the other two techniques. BOTTOM: The normalized MSE of target OD pairs $ \hat{\mathbf{u}} $ (\%) under different values of the weighting factor $ \eta $ and various noisy levels of $ \bar{\mathbf{f}} $ (\%). When $ \eta $ is small, the error is more sensitive to perturbations on $ \bar{\mathbf{f}} $. Overall, the error increases as the noisy level of $ \bar{\mathbf{f}} $ increases. For all studies, the normalized noisy level of target OD pairs $ \bar{\mathbf{u}} $ has been set to $ 50\% $. My method has achieved consistently lower error rates compared to the other two methods.}
	\label{fig:verror2}
\end{figure*}

\noindent The maximum relative improvements of my method over Hunter et al.~\cite{hunter2014large} and Rahmani et al.~\cite{rahmani2015non} under the SO model are 28\% and 34\%, and under the Timestamp model are 19\% and 25\%. These results are shown in Figure~\ref{fig:sf_analysis} BOTTOM. Again, as the number of GPS traces used in recovering network travel times increases, gains in the improvements are observed.

\subsubsection{Analysis of Bilevel Optimization}

I have shown that my approach outperforms the existing methods on estimating travel times of a road network. In turn, by inverting Equation~\ref{eq:bpr}, I can obtain better estimations of target flows, which serve as the input to the bilevel optimization program. 

The factors affecting the program are the weighting factor $ \eta $, the noisy level of target OD pairs $ \bar{\mathbf{u}} $, and the noisy level of target road-segment flows $ \bar{\mathbf{f}} $. The noises of $ \bar{\mathbf{u}} $ and $ \bar{\mathbf{f}} $ are assumed to have zero means and diagonal variance-covariance matrices~\cite{cascetta1988unified}. In practice, the noisy level of $ \bar{\mathbf{u}} $ is difficult to assess because not only $ \bar{\mathbf{u}} $ usually comes from existing data but also the true values of $ \bar{\mathbf{u}} $ are usually unknown. Due to these reasons, in the analysis of $ \eta $ and the noisy level of $ \bar{\mathbf{f}} $, I set the normalized noisy level of $ \bar{\mathbf{u}} $ to $ 50\% $. Subsequently, the normalized noisy levels of $ \bar{\mathbf{f}} $ (\%) computed based on MSE of estimated travel times to ground truth are shown in Figure~\ref{fig:verror2} TOP and MIDDLE. In general, my technique produces lower noisy levels of $ \bar{\mathbf{f}} $ than the other two techniques, especially under the SO model which is considered to be a better approximation to real-world traffic than the Timestamp model~\cite{sheffi1985urban}.

In order to evaluate how $ \eta $ and the noisy level of $ \bar{\mathbf{f}} $ affect the estimation accuracy of $ \hat{\mathbf{u}} $ (i.e., the estimated OD pairs), I compute the normalized MSE of $ \hat{\mathbf{u}} $ (\%) under different $ \eta $ and various noisy levels of $ \bar{\mathbf{f}} $ (\%). The results are shown in Figure~\ref{fig:verror2} BOTTOM. When $ \eta $ takes a small value (e.g., 0.1), the impact of $ \bar{\mathbf{u}} $ is restricted, thus the MSE of $ \hat{\mathbf{u}} $ reacts actively to perturbations on $ \bar{\mathbf{f}} $. As I gradually increase the value of $ \eta $, the impact of  $ \bar{\mathbf{f}} $ attenuates. Nevertheless, the MSE of $ \hat{\mathbf{u}} $ increases as the noisy level of  $ \bar{\mathbf{f}} $ progresses.

%------------------------------------------------------------------------------

\subsection{Evaluation of Dynamic Data Completion}
I compare my technique to the approach that only uses traffic simulation on various OD demands and road networks. The experiments are conducted by starting at the center of the road network of San Francisco and gradually increase the radius to retrieve networks with $20$ to $5~000$ road segments. This step results in $48$ road networks. For all DAGs constructed in all networks, the OD demand varies from $1000~vehicle/hour$ to $10000~vehicle/hour$ with an increment of $1000~vehicle/hour$.

In Figure~\ref{fig:meta_error}, I show the accuracy of my technique compared to the approach using only the traffic simulation (i.e., the simulation-only approach). For each road network in the experiment, I randomly select an intersection and assign turning ratios for this intersection. The assigned turning ratios (i.e., a decision point) are treated as the ground truth for the two approaches to recover. Since the most accurate method for recovering a traffic condition is the microscopic simulator, I use the error level of the microscopic simulator to the ground truth as the baseline. For each decision point, I first compute the difference between the recovered value using my technique to the ground truth, as well as the difference between the recovered value by the simulator to the ground truth. Then, I subtract these two error differences and represent the result as a gray cross in Figure~\ref{fig:meta_error}. The mean, minimum, and maximum values of the average deviations to the simulator (indicated by the solid line in Figure~\ref{fig:meta_error}), are respectively 7.8\%, 0\%, and 13\%. In many cases, my technique even outperforms the simulation-only approach with a smaller difference to the ground truth, as indicated by a negative value in Figure~\ref{fig:meta_error}.
3

\begin{figure}
	\centering
	\includegraphics[width=\textwidth]{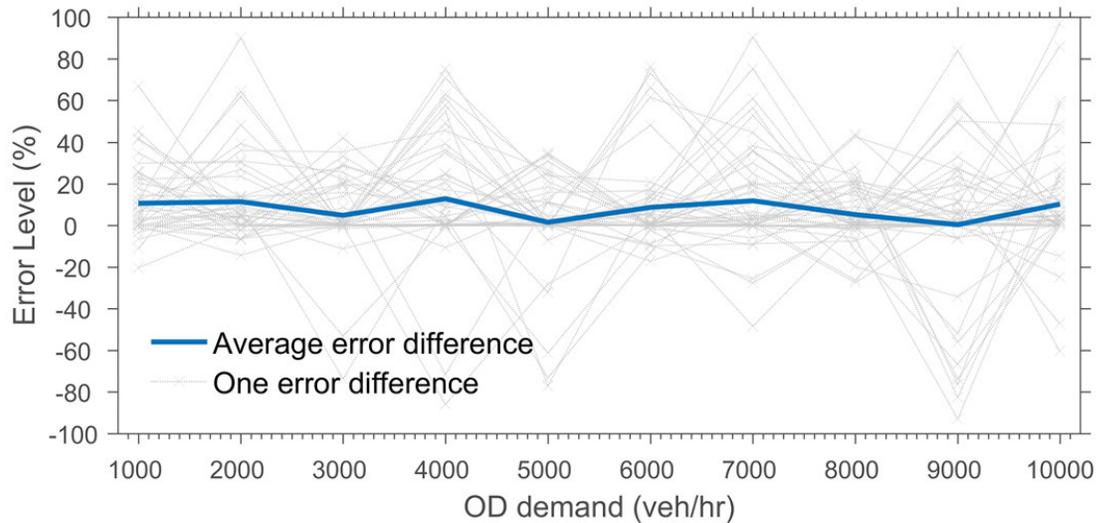}
	\caption{The error level of my technique vs. simulation-only approach: 
For a given road network and a specific OD demand, I first compute the differences between the two methods with respect to the ground truth. Then, I subtract these two differences to obtain one error difference measure (indicated by a gray cross). The mean, minimum, and maximum values of the average error level (indicated by the solid line) are respectively 7.8\%, 0\%, and 13\%. In many cases, my technique even outperforms the simulation-only approach with a smaller difference to the ground truth, as indicated by a negative value in the diagram.}
	\label{fig:meta_error}
\end{figure}

\begin{figure}
	\centering
	\includegraphics[width=\textwidth]{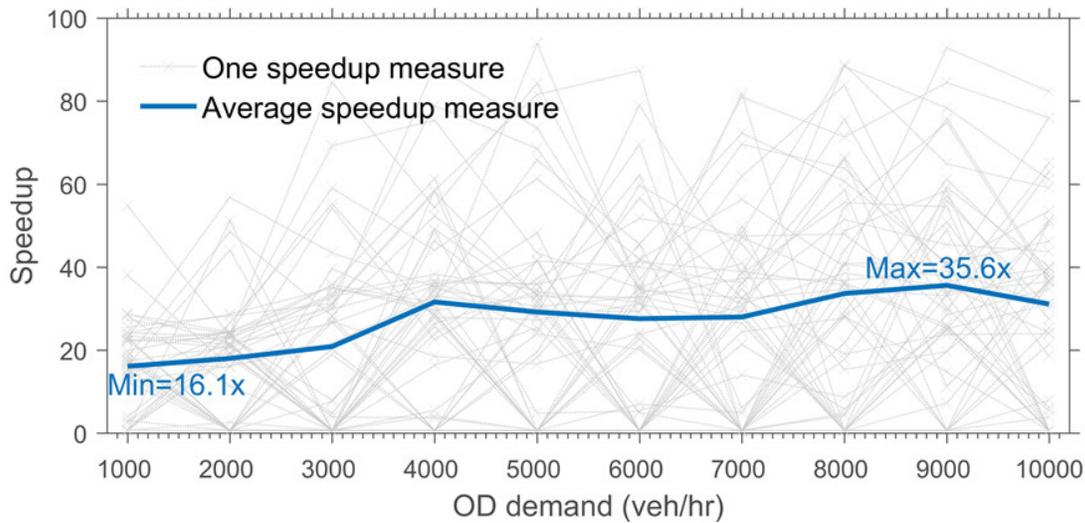}
	\caption{The performance speedup of my technique over the simulation-only approach: my technique is \emph{on average} about $27.2x$ faster, with maximum and minimum performance gains of $35.6x$ and $16.1x$. The maximum observed performance gain of a single speedup measure is \emph{over 90x} (at OD demand = 9000).}
	\label{fig:meta_speedup}
\end{figure}

In the second study, I analyze the performance speedup achieved by my technique over the simulation-only approach. The results are shown in Figure~\ref{fig:meta_speedup}. Each gray cross indicates a speedup measure for a particular road network and a specific OD demand. The highest speedup is over $90x$. On average, the maximum speedup is about $35.6x$ and the minimum speedup is about $16.1x$. There also exist several cases where my technique demonstrates rather negligible speedups (i.e., values close to 0). This is because the initial random guess of a particular decision point is close to the ground truth, in which circumstances either method can achieve a quick convergence. In reality, I expect these situations happen rarely. Finally, in nearly all experiments, my technique converges in the first five iterations.

\subsection{Traffic Visualization and Animation}

To demonstrate the effectiveness of my approach, I use the Cabspotting dataset~\cite{cabspotting} (GPS data) in San Francisco to perform traffic visualization and animation. This dataset recorded driving history from 536 taxicabs, which includes \emph{over 11 million} GPS traces in total.

In Figure~\ref{fig:result_vis} (top row), I show the visualization results over four time intervals: Sunday 9AM, Tuesday 9AM, Thursday Noon, and Friday 7PM, which exemplify \emph{weekend morning traffic}, \emph{weekday morning traffic}, \emph{weekday mid-day traffic}, and \emph{weekday evening traffic}, respectively. In San Francisco, first, the congestion level of Sunday 9AM is low compared to the rest of the time intervals across all areas. Second, the congestion of Tuesday 9AM is more severe in the north-west, central-west, and central-east areas (residential regions) than the same areas in other time intervals. Lastly, the north-east region (downtown commercial and financial districts) of Tuesday 9AM and Friday 7PM appear to have more severe congestion than other time intervals.

\noindent
In addition to the qualitative results, I have quantitatively compared my reconstructed results to the loop-detector data (obtained from \url{http://pems.dot.ca.gov/}) extracted from the same location and time periods as the Cabspotting dataset. The loop-detector data represent relatively accurate measurements and are often regarded as the benchmark for evaluating GPS-based estimations~\cite{work2010traffic}. 

I have additionally used a filtering process based on the observation that traffic exhibits periodic patterns, which I introduced in Chapter~\ref{ch:itsm} and are shown in Figure~\ref{fig:result_vis}. One way to construct such a filter is to transform all traffic signals to the frequency domain and take the average of all frequency components. While this approach can expect to reduce the embedded white noise, it does not consider other types of noise in the data. This is illustrated in the top panel of Figure~\ref{fig:filter}\textbf{(c)}: the most significant frequency component is captured while other harmonics are reduced to various degrees. 

\begin{figure*}[hb]
\centering
\includegraphics[width=\textwidth]{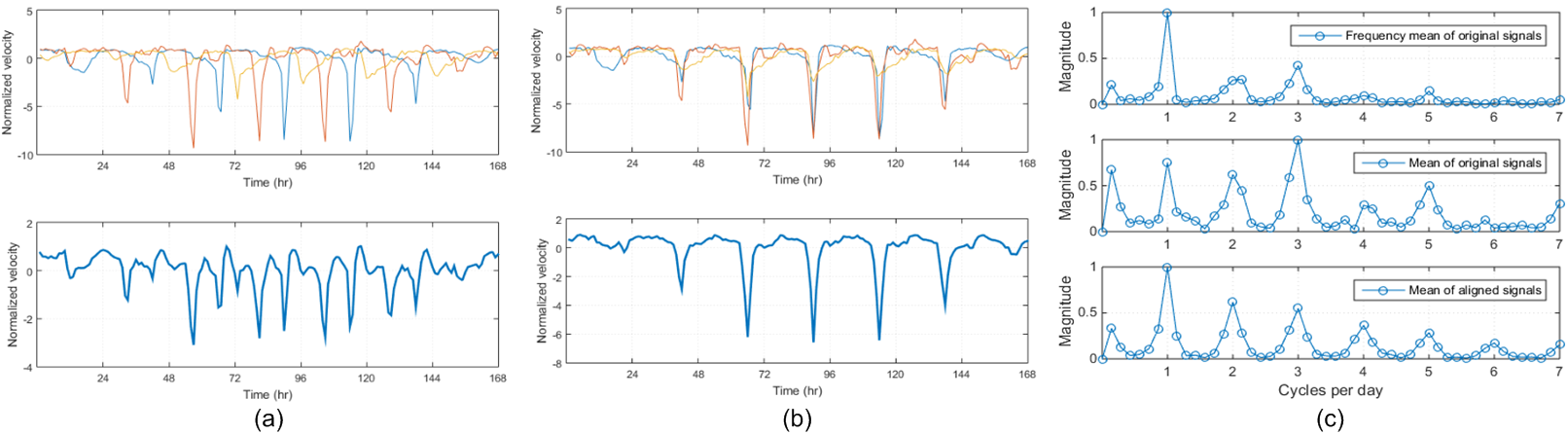}
\caption{(a) TOP: several loop-detector signals are plotted showing phase shifts among them; BOTTOM: the average signal of the signals shown in the top panel. (b) TOP: aligned loop-detector signals according to their phase responses; BOTTOM: the average signal of the aligned signals shown in the TOP panel. (c) TOP: the averaged frequency of the signals in (a) TOP, which shows several frequency component are getting degraded; MIDDLE: the frequency of the signal in (a) BOTTOM, which shows inconsistent magnitude ratios; BOTTOM: the frequency of signal in (b) BOTTOM, which shows prominent frequencies and magnitude ratios.}
\label{fig:filter}
\end{figure*}

Considering time-domain only methods, a na\"{i}ve approach is to get the average signal, transform it to the frequency domain, and analyze its frequency components. Though this is a straightforward way in dealing with univariate and multivariate data, the resulting signal could be a poor summary of original signals both in the time domain (Figure~\ref{fig:filter}\textbf{(a)}) and in the frequency domain (Figure~\ref{fig:filter}\textbf{(c)} MIDDLE). The reason is that this approach does not take the phase variability into account. To address this, I align signals in the time domain, calculate the average signal, transform it to the frequency domain, and extract its major frequencies (see Figure~\ref{fig:filter}\textbf{(b)} and Figure~\ref{fig:filter}\textbf{(c)} bottom panel). Using this approach, I obtain important frequency components and their corresponding magnitudes in the right ratio. The frequency-domain version of this signal then serves as my filter.

\begin{figure*}
	\centering
	\includegraphics[width=\textwidth]{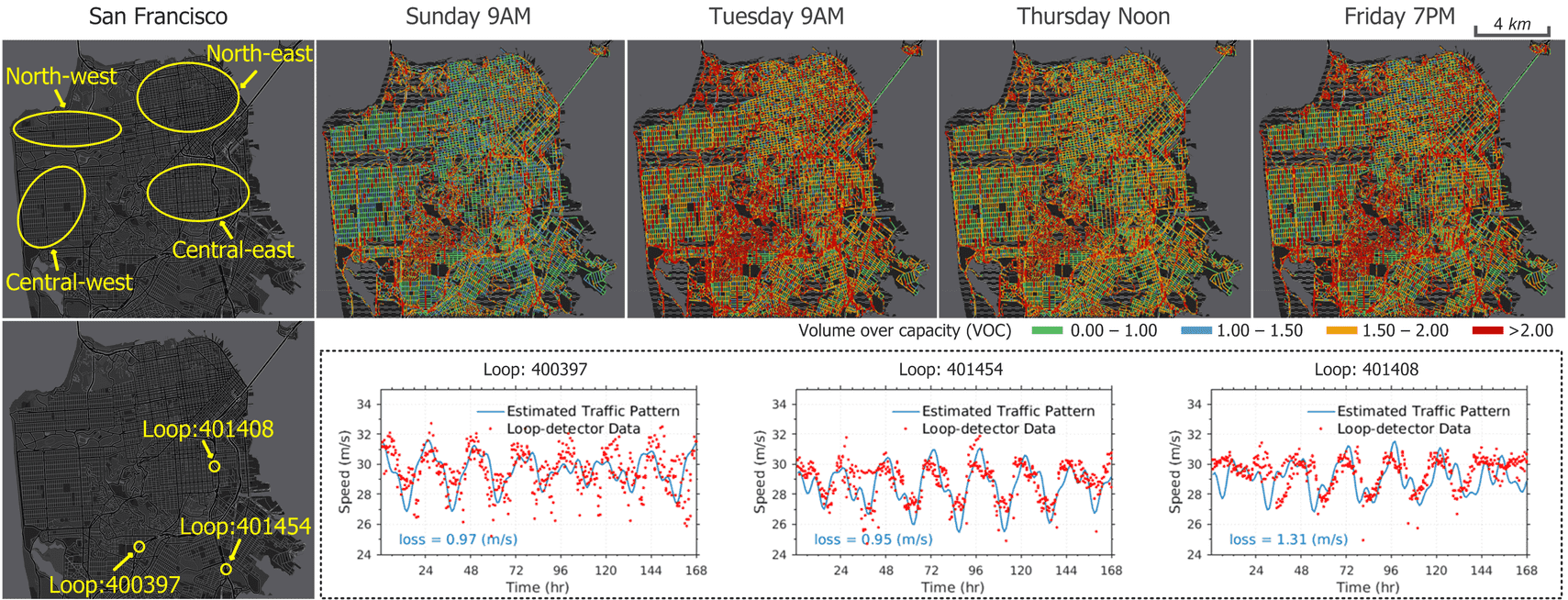}
	\caption{Qualitative (visualization) and quantitative analysis of traffic in San Francisco.  Top: four time periods, namely Sunday 9AM, Tuesday 9AM, Thursday Noon, and Friday 7PM, are adopted to illustrate weekend vs. weekday and morning vs. evening traffic. The traffic is measured by \emph{volume over capacity} (VOC) (computed as $ \sum_{e \in \mathcal{E}}\left(f(e)/c(e)\right) $ where $ c(e) $ is the capacity of the road segment $ e $ defined in Equation~\ref{eq:bpr}). Bottom: I compare my reconstructed results using GPS data (after the filtering process explained in this section) to the data from three loop detectors in San Francisco. The results show small losses (around $ 1~m/s $) in the speed measurement.}
	\label{fig:result_vis}
\end{figure*}

After applying this filtering process, my reconstruction can approximate the accurate loop-detector readings with small losses (around $1~m/s$) in the speed measurement. This validation result can be found in the bottom row of Figure~\ref{fig:result_vis}.

\begin{figure*}
	\centering
	\includegraphics[width=\textwidth]{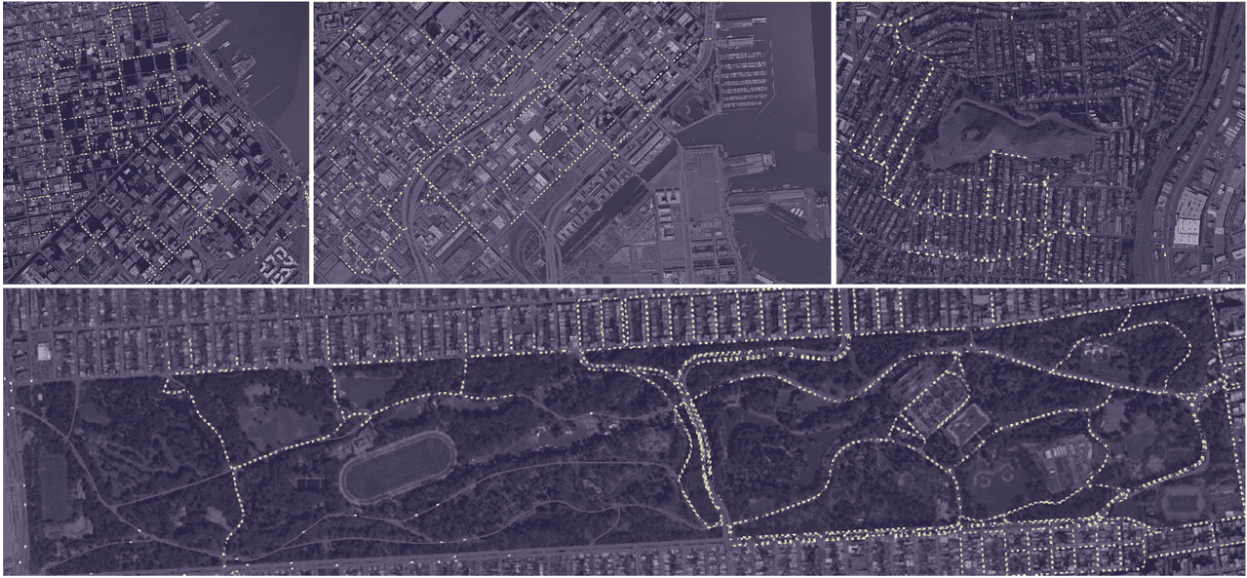}
	\caption{2D traffic animation of regions in San Francisco:
Northeast (top left), Central-East (top center), Central (top Right), 
Northwest (bottom). I have exaggerated the headlights and adopted an evening time period (i.e., Friday 7PM) to make vehicles more visible.}
	\label{fig:result_2d}
\end{figure*}

As a result of the dynamic data completion, I obtain turning ratios that can lead to simulations that respect the estimated traffic conditions in areas with GPS data coverage. The results can be illustrated using both 2D and 3D traffic animation for various virtual-world applications. In Figure~\ref{fig:result_2d}, I provide 2D traffic animation of four regions in San Francisco\footnote{The number of lanes of a road segment is decided by the digital map.} using the reconstructed traffic on Friday at 7PM. This 2D animation can be used to study dynamic traffic patterns at a metropolitan scale. The downtown area is further modeled in a 3D \emph{Virtual San Francisco} to showcase the potential of embedding real-world traffic in a virtual world for immersive VR experiences and virtual tourism applications (see Figure~\ref{fig:result_3d}). It is worth mentioning that although I have used the GPS data from one city throughout my experiments, my approach is independent from data sources, as the effectiveness is demonstrated using both synthetic datasets (Figure~\ref{fig:sf_analysis}) and real-world datasets (Figure~\ref{fig:result_vis}).

\begin{figure*}
	\centering
	\includegraphics[width=\textwidth]{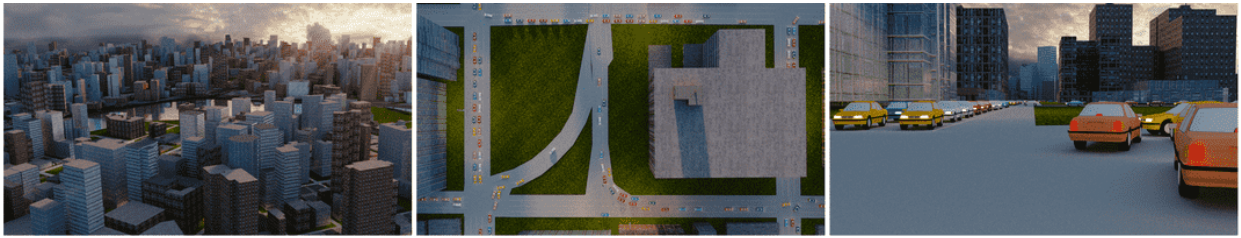}
	\caption{3D traffic animation: a perspective overview (left), a topdown view (center), and a driver's view (right).}
	\label{fig:result_3d}
\end{figure*}

\section{Summary and Future Work}
\label{sec:siga-conclusion}
I have presented an effective algorithm to reconstruct city-scale traffic from GPS data using statistical learning. To address the issues with incomplete and/or sparse data, a metamodel-based simulation optimization is proposed to dynamically bridge the ``gap'' between the reconstructed traffic learned from GPS data and the simulated traffic in areas with incomplete or missing data. My approach is able to perform visualization of city-scale traffic, as well as data-driven 2D and 3D traffic animation in a virtual environment. With more GPS datasets being made available and released to public, e.g., Mobile Century~\cite{HERRERA2010568}, T-Drive~\cite{t-drive}, GeoLife~\cite{geolife}, and Uber Movement~\cite{ubermovement}, I believe future research would be abundant.

Although the proposed approach is specialized for traffic reconstruction, similar approaches can be developed to reconstruct aggregate dynamics of other multi-agent systems using spatial-temporal data, such as schools of fishes, flocks of birds, and swarms of insects. More importantly, the idea of learning from mobile sensor data as well as the concept of using metamodel-based optimization to refine the simulation parameters and to accelerate local fitting for large-scale motion reconstruction are generalizable to many other topics in computer graphics and beyond.

The main limitation of this work is that the reconstruction accuracy is limited by the available data. While my approach can maintain the accuracy down to the road-segment level, high frequency vehicle positions can not be modeled precisely, as such information is largely missing from the data. Another limitation of this work is that while my technique can satisfy the flow conservation in each sub-network used in computation, such a relationship is difficult to maintain at a city scale. This issue may be alleviated if more in-road sensors are installed on arterial roads to provide broad and accurate traffic measurements. 

% First, the final animation results exhibit some odd motion of individual vehicles.This is due to the use of SUMO, which decides lane priorities at intersections (different from turning ratios) and stochastically sample vehicle behaviors, such as acceleration, deceleration, and lane-changing frequency. However, such occasional visual artifacts do not affect my results macroscopically and my algorithm is independent of a specific  traffic simulator. Other microscopic simulators and more advanced vehicle kinematic models can be easily integrated with my framework to further improve the animation quality. 

There are a number of future directions. Algorithmically, one extension is to combine other data sources (in-road sensors, video streams, or surveying) 
with GPS traces to further improve the reconstruction accuracy. Another possibility is to incorporate a macroscopic traffic simulator, so that we can dynamically switch between simulators of varying fidelities to further reduce the computational cost. Application-wise, a virtual tourism system, a route planning~\cite{wilkie2011self} and navigation system~\cite{wilkie2014participatory}, or an autonomous vehicle training system can immediately benefit from my framework through incorporating and visualizing the reconstructed traffic.

With Chapter~\ref{ch:itsm} and Chapter~\ref{ch:siga}, I have finished introducing my efforts on \emph{city-scale traffic reconstruction}. Traffic is an aggregate form and its dynamics can only be studied collectively. Examining traffic at the scale of a city is necessary if we want to analyze congestion causes, identify network bottlenecks, and experiment with novel transport policies. Furthermore, traffic is form by individual vehicles. If we can improve the safety and efficiency of individual vehicles by converting them from human-driven to autonomous (given over 90\% crashes are due to some kind of human errors), together, these autonomous vehicles have the potential to alleviate the severe traffic condition we are facing. Next chapter introduces my efforts on improving autonomous driving with a focus on enabling the autonomous vehicle to navigate in dangerous situations including accidents.

\chapter{ADAPS: AUTONOMOUS DRIVING VIA PRINCIPLED SIMULATIONS}
\label{ch:adaps}

\section{Introduction}
\label{sec:intro}

Autonomous driving is a complex task, consider the dynamics of an environment and the lack of accurate definitions of various driving behaviors. These characteristics lead to conventional control methods to suffer subpar performance on the task~\cite{ratliff2009learning,silver2010learning}. Nevertheless, driving can be easily demonstrated by human experts. This observation has inspired \emph{imitation learning}, which leverages expert demonstrations to synthesize a controller. 

While there are many advantages of using imitation learning, it also has drawbacks. For autonomous driving, the most critical one is \emph{covariate shift}, meaning the training and test distributions are different. This could lead autonomous vehicles (AVs) to accidents since a learned policy may fail to respond to unseen scenarios including those dangerous situations that do not occur often. 

In order to mitigate this issue, the training dataset needs to be augmented to cover a wide spectrum of driving scenarios with expert demonstrations---especially ones of significant safety threats to the passengers---so that a policy can learn how to recover from its own mistakes. This is emphasized by Pomerleau~\cite{pomerleau1989alvinn}, who synthesized a neural network based controller for AVs: ``the network must not solely be shown examples of accurate driving, but also how to recover (i.e. return to the road center) once a mistake has been made.'' 

However, obtaining recovery data from accidents in the physical world is impractical, due to the high cost of a vehicle and potential injuries to both passengers and pedestrians. In addition, even one managed to collect accident data, if we need human experts to label them, the process can be inefficient and subject to judgmental errors~\cite{ross2013learning}. 

These difficulties naturally lead us to the virtual world, where accidents can be simulated and analyzed much freely~\cite{Chao2019Survey}. I have developed ADAPS (\textbf{A}utonomous \textbf{D}riving Vi\textbf{a} \textbf{P}rincipled \textbf{S}imulations) to achieve this goal. ADAPS consists of two simulation platforms and a memory-enabled hierarchical control policy. The first simulation platform, referred to as \emph{SimLearner}, runs in a 3D environment and is used to test a learned policy, simulate accidents, and collect training data. The second simulation platform, referred to as \emph{SimExpert}, acts in a 2D environment and serves as the ``expert'' to analyze and resolve an accident via \emph{principled simulations}, which can plan alternative safe trajectories for a vehicle by taking its kinematic and dynamic constraints into account.

%Together, these components enable us to efficiently and effectively test and improve control policies for AVs.  

Furthermore, ADAPS represents a more efficient online learning mechanism than existing methods such as DAGGER~\cite{ross2011reduction}. This is useful consider learning to drive requires iterative testing and update of a control policy. Ideally, we want to obtain a robust policy using minimal online iterations, as one iteration corresponds to one incident (otherwise there is no need to update a policy by going into the next iteration). This would require the generation of training data at each iteration to be \emph{accurate}, \emph{efficient}, and \emph{sufficient} so that a policy can gain a large improvement after this iteration. ADAPS can be of assistance to achieve this goal.

The \textbf{main contributions} of this chapter are specifically. (1) The accidents generated in \emph{SimLearner} will be analyzed by \emph{SimExpert} to produce alternative safe trajectories. (2) These trajectories will be automatically processed to generate a large number of annotated and segmented training data. Because \emph{SimExpert} is parameterized and has taken the kinematic and dynamic constraints of a vehicle into account (i.e., principled), the resulting training data are not only more heterogeneous than the data collected from running a learned policy multiple times, but also more effective than the data collected through random sampling. (3) Both theoretical and experimental results are provided to show that ADAPS is an efficient online learning mechanism.

\section{Related Work}
\label{sec:related}

I sample previous studies that are related to each aspect of my framework and discuss the differences within.

\textbf{Autonomous Driving}. Among various methods to plan and control an AV~\cite{schwarting2018planning}, I focus on end-to-end imitation learning as it can avoid manually designed features and lead to a more compact policy compared to conventional mediation perception approaches~\cite{chen2015deepdriving}. The early studies done by Pomerleau~\cite{pomerleau1989alvinn} and LeCun et al.~\cite{LeCun2006off} have shown that neural networks can be used for an AV to achieve lane-following and off-road obstacle avoidance. Due to the advancements of deep neural networks (DNNs), a number of studies have emerged~\cite{Bojarski2016,Xu2017end,pan2017agile,codevilla2017end}. While significant improvements have been made, these results mainly restrict a vehicle to either the lane-following behavior~\cite{codevilla2017end} or the \emph{off-road} collision avoidance~\cite{LeCun2006off}. My policy, in contrast, enables an AV to learn from accidents and allows it to achieve \emph{on-road} collision avoidance with both static and dynamic obstacles.

\textbf{Hierarchical Control Policy}. There have been many efforts in constructing a hierarchical policy to control an agent at different stages of a task~\cite{barto2003recent}.
Example studies include the options framework~\cite{sutton1999between} and transferable motor skills~\cite{konidaris2012robot}.
When combined with DNNs, the hierarchical approach has been adopted for virtual characters to learn locomotion tasks~\cite{levine2013guided}.
In these studies, the goal is to discover a hierarchical relationship from complex sensorimotor behaviors. 
I apply a hierarchical and memory-enabled policy to autonomous driving based on multiple DNNs. 
My policy enables an AV to continuously categorize the road condition as safe or dangerous, and execute corresponding control commands to achieve accident-free driving.
 
\textbf{Generative Policy Learning}. Using \emph{principled simulations} to assist learning can be considered as taking a \emph{generative model} approach. Several studies have adopted the same principle for deriving the (near-)optimal policy, examples including Function Approximations~\cite{gordon1995stable}, Sparse Sampling~\cite{kearns2002sparse}, and Fitted Value Iteration~\cite{szepesvari2005finite}. These studies leverage a generative model to \emph{stochastically} generate training samples. The emphasize is to simulate the feedback from an environment assuming the reward function is known. My system, on the other hand, does not assume any reward function of a driving behavior but models the kinematic and dynamic constraints of a vehicle, and uses simulation to plan its trajectories with respect to environment characteristics. In essence, my method acquires a control policy through learning from simulated expert demonstrations rather than from an agent's self-exploration~\cite{lin1992self} as of the previous studies.

\section{Preliminaries}
\label{sec:prelim}

Autonomous driving is a \emph{sequential prediction} and \emph{controlled} (SPC) task, for which a system must predict a sequence of control commands based on inputs that depend on past predicted control commands. Because the control and prediction processes are intertwined, SPC tasks often encounter \emph{covariate shift}, meaning the training and test distributions vary. Next, I will introduce notation and definitions to formulate an SPC task, and briefly discuss its existing solutions. Note that, I use ``state'' and "observation'' interchangeably for proofs in this and later sections.

The problem being consider is a $ T $-step control task. Given the observation $ \phi=\phi(s) $ of a state $ s $ at each step $ t \in [\![1, T]\!] $, the goal of the learner is to find a policy $ \pi \in \Pi $ such that its produced action $ a=\pi(\phi)$ will lead to the minimal cost:

\begin{equation}
\hat{\pi} = \argmin_{\pi \in \Pi} \sum_{t=1}^{T} C\left(s_{t}, a_{t}\right),
\label{eq:cost1}
\end{equation}

\noindent where $ C\left(s,a\right) $ is the expected immediate cost of performing $ a $ in  $ s $. For many tasks such as driving, we may not know the true value of $ C$. Instead, the observed surrogate loss $ l(\phi,\pi, a^*) $ is commonly minimized, which is assumed to upper bound $ C $, based on the approximation of the learner's action $ a=\pi(\phi) $ to the expert's action $ a^*=\pi^*(\phi) $. The distribution of observations at $ t $ is denoted as $ d_{\pi}^t $, which is the result of executing $ \pi $ from $ 1 $ to $ t-1 $. $ d_{\pi}=\frac{1}{T}\sum_{t=1}^{T}d_{\pi}^t $ is then the average distribution of observations by executing $ \pi $ for $ T $ steps. My goal is to solve an SPC task by obtaining $ \hat{\pi} $---a policy that can minimize the observed surrogate loss under its own induced observations with respect to expert's actions for those observations:

\begin{equation}
\hat{\pi} = \argmin_{\pi \in \Pi} \mathbb{E}_{\phi \sim d_{\pi}, a^* \sim \pi^*(\phi)} \left[l\left(\phi,\pi, a^*\right)\right].
\end{equation}

\noindent I further denote $ \epsilon =  \mathbb{E}_{\phi \sim d_{\pi^*}, a^* \sim \pi^*(\phi)} \left[l\left(\phi,\pi, a^*\right)\right] $ as the expected loss under the training distribution induced by the expert's policy $ \pi^* $, and the cost-to-go over $ T $ steps of $ \hat{\pi} $ as $ J\left(\hat{\pi}\right) $ and of $ \pi^* $ as $ J\left(\pi^*\right) $. 

By simply treating expert demonstrations as independently and identically distributed (i.i.d.) samples, the discrepancy between $ J\left(\hat{\pi}\right) $ and $ J\left(\pi^*\right) $ is $ \mathcal{O}(T^2\epsilon) $~\cite{syed2010reduction,ross2011reduction}. Given the error of a typical supervised learning is $ \mathcal{O}\left(T\epsilon\right) $, this demonstrates the additional cost due to covariate shift when solving an SPC task via standard supervised learning. I adapt and simplify the proof from Ross et al.~\cite{ross2011reduction} to show that solving an SPC task using standard supervised learning will lead to $ \mathcal{O}(T^2\epsilon) $ error. This proof will better prepare readers for understanding the guarantees of ADAPS presented in Section~\ref{sec:adaps}.

\begin{theorem}
	\cite{ross2011reduction} Consider a $ T $-step control task. Let $ \epsilon = \mathbb{E}_{\phi \sim d_{\pi^*}, a^* \sim \pi^*(\phi)} \left[l\left(\phi,\pi, a^*\right)\right] $ be the observed surrogate loss under the training distribution induced by the expert's policy $ \pi^* $. I assume $ C \in \left[0, C_{max}\right] $ and $ l $ upper bounds the 0-1 loss. $ J\learn $ and $ J\expert $ denote the cost-to-go over $ T $ steps of executing $ \pi $ and $ \pi^* $, respectively. Then, I have the following result:
	\begin{equation*}
	J\learn \le J\expert + C_{max}T^2\epsilon.
	\end{equation*}
	\label{thm:sl}
\end{theorem}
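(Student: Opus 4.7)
The plan is to argue via a coupling between the trajectory executed by $\pi$ and the one that $\pi^*$ would produce, distinguishing whether a disagreement has ever occurred. I would let $E$ denote the event that $\pi(\phi(s_t)) = \pi^*(\phi(s_t))$ at every $t \in [\![1,T]\!]$ along a sample roll-out of $\pi$. On $E$ the two roll-outs are identical, so the incurred cost is exactly that of $\pi^*$; on $\bar{E}$ the uniform assumption $C \in [0, C_{max}]$ bounds the cost trivially by $T C_{max}$. Taking expectations over trajectories, this yields the clean split
\[
J(\pi) \;\le\; \Pr(E)\, J(\pi^*) \;+\; \bigl(1-\Pr(E)\bigr)\, T C_{max}.
\]

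The main step, and the one I expect to be the principal obstacle, is bounding $\Pr(\bar{E})$, because disagreements are naturally counted along $\pi$'s own induced distribution whereas $\epsilon$ is stated with respect to $d_{\pi^*}$. The way I would close this gap is with the coupling observation that, conditioned on no disagreement having occurred before step $t$, the distribution of $s_t$ is exactly $d_{\pi^*}^t$. Hence the probability of a \emph{first} disagreement at step $t$ is at most $\mathbb{E}_{\phi \sim d_{\pi^*}^t}\bigl[\mathbf{1}\{\pi(\phi) \neq \pi^*(\phi)\}\bigr]$, which by the assumption that $l$ upper bounds the 0-1 loss is at most $\mathbb{E}_{\phi \sim d_{\pi^*}^t}[l(\phi,\pi,\pi^*(\phi))]$. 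A union bound over $t$ then yields
\[
\Pr(\bar{E}) \;\le\; \sum_{t=1}^{T} \mathbb{E}_{\phi \sim d_{\pi^*}^t}\!\left[l(\phi,\pi,\pi^*(\phi))\right] \;=\; T \epsilon,
\]
using the definition $d_{\pi^*} = \tfrac{1}{T}\sum_{t=1}^T d_{\pi^*}^t$.

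Combining the two inequalities, and using $\Pr(E) \le 1$ together with $J(\pi^*) \ge 0$, I arrive at
\[
J(\pi) \;\le\; J(\pi^*) \;+\; \Pr(\bar{E})\, T C_{max} \;\le\; J(\pi^*) \;+\; C_{max} T^{2} \epsilon,
\]
which matches the stated bound. The only delicate technical point is formalizing ``no disagreement so far'' as a stopping time so that the conditional-distribution identity above is rigorous; once that is in place, the rest of the argument is just a union bound plus the worst-case per-step cost estimate, and the $T^2$ factor arises transparently from the product of (at most $T$ expected mistakes) and (at most $T C_{max}$ cost after the first mistake).
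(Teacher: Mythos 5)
Your proposal is correct in substance, and at its core it uses the same idea as the paper's proof (adapted from Ross et al.): track whether the learner has yet deviated from the expert, and exploit that until the first deviation the learner's roll-out coincides with an expert roll-out. The bookkeeping differs, though. The paper runs a per-timestep recursion, splitting $d_t^{\pi}$ into a ``no mistake yet'' part $d_{t,c}^{\pi}$ with weight $p_{t-1}$ and an ``already deviated'' part, proving $1-p_t \le \sum_{i\le t}\epsilon_i$ and $C_t \le C_{t,c}^* + (1-p_t)C_{max}$, and summing; you instead make one trajectory-level split into the event $E$ (never disagrees, so the cost is the expert's) and $\bar{E}$ (cost at most $TC_{max}$), plus a union bound over the time of the first disagreement. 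Your route is more elementary, and it explicitly handles the step the paper's write-up glosses over, namely relating disagreement probabilities back to $d_{\pi^*}$ (the paper simply asserts $\frac{1}{T}\sum_t \epsilon_t \le \epsilon$ with $\epsilon_t$ measured under $d_t^{\pi}$, whereas the quantity that is actually controlled by $\epsilon$ is the first-mistake probability --- exactly what you bound).

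Two phrasings need tightening, neither fatal. First, conditioned on no disagreement before step $t$, the law of $s_t$ is \emph{not} exactly $d_{\pi^*}^t$, since the conditioning event is correlated with the state; what is true, and all you need, is that the probability of the joint event (no disagreement before $t$, disagreement at $t$) equals the expert-roll-out probability of that event and hence is at most $\mathbb{E}_{\phi\sim d_{\pi^*}^t}\left[\mathbf{1}\{\pi(\phi)\neq\pi^*(\phi)\}\right]$, so $\Pr(\bar{E})\le T\epsilon$ stands. Second, the display $J(\pi)\le \Pr(E)J(\pi^*)+\Pr(\bar{E})TC_{max}$ is not quite justified as written (conditioning the expert's cost on $E$ could increase it); but since $C\ge 0$ you have $\mathbb{E}\left[\mathbf{1}_E\cdot\sum_t C(s_t^{\pi^*},a_t^{\pi^*})\right]\le J(\pi^*)$, which directly yields the inequality $J(\pi)\le J(\pi^*)+\Pr(\bar{E})TC_{max}$ that you actually use, and hence the stated $C_{max}T^2\epsilon$ bound.
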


\begin{proof}
	In order to prove this theorem, I introduce the following notation and definitions:
	\begin{itemize}
		\item $ d_{t,c}^{\pi} $: the state distribution at $ t $ as a result of the following event: $ \pi $ is executed and has been choosing the same actions as $ \pi^* $ from time $ 1 $ to $ t-1 $.
		\item $ p_{t-1} \in \left[0,1\right] $: the probability that the above-mentioned event holds true.
		\item $ d_{t,e}^{\pi} $: the state distribution at $ t $ as a result of the following event: $ \pi $ is executed and has chosen at least one different action than $ \pi^* $ from time $ 1 $ to $ t-1 $.
		\item $ (1 - p_{t-1}) \in \left[0,1\right] $: the probability that the above-mentioned event holds true.
		\item $ d_{t}^{\pi} = p_{t-1}d_{t,c}^{\pi} + (1-p_{t-1})d_{t,e}^{\pi} $: the state distribution at $ t $.
		\item $ \epsilon_{t,c} $: the probability that $ \pi $ chooses a different action than $ \pi^* $ in $ d_{t,c}^{\pi} $.
		\item $ \epsilon_{t,e} $: the probability that $ \pi $ chooses a different action than $ \pi^* $ in $ d_{t,e}^{\pi} $.
		\item $ \epsilon_{t} = p_{t-1}\epsilon_{t,c} + (1-p_{t-1})\epsilon_{t,e} $: the probability that $ \pi $ chooses a different action than $ \pi^* $ in $ d_{t}^{\pi} $.
		\item $ C_{t,c} $: the expected immediate cost of executing $ \pi $ in $ d_{t,c}^{\pi} $.
		\item $ C_{t,e} $: the expected immediate cost of executing $ \pi $ in $ d_{t,e}^{\pi} $.
		\item $ C_{t} = p_{t-1}C_{t,c} + (1-p_{t-1})C_{t,e} $: the expected immediate cost of executing $ \pi $ in $ d_{t}^{\pi} $.
		\item $ C_{t,c}^{*} $: the expected immediate cost of executing $ \pi^{*} $ in $ d_{t,c}^{\pi} $.
		\item $ C_{max} $: the upper bound of an expected immediate cost.
		\item $ J\learn  = \sum_{t=1}^{T} C_{t}$: the cost-to-go of executing $ \pi $ for $ T $ steps.
		\item $ J\expert = \sum_{t=1}^{T} C_{t,c}^* $: the cost-to-go of executing $ \pi^* $ for $ T $ steps.
	\end{itemize}
	\noindent The probability that the learner chooses at least one different action than the expert in the first $ t $ steps is: 
	\begin{equation*}
	\left(1-p_{t}\right) = \left(1-p_{t-1}\right) + p_{t-1}\epsilon_{t,c}.
	\label{aux0}
	\end{equation*}
	\noindent This gives $ \left(1-p_{t}\right) \le (1-p_{t-1}) + \epsilon_{t} $ since $ p_{t-1} \in \left[0,1\right]$. Solving this recurrence gives us:
	\begin{equation*}
	1 - p_{t} \le \sum_{i=1}^{t}\epsilon_{i}.
	\label{aux1}
	\end{equation*}
	\noindent Now consider in state distribution $ d_{t,c}^{\pi} $, if $ \pi $ chooses a different action than $ \pi^* $  with probability $ \epsilon_{t,c} $, $ \pi $ will incur a cost at most $ C_{max} $ more than $ \pi^* $. This can be represented as:
	\begin{equation*}
	C_{t,c} \le C_{t,c}^{*} + \epsilon_{t,c}C_{max}.
	\label{aux2}
	\end{equation*}
	\noindent Thus, we have:
	\begin{equation*} 
	\begin{split}
	C_{t} & = p_{t-1}C_{t,c} + (1 - p_{t-1})C_{t,e} \\
	& \le p_{t-1}C_{t,c}^{*} + p_{t-1}\epsilon_{t,c}C_{max} + (1-p_{t-1})C_{max} \\
	& = p_{t-1}C_{t,c}^{*} + (1 - p_{t})C_{max} \\
	& \le C_{t,c}^{*} + (1 - p_{t})C_{max}\\
	& \le C_{t,c}^{*} + C_{max}\sum_{i=1}^{t}\epsilon_{i}.
	\end{split}
	\end{equation*}
	
	\noindent Summing the above result over $ T $ steps and using the fact $ \frac{1}{T}\sum_{t=1}^{T}\epsilon_{t} \le \epsilon $: 
	
	\begin{equation*} 
	\begin{split}
	J\learn & \le J\expert + C_{max}\sum_{t=1}^{T}\sum_{i=1}^{t}\epsilon_{i} \\
	& = J\expert + C_{max}\sum_{t=1}^{T}(T+1-t)\epsilon_{t}\\
	& \le J\expert + C_{max}T\sum_{t=1}^{T}\epsilon_{t}\\
	& \le J\expert + C_{max}T^2\epsilon.
	\end{split}
	\end{equation*}
	
\end{proof}

Several approaches have been proposed to solve SPC tasks using supervised learning while keeping the error growing linearly instead of quadratically with $ T $~\cite{syed2010reduction,ross2011reduction,daume2009search}. Essentially, these methods reduce an SPC task to online learning. By further leveraging interactions with experts and no-regret algorithms that have strong guarantees on convex loss functions~\cite{kakade2009generalization}, at each iteration, these methods train one or multiple policies using standard supervised learning and improve the trained policies as the iteration proceeds. 

%Basically, these techniques trade the sample and interaction complexities (providing experts are available) over $ N $ iterations for lower errors. 

To illustrate, I denote the best policy at $ i $th iteration (trained using all observations from the previous $ i-1 $ iterations) as $ \pi_{i} $ and for any policy $ \pi \in \Pi $ we have its expected loss under the observation distribution induced by $ \pi_{i} $ as $ l_{i}\left(\pi\right) =  \mathbb{E}_{\phi \sim d_{\pi_{i}}, a^* \sim \pi^*(\phi)} \left[l_{i}\left(\phi,\pi, a^*\right)\right], l_{i}\in \left[0,l_{max}\right]$\footnote{In online learning, the surrogate loss $ l $ can be seen as chosen by some adversary which varies at each iteration.}. In addition, I denote the minimal loss in hindsight after $ N \ge i $ iterations as $ \epsilon_{min} = \min_{\pi \in \Pi} \frac{1}{N}\sum_{i=1}^{N}l_{i}(\pi) $ (i.e., the training loss after using all observations from $ N $ iterations). Then, the average regret of this online learning program can be represented as $ \epsilon_{regret} = \frac{1}{N}\sum_{i=1}^{N}l_{i}(\pi_{i}) - \epsilon_{min}$. Using DAGGER~\cite{ross2011reduction} as an example method, the accumulated error difference becomes the summation of three terms:

\begin{equation}
J\left(\hat{\pi}\right) \le T\epsilon_{min} + T\epsilon_{regret} +  \mathcal{O}(\frac{f\left(T, l_{max}\right)}{N}),
\label{eq:dagger}
\end{equation}

\noindent where $ f\left(\cdot\right) $ is the function of fixed $ T $ and $ l_{max} $. As $ N \rightarrow \infty $, the third term tends to $ 0 $ so as the second term if a no-regret algorithm such as the Follow-the-Leader~\cite{hazan2007logarithmic} is used.

To further prepare readers for understanding ADPAS, I adapt the proof of DAGGER from Ross et al.~\cite{ross2011reduction} and include it here for completeness. Note that for Theorem~\ref{thm:dagger}, I have arrived at the different third term as of Ross et al.~\cite{ross2011reduction}. 

\begin{lemma}
	\cite{ross2011reduction} Let $ P $ and $ Q $ be any two distributions over elements $ x \in \mathcal{X} $ and $ f: \mathcal{X} \rightarrow \mathbb{R} $, any bounded function such that $ f(x) \in \left[a,b\right] $ for all $ x \in \mathcal{X} $. Let the range $ r = b-a $. Then $ |\mathbb{E}_{x \sim P}\left[f(x)\right] - \mathbb{E}_{x \sim Q}\left[f(x)\right] | \le \frac{r}{2}\lVert P-Q \rVert_{1} $.
	\label{lemma-1}
\end{lemma}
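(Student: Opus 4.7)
The plan is to reduce the claim to a one-line calculation by exploiting the fact that $P$ and $Q$ are both probability distributions, so their difference integrates (or sums) to zero against any constant. Writing
\begin{equation*}
\mathbb{E}_{x \sim P}\left[f(x)\right] - \mathbb{E}_{x \sim Q}\left[f(x)\right] = \int f(x)\left(P(x) - Q(x)\right) dx,
\end{equation*}
the naive absolute-value bound gives $|f(x)| \le \max(|a|,|b|)$, which is too loose; to recover the factor of $r/2$ I need a centering trick.

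First I would choose the midpoint $c = (a+b)/2$ so that $|f(x) - c| \le r/2$ for every $x \in \mathcal{X}$. Second, I would use the identity $\int c\left(P(x) - Q(x)\right) dx = c\left(\int P(x) dx - \int Q(x) dx\right) = 0$ to rewrite the difference of expectations as
\begin{equation*}
\mathbb{E}_{x \sim P}\left[f(x)\right] - \mathbb{E}_{x \sim Q}\left[f(x)\right] = \int \left(f(x) - c\right)\left(P(x) - Q(x)\right) dx.
\end{equation*}

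Third, applying the triangle inequality inside the integral and then the uniform bound on $|f(x)-c|$ yields
\begin{equation*}
\left|\mathbb{E}_{x \sim P}\left[f(x)\right] - \mathbb{E}_{x \sim Q}\left[f(x)\right]\right| \le \int \left|f(x) - c\right|\left|P(x) - Q(x)\right| dx \le \frac{r}{2}\int \left|P(x) - Q(x)\right| dx = \frac{r}{2}\lVert P - Q \rVert_{1},
\end{equation*}
which is the claimed bound. There is no real obstacle here: the only substantive step is recognizing the centering trick, and the argument extends verbatim to the discrete case by replacing integrals with sums over $\mathcal{X}$. I would remark that the constant $r/2$ is tight, attained for example by $f$ taking only the values $a$ and $b$ on disjoint subsets where $P$ and $Q$ differ maximally, which also explains why the natural object on the right-hand side is the total variation distance $\frac{1}{2}\lVert P - Q \rVert_{1}$.
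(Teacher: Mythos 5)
Your proof is correct and follows essentially the same route as the paper: subtract the constant $c = a + \tfrac{r}{2}$ (equivalently $(a+b)/2$), which contributes nothing since $P$ and $Q$ both integrate to one, then bound $|f(x)-c| \le \tfrac{r}{2}$ and apply the triangle inequality to get the $\tfrac{r}{2}\lVert P-Q\rVert_1$ bound. The added remark on tightness is a nice extra but not needed.
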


\begin{proof}
	\begin{equation*} 
	\begin{split}
	& |\mathbb{E}_{x \sim P}\left[f(x)\right] - \mathbb{E}_{x \sim Q}\left[f(x)\right] | \\
	& = | \int_{x}P(x)f(x)dx - \int_{x}Q(x)f(x)dx |\\
	& = | \int_{x}f(x)\left(P(x)-Q(x)\right)dx |\\
	& = | \int_{x}\left(f(x)-c\right)\left(P(x)-Q(x)\right)dx |, \forall c \in \mathbb{R}\\
	& \le \int_{x}|f(x)-c||P(x)-Q(x)|dx \\
	& \le \max_{x}|f(x)-c|\int_{x}|P(x)-Q(x)|dx \\
	& = \max_{x}|f(x)-c|\lVert P-Q \rVert_{1} .
	\end{split}
	\end{equation*}
	Taking $ c = a + \frac{r}{2} $ leads to $ \max_{x}|f(x)-c| \le \frac{r}{2} $, thus proves the lemma.
\end{proof}

\begin{lemma}
	\cite{ross2011reduction} Let $ \hat{\pi}_{i} $ be the learned policy, $ \pi^* $ be the expert's policy, and $ \pi_{i} $ be the policy used to collect training data with probability $ \beta_{i} $ executing $ \pi^* $ and probability $  1 - \beta_{i}$ executing $ \hat{\pi}_{i} $ over $ T $ steps. Then, I have $\lVert d_{\pi_{i}} - d_{\hat{\pi}_{i}} \rVert_{1} \le 2\min(1, T\beta_{i})$.
	\label{lemma-2}
\end{lemma}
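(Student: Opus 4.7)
The plan is to bound $\lVert d_{\pi_i} - d_{\hat{\pi}_i} \rVert_1$ by two separate arguments and then combine them via the minimum. The trivial side is immediate: since both $d_{\pi_i}$ and $d_{\hat{\pi}_i}$ are averages of probability distributions over the $T$ steps, their $\mathcal{L}_1$ distance is at most $2$. All the real work is in establishing the $2T\beta_i$ bound.

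For the non-trivial bound, I would couple the two executions. Define the event $A$ that, when rolling out $\pi_i$ for $T$ steps, the mixture chose the learner $\hat{\pi}_i$ at \emph{every} timestep (i.e., never flipped to $\pi^*$). Since the mixture coin is flipped independently at each step with probability $\beta_i$ of selecting $\pi^*$, we have $\Pr(A) = (1-\beta_i)^T$. Conditional on $A$, the trajectory distribution induced by $\pi_i$ is identical to that induced by $\hat{\pi}_i$, and hence the induced average state distribution coincides with $d_{\hat{\pi}_i}$. Writing $d_{\pi_i} = (1-\beta_i)^T d_{\hat{\pi}_i} + \bigl(1-(1-\beta_i)^T\bigr) d'$ for some distribution $d'$ (the conditional state distribution given $\neg A$), subtracting $d_{\hat{\pi}_i}$ from both sides and taking the $\mathcal{L}_1$ norm yields
\begin{equation*}
\lVert d_{\pi_i} - d_{\hat{\pi}_i} \rVert_1 = \bigl(1-(1-\beta_i)^T\bigr)\,\lVert d' - d_{\hat{\pi}_i} \rVert_1 \le 2\bigl(1-(1-\beta_i)^T\bigr),
\end{equation*}
since any two probability distributions are at $\mathcal{L}_1$ distance at most $2$.

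The last step is to apply Bernoulli's inequality, $(1-\beta_i)^T \ge 1 - T\beta_i$ for $\beta_i \in [0,1]$, giving $1-(1-\beta_i)^T \le T\beta_i$ and hence $\lVert d_{\pi_i} - d_{\hat{\pi}_i} \rVert_1 \le 2T\beta_i$. Combining with the trivial bound of $2$ produces $\lVert d_{\pi_i} - d_{\hat{\pi}_i} \rVert_1 \le 2\min(1, T\beta_i)$, as claimed. I expect no real obstacle here; the only point requiring slight care is being explicit about the coupling (so that the ``conditioned on $A$'' equality of induced distributions is unambiguous) and remembering to take the average state distribution over $T$ steps rather than the state distribution at a single timestep.
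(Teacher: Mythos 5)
Your proposal is correct and follows essentially the same route as the paper's proof: the decomposition $d_{\pi_i} = (1-\beta_i)^T d_{\hat{\pi}_i} + \bigl(1-(1-\beta_i)^T\bigr)d'$, the generic $\mathcal{L}_1$ bound of $2$ between any two distributions, and Bernoulli's inequality $(1-\beta_i)^T \ge 1 - T\beta_i$. Your explicit description of the coupling event $A$ merely makes precise what the paper states implicitly, so there is nothing substantive to add.
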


\begin{proof}
	In contrast to $ d_{\hat{\pi}_{i}} $ which is the state distribution as the result of solely executing $ \hat{\pi}_{i} $, I denote $ d $ as the state distribution as the result of $ \pi_{i} $ executing $ \pi^* $ at least once over $ T $ steps. This gives $ d_{\pi_{i}} = (1-\beta_{i})^Td_{\hat{\pi}_{i}} + \left(1-(1-\beta_{i})^T\right)d $. I also have the facts that for any two distributions $ P $ and $ Q $, $\lVert P-Q \rVert_{1} \le 2$ and $(1-\beta)^T \ge 1-\beta T, \forall \beta \in \left[0,1\right] $. Then, I have $ \lVert d_{\pi_{i}} - d_{\hat{\pi}_{i}} \rVert_{1} \le 2 $ and can further show the following:
	\begin{equation*} 
	\begin{split}
	\lVert d_{\pi_{i}} - d_{\hat{\pi}_{i}} \rVert_{1} & = \left(1-(1-\beta_{i})^T\right) \lVert d - d_{\hat{\pi}_{i}} \rVert_{1}\\
	& \le 2\left(1-(1-\beta_{i})^T\right) \\
	& \le 2T\beta_{i}.
	\end{split}
	\end{equation*}
\end{proof}

\begin{theorem}
	\cite{ross2011reduction} If the surrogate loss $ l \in \left[0,l_{max}\right] $ is the same as the cost function $ C $ or upper bounds it, then after $ N $ iterations of DAGGER:
	\begin{equation}
	J\left(\hat{\pi}\right) \le J\left(\bar{\pi}\right) \le T\epsilon_{min} + T\epsilon_{regret} + \mathcal{O}(\frac{f(T,l_{max})}{N}).
	\end{equation}
	\label{thm:dagger}
\end{theorem}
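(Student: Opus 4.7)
The plan is to bound $J(\bar{\pi})$ by an empirical average over the iterates and then invoke the regret guarantee of the online learner. First, I would let $\bar{\pi}$ denote either the uniform mixture over the sequence $\hat{\pi}_1, \ldots, \hat{\pi}_N$ or the best-in-hindsight iterate, so that $J(\hat{\pi}) \le J(\bar{\pi})$ is immediate. Since $l$ upper bounds $C$ and the cost-to-go unrolls as a sum of $T$ per-step losses, the expected cost satisfies
\begin{equation*}
J(\bar{\pi}) \le T \cdot \frac{1}{N}\sum_{i=1}^{N}\mathbb{E}_{\phi \sim d_{\hat{\pi}_i}, a^* \sim \pi^*(\phi)}\bigl[l_i(\phi, \hat{\pi}_i, a^*)\bigr].
\end{equation*}

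The crucial technical step is to replace the distribution $d_{\hat{\pi}_i}$ that appears here with the data-collection distribution $d_{\pi_i}$ induced by the DAGGER mixture $\pi_i = \beta_i \pi^* + (1-\beta_i)\hat{\pi}_i$, since the regret bound is stated with respect to the loss sequence $l_i(\pi) = \mathbb{E}_{\phi \sim d_{\pi_i}}[l_i(\phi, \pi, \pi^*(\phi))]$. I would do this by applying Lemma~\ref{lemma-2} to obtain $\lVert d_{\pi_i} - d_{\hat{\pi}_i}\rVert_1 \le 2\min(1, T\beta_i)$, and then Lemma~\ref{lemma-1} with the bounded function $\phi \mapsto l_i(\phi, \hat{\pi}_i, \pi^*(\phi)) \in [0, l_{max}]$ to swap expectations at a cost of at most $l_{max}\min(1, T\beta_i)$ per iteration.

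Averaging these swaps over the $N$ iterations and plugging the definition of the empirical loss and the regret, the remaining core sum $\frac{1}{N}\sum_{i=1}^{N} l_i(\hat{\pi}_i)$ equals $\epsilon_{min} + \epsilon_{regret}$. This yields
\begin{equation*}
J(\bar{\pi}) \le T\epsilon_{min} + T\epsilon_{regret} + \frac{T\, l_{max}}{N}\sum_{i=1}^{N}\min(1, T\beta_i).
\end{equation*}
The last step is to pick a schedule of $\beta_i$ (for example, $\beta_i = (1-\alpha)^{i-1}$ or the cutoff rule $\beta_i = \mathbb{I}[i \le T]$ used in the original DAGGER analysis) so that $\sum_{i=1}^{N}\min(1, T\beta_i)$ grows only as some function $f(T, l_{max})$ independent of $N$, making the third term collapse to $\mathcal{O}(f(T, l_{max})/N)$.

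The main obstacle is to keep the swap from $d_{\hat{\pi}_i}$ to $d_{\pi_i}$ tight enough that the residual error vanishes in $N$: too aggressive a use of $\pi^*$ in data collection (large $\beta_i$) makes the swap loose, while too conservative a choice hurts the no-regret convergence. Choosing the schedule that balances these two, and verifying that the resulting function of $T$ and $l_{max}$ matches the $\mathcal{O}(f(T, l_{max})/N)$ form, is the delicate part of the argument.
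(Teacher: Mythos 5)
Your proposal is correct and follows essentially the same route as the paper's proof: bound $J(\hat{\pi}) \le J(\bar{\pi})$, unroll the cost-to-go into $T$ times the average per-step surrogate loss of the iterates under their own distributions, swap $d_{\hat{\pi}_i}$ for the data-collection distribution $d_{\pi_i}$ via Lemma~\ref{lemma-2} and Lemma~\ref{lemma-1} at cost $l_{max}\min(1,T\beta_i)$, identify the remaining average with $\epsilon_{min}+\epsilon_{regret}$, and control $\sum_i \min(1,T\beta_i)$ with an exponentially decaying $\beta_i$ schedule exactly as in the paper's derivation of the $\mathcal{O}(f(T,l_{max})/N)$ term. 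No gaps to report.
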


\begin{proof}
	Let $ l_{i}\left(\pi\right) =  \mathbb{E}_{\phi \sim d_{\pi_{i}}, a^* \sim \pi^*(\phi)} \left[l\left(\phi,\pi, a^*\right)\right]]$ be the expected loss of any policy $ \pi \in \Pi $ under the state distribution induced by the learned policy $ \pi_{i} $ at the $ i $th iteration and $ \epsilon_{min} = \min_{\pi \in \Pi} \frac{1}{N}\sum_{i=1}^{N}l_{i}(\pi) $ be the minimal loss in hindsight after $ N \ge i $ iterations. Then, $ \epsilon_{regret} = \frac{1}{N}\sum_{i=1}^{N}l_{i}(\pi_{i}) - \epsilon_{min}$ is the average regret of this online learning program. In addition, the expected loss of any policy  $ \pi \in \Pi $ under its own induced state distribution is denoted as $ L\left(\pi\right) =  \mathbb{E}_{\phi \sim d_{\pi}, a^* \sim \pi^*(\phi)} \left[l\left(\phi,\pi, a^*\right)\right]]$ and consider $ \bar{\pi} $ as the mixed policy that samples the policies $ \{\hat{\pi}_{i}\}_{i=1}^N $ uniformly at the beginning of each trajectory. Using Lemma~\ref{lemma-1} and Lemma~\ref{lemma-2}, we can show:
	
	\begin{equation*} 
	\begin{split}
	L(\hat{\pi}_{i}) & = \mathbb{E}_{\phi \sim d_{\hat{\pi}_{i}}, a^* \sim \pi^*(\phi)} \left[l\left(\phi,\hat{\pi}_{i}, a^*\right)\right]\\
	& \le \mathbb{E}_{\phi \sim d_{\pi_{i}}, a^* \sim \pi^*(\phi)} \left[l\left(\phi,\hat{\pi}_{i}, a^*\right)\right] + \frac{l_{max}}{2}\lVert d_{\pi_{i}} - d_{\hat{\pi}_{i}} \rVert_{1} \\
	& \le \mathbb{E}_{\phi \sim d_{\pi_{i}}, a^* \sim \pi^*(\phi)} \left[l\left(\phi,\hat{\pi}_{i}, a^*\right)\right] + l_{max}\min\left(1,T\beta_{i}\right) \\
	& = l_{i}\left(\hat{\pi}_{i}\right) + l_{max}\min\left(1,T\beta_{i}\right) 
	\end{split}
	\end{equation*}
	
	\noindent By further assuming $ \beta_{i} $ is monotonically decreasing and $ n_{\beta} = \argmax_{n} (\beta_{n} > \frac{1}{T}), n \le N $, we have the following:
	
	\begin{equation*} 
	\begin{split}
	\min_{i \in 1:N} L(\hat{\pi}_{i}) & \le L(\bar{\pi})\\
	& = \frac{1}{N} \sum_{i=1}^{N}L(\hat{\pi}_{i}) \\
	& \le \frac{1}{N} \sum_{i=1}^{N}l_{i}(\hat{\pi}_{i}) + \frac{l_{max}}{N} \sum_{i=1}^{N}\min\left(1,T\beta_{i}\right)   \\
	& = \epsilon_{min} + \epsilon_{regret} + \frac{l_{max}}{N} \left[n_{\beta} + T\sum_{i=n_{\beta}+1}^{N}\beta_{i}\right].
	\end{split}
	\end{equation*}
	
	\noindent Summing over $ T $ gives us:
	
	\begin{equation*} 
	J(\bar{\pi}) \le T\epsilon_{min} + T\epsilon_{regret} + \frac{Tl_{max}}{N} \left[n_{\beta} + T\sum_{i=n_{\beta}+1}^{N}\beta_{i}\right].
	\end{equation*}
	
	\noindent Define $ \beta_{i} = (1-\alpha)^{i-1} $, in order to have $ \beta_{i} \le \frac{1}{T} $, we need $ (1-\alpha)^{i-1} \le \frac{1}{T} $ which leads $ i \le 1 + \frac{\log{\frac{1}{T}}}{\log{(1-\alpha)}} $. In addition, note now $ i = n_{\beta}$ and $\sum_{i=n_{\beta}+1}^{N}\beta_{i} = \frac{(1-\alpha)^{n\beta} - (1-\alpha)^N}{\alpha} \le \frac{1}{T\alpha}$, continuing the above derivation, we have:
	
	\begin{equation*}
	J\left(\bar{\pi}\right) \le T\epsilon_{min} + T\epsilon_{regret} + \frac{Tl_{max}}{N}\left(1 + \frac{\log{\frac{1}{T}}}{\log{(1-\alpha)}} + \frac{1}{\alpha}\right)
	\end{equation*}
	
	\noindent Given the fact $J\left(\hat{\pi}\right) = \min_{i \in 1:N} J(\hat{\pi}_{i}) \le J(\bar{\pi}) $ and representing the third term as $ \mathcal{O}(\frac{f(T,l_{max})}{N}) $, we have proved the theorem.
\end{proof}

DAGGER offers a practical way to solve SPC tasks. However, it may require many iterations to obtain a robust policy. In addition, usually human experts or pre-defined controllers are needed for labeling the generated training data, which could be inefficient and difficult to generalize. For autonomous driving, we want the iteration number to be minimal since it directly corresponds to the number of accidents. This requires the generation of training data being accurate, efficient, and sufficient.

\section{ADAPS}
\label{sec:adaps}
In the following, I present theoretical analysis of ADAPS and introduce its pipeline. 

\subsection{Theoretical Analysis}
I have evaluated my approach against existing learning mechanisms such as DAGGER~\cite{ross2011reduction}, with my method's results proving to be more effective. Specifically, DAGGER~\cite{ross2011reduction} assumes that an underlying learning algorithm has access to a \emph{reset model}. So, the training examples can be obtained only \emph{online} by resetting an agent to its initial state distribution and executing a learned policy, thus achieving  ``small changes'' at each iteration~\cite{ross2011reduction,daume2009search,kakade2002approximately,bagnell2004policy}. In comparison, my method allows a learning algorithm to access a \emph{generative model} so that the training examples can be acquired \emph{offline} by putting an agent to arbitrary states during the analysis of an accident and letting a \emph{generative} model simulate its behavior.  This approach can result in massive training data, thus has the potential to assist a policy achieving ``large improvements'' in one iteration. 

Additionally, existing techniques such as DAGGER~\cite{ross2011reduction} usually incorporate the demonstrations of a few (human) experts into training. Because of the \emph{reset} model assumption and the lack of a diversity requirement on experts, these demonstrations can be homogeneous. In contrast, using my parameterized model to retrace and analyze each accident, the number of recovery actions obtained can be multiple orders of magnitude higher. More importantly, I can treat the generated trajectories and the additional data generated based on them (described in Section~\ref{sec:data_detect}) as running a learned policy to sample independent expert trajectories at different states, since 1) the generated trajectories are results from a principled simulation algorithm which samples its own parameter distributions independently for each simulation run; 2) my model provides near-exhaustive coverage of the configuration space of a vehicle within the boundaries of a road. With these assumptions, I derive the following theorem.

\begin{theorem}
	If the surrogate loss $ l $ upper bounds the true cost $ C $, by collecting $ K $ trajectories using ADAPS at each iteration, with probability at least $ 1 - \mu $, $\mu \in(0,1)$, ADAPS offers the following guarantee:
	\begin{equation*}
	J\left(\hat{\pi}\right) \le J\left(\bar{\pi}\right) \le T\hat{\epsilon}_{min} + T\hat{\epsilon}_{regret} + \mathcal{O}\left(Tl_{max}\sqrt{\frac{\log{\frac{1}{\mu}}}{KN}}\right).
	\end{equation*}
	\label{thm:adaps}
\end{theorem}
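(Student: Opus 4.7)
The plan is to adapt the proof of Theorem~\ref{thm:dagger} (DAGGER) by exploiting two structural differences introduced by ADAPS: (i) access to a generative model eliminates the need to mix the learned policy $\hat{\pi}_i$ with the expert $\pi^*$ via probability $\beta_i$, so the data-collection policy $\pi_i$ can be taken to equal $\hat{\pi}_i$ directly; and (ii) at each iteration we draw $K$ independent trajectories from $d_{\pi_i}$ via principled simulation, which lets us replace expected quantities with empirical ones via a concentration inequality. The target bound has the same first two terms as DAGGER but the third term becomes $\mathcal{O}\!\left(Tl_{max}\sqrt{\log(1/\mu)/(KN)}\right)$, which is precisely the form one obtains from an Azuma--Hoeffding tail applied to $KN$ samples at scale $l_{max}$.

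First, I would redo the step of Theorem~\ref{thm:dagger} that bounds $L(\bar{\pi})$. Because the generative model lets us sample from $d_{\hat{\pi}_i}$ itself, Lemma~\ref{lemma-2} becomes trivial ($\lVert d_{\pi_i}-d_{\hat{\pi}_i}\rVert_1=0$), so the $l_{max}\min(1,T\beta_i)$ slack disappears and I get
\begin{equation*}
\min_{i\in 1:N} L(\hat{\pi}_i) \;\le\; L(\bar{\pi}) \;=\; \tfrac{1}{N}\sum_{i=1}^{N} L(\hat{\pi}_i) \;\le\; \tfrac{1}{N}\sum_{i=1}^{N} l_i(\hat{\pi}_i).
\end{equation*}
Next, I introduce the empirical counterpart $\hat{l}_i(\pi)=\tfrac{1}{K}\sum_{k=1}^K l(\phi_{i,k},\pi,a^*_{i,k})$ with the $K$ trajectories at iteration $i$ sampled from $d_{\hat{\pi}_i}$, and define $\hat{\epsilon}_{min}=\min_{\pi\in\Pi}\tfrac{1}{N}\sum_i \hat{l}_i(\pi)$ and $\hat{\epsilon}_{regret}=\tfrac{1}{N}\sum_i \hat{l}_i(\hat{\pi}_i)-\hat{\epsilon}_{min}$ as the empirical regret analogues.

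Second, I would close the gap between expected and empirical losses using a concentration argument. Conditioned on the history through iteration $i-1$, the policy $\hat{\pi}_i$ is fixed, so the $K$ per-trajectory losses $l(\phi_{i,k},\hat{\pi}_i,a^*_{i,k})$ are i.i.d.\ with mean $l_i(\hat{\pi}_i)$ and range $[0,l_{max}]$. Stacking the $KN$ centred losses across iterations yields a martingale difference sequence in $[-l_{max},l_{max}]$, so Azuma--Hoeffding gives, with probability at least $1-\mu/2$,
\begin{equation*}
\tfrac{1}{N}\sum_{i=1}^{N} l_i(\hat{\pi}_i) \;\le\; \tfrac{1}{N}\sum_{i=1}^{N} \hat{l}_i(\hat{\pi}_i) + l_{max}\sqrt{\tfrac{2\log(2/\mu)}{KN}}.
\end{equation*}
An identical one-sided bound applied to the (single) comparator policy achieving $\hat{\epsilon}_{min}$, combined via a union bound, yields $\tfrac{1}{N}\sum_i l_i(\pi_{comp}) \ge \hat{\epsilon}_{min}-l_{max}\sqrt{2\log(2/\mu)/(KN)}$. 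Combining these gives
\begin{equation*}
\tfrac{1}{N}\sum_{i=1}^{N} l_i(\hat{\pi}_i) \;\le\; \hat{\epsilon}_{min} + \hat{\epsilon}_{regret} + \mathcal{O}\!\left(l_{max}\sqrt{\tfrac{\log(1/\mu)}{KN}}\right).
\end{equation*}
Multiplying by $T$ (since $l$ upper bounds $C$ and so $J\left(\hat{\pi}\right)\le J\left(\bar{\pi}\right)\le T\cdot L(\bar{\pi})$) produces exactly the claimed inequality.

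The main obstacle, and the only place where care is really needed, is justifying the martingale/Azuma step: the policies $\hat{\pi}_i$ are not independent across iterations because each is trained on the cumulative data from the previous rounds. The standard fix is to work conditionally on the filtration $\mathcal{F}_{i-1}$ generated by the first $i-1$ rounds, verify that $l(\phi_{i,k},\hat{\pi}_i,a^*_{i,k})-l_i(\hat{\pi}_i)$ is a bounded martingale difference with respect to that filtration, and then apply Azuma--Hoeffding to the full $KN$-term sum. A secondary subtlety is that if one wanted a bound that holds uniformly over all $\pi\in\Pi$ (rather than only for the single empirical minimizer), one would need to pay an extra $\log|\Pi|$ or complexity-measure factor; the statement as written avoids this by taking the comparator to be a fixed policy, so a two-event union bound suffices.
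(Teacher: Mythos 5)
Your proposal is correct and follows essentially the same route as the paper's proof: remove the expert-mixing slack so that $L(\hat{\pi}_i)=l_i(\hat{\pi}_i)$, introduce the empirical per-trajectory losses and the empirical $\hat{\epsilon}_{min}$ and $\hat{\epsilon}_{regret}$, apply Azuma--Hoeffding to the $KN$-term martingale of centred losses to get the $\mathcal{O}\bigl(l_{max}\sqrt{\log(1/\mu)/(KN)}\bigr)$ deviation, and sum over $T$. The only difference is your second concentration bound on the comparator policy, which is redundant since $\hat{\epsilon}_{min}+\hat{\epsilon}_{regret}=\frac{1}{N}\sum_{i=1}^{N}\hat{l}_i(\hat{\pi}_i)$ holds by definition, so a single application at confidence $1-\mu$ suffices, exactly as in the paper.
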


\begin{proof}
	 Assuming at the $ i $th iteration, my model generates $ K $ trajectories. These trajectories are independent from each other since they are generated using different parameters and at different states during the analysis of an accident. For the $ k $th trajectory, $ k \in [\![1, K]\!] $, I can construct an estimate $ \hat{l}_{ik}(\hat{\pi}_{i}) = \frac{1}{T}\sum_{t=1}^{T}l_{i}\left(\phi_{ikt},\hat{\pi}_{i}, a^{*}_{ikt}\right) $, where $ \hat{\pi}_{i} $ is the  learned policy from data gathered in previous $ i-1 $ iterations. Then, the approximated expected loss $ \hat{l}_{i} $ is the average of these $ K $ estimates: $ \hat{l}_{i}(\hat{\pi}_{i}) = \frac{1}{K}\sum_{k=1}^{K}\hat{l}_{ik}(\hat{\pi}_{i}) $. I denote $ \hat{\epsilon}_{min} = \min_{\pi \in \Pi} \frac{1}{N}\sum_{i=1}^{N}\hat{l}_{i}(\pi) $ as the approximated minimal loss in hindsight after $ N $ iterations, then $ \hat{\epsilon}_{regret} = \frac{1}{N}\sum_{i=1}^{N}\hat{l}_{i}(\hat{\pi}_{i}) - \hat{\epsilon}_{min}$ is the approximated average regret. 
	 
	 Let $ Y_{i,k} = l_{i}(\hat{\pi}_{i}) - \hat{l}_{ik}(\hat{\pi}_{i}) $ and define random variables $ X_{nK+m} = \sum_{i=1}^{n}\sum_{k=1}^{K}Y_{i,k} + \sum_{k=1}^{m}Y_{n+1,k} $, for $ n \in [\![0, N-1]\!] $ and $ m \in [\![1, K]\!] $. Consequently, $ \{X_{i}\}_{i=1}^{NK} $ form a martingale and $ |X_{i+1} - X_{i}| \le l_{max} $. By Azuma-Hoeffding's inequality, with probability at least $ 1 - \mu $, I have $ \frac{1}{KN}X_{KN} \le l_{max}\sqrt{\frac{2\log{\frac{1}{\mu}}}{KN}} $. 
	 
	 Next, I denote the expected loss of any policy  $ \pi \in \Pi $ under its own induced state distribution as $ L\left(\pi\right) =  \mathbb{E}_{\phi \sim d_{\pi}, a^* \sim \pi^*(\phi)} \left[l\left(\phi,\pi, a^*\right)\right]]$ and consider $ \bar{\pi} $ as the mixed policy that samples the policies $ \{\hat{\pi}_{i}\}_{i=1}^N $ uniformly at the beginning of each trajectory. During the data collection in each iteration, I only execute the learned policy instead of mix it with the expert's policy, which leads to $ L(\hat{\pi}_{i}) = l(\hat{\pi}_{i})$. Finally, I can show:
	 
 	\begin{equation*} 
 	\begin{split}
 	\min_{i \in 1:N} L(\hat{\pi}_{i}) & \le L(\bar{\pi})\\
 	& = \frac{1}{N} \sum_{i=1}^{N}L(\hat{\pi}_{i}) \\
 	& = \frac{1}{N} \sum_{i=1}^{N}l_{i}(\hat{\pi}_{i}) \\
 	& = \frac{1}{KN} \sum_{i=1}^{N}\sum_{k=1}^{K}\left(\hat{l}_{ik}(\hat{\pi}_{i}) +  Y_{i,k}\right)    \\
 	& = \frac{1}{KN} \sum_{i=1}^{N}\sum_{k=1}^{K}\hat{l}_{ik}(\hat{\pi}_{i}) +  \frac{1}{KN}X_{KN}    \\
 	& = \frac{1}{N} \sum_{i=1}^{N}\hat{l}(\hat{\pi}_{i}) + \frac{1}{KN}X_{KN}\\
 	& \le \frac{1}{N} \sum_{i=1}^{N}\hat{l}(\hat{\pi}_{i}) +  l_{max}\sqrt{\frac{2\log{\frac{1}{\mu}}}{KN}}  \\
 	& = \hat{\epsilon}_{min} + \hat{\epsilon}_{regret} + l_{max}\sqrt{\frac{2\log{\frac{1}{\mu}}}{KN}}.
 	\end{split}
 	\end{equation*}
 	Summing over $ T $ proves the theorem.
\end{proof}
  
Theorem~\ref{thm:adaps} provides a bound for the expected cost-to-go of the best learned policy $ \hat{\pi} $ based on the empirical error of the best policy in $ \Pi $ (i.e., $ \hat{\epsilon}_{min} $) and the empirical average regret of the learner (i.e., $ \hat{\epsilon}_{regret} $). The second term can be eliminated if a no-regret algorithm such as Follow-the-Leader~\cite{hazan2007logarithmic} is used and the third term suggests that I need the number of training examples $ KN $ to be $\mathcal{O}\left(T^2l^2_{max}\log{\frac{1}{\mu}}\right) $ in order to have a negligible generalization error, which is easily achievable using ADAPS. Summarizing these changes, I derive the following Corollary.

\begin{corollary}
	If $ l $ is convex in $ \pi $ for any $ s $ and it upper bounds $ C $, and Follow-the-Leader is used to select the learned policy, then for any $ \epsilon > 0 $, after collecting $ \mathcal{O}\left(\frac{T^2l^2_{max}\log{\frac{1}{\mu}}}{\epsilon^2}\right) $ training examples, with probability at least $ 1 - \mu $, $\mu \in(0,1)$,  ADAPS offers the following guarantee:
	\begin{equation*}
	J\left(\hat{\pi}\right) \le J\left(\bar{\pi}\right) \le T\hat{\epsilon}_{min} + \mathcal{O}\left(\epsilon\right).
	\end{equation*}
	\label{col:adaps}
\end{corollary}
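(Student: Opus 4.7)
The plan is to derive the Corollary as a direct consequence of Theorem~\ref{thm:adaps} by controlling its two non-trivial terms, namely $T\hat{\epsilon}_{regret}$ and the Azuma-Hoeffding concentration term $\mathcal{O}(Tl_{max}\sqrt{\log(1/\mu)/(KN)})$, and absorbing both into the single $\mathcal{O}(\epsilon)$ slack. First, I would invoke convexity of $l$ in $\pi$ together with the choice of Follow-the-Leader as the online learner. It is a classical result of Hazan et al.~\cite{hazan2007logarithmic} that FTL on convex (a fortiori strongly convex) losses is a no-regret algorithm whose empirical average regret $\hat{\epsilon}_{regret}$ shrinks like $\tilde{\mathcal{O}}(1/N)$. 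Hence for $N$ sufficiently large, $T\hat{\epsilon}_{regret} \le \epsilon$.

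Second, I would calibrate the total number of collected trajectories $KN$ so that the concentration term is itself $\mathcal{O}(\epsilon)$. Solving
\begin{equation*}
Tl_{max}\sqrt{\frac{\log(1/\mu)}{KN}} \;\le\; \epsilon
\end{equation*}
for $KN$ gives $KN = \Omega\left(T^2 l_{max}^2\log(1/\mu)/\epsilon^2\right)$, which is precisely the sample-complexity rate asserted in the Corollary. Since Theorem~\ref{thm:adaps} states that the concentration bound already holds with probability at least $1-\mu$, the same probability carries over to the derived Corollary without any additional union bound.

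Finally, I would plug both estimates back into Theorem~\ref{thm:adaps} to conclude
\begin{equation*}
J(\hat{\pi}) \le J(\bar{\pi}) \le T\hat{\epsilon}_{min} + T\hat{\epsilon}_{regret} + \mathcal{O}\!\left(Tl_{max}\sqrt{\tfrac{\log(1/\mu)}{KN}}\right) \le T\hat{\epsilon}_{min} + \mathcal{O}(\epsilon),
\end{equation*}
which is exactly the claimed guarantee. The inequality $J(\hat{\pi}) \le J(\bar{\pi})$ is inherited verbatim from Theorem~\ref{thm:adaps} via the fact that $\hat{\pi}$ is selected as the best among $\{\hat{\pi}_i\}_{i=1}^N$.

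Because Theorem~\ref{thm:adaps} has already done the heavy lifting (martingale construction, application of Azuma-Hoeffding, and the mixed-policy argument), the Corollary is essentially a bookkeeping step. The only mildly delicate point will be ensuring that the regret rate of FTL and the sample-complexity rate are simultaneously compatible: concretely, one must pick $N$ large enough that the $\tilde{\mathcal{O}}(T/N)$ regret contribution is dominated by $\epsilon$ \emph{and} that $KN$ meets the quadratic-in-$1/\epsilon$ concentration threshold. Both constraints are satisfied by the stated $\mathcal{O}(T^2 l_{max}^2 \log(1/\mu)/\epsilon^2)$ budget, so no genuine obstacle arises; the remaining work is verifying that the implicit constants can be folded into the $\mathcal{O}(\cdot)$ notation.
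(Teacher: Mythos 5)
Your proposal is correct and follows essentially the same route as the paper, whose proof of Corollary~\ref{col:adaps} is simply ``Following Theorem~\ref{thm:adaps} and the aforementioned deduction'': eliminate $T\hat{\epsilon}_{regret}$ via the no-regret property of Follow-the-Leader on (strongly) convex losses, and choose $KN = \mathcal{O}\left(T^2 l_{max}^2 \log\frac{1}{\mu}/\epsilon^2\right)$ so that the Azuma--Hoeffding term is $\mathcal{O}(\epsilon)$. Your write-up just makes explicit the bookkeeping the paper leaves implicit.
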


\begin{proof}
	Following Theorem~\ref{thm:adaps} and the aforementioned deduction.
\end{proof}

Now I only need the best policy to have a small training error $ \hat{\epsilon}_{min} $, which can be achieved via standard supervised learning.

\subsection{Framework Pipeline}

The pipeline of my framework is the following. First, in \emph{SimLearner}, I test a learned policy by letting it control an AV. During the testing, an accident may occur, in which case the trajectory of the vehicle and the full specifications of the situation (e.g., positions of obstacles, road configuration, etc.) are known. Next, I switch to  \emph{SimExpert} and replicate the specifications of the accident in order to ``solve'' is (i.e., find alternative safe trajectories and dangerous zones). After obtaining the solutions, I then use them to generate additional training data in \emph{SimLearner}, which will be combined with previously generated data to update a policy. Next, I test the updated policy again. This process continues until the policy has reached a pre-specified threshold for the imitation error.

\section{Policy Learning}
\label{sec:policy}
In this section, I will detail ADAPS's control policy by first explaining my design rationale then formulating the problem and introducing the training data collection.

Driving is a hierarchical decision process. In its simplest form, a driver needs to constantly monitor the road condition, decide it is ``safe'' or ``dangerous'', and make corresponding maneuvers. When designing a control policy for AVs, we need to consider this hierarchical aspect. In addition, driving is a temporal behavior. Drivers need reaction time to respond to various road situations~\cite{johansson1971drivers,mcgehee2000driver}. A Markovian-based control policy will not model this aspect and instead likely to give a vehicle jerky motions. Consider these factors, I propose a \emph{hierarchical} and \emph{memory-enabled} control policy. 

The task I consider is autonomous driving via a single front-facing camera. My control policy consists of three modules: \emph{Detection}, \emph{Following}, and \emph{Avoidance}. The \emph{Detection} module keeps monitoring road conditions and activates either \emph{Following} or \emph{Avoidance} to produce a steering command. All these modules are trained via end-to-end imitation learning and share the same network architecture, which combines Long Short-Term Memory (LSTM)~\cite{hochreiter1997long} and Convolutional Neural Networks (CNNs)~\cite{lecun2015deep}. Images from the front-facing camera will first go through a CNN and then a LSTM. The number of images of a training sample going into the LSTM is empirically set to 5. I use the many-to-many mode of LSTM and set the number of hidden units of the LSTM to 100. The output is the average value of the output sequence.  

My CNN architecture consists of eight layers. The first five are convolutional layers and the last three are dense layers. The kernel size is $ 5\times5 $ in the first three convolutional layers and $ 3\times3 $ in the other two convolutional layers. The first three convolutional layers have a stride of 2 while the last two convolutional layers are non-strided. The filters for the five convolutional layers are 24, 36, 48, 64, 64, respectively. All convolutional layers use VALID padding. The three dense layers have 100, 50, and 10 units, respectively. I use ELU as the activation function and $ \mathcal{L}2 $ as the kernel regularizer, which is set to 0.001 for all layers. 

I train my model using Adam~\cite{kingma2014adam} with initial learning rate set to 0.0001. The batch size is 128 and the number of epochs is 500.
For training \emph{Detection} (a classification task), I use Softmax for generating the output and categorical cross entropy as the loss function.
For training \emph{Following} and \emph{Avoidance} (regression tasks), I use mean squared error (MSE) as the loss function. I have also adopted cross-validation with 90/10 split. The input image data have $ 220 \times 66 $ resolution in RGB channels.

\subsection{End-to-end Imitation Learning}
\label{subsec:end2end}

The objective of imitation learning is to train a model that behaves or makes decisions like an expert through demonstrations. The model could be a classifier or a regresser $ \pi $ parameterized by $ \mathbf{\theta}_{\pi} $:

\begin{equation}
\hat{\theta} = \argmin_{\theta_{\pi}}\sum_{t=1}^{T} \mathcal{F}\left(\pi\left(\phi_{t}; \mathbf{\theta_{\pi}}\right),a^*_{t} \right),
\end{equation}

\noindent where $ \mathcal{F} $ is a distance function.

The end-to-end aspect denotes the mapping from raw observations to decision/control commands. For my policy, I need one decision module $ \pi_{Detection} $ and two control modules $ \pi_{Following} $ and $ \pi_{Avoidance} $. The input to $ \pi_{Detection} $ is a sequence of annotated images while the outputs are binary labels, indicating whether a road condition is dangerous or safe. The inputs to $ \pi_{Following} $ and $ \pi_{Avoidance} $ are both sequences of annotated images while the outputs are steering angles. Together, these learned policies form a hierarchical control mechanism enabling an AV to drive safely on roads and avoid obstacles when needed.

\subsection{Training Data Collection}
\label{subsec:data}

For training \emph{Following}, inspired by the technique from Bojarski et al.~\cite{Bojarski2016}, I collect images from three front-facing cameras behind the main windshield: one at the center, one at the left side, and one at the right side. The image from the center camera is labeled with the exact steering angle while the images from the other two cameras are labeled with adjusted steering angles. However, once \emph{Following} is learned, it only needs images from the center camera to operate.

For training \emph{Avoidance}, I rely on \emph{SimExpert}, which can generate numerous intermediate collision-free trajectories between the first moment and the last moment of a potential accident (see Section~\ref{subsec:solving}). By positioning an AV on these trajectories, I collect images from the center front-facing camera along with corresponding steering angles. The training of \emph{Detection} requires a more sophisticated mechanism and is the subject of the next section.
\section{Learning from Accidents}
\label{sec:accident}

I explain how I analyze an accident in \emph{SimExpert} and use the generated data to train the \emph{Avoidance} and \emph{Detection} modules of my policy. \emph{SimExpert} is built based on the multi-agent simulator WarpDriver~\cite{wolinski2016warpdriver}.

\subsection{Solving Accidents}
\label{subsec:solving}

When an accident occurs, we know the trajectory of the tested vehicle for the latest $K$ frames, which I note as a collection of states $\mathcal{S} = \bigcup_{k \in [\![1, K]\!]} \mathbf{s}_k$, where each state $\mathbf{s}_k \in \mathbb{R}^4$ contains the 2-dimensional position and velocity vectors of the vehicle.
Then, there are three notable states on this trajectory that worth tracking.
The first is the earliest state where the vehicle involved in an accident (is in a collision) $\mathbf{s}_{k_a}$ (at frame $k_a$).
The second is the last state $\mathbf{s}_{k_l}$ (at frame $k_l$) where the expert algorithm can still avoid a collision.
The final one is the first state $\mathbf{s}_{k_f}$ (at frame $k_f$) where the expert algorithm perceives the interaction leading to the accident with the other involved agent, before that accident.

In order to compute these notable states, I briefly recall the high-level components of WarpDriver~\cite{wolinski2016warpdriver}. This collision-avoidance algorithm consists of two parts.
The first is a function $p$, which given the current state of an agent $\mathbf{s}_k$ and any prediction point $\mathbf{x} \in \mathbb{R}^3$ in 2-dimensional space and time (in this agent's referential), gives the probability of that agent's colliding with any neighbor $p(\mathbf{s}_k, \mathbf{x}) \in [0, \: 1]$.
The second part is the solver, which based on this function, computes the agent's probability of colliding with neighbors along its future trajectory starting from a state $\mathbf{s}_k$ (i.e., computed for $\mathbf{x}$ spanning the future predicted trajectory of the agent, I denote this probability $P\left(\mathbf{s}_k\right)$), and then proposes a new velocity to lower this probability.
Subsequently, I can initialize an agent in this algorithm to any state $\mathbf{s}_k \in \mathcal{S}$ and compute a new trajectory consisting of $\hat{K}$ new states $\mathcal{\hat{S}}_k = \bigcup_{\hat{k} \in [\![1, \hat{K}]\!]} \mathbf{\hat{s}}_{\hat{k}}$, where $\mathbf{\hat{s}}_1 = \mathbf{s}_k$.

Additionally, since $\mathbf{x} = \vecth{0}{0}{0}$ in space and time in an agent's referential represents the agent's position at the current time (I can use this point $\mathbf{x}$ with function $p$ to determine if the agent is currently in a collision with another agent), I find $\mathbf{s}_{k_a}$ where $ k_a = \min(k)$ subject to $k \in [\![1, K]\!]$ and $p(\mathbf{s}_{k}, \vecth{0}{0}{0}) > 0$. I note that a trajectory $\mathcal{\hat{S}}_k$ produced by the expert algorithm could contain collisions (accounting for vehicle dynamics) depending on the state $\mathbf{s}_k$ that it was initialized from.
I denote the set of colliding states along this trajectory as $coll(\mathcal{\hat{S}}_k) = \{ \mathbf{\hat{s}}_{\hat{k}} \in \mathcal{\hat{S}}_k \: | \: p(\mathbf{\hat{s}}_{\hat{k}}, \vecth{0}{0}{0}) > 0 \}$.
Then, I can compute $\mathbf{s}_{k_l}$ where $k_l = max(k)$ subject to $k \in [\![1, k_a]\!]$ and $coll(\mathcal{\hat{S}}_k) = \emptyset$. Finally, I can compute $\mathbf{s}_{k_f}$ with $k_f = 1 + max(k)$ subject to $k \in [\![1, k_l]\!]$ and $P(\mathbf{s}_k) = 0$.

Knowing these notable states, I can solve the accident situation by computing the set of collision-free trajectories $solve(\mathcal{S}) = \{ \mathcal{\hat{S}}_k \: | \: k \in [\![k_f, k_l]\!] \}$. An example is shown in Figure~\ref{fig:trajectories}, in which a set of generated trajectories for a situation where the vehicle had collided with a static obstacle in front of it after driving on a straight road. As expected, the trajectories feature sharper turns (red trajectories) as the starting state tends towards the last moment that the vehicle can still avoid the obstacle. These trajectories can then be used to generate training examples in \emph{SimLearner} in order to train the \emph{Avoidance} module.

\begin{figure}
	\begin{center}
		\includegraphics[width=\textwidth]{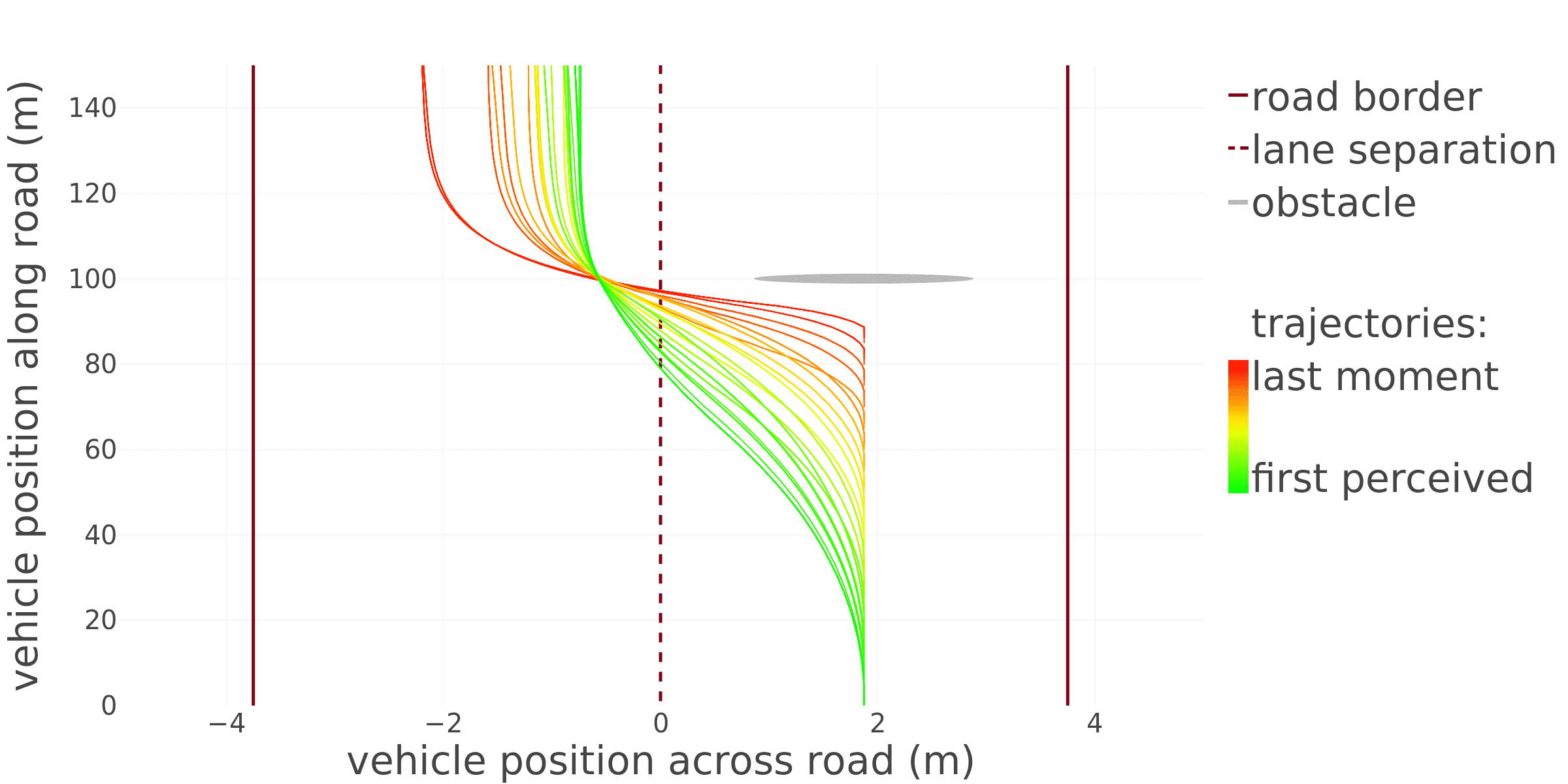}
	\end{center}
	\caption{
		Collision-free trajectories generated by the expert algorithm for a vehicle traveling on the right lane of a straight road, with an obstacle in front: in total, 74 trajectories span from the first moment the vehicle perceives the obstacle (green, progressive avoidance) to the last moment the collision can be avoided (red, sharp avoidance).
	}
	\label{fig:trajectories} 
\end{figure}

\subsection{Additional Data Coverage}
\label{sec:data_detect}

\begin{figure}
\begin{center}
\includegraphics[width=\textwidth]{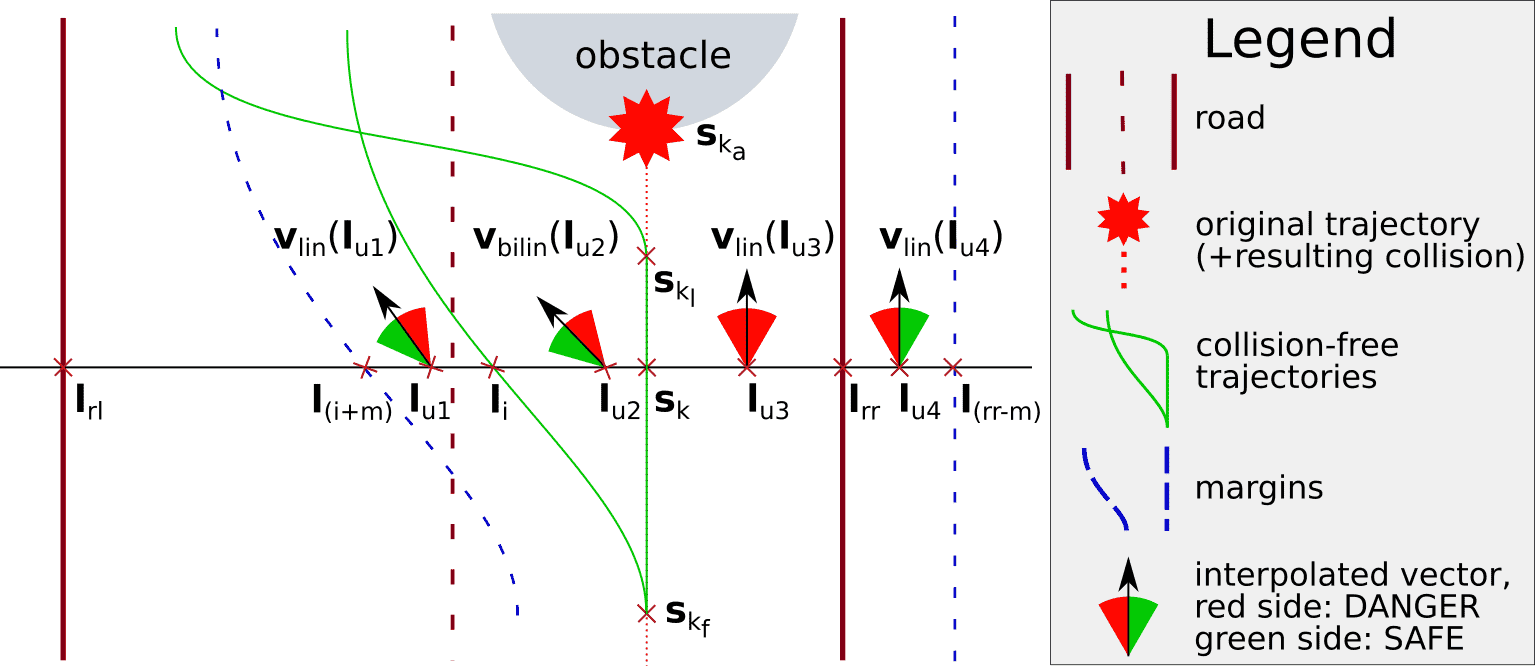}
\end{center}
\caption{
Illustration of important points and DANGER/SAFE labels from Section~\ref{sec:accident} for a vehicle traveling on the right lane of a straight road, with an obstacle in front.
Labels are shown for four points $\{ \mathbf{l}_{u1}, \mathbf{l}_{u2}, \mathbf{l}_{u3}, \mathbf{l}_{u4} \}$ illustrating the four possible cases.
}
\label{fig:points}
\end{figure}

The previous step generates collision-free trajectories $solve(\mathcal{S})$ between $\mathbf{s}_{k_f}$ and $\mathbf{s}_{k_l}$.
It is possible to build on these trajectories if the tested steering algorithm has particular data/training requirements.
Here I detail the data I derived for training the \emph{Detection} module. where the task is to determine if a situation is dangerous and tell \emph{Avoidance} to address it.

To proceed, I essentially generate a number of trajectories parallel to $\{ \mathbf{s}_{k_f}, ..., \mathbf{s}_{k_a} \}$, and for each position on them, generate several images for various orientations of the vehicle.
These images are then labeled based on under-steering/over-steering as compared to the ``ideal'' trajectories in $solve(\mathcal{S})$.
This way, I scan the region of the road before the accident locus, generating several images (different vehicle orientations) for each point in that region. Next, I will explain this process in details. 

For the following paragraph, I abusively note $\mathbf{s}_k.x$, $\mathbf{s}_k.y$ the position coordinates at state $\mathbf{s}_k$, and $\mathbf{s}_k.vx$, $\mathbf{s}_k.vy$ the velocity vector coordinates at state $\mathbf{s}_k$.
Then, for any state $\mathbf{s}_k \in \{ \mathbf{s}_{k_f}, ..., \mathbf{s}_{k_a} \}$ I can define a line $L(\mathbf{s}_k) = \{ \mathbf{l}_u = \vectw{\mathbf{s}_k.x}{\mathbf{s}_k.y} + u \times \vectw{-\mathbf{s}_k.vy}{\mathbf{s}_k.vx} \: | \: u \in \mathbb{R} \}$.
On this line, I note $\mathbf{l}_i$ the furthest point on $L(\mathbf{s}_k)$ from $\vectw{\mathbf{s}_k.x}{\mathbf{s}_k.y}$ which is at an intersection between $L(\mathbf{s}_k)$ and a collision-free trajectory from $solve(\mathcal{S})$.
This point determines how far the vehicle can be expected to stray from the original trajectory $\mathcal{S}$ before the accident, if it followed an arbitrary trajectory from $solve(\mathcal{S})$.
I also note $\mathbf{l}_{rl}$ and $\mathbf{l}_{rr}$ the two intersections between $L(\mathbf{s}_k)$ and the road edges ($\mathbf{l}_{rl}$ is on the ``left'' with $rl > 0$, and $\mathbf{l}_{rr}$ is on the ``right'' with $rr < 0$).
These two points delimit how far from the original trajectory the vehicle could be.
Finally, I define a user-set margin $g$ as outlined below (I set $g=0.5~m$).

Altogether, these points and margin are the limits of the region along the original trajectory wherein I generate images for training: a point $\mathbf{l}_u \in L(\mathbf{s}_k)$ is inside the region if it is between the original trajectory and the furthest collision-free trajectory plus a margin $g$ (if $\mathbf{l}_u$ and $\mathbf{l}_i$ are on the same side, i.e., $sign(u) = sign(i)$), or if it is between the original trajectory and either road boundary plus a margin $g$ (if $\mathbf{l}_u$ and $\mathbf{l}_i$ are not on the same side, i.e., $sign(u) \neq sign(i)$).

In addition, if a point $\mathbf{l}_u \in L(\mathbf{s}_k)$ is positioned between two collision-free trajectories $\mathcal{\hat{S}}_{k_1}, \mathcal{\hat{S}}_{k_2} \in solve(\mathcal{S})$,
I consider the two closest states on $\mathcal{\hat{S}}_{k_1}$, and the two closest states from $\mathcal{\hat{S}}_{k_1}$, and bi-linearly interpolate these four states' velocity vectors, resulting in an approximate velocity vector $\mathbf{v}_{bilin}(\mathbf{l}_u)$ at $\mathbf{l}_u$.
Similarly, if a point $\mathbf{l}_u \in L(\mathbf{s}_k)$ is not positioned between two collision-free trajectories, I consider the two closest states on the single closest collision-free trajectory $\mathcal{\hat{S}}_{k_1} \in solve(\mathcal{S})$, and linearly interpolate their velocity vectors, resulting in an approximate velocity vector $\mathbf{v}_{lin}(\mathbf{l}_u)$ at $\mathbf{l}_u$.

From here, I can construct images at various points $\mathbf{l}_u$ along $L(\mathbf{s}_k)$ (increasing $u$ by steps of $0.1~m$), with various orientation vectors (noted $\mathbf{v}_u$ and within 2.5 degrees of $\vectw{\mathbf{s}_k.vx}{\mathbf{s}_k.vy}$), and label them using the following scheme (also illustrated in Figure~\ref{fig:points}).
If the expert algorithm steers the vehicle to avoid obstacles ($\mathbf{l}_i$ with $i > 0$), there are four cases to consider when building a point $\mathbf{l}_u$:

\begin{itemize}
	\item $u < i + g$ and $u > i$: $\mathbf{l}_u$ is outside of the computed collision-free trajectories $solve(\mathcal{S})$, on the outside of the steering computed by the expert algorithm. The label is SAFE if $det(\mathbf{v}_{lin}(\mathbf{l}_u), \mathbf{v}_u) \geq 0$, and DANGER otherwise.
	\item $u < i$ and $u > 0$: $\mathbf{l}_u$ is inside the computed collision-free trajectories $solve(\mathcal{S})$. The label is SAFE if $det(\mathbf{v}_{bilin}(\mathbf{l}_u), \mathbf{v}_u) \geq 0$ (over-steering), and DANGER otherwise (under-steering).
	\item $u < 0$ and $u > rr$: $\mathbf{l}_u$ is outside the computed collision-free trajectories $solve(\mathcal{S})$ on the inside of the steering computed by the expert algorithm. The label is always DANGER.
	\item $u < rr$ and $u > rr - g$: $\mathbf{l}_u$ is in an unattainable region, but I include it to prevent false reactions to similar (but safe) future situations. The label is DANGER if $det(\mathbf{v}_{lin}(\mathbf{l}_u), \mathbf{v}_u) > 0$, SAFE otherwise.
\end{itemize}

\noindent Here, the function $det(\cdot, \cdot)$ computes the determinant of two vectors from $\mathbb{R}^2$. Conversely, if the expert algorithm made the vehicle avoid obstacles by steering right ($\mathbf{l}_i$ with $i < 0$), there are four cases to consider when building a point $\mathbf{l}_u$:

\begin{itemize}
	\item $u > i - g$ and $u < i$: the label is SAFE if $det(\mathbf{v}_{lin}(\mathbf{l}_u), \mathbf{v}_u) \leq 0$, and DANGER otherwise.
	\item $u > i$ and $u < 0$: the label is SAFE if $det(\mathbf{v}_{bilin}(\mathbf{l}_u), \mathbf{v}_u) \leq 0$, and DANGER otherwise.
	\item $u > 0$ and $u < rl$: the label is always DANGER.
	\item $u > rl$ and $u < rl + g$: the label is DANGER if $det(\mathbf{v}_{lin}(\mathbf{l}_u), \mathbf{v}_u) < 0$, SAFE otherwise.
\end{itemize}

\noindent I then generate images from these (position, orientation, label) triplets which are used to further train the \emph{Detection} module of my policy.

\section{Experiments}
\label{sec:exp}

In this section, I will first detail my experiment setup then show my evaluation results.

\subsection{Experiment Setup}
\label{sec:app-exp}

\subsubsection{Scenarios}
I have tested my method in three scenarios. The first is a straight road representing a linear geometry, the second is a curved road representing a non-linear geometry, and the third is an open ground. The first two represent on-road situations with a static obstacle while the last represents an off-road situation with a dynamic obstacle. 

Both the straight and curved roads consist of two lanes. The width of each lane is $ 3.75~m $ and there is a $ 3~m $ shoulder on each side of the road. The curved road is half circular with radius at $ 50~m $ and is attached to two straight roads at each end. The open scenario is a $ 1000~m $ $ \times $ $ 1000~m $ ground, which has a green sphere treated as the target for the \emph{Following} module to steer the AV.

\subsubsection{Vehicle Specifications}
The vehicle's speed is set to $ 20~m/s $, which value is used to compute the throttle value in the simulator.
Due to factors such as the rendering complexity and the delay of the communication module, the actual running speed is in the range of $ 20\pm1~m/s $.
The length and width of the vehicle are $ 4.5~m $ and $ 2.5~m $, respectively.
The distance between the rear axis and the rear of the vehicle is $ 0.75~m $.
The front wheels can turn up to 25 degrees in either direction.
I have three front-facing cameras set behind the main windshield, which are at $ 1.2~m $ height and $ 1~m $ front to the center of the vehicle.
The two side cameras (one at left and one at right) are set to be $ 0.8~m $ away from the vehicle's center axis.
These two cameras are only used to capture data for training \emph{Following}.
During runtime, my control policy only requires images from the center camera to operate.

\subsubsection{Obstacles}
For the on-road scenarios, I use a scaled version of a virtual traffic cone as the obstacle on both the straight and curved roads.
This scaling operation is meant to preserve the obstacle's visibility,
since at distances greater than $30~m$ a normal-sized obstacle is quickly reduced to just a few pixels.
This is an intrinsic limitation of the single-camera setup (and its resolution),
but in reality I can emulate this ``scaling'' using the camera's zoom function for instance. 
For the off-road scenario, I use a vehicle with the same specifications as of the AV as the dynamic obstacle. This vehicle is scripted to collide into the AV on its default course when no avoidance behavior is applied by the AV. 

\subsubsection{Training Data}
In order to train \emph{Following}, I have built a waypoint system on the straight road and curved road for AVs to follow, respectively. By running the vehicle for roughly equal distances on both roads, I have gathered in total 65~061 images (33~642 images for the straight road and 31~419 images for the curved road). On the open ground, I have sampled 30~000 positions and computed the angle difference between the direction towards the sphere target and the forward direction. This gives me 30~000 training examples.  

In order to train \emph{Avoidance}, on the straight road, I rewind the accident by 74 frames starting from the frame that the accident takes place, by re-planning the vehicle at each state, I have obtained 74 safe trajectories.
Similarly, on the curved road, I rewind the accident by 40 frames, which results in 40 safe trajectories.
On the open ground, I rewind the accident by 46 frames, which results in 46 safe trajectories.
By positioning the vehicle on these trajectories and capturing the image from the front-facing camera, I have collected 34~516 images for the straight road, 33~624 images for the curved road, and 33~741 images for the open ground. 

For the training of \emph{Detection}, using the mechanism explained in Subsection~\ref{sec:data_detect}, I have collected 32~538 images for the straight road, 71~859 for the curved road, and 67~102 images for the open ground. These statistics are summarized in Table~\ref{tb:train-data}.

\subsection{Evaluation Results}
For evaluation, I compare my policy to the ``flat policy'' that essentially consists of a single DNN ~\cite{chen2015deepdriving,Xu2017end,zhang2017query,codevilla2017end}. Usually, this type of policy contains a few convolutional layers followed by a few dense layers. Although the specifications may vary, without human intervention, they are mainly limited to single-lane following~\cite{codevilla2017end} or \emph{off-road} collision avoidance~\cite{LeCun2006off}. I select Bojarski et al.~\cite{Bojarski2016} as an example network, as it is one of the most tested control policies. In the following, I will first demonstrate the effectiveness of my policy and then qualitatively illustrate the efficiency of my framework.

\begin{figure}[th]
	\centering
	\includegraphics[width=\textwidth]{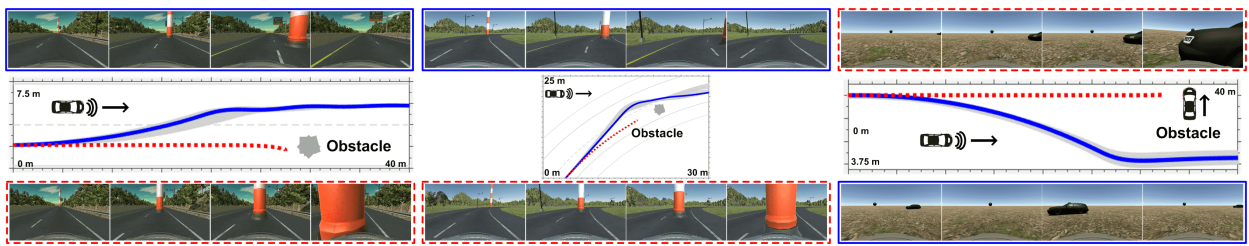}
	\caption{LEFT and CENTER: the comparisons between my policy $ O_{full} $ (TOP) and Bojarski et al.~\protect\cite{Bojarski2016}, $ B_{full} $ (BOTTOM). $ O_{full} $ can steer the AV away from the obstacle while $ B_{full} $ causes collision. RIGHT: the accident analysis results on the open ground. I show the accident caused by an adversary vehicle (TOP); then I show after additional training the AV can avoid the adversary vehicle (BOTTOM).}
	\label{fig:examples}
\end{figure}

\subsubsection{On-road Scenarios}

I derive my training datasets from \emph{straight road with or without an obstacle} and \emph{curved road with or without an obstacle}.
This separation allows me to train multiple policies and test the effect of \emph{learning from accidents} using my policy compared to Bojarski et al.~\cite{Bojarski2016}. By progressively increasing the training datasets, I obtain six policies for evaluation:

\begin{itemize}
	\item My policy: trained with only lane-following data $O_{follow}$; $O_{follow}$ additionally trained after analyzing one accident on the straight road $O_{straight}$; and $O_{straight}$ additionally trained after producing one accident on the curved road $O_{full}$.
	\item Similarly, for the policy from Bojarski et al.~\cite{Bojarski2016}: $B_{follow}$, $B_{straight}$, and $B_{full}$.
\end{itemize}

I first evaluate $B_{follow}$ and $O_{follow}$ using both the straight and curved roads by counting how many laps (out of 50) the AV can finish. As a result, both policies managed to finish all laps while keeping the vehicle in the lane. I then test these two policies on the straight road with a static obstacle added. Both policies result in the vehicle collides into the obstacle, which is expected since no accident data were used during the training.

Having the occurred accident, I can now use \emph{SimExpert} to generate additional training data to obtain $B_{straight}$\footnote{The accident data are used only to perform a regression task as the policy by Bojarski et al.~\cite{Bojarski2016} does not have a classification module.} and $O_{straight}$.
As a result, $B_{straight}$ continues to cause collision while $O_{straight}$ avoids the obstacle. Nevertheless, when testing $O_{straight}$ on the curved road with an obstacle, accident still occurs because of the corresponding accident data are not yet included in training.

By further including the accident data from the curved road in training, I obtain $ B_{full} $ and $ O_{full}$. $ O_{full}$ manages to perform both lane-following and collision avoidance in all runs. $B_{full}$, on the other hand, leads the vehicle to drift away from the road. 

For the studies involved an obstacle, I uniformly sampled 50 obstacle positions on a $3~m$ line segment that is perpendicular to the direction of a road and in the same lane as the vehicle.
I compute the success rate as how many times a policy can avoid the obstacle (while stay in the lane) and resume lane-following afterwards. The results are shown in Table~\ref{tb:acc} and example trajectories are shown in Figure~\ref{fig:examples} LEFT and CENTER.

\begin{table*}[ht!]
	\centering
	\small
	\tabcolsep=0.1cm
	\scalebox{0.75}{
		\begin{tabular}{ccccccccc}
			\toprule
			& \multicolumn{5}{c}{Training Module (Data) } & \multicolumn{3}{c}{Other Specs}  \\        
			\cmidrule(l){2-6} \cmidrule(l){7-9}      
			Scenarios  & \emph{Following (\#Images)} & \emph{Avoidance (\#Images)}  & \emph{Detection (\#Images)} & Total & Data Augmentation & \#Safe Trajectories & Road Type & Obstacle   
			\\
			\midrule
			Straight road & 33~642 & 34~516  & 32~538  & 97~854 & $ 212 $x & 74  &  on-road   & static 
			\\
			\midrule
			Curved road   & 31~419 & 33~624  & 71~859  & 136~855 & $ 98 $x & 40  &  on-road   & static 
			\\
			\midrule
			Open ground   & 30~000 & 33~741  & 67~102 & 130~843 & $ 178 $x & 46  &  off-road  &  dynamic
			\\
			\bottomrule
		\end{tabular}}
	\caption{Training Data Summary: my method can achieve over 200 times more training examples than DAGGER~\protect\cite{ross2011reduction} at one iteration leading to large improvements of a policy.} 
	\label{tb:train-data}
\end{table*}

\subsubsection{Off-road Scenario}

I further test my method on an open ground which involves a dynamic obstacle. The AV is trained heading towards a green sphere while an adversary vehicle is scripted to collide with the AV on its default course. The result showing my policy can steer the AV away from the adversary vehicle and resume its direction to the sphere target. This can be seen in Figure~\ref{fig:examples} RIGHT.

\begin{table}[ht!]
\centering
\small
\tabcolsep=0.1cm
\scalebox{1}{
	\begin{tabular}{ccccccc}
		\toprule
		& \multicolumn{6}{c}{Test Policy and Success Rate (out of 50 runs)}  \\        
		\cmidrule(l){2-7}       
		Scenario & $ B_{follow} $ & $ O_{follow} $  & $ B_{straight} $ & $ O_{straight} $ & $ B_{full} $ & $ O_{full} $   \\
		\midrule
		Straight road / Curved road & 100\% & 100\%  & 100\%  & 100\%  & 100\%  & 100\% 
		\\
		\midrule
		Straight road + Static obstacle & 0\% & 0\%  & \textbf{0\%}  & \textbf{100\%}  & \textbf{0\%}  & \textbf{100\%}
		\\
		\midrule
		Curved road + Static obstacle & 0\% & 0\%  & 0\%  & 0\%  &\textbf{ 0\% } & \textbf{100\%}
		\\
		\bottomrule
	\end{tabular}}
	\caption{Test Results of On-Road Scenarios: my policies $ O_{straight} $ \& $ O_{full} $ can lead to robust collision avoidance and lane-following behaviors. }
	\label{tb:acc}
\end{table}

\subsubsection{Data Heterogeneity}

The key to rapid policy improvement is to generate training data accurately, efficiently, and sufficiently. Using \emph{principled simulations} covers the first two criteria, now I demonstrate the third. Compared to the average number of training data collected by DAGGER~\cite{ross2011reduction} at one iteration, my method can achieve over 200 times more training examples for one iteration\footnote{The result is computed via dividing the total number of training images via my method by the average number of training data collected using the safe trajectories in each scenario.}. This is shown in Table~\ref{tb:train-data}.

In Figure~\ref{fig:straight-tsne}, I show the visualization results of images collected via my method and DAGGER~\cite{ross2011reduction} within one iteration via progressively increasing the number of sampled trajectories. My method generates much more heterogeneous training data, which, when produced in a large quantity can facilitate the update of a control policy.

\begin{figure}[th]
	\centering
	\includegraphics[width=0.7\columnwidth]{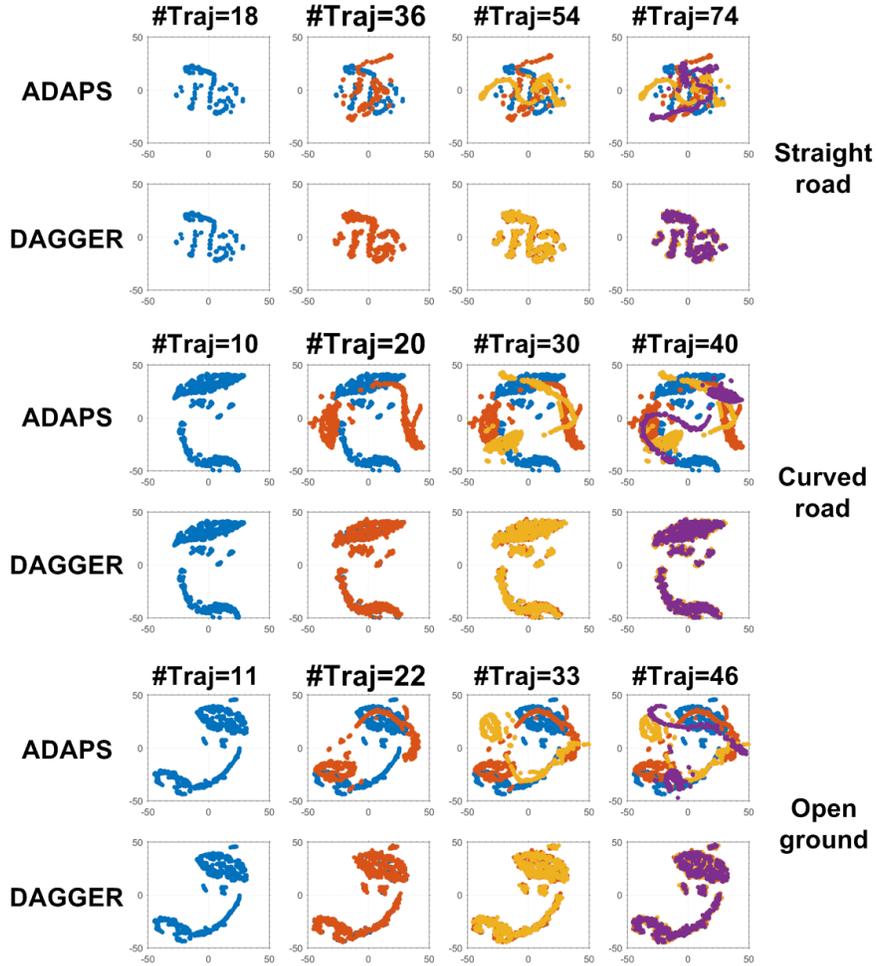}
	\caption{The visualization results of collected images using t-SNE~\protect\cite{maaten2008visualizing}. My method can generate more heterogeneous training data compared to DAGGER~\protect\cite{ross2011reduction} in one learning iteration as the sampled trajectories progress. }
	\label{fig:straight-tsne}
\end{figure}

\section{Summary and Future Work}
\label{sec:conclude}

In this chapter, I have proposed ADAPS, a framework that consists of two simulation platforms and a control policy. Using ADAPS, one can easily simulate accidents. Then, ADAPS can retrace each accident, analyze it, and plan alternative safe trajectories. With an additional training data generation technique, my method can produce a large number of heterogeneous training examples compared to existing methods such as DAGGER~\cite{ross2011reduction}, thus representing a more efficient learning mechanism. My hierarchical and memory-enabled policy offers robust collision avoidance that previous policies fail to achieve. I have evaluated my method using multiple simulated scenarios, showing a variety of benefits.

There are many future directions. First of all, I would like to combine long-range vision into ADAPS so that an AV can plan ahead in time. Secondly, the generation of accidents can be parameterized using knowledge from traffic engineering studies. Thirdly, I would like to combine more sensors and fuse their inputs so that an AV can navigate in more complicated traffic scenarios. Lastly, in order to improve the safety of the AV, it is critical to model the interactions between the AV and pedestrians. There has been a number of studies making virtual humans more intelligent and life-like~\cite{Li2011Purpose,Li2012Distribution,Li2012Apprentice,Li2012Commonsense,Li2013Memory}. These results can be incorporated into the virtual environment for improving the training of the AV either directly or indirectly.  

To this end, I have finished introducing my efforts on advancing autonomous driving in terms of enabling an AV to navigate safely in dangerous situations including accidents. In the next chapter, I will conclude my dissertation with a discussion of future research directions. 
\chapter{CONCLUSION}
\label{ch:Conclusion}

In this dissertation, I have studied traffic at both the macroscopic level and the microscopic level. Macroscopically, I have developed methods to accurately and efficiently estimate and reconstruct city-scale traffic using mobile sensor data while generating visual analytics in various forms. Microscopically, I have developed a framework to simulate, analyze, and produce traffic accident data, and have proposed an efficient online learning mechanism for learning control policies for autonomous driving. The proposed techniques can enable many Intelligent Transportation System (ITS) applications including travel time estimation, route planning, visualization, traffic optimization, traffic management, and safety and control of autonomous vehicles (AVs).

\section{Summary of Results}
I described a deterministic approach to estimate traffic conditions based on Wardrop's Principles~\cite{wardrop1952road} and the shortest travel-time criterion. In order to compute the shortest travel-time path (for realizing the shortest travel-time criterion), the travel time of a road network is required, which information is usually lacking. This issue is addressed by adapting a travel-time allocation method from Hellinga et al.~\cite{hellinga2008decomposing}, which was derived from the observations of real-world traffic conditions. Compared to other state-of-the-art approaches that use shortest distance criterion, my approach results in less estimation bias in congested environments for map-matching and overall better estimation accuracy on various synthetic GIS data.   

I presented an approach to interpolate temporal missing measurements by exploring sparsity embedded in traffic patterns. To be specific, observing the sparse representation in the frequency domain of traffic patterns, I have proposed a method based on the Compressed Sensing algorithm~\cite{1614066,1580791} for recovering traffic patterns using GPS traces. My approach can provide accurate recovery when tested using the ground-truth traffic data from loop detectors, and is among few that have exploited the effectiveness of Compressed Sensing on traffic pattern processing~\cite{Lin2019ComSenseTRB}. 

% using Compressed Sensing~\cite{1614066,1580791}. My approach demonstrates accurate recovery of traffic patterns that are estimated from sparse GPS data, when compared to the recovery of traffic patterns that are estimated from stationary sensors such as the loop detector.

In order to further improve the estimation accuracy of traffic conditions for spatial data interpolation, I proposed an iterative approach, which embeds map-matching and travel-time estimation as its sub-routines. This process is further improved by the statistical modeling and learning of traffic conditions of a road segment. The results of my approach are accurate estimations of traffic conditions in areas with GPS data coverage---achieving up to 97\% relative improvement in estimation accuracy over previous techniques---and coarse estimations of traffic conditions in areas without GPS data coverage (of a city). 

For achieving accurate estimations of traffic conditions in data-deficient areas, using the abovementioned results, I presented a method to dynamically interpolate spatial missing traffic data. In particular, I have leveraged traffic simulation to ensure the consistency of traffic flows on the boundaries of areas with and without GPS data coverage. A metamodel-based simulation optimization is further developed to save the computational cost of using traffic simulation in optimization. Compared to the simulation-only approach, my technique has achieved on average a 7\% error rate and up to 90 times speedup. My approach is the first dynamical and efficient method for interpolating large-scale traffic data while ensuring the flow consistency on city-scale boundaries. 

After fully reconstructing spatial-temporal traffic at a city scale, I visualized the reconstructed traffic in various forms such as 2D flow map, 2D animation, and 3D animations. These visual representations can be adopted to improve many ITS applications including the analysis of traffic patterns at street level, region level, and the city level, and enrich virtual environment applications such as virtual tourism and the training of general driving behaviors or autonomous driving.  

Regarding autonomous driving, I presented ADAPS, a framework that consists of two simulation platforms and a hierarchical control policy. ADAPS can be used to simulate, analyze various traffic scenarios, especially accidents, and automatically produce labeled training data. In addition, ADAPS represents a more efficient online learning mechanism compared to previous techniques, attributing to the switch from the reset modeling approach to the generative modeling approach. Using the hierarchical control policy and the efficient online learning mechanism of ADAPS, robust control policies for autonomous driving can be learned and applied to obtain normal driving and safe navigation in dangerous situations including accidents.

\section{Future Work}
There are many future development and research directions can stem from this dissertation, at the macroscopic level of traffic (i.e., \emph{city-scale traffic}), the microscopic level of traffic (i.e., \emph{autonomous driving}), the connection between the two levels, and beyond. I will discuss a few of them in each category in the following.

\subsection{Macroscopic Level}
On the ``city-scale traffic'' side, first of all, it would be useful to develop an interactive simulation platform. Using the platform, policy makers, city planners, and other users can easily edit a road network and alter a transport policy in order to test the effectiveness of these changes via observing the response of simulated traffic flows propagating in a city. Building such a platform would require several elements: a 3D virtual environment with a user interface, a road network construction mechanism~\cite{wilkie2012transforming,musialski2013survey}, a road network editing mechanism~\cite{chen2008interactive}, and a real-time traffic simulation technique~\cite{Sewall:2011:IHS:2070781.2024169,Wilkie:2013:FRD:2461912.2462021,garcia2014designing}. The 3D virtual environment can be built using a game engine such as Unity (\url{https://unity.com/}) or Unreal (\url{https://www.unrealengine.com}). The rest of the elements have been explored to various degrees in existing studies. Unifying these elements would be an interesting topic. 

Secondly, simulating city-scale traffic, depending on the levels of detail, can be computationally prohibitive. A scalable approach that can combine modern machine learning techniques and traffic flow models is highly desirable. Such a approach can especially benefit applications with highly interactive and real-time demands such as the simulation platform mentioned above. The metamodel-based simulation optimization presented in Chapter~\ref{ch:siga} is an example work in this direction. However, the functional component of the metamodel is currently chosen to be quadratic polynomial, which offers limited expressiveness. With the emergence of deep learning~\cite{lecun2015deep}, it would be interesting to replace the quadratic polynomial with a deep neural network or an LSTM network for potential improvements on the traffic reconstruction accuracy. 

Thirdly, traffic participants are not limited to cars, mixed traffic involving cars, pedestrians, cyclists, and other motorists are commonly seen in many regions across the globe. A simulation model that can encompass all these different traffic modalities can enrich various real-world and virtual-world applications mentioned in the previous chapters. In order to achieve this goal, knowledge from traffic engineering literature can be exploited. Mixture models, although not necessarily built for simulation, have been explored and developed~\cite{faghri1999development,laxman2010pedestrian}. Examining the possibility of extending these models for mixture traffic simulation is a promising research direction.

\subsection{Microscopic Level}
On the ``autonomous driving'' side, the system presented in Chapter~\ref{ch:adaps} is an end-to-end system, which means a single model is trained to map the sensor input directly to the control command output. Such an approach is straightforward and usually results in a more compact model as it does not contain intermediate steps. However, an end-to-end system based on deep learning can be hard to interpret. Also, a large number of training examples are often needed to train an end-to-end model. 

In contrast, the traditional engineering pipeline, which consists of several modules, can be adopted for autonomous driving. In this approach, the sensor input will get processed and passed to its subsequent modules such as detection, tracking, and planning, before the final control command is produced. This conventional approach has both advantages and disadvantages. Advantages include, since there are multiple modules in the pipeline, less training examples are needed to learn a model; the prior knowledge can be incorporated into the problem; the explainability is improved as the final control command is produced by a planning algorithm rather than directly from the raw sensor input. Disadvantages include, the uncertainty and errors of each module are difficult to propagate backwards to its preceding modules, thus causing the system to suffer potential compounding errors; the computation is not shared between modules: each module is trained independently for a different objective; human experts are usually needed to tailor each module so that the system can achieve maximum performance.

While both end-to-end and traditional engineering approaches have their own characteristics, given that the safety is of the leading concerns these days regarding autonomous driving, the traditional engineering approach is likely to prevail in the near future due to its superior explainability and controllability. Hence, it would be interesting to develop the ``traditional engineering'' version of ADAPS. The only element needs to be modified is the hierarchical control policy, which currently is represented by a single model with three neural networks. The other elements such as the simulation platforms and online learning mechanism remain applicable.

Another aspect can be improved in ADAPS is the generation of accident data. Currently, the accident simulation is simple and arbitrary. However, in traffic engineering, there exists rich literature on accident analysis and prevention. By exploring which, a systematically way of simulating accidents can be developed, which can bring further justifications to ADAPS on autonomous driving training and testing. One imminent research direction is to incorporate the pre-crash scenarios published by the National Highway Traffic Safety Administration of the United States~\cite{najm2013description} into our simulation platform, and then develop a sampling mechanism to produce accident data for learning a control policy. 

Beyond the abovemetioned immediate research topics that can be built on top of ADAPS, there are many other interesting research directions. In general, the safety, control, and coordination aspects of autonomous driving all need further exploration and development. One future research direction can be exploring the possibility of using simulations to assist sample-efficient learning for a control policy. Another direction, inspired by the observation that the training of autonomous driving is largely context-dependent, is to develop theory and practice in transferring the learned behaviors of an autonomous vehicle from one environment to other environments. This generalization ability, observed in humans, is largely missing in autonomous driving at the moment.

\subsection{Connection Between The Two Levels}
Although this dissertation has been addressing the macroscopic level and the microscopic level of traffic as two separate topics, the two aspects have tight connection, where many applications and developments can be drawn. From the macro-to-micro perspective, the estimated city-scale traffic conditions can be immediately adopted for better routing and planning of AVs. The reconstructed traffic can be incorporated into virtual environments to provide rich traffic semantics for training the navigation and decision-making of autonomous driving. From the micro-to-macro perspective, AVs can be treated as probe vehicles to gather traffic information in a city so that traffic reconstruction can be achieved with higher accuracy. The AV can also be dispatched to multiple road users as a sharing transportation tool. This way not only the number of vehicles on the road is reduced, which can assist in alleviating traffic jams, but also less space is needed to physically accommodate the large number of vehicles, which implies additional socio-economic benefits. Back to the macro-to-micro perspective, an efficient traffic reconstruction technique can contribute to the design of the dispatching algorithm for AVs to maximize their sharing functionality.  

% AVs can serve as extra degrees of freedom to the traffic system for stabilizing and regulating traffic flows in order to alleviate traffic congestion~\cite{Wu2018Stablizing}

In conclusion, now and into the near future, AVs will be operating not only in traffic but also along with human-driven vehicles. This assembly brings many challenges as well as research opportunities. It would be interesting to develop simulation models for the mixture of human-driven and autonomous vehicles, with the flexibility to choose the percentage of each type of the vehicle. As AVs can be considered part of the overall cyber-physical system, they can serve as additional ``degrees of freedom'' to the traffic system, which can be potentially ``tuned'' to regulate traffic flows~\cite{Wu2018Stablizing}. The applications range from alleviating traffic congestion to assisting flow distribution in social gatherings or evacuation situations. Lastly, it would be imperative to consider human factors in addition to technology development, given essentially autonomous and intelligent systems are designed and built to improve people's life. As technology advances, developing collaborative rather than competitive relationships between autonomous systems and humans is the challenge that scientists and engineers will be facing. My future efforts will be centered around this challenge, with the focus on building an cooperative mixed traffic system. 

% We will witness many more multi-disciplinary research topics emerging, for solving which we need knowledge from many domains including, but not limited to, game theory, control theory, machine learning, system design, optimization, and even physiology, psychology, and sociology. I expect joint efforts will be drawn and revolutionizing advancements will be made. 

%% Appendices
\begin{appendices}
%\ifthenelse{\boolean{isdraft}}{\input{chapters/placeholders}}{}
\end{appendices}

\clearpage
\phantomsection

%% Bibliography
{\def\chapter*#1{} % suppress bibliograph header.
	\begin{singlespace}
		\addcontentsline{toc}{chapter}{BIBLIOGRAPHY}
		\begin{center}
			%\Large \textbf{BIBLIOGRAPHY}
			\textbf{BIBLIOGRAPHY}
			%\vspace{17pt}
			\vspace{11pt}
		\end{center}
		
		\bibliographystyle{apalike}
		\bibliography{itsm-ref,siga-ref,adaps-ref,pz-ref}
	\end{singlespace}
}

\end{document}